\definecolor{ForestGreen}{rgb}{0.1333,0.5451,0.1333}
\definecolor{DarkRed}{rgb}{0.80,0,0}
\definecolor{Red}{rgb}{1,0,0}
\Crefname{claim}{Claim}{Claims}
\Crefname{condition}{Condition}{Conditions}
\Crefname{assumption}{Assumption}{Assumptions}
\Crefname{scenario}{Scenario}{Scenarios}
\declaretheorem[numberwithin=section,refname={Theorem,Theorems},Refname={Theorem,Theorems}]{theorem}
\declaretheorem[numberlike=theorem]{lemma}
\declaretheorem[numberlike=theorem,name=Lemma,refname={Lemma,Lemmas},Refname={Lemma,Lemmas}]{lem}
\declaretheorem[numberlike=theorem,name=Corollary,refname={Corollary,Corollaries},Refname={Corollary,Corollaries}]{cor}
\declaretheorem[numberlike=theorem,style=definition]{definition}
\declaretheorem[numberlike=theorem,style=definition,name=Definition,refname={Definition,Definitions},Refname={Definition,Definitions}]{defn}
\declaretheorem[numberlike=theorem,style=definition]{assumption}
\declaretheorem[numberlike=theorem]{claim}
\declaretheorem[numberlike=theorem,style=remark]{remark}
\declaretheorem[numberlike=theorem,refname={Fact,Facts},Refname={Fact,Facts},name={Fact}]{fact}
\declaretheorem[numberlike=theorem, refname={Observation,Observations},Refname={Observation,Observations},name={Observation}]{observation}
\newcommand{\vol}{\mathrm{vol}}
\renewcommand{\deg}{\mathrm{deg}}
\newcommand{\Bdeg}{\boldsymbol{\mathrm{deg}}}
\newcommand{\Bvol}{\boldsymbol{\mathrm{vol}}}
\newcommand{\polylog}{\mathrm{polylog}}
\newcommand{\poly}{\mathrm{poly}}
\newcommand{\defeq}{\stackrel{\mathrm{{\scriptscriptstyle def}}}{=}}
\renewcommand{\cong}{\mathrm{cong}}
\newcommand{\supp}{\mathrm{supp}}
\newcommand{\head}{\mathrm{head}}
\newcommand{\tail}{\mathrm{tail}}
\newcommand{\SCC}{\mathrm{SCC}}
\newcommand{\rev}[1]{\overleftarrow{#1}}
\newcommand{\forward}[1]{\overrightarrow{#1}}
\newcommand{\backward}[1]{\overleftarrow{#1}}
\newcommand{\alg}[2]{\textsc{#1(}#2\textsc{)}}
\newcommand{\Ba}{\boldsymbol{a}}
\newcommand{\Bb}{\boldsymbol{b}}
\newcommand{\Bc}{\boldsymbol{c}}
\newcommand{\Bw}{\boldsymbol{w}}
\newcommand{\Bd}{\boldsymbol{d}}
\newcommand{\Bf}{\boldsymbol{f}}
\newcommand{\Bx}{\boldsymbol{x}}
\newcommand{\BF}{\boldsymbol{F}}
\newcommand{\BM}{\boldsymbol{M}}
\newcommand{\Bsource}{\boldsymbol{\Delta}}
\newcommand{\Bsink}{\boldsymbol{\nabla}}
\newcommand{\Bzero}{\boldsymbol{0}}
\newcommand{\Bone}{\boldsymbol{1}}
\newcommand{\Bfout}{\boldsymbol{f}^{\mathrm{out}}}
\newcommand{\Bfin}{\boldsymbol{f}^{\mathrm{in}}}
\newcommand{\Btau}{\boldsymbol{\tau}}
\newcommand{\dist}{\mathrm{dist}}
\newcommand{\rand}{\mathrm{rand}}
\newcommand{\N}{\mathbb{N}}
\newcommand{\R}{\mathbb{R}}
\newcommand{\cC}{\mathcal{C}}
\newcommand{\cH}{\mathcal{H}}
\newcommand{\cI}{\mathcal{I}}
\newcommand{\tO}{\tilde{O}}
\global\long\def\level{\mathsf{level}}
\global\long\def\KKOV{\mathrm{KKOV}}
\global\long\def\KRV{\mathrm{KRV}}
\global\long\def\unfold{\mathrm{unfold}}
\newcommand{\otil}{\tilde{O}}
\newcommand{\card}[1]{|#1|}
\newcommand{\efirst}{E_{\mathrm{first}}}
\newcommand{\elast}{E_{\mathrm{last}}}
\newcommand{\Ecut}{E_{\mathrm{cut}}}
\def\cameraready{0}
\newcommand\footnoteref[1]{\protected@xdef\@thefnmark{\ref{#1}}\@footnotemark}
\title{Combinatorial Maximum Flow\\ via Weighted Push-Relabel on Shortcut Graphs}
\author{Aaron Bernstein\thanks{
        New York University,
        \texttt{bernstei@gmail.com}. Supported by Sloan Fellowship, Google Research Fellowship,  NSF Grant 1942010, and Charles S. Baylis endowment at NYU.
    } \and Joakim Blikstad\thanks{
        CWI Amsterdam,
        \texttt{joakim@cwi.nl}.
    } \and Jason Li\thanks{
        Carnegie Mellon University,
        \texttt{jmli@cs.cmu.edu}.
    }  \and Thatchaphol Saranurak\thanks{
        University of Michigan,
        \texttt{thsa@umich.edu}.
        Supported by NSF Grant CCF-2238138. Partially funded by the Ministry of Education and Science of Bulgaria's support for INSAIT, Sofia University ``St.~Kliment Ohridski'' as part of the Bulgarian National Roadmap for Research Infrastructure.
    }  \and Ta-Wei Tu\thanks{
        Stanford University,
        \texttt{taweitu@stanford.edu}.
        Supported by a Stanford School of Engineering Fellowship, a Microsoft Research Faculty Fellowship, and NSF CAREER Award CCF-1844855.
    } }
\date{}
\begin{document}

\begin{titlepage}
  \maketitle \pagenumbering{gobble}
  \begin{abstract}
We give a combinatorial algorithm for computing exact maximum flows in directed graphs with $n$ vertices and edge capacities from $\{1,\dots,U\}$ in $\tilde{O}(n^{2}\log U)$ time, which is near-optimal on dense graphs. This shaves an $n^{o(1)}$ factor from the recent result of {[}Bernstein--Blikstad--Saranurak--Tu FOCS'24{]} and, more importantly, greatly simplifies their algorithm. We believe that ours is by a significant margin the simplest of all algorithms that go beyond $\otil(m\sqrt{n})$ time in general graphs. To highlight this relative simplicity, we provide a full implementation of the algorithm in C++.

The only randomized component of our work is the cut-matching game. 
Via existing tools, we show how to derandomize it for \emph{vertex-capacitated} max flow and obtain a deterministic $\otil(n^2)$ time algorithm. This marks the first deterministic near-linear time algorithm for this problem (or even for the special case of bipartite matching) in any density regime.

\end{abstract}

  \setcounter{tocdepth}{3}
  \newpage
  \tableofcontents
  \newpage
\end{titlepage}
\newpage
\pagenumbering{arabic}

\newpage

\section{Introduction}

Fast algorithms for computing maximum flows have been pivotal in algorithmic research, inspiring various paradigms such as graph sparsification, dynamic data structures, and the application of continuous optimization to combinatorial problems. These algorithms find numerous applications in problems like bipartite matching, minimum cuts, and Gomory--Hu trees~\cite{gomory1961multi,LiP20,LiNPSY21,Cen0NPSQ21,CenH0P23,Abboud0PS23}. Below, we summarize the development of fast maximum flow algorithms since Dantzig's introduction of the problem \cite{dantzig1951application}. The input for this problem consists of a directed graph with $n$ vertices and $m$ edges. For simplicity, we assume in this introduction that edge capacities range from $\{1,\dots,\text{poly}(n)\}$.

\paragraph{Augmenting Paths: The Impact of Clarity.}

Ford and Fulkerson \cite{ford1956maximal} introduced the \emph{augmenting path} framework, where algorithms iteratively identify an augmenting path in the residual graph and augment the flow along this path by a value equal to the bottleneck of the augmenting path. Using this framework, algorithms with a time complexity of $O(m\cdot\min\{m^{1/2},n^{2/3}\})$ were discovered since the 70s \cite{karzanov1973finding,EvenT75,GoldbergR98}; this time bound remained the best record for the next 40 years.

Despite the lack of progress on time complexity, the clarity of this framework has enabled impactful developments that extend far beyond the maximum flow problem. For example, the blocking flow \cite{karzanov1973finding,dinic1970algorithm,GoldbergR98} has become an important object for parallel, distributed, and local algorithms \cite{Vishkin92,Cohen95,AwerbuchKR07,OrecchiaZ14,ChangS20,HaeuplerHS23}. The push-relabel method \cite{GoldbergT88} generalizes to solve numerous problems, including approximate low-degree spanning trees \cite{ChaudhuriRRT09}, global minimum edge/vertex cuts \cite{HaoO94,HenzingerRG00}, parametric flow \cite{GalloGT89,GranotMQT12}, submodular function minimization \cite{Iwata03,FleischerI03}, submodular flow \cite{frank2012simple}, and packing/covering matroid \cite{Quanrud24}. Push-relabel algorithms have also been adapted to local, dynamic, and parallel settings \cite{HenzingerRW17,SaranurakW19,ChenMGS25}.

\paragraph{Interior Point Methods and Dynamic Graph Data Structures.}

In the 2000s, Daitch and Spielman \cite{DaitchS08} introduced a novel framework for computing maximum flow via \emph{interior point methods} (IPMs) by reducing the problem to solving $\tilde{O}(\sqrt{m})$ instances of electrical flow, which can be solved in near-linear time \cite{SpielmanT04}.\footnote{In this paper, we use $\tilde{O}(\cdot)$ to hide poly-logarithmic factors in $n$. As standard in modern graph algorithms, we use \emph{near-linear} time to refer to algorithms with a running time of $\tilde{O}(m)$, and \emph{almost-linear} time for those running in $m^{1+o(1)}$ time.}
This sparked further advances in IPMs \cite{Madry13,LeeS14,Madry16,LiuS20Energy,KathuriaLS20} that minimize the number of iterations required to solve electrical flow or related problems. Consequently, an $\tilde{O}(m\sqrt{n})$ time maximum flow algorithm \cite{LeeS14} and, for unit-capacity graphs, an $m^{4/3+o(1)}$ time algorithm \cite{KathuriaLS20} were developed, breaking the previous time record set by the augmenting-path approaches.

Since 2020, the emphasis has shifted from minimizing the number of iterations in IPMs to reducing the cost per iteration through \emph{dynamic graph data structures}. A series of impressive works \cite{BrandLNPSS0W20,BrandLLSS0W21,GaoLP21,BrandGJLLPS22} culminated in almost-optimal $m^{1+o(1)}$ time algorithms for maximum flow and min-cost flow \cite{ChenKLPGS22,Brand0PKLGSS23}, along with their extensions to dynamic settings \cite{ChenKLMG24,vdBCKLMPS24}.

\paragraph{In Pursuit of Clarity.}

These breakthrough algorithms with almost-optimal time bounds are, however, still not simple in two different aspects. Conceptually, the IPM framework updates the flow solutions without a clear combinatorial interpretation. And, technically, the required dynamic data structures are still highly involved.\footnote{There has been exciting progress in simplicity among IPM-based algorithms. The involved  dynamic min-ratio cycle data structure in \cite{ChenKLPGS22,Brand0PKLGSS23,ChenKLMG24} was recently replaced by a simpler dynamic min-ratio cut data structure \cite{vdBCKLMPS24} based on dynamic tree cut sparsifiers~\cite{GoranciRST21}.} The current state of the art motivated a new search for algorithms that are more ``combinatorial'', as a proxy for clarity.

With this motivation, Chuzhoy and Khanna \cite{ChuzhoyK24SODA,ChuzhoyK24STOC} presented an algorithm for maximum bipartite matching, a special case of maximum flow, that runs in $n^{2+o(1)}$ time without using IPM. Their algorithm is conceptually clearer; it repeatedly increases the flow value along paths listed by dynamic shortest-path data structures. These data structures, however, are still complex.

Recently, Bernstein, Blikstad, Saranurak, and Tu \cite{BernsteinBST24} introduced an augmenting-path-based maximum flow algorithm with a running time of $n^{2+o(1)}$, improving upon the $O(m\cdot\min\{m^{1/2},n^{2/3}\})$ bound within the augmenting path framework. They bypassed heavy dynamic data structures and replaced them with a simple variant of push-relabel called \emph{weighted push-relabel}. The weighted push-relabel algorithm itself is clear, combinatorial, and implementable. The authors of \cite{BernsteinBST24} showed that given edge weights derived from an \emph{expander hierarchy,} weighted push-relabel is guaranteed to efficiently return a flow with good quality. Unfortunately, their algorithm for constructing the expander hierarchy is very complex.
Furthermore, their analysis of why the returned flow has good quality, given the expander hierarchy, is also quite intricate.
Thus, the goal of obtaining much simpler and clearer algorithms remains highly desirable.

\paragraph{Our Result.}
We propose a new combinatorial maximum flow algorithm that is significantly simpler and also strictly speeds up all recent combinatorial algorithms \cite{ChuzhoyK24SODA,ChuzhoyK24STOC,BernsteinBST24}, improving the running time from $n^{2+o(1)}$ to $O(n^{2} \cdot \polylog(n))$.\footnote{While we work out an explicit number of $O(\log n)$ factors in the $\tilde{O}(\cdot)$ notation, we made no significant effort in minimizing it. There are places (e.g., the expander decomposition algorithm) where more efficient algorithms should exist following standard techniques, yet we chose slower ones that we believe are simpler for presentation, analysis, and implementation.}

\begin{restatable}{thm}{MaxFlow}

\label{thm:main}
There is a randomized algorithm that, given an $n$-vertex directed graph $G$ with edge capacities from $\{1,\dots,U\}$ and $s,t\in V(G)$, finds a maximum $(s,t)$-flow in $G$ with high probability in $O(n^{2}\log^{19}n\log U)$ time.
\end{restatable}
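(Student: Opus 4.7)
The overall strategy is to follow the paradigm of \cite{BernsteinBST24}: reduce exact max flow to $O(\log U)$ rounds of high-quality approximate max flow via standard capacity scaling, and in each round run weighted push-relabel on a carefully chosen weighted graph. The novelty (and the primary task) is to replace the complicated expander hierarchy of \cite{BernsteinBST24} with a much simpler object---a \emph{shortcut graph} built from a single layer of cut-matching and expander-decomposition structure---while still paying only $O(n^{2} \polylog(n) \log U)$ time overall.

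To construct the shortcut graph $H$ at a given scale, I would first run an expander decomposition on (an auxiliary form of) $G$ to obtain edge-disjoint $\phi$-expander pieces covering all but a $\tilde{O}(\phi)$ fraction of the edges. Within each expander piece I would invoke a cut-matching game (the only randomized ingredient) to embed a collection of matchings whose union is itself an expander on that piece. These matchings are added to $G$ as \emph{shortcut} edges, each carrying a length/weight derived from how deep the corresponding embedding path is. Iterating the decomposition $O(\log n)$ times on the uncovered remainder yields $\tilde{O}(n)$ shortcut edges in total, and guarantees that every relevant $(s,t)$-cut in $G$ is spanned by a path in $H$ of weighted length $\polylog(n)$.

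I would then invoke the weighted push-relabel algorithm of \cite{BernsteinBST24} on $H$. Its guarantee is roughly that, in $\tilde{O}(m_{H} \cdot h)$ time where $h$ is the maximum weighted height, it returns a flow of value at least $(1 - 1/h)\cdot \mathrm{OPT}$. Since $m_{H} = \tilde{O}(n)$ and $h = \polylog(n)$ by the shortcut construction, one round costs $\tilde{O}(n^{2})$; summing over the $O(\log U)$ scales plus the standard augmenting-path cleanup yields the claimed bound. The main obstacle I expect is the quality analysis: showing that any near-maximum $(s,t)$-flow in $G$ can be routed through $H$ with low weighted dilation, i.e.\ that after composing expander decomposition with cut-matching embeddings, every residual cut admits short weighted detours in $H$, so that weighted push-relabel cannot terminate before reaching a nearly optimal flow value. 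A secondary technical task is to sharpen the parameters (expansion $\phi$, recursion depth, weight scaling) so that the total polylogarithmic overhead collapses to the stated $\log^{19} n$ rather than any $n^{o(1)}$ factor as in \cite{BernsteinBST24}.
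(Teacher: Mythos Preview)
Your proposal has the right outermost scaffolding (capacity scaling, build a shortcut graph, run weighted push-relabel), but the parameter accounting and the shortcut construction are both off in ways that make the argument fail.

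First, the running time of weighted push-relabel is \emph{not} $\tilde{O}(m_H\cdot h)$. The correct bound (Theorem~\ref{thm:push-relabel-main-theorem}) is $\tilde{O}\bigl(m+\sum_e h/\Bw(e)\bigr)$; an edge of weight $\Bw(e)$ is processed only every $\Bw(e)$ layers. The paper does \emph{not} achieve $\tilde{O}(n^2)$ by making $m_H=\tilde{O}(n)$ and $h=\polylog(n)$ --- indeed $H$ contains all of $G$'s edges, so $m_H$ can be $\Theta(n^2)$, and the height $h$ is $\tilde{O}(n)$, not polylogarithmic. The actual mechanism is: the weight function is $\Bw(u,v)=|\Btau(u)-\Btau(v)|$ for a vertex ordering $\Btau$ respecting the hierarchy, so $\sum_e 1/\Bw(e)=\tilde{O}(n)$ by a harmonic-sum argument, and $h=\tilde{O}(n)$ because any approximate-max flow can be rerouted to have average path weight $\tilde{O}(n)$. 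Multiplying gives $\tilde{O}(n^2)$. Your claim that ``every relevant $(s,t)$-cut is spanned by a path of weighted length $\polylog(n)$'' cannot hold with these weights.

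Second, the shortcuts are not matchings coming out of the cut-matching game. They are \emph{stars}: for every level-$i$ component $C$ of a multi-level \emph{weak} expander hierarchy, a fresh Steiner vertex $r_C$ is connected bidirectionally to the tails of $E_i\cap C$ with capacity scaled by $\psi=\tilde\Theta(\phi)$. The cut-matching game is used only inside the expander-decomposition subroutine to \emph{certify} expansion of each level; its matchings are witnesses, not graph edges. The star structure is what makes the key diameter/pruning argument (Lemma~\ref{lemma:low-diameter-expander-new}) a two-line proof and what lets the hierarchy be \emph{weak} (refinement-closed), avoiding the back-and-forth rebuilding of \cite{BernsteinBST24}.

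Third, you are missing an entire step: the flow $\Bf_A$ you compute lives in $G_A=G\cup A$, not in $G$. You must \emph{unfold} it level by level back to $G$: the flow on the star of a level-$i$ component defines a $\Bvol_{E_i}$-respecting, component-constrained demand, which by the (weak) expansion guarantee is routable in $G_{A^{(i-1)}}$ with congestion $(1+O(1/L))$; iterating top-down over $L=O(\log n)$ levels yields congestion $O(1)$ in $G$, after which you round. Without this, you have a flow in the wrong graph.
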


It is worth noting though that while our $\tilde{O}(n^2)$ time algorithm improves on the almost-linear time algorithm of \cite{ChenKLPGS22} on dense graphs, there also exists an IPM-based $\tilde{O}(m+n^{1.5})$ time algorithm of \cite{BrandLLSS0W21} that already runs in near-linear time on dense graphs.

Overall, we believe that our algorithm is by a significant margin the simplest among all exact maximum flow algorithms for general graphs that go beyond the $\otil(m\sqrt{n})$ bound of \cite{LeeS14}, which include all the recent progress using IPM plus dynamic graph data structures.\footnote{We are only comparing ourselves to other algorithms for general graphs (directed, capacitated). There is a separate line of work for directed graphs \emph{with small capacities}, culminating in $m^{4/3+o(1)}$ time algorithms \cite{Madry16,LiuS20Energy,LiuS20Divergence,Kathuria20}; these works use the IPM framework but do not require dynamic data structures. Finally, for the special case of \emph{undirected, uncapacitated, simple} graphs, there is a much simpler $\otil(n^2)$ time algorithm of Karger and Levine  \cite{KargerL15}.}
Our result is also relatively self-contained. The main outside tool that we need is weighted push-relabel from \cite{BernsteinBST24}, which is a simple extension of classic push-relabel; a fully self-contained description and analysis is given in \cite[Section 4]{BernsteinBST24}. Beyond that we use:
\begin{itemize}
\item Standard graph algorithms such as topological sort and finding strongly connected components.
\item Classic tools for capacitated flow such as link-cut trees and fractional-to-integral flow rounding.
\item The heaviest machinery that we use is the cut-matching game for expanders. We specifically use the ``non-stop" cut-matching game. Loosely speaking, whereas the standard cut-matching game either certifies that the graph is an expander or finds a sparse cut, the non-stop version either certifies almost-expansion or returns a \emph{balanced} sparse cut.  This has already become a relatively standard tool for undirected expanders \cite{RackeST14,SaranurakW19,LiRW25}. The algorithm for directed expanders is similar but needs a more refined analysis; we use as a black box the result of \cite{FleischmannLL25}.
\end{itemize}

\paragraph{Implementation.}

As evidence of its relative simplicity, we present the full implementation of \cref{thm:main} in C++ \cite{BBLST25implementation}.
To our knowledge, this is the first complete implementation of any maximum flow algorithm that is provably faster than $O(m\sqrt{n})$ time in any density regime. 
Previously, \cite{Kavi24} gave a partial implementation of the almost-linear time algorithm of \cite{ChenKLPGS22}, but  significant parts are missing.
The lack of implementation of the recent theoretically efficient maximum flow algorithms is likely due to the sheer complexity of the algorithm descriptions that results from their reliance on dynamic graph data structures. 

We emphasize that our implementation is not at all practical, particularly because of the $\log^{19} n$ factor in our running time. Our goal is merely to show that the entire algorithm can be captured from start to finish using code of reasonable length.

\paragraph{Derandomization.}
The only randomized component of \Cref{thm:main} is the KRV-style cut-matching game from \cite{FleischmannLL25} for computing expander decomposition. There exists a deterministic variant of the cut-matching game \cite{ChuzhoyGLNPS20,BernsteinGS20}, but it incurs an $n^{o(1)}$ overhead in the running time and the quality of expansion, while also being more complicated. Using \cite{ChuzhoyGLNPS20,BernsteinGS20}, it should be straightforward to make our algorithm deterministic, but with running time of $n^{2+o(1)}\log U$ instead of $\otil(n^2 \log U$).

In fact, for the special case of \emph{vertex-capacitated} graphs (instead of edge-capacitated), our algorithm naturally gives a \emph{deterministic} $\tilde{O}(n^2)$ time algorithm when combined with the appropriate implementation of the cut-matching game.
This is the first near-linear time deterministic algorithm for the problem
on dense graphs.
Even for the special case of bipartite matching, no near-linear time \emph{deterministic} algorithms were previously known in any density regime. In particular, the $\tilde{O}(m+n^{1.5})$ time algorithm by \cite{BrandLNPSS0W20,BrandLLSS0W21} is randomized. 

\begin{restatable}{thm}{VertexFlow}
  There is a deterministic algorithm that, given an $n$-vertex directed graph $G$ with vertex capacities from $\{1,\ldots,U\}$ and $s, t \in V(G)$, finds a maximum $(s,t)$-flow in $G$ in $O(n^2\log^{46} n\log U)$ time.
  \label{thm:vertex-flow}
\end{restatable}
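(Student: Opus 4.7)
The plan is to re-instantiate the algorithmic framework behind \cref{thm:main} in the vertex-capacitated setting and then swap out its only randomized subroutine for a suitable deterministic one. The overall algorithm consists of three layers: (i) weighted push-relabel from \cite{BernsteinBST24}, (ii) an expander-hierarchy construction driven by iterated expander decomposition, and (iii) standard peripherals such as link-cut trees, SCC/topological-sort subroutines, and fractional-to-integral flow rounding. Layers (i) and (iii) are already deterministic, so the entire randomness budget of \cref{thm:main} sits inside the KRV-style non-stop cut-matching game of \cite{FleischmannLL25} used inside layer (ii).

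First I would verify that the entire framework transfers to vertex-capacitated graphs with essentially no conceptual change. This can be done either through the textbook vertex-splitting gadget (which keeps the vertex count at $O(n)$ and turns vertex capacities into edge capacities on the split arcs) or by re-deriving the weighted push-relabel and hierarchy analyses with vertex weights in place of edge weights; both routes leave all correctness arguments and polylogarithmic bookkeeping intact. In particular, the expander decomposition inside layer (ii) only needs to certify vertex expansion or produce a balanced vertex-sparse cut, which is the natural notion dual to the vertex capacities driving the push-relabel potentials.

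Second, and this is the key step, I would replace the randomized cut-matching game of \cite{FleischmannLL25} by a deterministic vertex-capacitated non-stop cut-matching game with only polylogarithmic overhead. The structural reason this is feasible for vertex capacities, but not at comparable cost for edge capacities, is that a vertex-capacitated max flow admits a vertex-disjoint path decomposition and that sparsest vertex cuts can be produced by a short sequence of bounded max-flow calls on much smaller auxiliary graphs; consequently the \emph{appropriate implementation} of the cut-matching game alluded to in the statement can be realized without invoking the \cite{ChuzhoyGLNPS20,BernsteinGS20} derandomization machinery that would lose an $n^{o(1)}$ factor. Plugging this deterministic subroutine into the expander decomposition and then into the main algorithm preserves every guarantee needed by weighted push-relabel and only inflates the polylogarithmic exponent, which accounts precisely for the gap between the $\log^{19} n$ of \cref{thm:main} and the $\log^{46} n$ stated here.

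The main obstacle I expect is in the deterministic cut-matching step: one has to confirm that the deterministic vertex variant meets the exact specification (non-stop mode, correct balance parameter, small embedded congestion, directed graphs) demanded by the expander-hierarchy analysis, and that its polylogarithmic slowdown can be absorbed uniformly across all hierarchy levels and all $O(\log U)$ capacity scales. Once this is in hand, the rest of the proof is a line-by-line replay of the proof of \cref{thm:main} with ``edge'' replaced by ``vertex'' throughout and with the new deterministic cut-matching game in place of \cite{FleischmannLL25}, yielding the claimed deterministic $O(n^{2}\log^{46} n \log U)$ running time.
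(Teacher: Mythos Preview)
Your high-level plan (vertex-split, then derandomize the cut-matching game) matches the paper, but the mechanism you identify for why derandomization succeeds is wrong, and the specific claim that one should \emph{avoid} the \cite{ChuzhoyGLNPS20,BernsteinGS20} machinery is backwards. The paper in fact \emph{uses} the deterministic KKOV-style cut-matching game of \cite{BernsteinGS20} together with the balanced-sparse-cut primitive of \cite{ChuzhoyGLNPS20}; it does not construct any special ``vertex-capacitated non-stop'' game. The deterministic KKOV game runs in $O(m^2\,\polylog n)$ time where $m$ is the number of edges in the matchings returned by the matching player. In the edge-capacitated setting $m$ can be $\Theta(n^2)$, which is why this route fails there. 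The actual reason the vertex-capacitated case works is a sparsity bound on the matchings: after vertex-splitting, only $O(n)$ edges have finite capacity (the $v_{\mathrm{in}}\!\to\! v_{\mathrm{out}}$ arcs), the infinite-capacity edges form a DAG and sit at level~$1$ of the hierarchy, and the shortcut edges contribute another $O(n\log n)$ finite-capacity edges. When the matching player extracts a matching by path-decomposing the push-relabel flow, each path saturates a source, a sink, or a finite-capacity edge, so the matching has at most $n+|E(G_A)\setminus E_\infty|=O(n\log n)$ edges. Plugging $m=O(n\log n)$ into the $O(m^2\,\polylog n)$ bound gives $\tilde O(n^2)$.

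Your proposal also omits a second ingredient that is essential for the iteration over residual graphs: after each approximate-flow phase one must round the accumulated flow so that its support has only $O(n)$ edges (the paper does this via a cycle-cancelling argument on the associated bipartite $\Bb$-matching). Without this step the residual graph can acquire $\Omega(n^2)$ finite-capacity edges and the sparse-matching argument collapses. Finally, since the deterministic game is \emph{not} non-stop, the paper compensates by adding a small budget of fake edges to make each matching perfect and then applies directed expander pruning \cite{SulserP25} at the end to strip them; your plan would need to incorporate this as well.
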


\paragraph{Techniques.} 

The main technical difficulty faced by~\cite{BernsteinBST24} in their bottom-up expander hierarchy construction is enforcing the hierarchical structure in between levels. That is, the next level $i+1$ need to be a coarsening of the previous level $i$, in the sense that each cluster on level $i+1$ is the (disjoint) union of some clusters on level $i$. Unfortunately, directly building a coarsening at level $i+1$ (without changing levels $i$ and below) is infeasible, in that the multiplicative losses build up significantly over the levels of the hierarchy. Instead,~\cite{BernsteinBST24} need to re-cluster levels $i$ and below in order to build level $i+1$ with minimal losses. The final algorithm continuously and dynamically breaks down and repairs previous levels while building the next level, a process which is highly technical
and suffers an extra $n^{o(1)}$ factor.

Our starting point is a recent bottom-up \emph{weak} expander hierarchy construction for undirected graphs~\cite{LiRW25}.\footnote{A similar strategy was implemented by Bartal~\cite{bartal1996probabilistic} to construct a low-diameter decomposition hierarchy.} Their simple conceptual insight is that if one settles for weak expanders, then the back-and-forth rebuilding between levels is completely unnecessary. Instead, we can first build the next level $i+1$ without regard to the structure of the previous $i$ levels, so that the coarsening property is not ensured. To enforce coarsening, simply \emph{refine} each previous level with respect to level $i+1$. That is, each cluster on a previous level is further partitioned into all possible non-empty intersections between that cluster and some cluster on level $i+1$. It turns out that unlike (strong) expander decomposition, weak expander decompositions are closed under refinements: if each cluster in a weak expander decomposition is partitioned arbitrarily, then the new clustering is still a weak expander decomposition.

However, the original algorithm of~\cite{BernsteinBST24} commits to strong expander hierarchy for a good reason: their weighted push-relabel framework breaks down in subtle ways for weak expander hierarchy. Surprisingly, we show that adding \emph{shortcuts} to expanders, as similarly done on undirected graphs~\cite{HaeuplerH0RS24,LiRW25}, resolves this issue. More precisely, we run the weighted push-relabel flow algorithm of~\cite{BernsteinBST24}, where the edge weights are induced by an expander hierarchy, on the original input graph \emph{union with shortcuts}, where the shortcut edges are stars on top of each expander in the hierarchy. Thus, our algorithm follows the augmenting-path framework, except that the paths can use the shortcut edges.

This idea leads to major simplification in both the algorithm and analysis of \cite{BernsteinBST24}, as will be explained in detail in \Cref{sec:overview}. At a high level, the shortcut enables reasoning about the flow of edges at different levels of the hierarchy independently, whereas \cite{BernsteinBST24} needed an intricate analysis to handle the interactions of edges across levels.

\section{Preliminaries}\label{sec:prelim}

Let $\N\defeq\{1,2\ldots\}$ be the set of positive integers and $1/\N \defeq \{x: 1/x \in \N\}$.
We use boldface $\Ba$ to denote a vector, where $\Ba \leq \Bb$ means $\Ba$ is entry-wise upper-bounded by $\Bb$.
For a vector $\Bw$ we may write $\Bw(S)\defeq\sum_{e\in S}\Bw(e)$, and $\Bw|_S$ denote the restriction of $\Bw$ to $S$.
Let $\Bone_S$ denote the vector that has one on entries in $S$ and zero elsewhere.

\subsection{Basic Definitions}

\paragraph{Graphs.}
Without stated otherwise, in this paper we consider directed edge-capacitated graphs of the form $G = (V, E, \Bc_G)$, where $\Bc_G\in\R_{\geq 0}^{E}$.
When the context is clear, we may drop $\Bc_G$ from the definition of the capacitated graph and implicitly use the subscript $G$ to indicate its underlying graph.
By default, we will assume that the capacities are positive integers.
Via standard capacity scaling techniques (e.g., \cite[Appendix B]{BernsteinBST24}), we will also assume that the capacities are bounded by $n^2$, i.e., $\|\Bc_G\|_{\infty} \leq n^2$, by incurring an $O(\log U)$ overhead in the final running time.

For a directed edge $e = (u, v)$, let $\tail(e) \defeq u$ and $\head(e) \defeq v$.
Let $E_G(A,B)$ for disjoint $A,B\subseteq V$ denote the set of edges whose tails are in $A$ and heads in $B$.

\paragraph{Incident Edges, Degree, Volume, and Cuts.}
Consider an $F \subseteq E$.
Let $\delta_F(v)$ be the set of edges in $F$ incident to $v$, whose sum of capacities we denote by $\deg_F(v) \defeq \sum_{e \in \delta_F(v)}\Bc_G(e)$.
Let $\vol_F(S)$ for $S \subseteq V$ be $\vol_F(S) \defeq \sum_{v \in S}\deg_F(v)$.
Let $\deg_G(v) \defeq \deg_E(v)$ and $\vol_G(S) \defeq \vol_E(S)$.
We may use $\vol(v)$ and $\deg(v)$ interchangeably for a vertex $v$, and we may also use them as vectors on $V$, in which case we will use the notation $\Bdeg, \Bvol$ (with appropriate subscripts).
For $S \subseteq V$, let $\overline{S} \defeq V \setminus S$.

\paragraph{Flows.}
A \emph{(single-commodity) flow} $\Bf\in\mathbb{R}_{\ge0}^{E}$ assigns non-negative value to each edge.
The \emph{congestion} of a flow $\Bf$ in $G$ is $\cong_{G}(\Bf) \defeq \max_{e \in E}\Bf(e)/\Bc_G(e)$.
Let $\Bf^{\mathrm{out}}(u) = \sum_{(u,v)\in E} \Bf((u,v)) - \sum_{(v,u)\in E} \Bf((v,u))$ denote the net outgoing flow at vertex $u\in V$.

A \emph{(single-commodity) vertex-demand} is a pair $(\Bsource, \Bsink)$ with \emph{source} and \emph{sink} vector $\Bsource,\Bsink \in \R_{\geq 0}^{V}$.
A \emph{flow instance} is a tuple $\cI = (G, \Bsource, \Bsink)$, and we often consider \emph{diffusion instances} where $\|\Bsource\|_1 \leq \|\Bsink\|_1$.
Let $\Bsource_{\Bf}(u) \defeq \max\{0,\Bsource(u) - \Bsink(u) - \Bf^{\mathrm{out}}(u)\}$ and $\Bsink_{\Bf}(u) \defeq \max\{0, \Bsink(u) - \Bsource(u) + \Bf^{\mathrm{out}}(u)\}$ be the \emph{residual source} and \emph{sink} demands at $u$.
A flow $\Bf$ \emph{partially routes} $(\Bsource,\Bsink)$ if $\Bsource_{\Bf} \leq \Bsource$ and $\Bsink_{\Bf} \leq \Bsink$, and the \emph{value} of such an $\Bf$ is $|\Bf| \defeq \sum_{u\in V}(\Bsource(u) - \Bsource_{\Bf}(u))$, i.e., the amount of source (equivalently, sink) demand routed.
The flow $\Bf$ \emph{(completely) routes} $(\Bsource, \Bsink)$ if $|\Bf| = \|\Bsource\|_1$ (equivalently $\Bsource_{\Bf} = \Bzero$) and $\Bsink_{\Bf} \geq \Bzero$.
The demand $(\Bsource,\Bsink)$ is \emph{routable} (with congestion $\kappa$) if there exists a flow (with congestion $\kappa$) routing it.
The flow $\Bf$ is \emph{feasible} for $\cI$ if
$\Bsource_{\Bf} \le \Bsource$,
$\Bsink_{\Bf} \le \Bsink$, and
$\cong_{G}(\Bf)\le 1$.

Given a flow $\Bf$, the \emph{residual graph} $G_{\Bf}$ contains for each $e = (u, v) \in E$ a \emph{forward} edge $\overrightarrow{e} =(u,v)$ with capacity $\Bc_{\Bf}(\forward{e}) \defeq \Bc_{G}(e) - \Bf(e)$ if $\Bc_{\Bf}(\forward{e}) > 0$ and a \emph{backward} edge $\rev{e} = (v, u)$ with capacity $\Bc_{\Bf}(\backward{e}) \defeq \Bf(e)$ if $\Bc_{\Bf}(\backward{e}) > 0$.
An \emph{augmenting path} is a path in $G_{\Bf}$ consisting of edges with positive residual capacities from an unsaturated source $s$ (i.e., $\Bsource_{\Bf}(s)>0$) to an unsaturated sink $t$ (i.e., $\Bsink_{\Bf}(t)>0$).

A flow $\Bf$ is \emph{$1/z$-integral} where $z \in \N$ if each $\Bf(e)$ is an integral multiple of $z$, i.e., $\Bf \in (1/z) \cdot \N^E$.
Similarly, a demand $(\Bsource,\Bsink)$ is \emph{$1/z$-integral} if $\Bsource,\Bsink \in (1/z)\cdot \N^E$.

\begin{lemma}[Flow rounding~\cite{KangP15}]
  Given an integral flow $\Bf$ in an $m$-edge graph $G$ routing an integral demand $(\Bsource, \Bsink)$ with $\Bsource,\Bsink \in z\cdot \N^V$ where $z \in \N$, there is a deterministic $O(m\log m)$ time algorithm that computes an integral $\Bf^\prime$ with $\Bf^\prime \leq \lceil \Bf/z\rceil$ routing $(\Bsource/z, \Bsink/z)$.
  \label{lemma:rounding}
\end{lemma}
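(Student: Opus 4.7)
The plan is to reduce this to the classical ``fractional-to-integral'' flow rounding problem: view $\hat{\Bf} \defeq \Bf/z$ as a (generally non-integral) flow on $G$ routing the integral demand $(\Bsource/z, \Bsink/z)$, and produce an integral flow $\Bf'$ that routes the same demand and satisfies $\lfloor \hat{\Bf}(e)\rfloor \le \Bf'(e) \le \lceil \hat{\Bf}(e)\rceil$ for every edge $e$.

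First I would split each edge's flow value into its integral and fractional parts: write $\Bf(e) = z \cdot q(e) + r(e)$ with $0 \le r(e) < z$, and set a baseline $\Bf'(e) \gets q(e)$. Let $H \subseteq E$ be the set of \emph{fractional} edges (those with $r(e) \neq 0$). Define the residual imbalance at each vertex $v$ by
\[
b(v) \defeq \sum_{e \in \delta^{\mathrm{in}}(v)} r(e) \;-\; \sum_{e \in \delta^{\mathrm{out}}(v)} r(e) \;-\; \bigl(\Bsource(v) - \Bsink(v)\bigr) \bmod z.
\]
Because $\Bsource(v), \Bsink(v) \in z \cdot \N$ and because $\Bf$ itself conserves flow for the demand $(\Bsource,\Bsink)$, every $b(v)$ is an integral multiple of $z$, so after dividing by $z$ one obtains an integral residual demand $\Bb$ supported on $V$ with $\sum_v \Bb(v) = 0$. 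It remains to find an integral flow $\Bg \in \{0,1\}^{H}$ on the fractional subgraph that routes $\Bb$; then the final output $\Bf'(e) = q(e) + \Bg(e)$ is integral, routes $(\Bsource/z, \Bsink/z)$, and satisfies $\Bf'(e) \le \lceil \Bf(e)/z \rceil$ on every edge.

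Existence of $\Bg$ follows by an Eulerian-style / flow-decomposition argument: the fractional flow $r/z$ on $H$ already routes $\Bb$ with every edge strictly in $(0,1)$, so by repeatedly decomposing into cycles and source-to-sink paths and augmenting each by the full unit $1$, every fractional edge can be rounded independently either up (contributing to $\Bg$) or down. Concretely, process each $v$ with $\Bb(v) \neq 0$ by finding an augmenting path in $H$ (interpreting $r/z > 0$ as a forward edge and $r/z < 1$ as a backward edge) to some $u$ with opposite-sign imbalance, and cancel along it. Each such cancellation either saturates or empties at least one edge of $H$ and removes it from further consideration, so there are $O(m)$ cancellations in total.

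The main obstacle is hitting the $O(m\log m)$ time budget: a naive path-search per cancellation is $\Omega(m^2)$ in the worst case. The standard fix, and the one used by Kang--Payor, is to maintain a spanning forest of $H$ in a link-cut tree, inserting fractional edges one at a time; whenever an insertion closes a cycle, cancel one unit of flow around it (paying $O(\log m)$ for \textsc{FindMin}/\textsc{AddPath} operations), delete the saturated/emptied edge from the forest, and continue. After all edges have been processed, the remaining forest routes $\Bb$ by a final pass that walks from leaves inward, again using $O(\log m)$ link-cut operations per edge. Summing over the $O(m)$ edges gives the claimed $O(m \log m)$ runtime, and determinism is immediate since no randomization is used anywhere in the procedure.
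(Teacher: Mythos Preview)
The paper does not prove this lemma; it is stated in the preliminaries with the citation~\cite{KangP15} and used as a black box throughout. There is therefore no ``paper's own proof'' to compare against.

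Your sketch is a reasonable outline of the Kang--Payor rounding algorithm and would, with some cleanup, yield a correct proof. A couple of points worth tightening. First, the ``$\bmod\ z$'' in your definition of $b(v)$ is extraneous and confusing: since $\Bsource(v),\Bsink(v)\in z\cdot\N$, that term vanishes, and the claim that $b(v)\in z\cdot\Z$ follows directly from flow conservation of $\Bf$ at $v$ together with $\Bf(e)=z\,q(e)+r(e)$. Second, your description of the rounding step conflates two phases: one does not ``augment each by the full unit $1$'' --- in the cycle-cancellation phase you push just enough around the cycle to make \emph{one} edge integral (its fractional part hits $0$ or $1$), delete that edge, and repeat; only after the fractional support is acyclic do you handle the remaining forest. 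Your final paragraph with the link-cut tree is the correct implementation of this and gives the $O(m\log m)$ bound; the earlier ``concretely, find an augmenting path'' paragraph is the part that reads as muddled.
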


\subsection{Weighted Push-Relabel}

We will use the \emph{weighted push-relabel} algorithm from \cite{BernsteinBST24} as a black box (see \cite[Section 4]{BernsteinBST24}). Recall that the classic push-relabel algorithm up to height $h$ finds a flow $f$ such that the residual graph has no augmenting paths of length $h$; the running time is $\tilde{O}(mh)$ because an edge is processed every time it moves up a layer. The weighted push-relabel algorithm takes as additional input a weight function $\Bw$ and finds a flow such that every augmenting path $P$ has $\Bw(P) \leq 3h$. The key advantage of using weights is that an edge of weight $\Bw(e)$ only has to be processed every $\Bw(e)$ layers, so each edge contributes only $h/\Bw(e)$ to the running time.

  \begin{theorem}[Weighted Push-Relabel \cite{BernsteinBST24}] \label{thm:push-relabel-main-theorem}
  Suppose we have a maximum flow instance $\cI = (G,\Bc,\Bsource,\Bsink)$ consisting of an $n$-vertex $m$-edge directed graph $G = (V, E)$, integral edge capacities $\Bc\in \N^E$, and integral demand $(\Bsource,\Bsink)$. Additionally, suppose we have a weight function $\Bw \in \N^E$ and height parameter $h \in \N$. Then there is an algorithm \emph{\alg{WeightedPushRelabel}{$G, \Bc, \Bsource, \Bsink, \Bw, h$}} that in $O\left(\left(m + n + \sum_{e \in E}\frac{h}{\Bw(e)}\right)\log n \right)$ time finds a feasible integral flow $\Bf$ such that
  \begin{enumerate}[(i)]
    \item\label{item:push-relabel:invariant} the $\Bw$-distance in the residual graph $G_{\Bf}$ between any unsaturated source $s$ $(\Bsource_{\Bf}(s) > 0)$ and any unsaturated sink $t$ $(\Bsink_{\Bf}(t) > 0)$ is at least $\dist^{\Bw}_{G_{\Bf}}(s,t) > 3h$, and
    \item\label{item:push-relabel:short-flow} the average $\Bw$-length of the flow is $\frac{\Bw(\Bf)}{|\Bf|} \le 9h$.
    \item\label{item:push-relabel:approximation} $\Bf$ is a $\frac{1}{6}$-approximation of $\Bf^{*}_{\Bw,h}$---the optimal (not necessarily integral) flow with average $\Bw$-length $\frac{\Bw(\Bf^{\star}_{\Bw,h})}{|\Bf^{\star}_{\Bw,h}|}\le h$. %
  \end{enumerate}
\end{theorem}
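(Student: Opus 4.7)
The plan is to prove \Cref{thm:push-relabel-main-theorem} by generalizing the classical push-relabel framework of Goldberg and Tarjan to the weighted setting. I would maintain a label $\ell(v) \in \N_{\geq 0}$ for every vertex (initialized to $0$), inject the source demand as excess at each source, and then iteratively process excess vertices. Call a residual edge $e = (u, v)$ \emph{admissible} if $\ell(u) \geq \ell(v) + \Bw(e)$. An excess vertex $u$ either \emph{pushes} flow along an admissible outgoing residual edge or, if no such edge exists, is \emph{relabeled} to $\ell(u) \leftarrow \min_{(u,v)\in E_{G_{\Bf}}} \ell(v) + \Bw((u,v))$. Sinks absorb flow without being relabeled. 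The algorithm halts when every excess vertex satisfies $\ell(v) > 3h$ (or no excess remains).

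The crux of correctness is the invariant that, for every edge $(u,v)$ present in the residual graph, $\ell(u) \leq \ell(v) + \Bw((u,v))$. This holds initially (all labels are $0$), is preserved by pushes (a push along $(u,v)$ creates the reverse edge $(v, u)$, for which $\ell(v) - \ell(u) \leq 0 \leq \Bw((v, u))$), and is preserved by relabels by construction. Part (i) follows immediately: along any residual path $P$ from an unsaturated source $s$ to an unsaturated sink $t$, a telescoping sum gives $\Bw(P) \geq \ell(s) - \ell(t) > 3h - 0 = 3h$, since $s$ still carries excess (so the algorithm must have relabeled it past $3h$) while $t$ is never relabeled.

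For Part (ii), I would decompose $\Bf$ into source-to-sink paths; each unit of flow descends in $\ell$ along forward pushes and ascends along reverse pushes. Using the invariant together with an amortization over the label changes that a unit of flow experiences, one shows $\Bw(\Bf) \leq 9h \cdot |\Bf|$, where the constant accommodates reversals and cancellations. For Part (iii), let $\Bf^\star = \Bf^\star_{\Bw,h}$ be the optimal flow with average $\Bw$-length $\leq h$. A Markov-type argument decomposes at least $\tfrac{2}{3}|\Bf^\star|$ of $\Bf^\star$ onto source-to-sink paths of $\Bw$-length $\leq 3h$. Because the residual graph $G_{\Bf}$ has no such augmenting paths by Part (i), a flow-difference/path-cycle decomposition argument bounds how much of this competing subflow can be ``unrouted'' by $\Bf$, yielding $|\Bf| \geq \tfrac{1}{6}|\Bf^\star|$.

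For the running time, the key observation is that each residual edge $e = (u, v)$ can participate in at most $O(h/\Bw(e))$ events: between successive admissibility windows for $e$, the head's label must rise by at least $\Bw(e)$, and labels never exceed $3h + O(1)$ once the termination check triggers. Coupled with $O(n + m)$ initialization work and standard link-cut-tree-based blocking-flow pushes taking $O(\log n)$ amortized time each, this yields the $O\bigl((m + n + \sum_e h/\Bw(e))\log n\bigr)$ bound. The main obstacle will be Part (iii): making the comparison with $\Bf^\star$ rigorous requires a careful decomposition of the antisymmetric difference $\Bf^\star - \Bf$ in the residual graph and a somewhat delicate accounting to extract the $\tfrac{1}{6}$ factor, since the residual paths that carry $\Bf^\star$ may wind back through reverse edges of $\Bf$ in ways that the pure length bound in Part (i) does not immediately control.
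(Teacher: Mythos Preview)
The paper does not prove this theorem; it imports it as a black box from \cite{BernsteinBST24} (see the sentence immediately preceding the statement). So there is no in-paper proof to compare against. The paper does, however, reveal one structural fact about the algorithm that you should know: a footnote in the proof of \Cref{lemma:low-diameter-expander-new} states that the weighted push-relabel of \cite{BernsteinBST24} ``works by repeatedly pushing flow along simple augmenting paths,'' not by single-edge pushes in the Goldberg--Tarjan style.

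This distinction matters because your sketch conflates the two schemes, and the gap shows up precisely at Part~(ii). In an augmenting-path variant, each augmentation follows a path $P$ of admissible residual edges, so telescoping gives $\Bw(P) \le \ell(\text{start}) - \ell(\text{end}) \le O(h)$ directly, and summing over all augmentations yields $\Bw(\Bf) \le O(h)\cdot|\Bf|$ with no further work. Your description instead uses single-edge pushes and then, for (ii), appeals to an ``amortization over the label changes that a unit of flow experiences.'' That is not a proof: units of flow in a preflow have no identity, a unit can be shuttled back and forth arbitrarily many times, and after cancellation the paths in a decomposition of the \emph{final} flow bear no direct relation to the push history. If you try to apply the label invariant to such a final flow path $P$, only the reverse edges of $P$ are guaranteed residual, and the invariant on those yields $\Bw(P) \ge \ell(t) - \ell(s)$, which is the wrong direction for an upper bound. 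I do not see how to close this without switching to the augmenting-path algorithm---which is, in fact, what your running-time paragraph already tacitly assumes when it invokes ``link-cut-tree-based blocking-flow pushes.''

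Parts~(i), (iii), and the running-time bookkeeping are on the right track once you commit to the augmenting-path version throughout.
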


\subsection{Crucial Concepts: Hierarchies, Expanders, and Shortcuts}

\paragraph{Levels and Hierarchy.}
Let $\level:E\rightarrow\{1,\dots,L\}$ be a \emph{level function} of $G$. Define $E_{i}\defeq\{e\mid\level(e)=i\}$ as a set of \emph{level-$i$ edges}. Let $E_{\le i},E_{\ge i},E_{<i}$ and $E_{>i}$ be defined in a natural way.

\begin{defn}
\label{dfn:hierarchy}
Given a level function $\level$ of $G$, a \emph{level-$i$ component} of $G$ is a strongly connected component in $G\setminus E_{>i}$. Let ${\cal C}_{i}$ contain all level-$i$ components of $G$. A \emph{hierarchy} of $G$ induced by $\level$ is ${\cal H}=\{{\cal C}_{0},\dots,{\cal C}_{L}\}$.
\end{defn}

Note that the vertex sets of components in the hierarchy ${\cal C}$ over all levels form a laminar family where ${\cal C}_{0}=\{\{v\}\}_{v\in V}$ consists of singletons and ${\cal C}_L$ consists of SCCs of $G$. When we refer to a hierarchy ${\cal H}$, we implicitly assume that the level function inducing ${\cal H}$ is given to us too. Next, we define a vertex ordering that respects a hierarchy.

\begin{defn}
\label{def:tau}
A vertex ordering $\Btau_{{\cal H}} \in [n]^V$ \emph{respects} a hierarchy ${\cal H}$ of $G$ if the following holds.
\begin{enumerate}
\item For every component $C\in{\cal H}$, the image $\Btau(C)$ of $C$ is contiguous in $[n]$.
\item If $u$ are $v$ are in different level-$i$ components but $u$ can reach $v$ in $G\setminus E_{>i}$, then $\Btau_{{\cal H}}(u)<\Btau_{{\cal H}}(v)$.%
\end{enumerate}
Note that such an ordering can be easily computed in $O(mL)$ time (see \cite[Lemma 5.3]{BernsteinBST24}).
\end{defn}

\paragraph{(Component-Constrained) Expansion.}

Rather than the usual definition of expanders in terms of cuts, we define them in terms of flow. Intuitively, a set of terminal edges $F \subseteq E$ is expanding if any unit demand on $F$ can be routed in $G$ with low congestion.

Formally, we say that a demand $(\Bsource,\Bsink)$ with $\Bsource(V) \leq \Bsink(V)$ \emph{respects} $\Bd\in\mathbb{R}_{\ge0}^{V}$ if $\Bsource(v),\Bsink(v)\le\Bd(v)$ for all vertices $v\in V$. Let ${\cal C}$ be a collection of disjoint vertex sets. We say that $(\Bsource,\Bsink)$ is \emph{${\cal {\cal C}}$-component-constrained} if $\Bsource(S)\le \Bsink(S)$ for all sets $S\in{\cal {\cal C}}$.\footnote{Ideally we want to route sources in $S$ to sinks in $S$, and a demand being $\mathcal{C}$-component-constrained means that this is not vacuously made impossible by simply not having enough sink capacity in $S$.}
\begin{defn}
\label{dfn:expander}
An edge set $F$ is \emph{$(\Bw, h)$-length $\phi$-expanding} in $G$ where $\Bw \in \R^{E(G)}_{\geq 0}$ are edge weights if every $\Bvol_{F}$-respecting demand is routable in $G$ via a $\Bf$ with congestion at most $1/\phi$ and average length $\frac{\Bw(\Bf)}{|\Bf|} \leq h$.\footnote{Equivalently, $\Bf$ can be decomposed into a collection of flow paths such that the (weighted) average length of these flow paths is bounded by $h$. Note that this is not the same as the \emph{length-constrained expansion} recently proposed in the literature (e.g.,~\cite{HaeuplerR022,HaeuplerHS23,HaeuplerH0RS24}).
For comparison, this notion if the average-length version of the \emph{router} defined in \cite{HaeuplerH0RS24}, which is a simpler-to-define object than \emph{length-constrained expanders} in \cite{HaeuplerH0RS24} and literature.
}
The edge set is \emph{$h$-hop $\phi$-expanding} if it is \emph{$(\Bone, h)$-length $\phi$-expanding}, and simply \emph{$\phi$-expanding} if the hop/length constraint is dropped.
An edge set $F$ is \emph{${\cal {\cal C}}$-component-constrained} \emph{$\phi$-expanding} if the same holds for $\Bvol_{F}$-respecting ${\cal {\cal C}}$-component-constrained demand.
\end{defn}

Intuitively, if $F$ is ${\cal C}$-component-constrained expanding, then any $\Bvol_{F}$-respecting demand between vertices inside the same component $S$ of ${\cal C}$ can be routed with low congestion. Note that the flow routing the demand inside each component $S$ can go out of $S$, but the total congestion for \emph{simultaneously} routing the demand of all components $S\in{\cal C}$ must be low.

\paragraph{Weak Expander Hierarchies.}

We use a weaker version of the expander hierarchy defined in \cite{BernsteinBST24}. We start with a brief intuition. If the entire edge set $E$ is $\phi$-expanding in $G$ then the hierarchy only has one level. If not, we want a smaller set of edges $E_2$ such that all the components of $G \setminus E_1$ are (weakly) $\phi$-expanding in $G$. We then want an even smaller set $E_3$ such that inside each component of $G \setminus E_3$, the edges of $E_2$ are $\phi$-expanding.
This is formalized as follows.

\begin{defn}
[Weak Expander Hierarchy]
\label{dfn:expander-hierarchy}
A hierarchy ${\cal H}$ of $G$ is \emph{$\phi$-expanding} if, for every level $i\in\{1,\dots,L\}$, the level-$i$ edge set $E_{i}$ is $[G\setminus E_{>i}]$-component-constrained $\phi$-expanding in $G$, where $[G\setminus E_{>i}]$ refers to the set of SCCs in $G \setminus E_{>i}$.
\label{def:hierarchy}
\end{defn}

In contrast to \cref{def:hierarchy}, the strong hierarchy of \cite{BernsteinBST24} instead required that each level-$i$ component $C$ is itself an expander.
That is, $E_i\cap C$ is $\phi$-expanding in $C$.\footnote{The hierarchy of \cite{BernsteinBST24} also included a set of ``DAG edges'' $D$. We do not need this terminology, but loosely speaking, the DAG edges in our hierarchy correspond to the edges in $E_i$ that cross the SCCs of $G\setminus E_{>i}$.}

\paragraph{Shortcut of Hierarchy.}

The key difference between our algorithm and that of \cite{BernsteinBST24} is the use of shortcuts. These are new edges that we add to the graph to speed up flow computations; see \cref{overview:sec:shortcuts} for more intuition.

\begin{defn}(Star Edges)
\label{dfn:star}
For any edge set $F\subseteq E$, the \emph{star $A_{F}$ on $F$} is a star whose root is a new Steiner vertex $r$ and leaves are the tail of all edges in $F$.\footnote{The choice of tail or head of edges is arbitrary.} The edges between the root and leaves are bidirectional. We say that $A_{F}$ has \emph{capacity scale} $\psi \in 1/\mathbb{N}$ if the edges between the root $r$ and the endpoint of $e\in F$ have capacity $\psi\cdot\Bc(e)$. If there are multiple edges in $F$ that have the same tail we avoid parallel edges in $A_F$ by simply combining those into single edges by summing up the capacities.
\end{defn}

\begin{defn}
[Shortcut of Hierarchy]
\label{def:shortcut_of_h}
Given a hierarchy ${\cal H}$ of $G$, the \emph{shortcut} $A$ induced by ${\cal H}$ is a collection of stars defined as follows. For each level-$i$ component $C\in{\cal C}_{i}$, let $A_{C}$ be the star on $E_{i}\cap C$. Let $A_{i}\defeq\bigcup_{C\in{\cal C}_{i}}A_{C}$ denote a \emph{level-$i$ shortcut} and $A\defeq\bigcup_{1\le i\le L}A_{i}$. We say that $A$ has capacity scale $\psi$ if every star $A_{C}$ has capacity scale $\psi$. 
A \emph{shortcut graph} $G_{A}\defeq G\cup A$ is obtained by augmenting the shortcut $A$ to $G$.
\end{defn}

\begin{defn}\label{def:weight-of-shortcut}
Given a shortcut~$A$ induced by a hierarchy ${\cal H}$ of some subgraph $G^\prime\subseteq G$ (with $V(G^\prime) = V(G)$), define the \emph{weight function} $\Bw$ of $G_{A}$ as follows. 
\begin{itemize}
\item For each original edge $e=(u,v)\in E$, set $\Bw_{\cH}(e)\defeq|\Btau_{{\cal H}}(u)-\Btau_{{\cal H}}(v)|$ where $\Btau_{{\cal H}}$ is a vertex ordering induced by ${\cal H}$ (recall \cref{def:tau}).
\item For each shortcut edge $e\in A_{C}$ in component $C$, set $\Bw_{\cH}(e)\defeq|V(C)|$.
\end{itemize}
\end{defn}

Note that to define the weight function $\Bw_{\cH}$, the hierarchy $\cH$ does not need to include all edges in $G$.
In this case, the weight of each $e \in G\setminus G^\prime$ is still well-defined since $\cH$ still induces an ordering $\Btau_{\cH}$ of $V(G)$.
Also observe that any level-$i$ edge $e = (u, v)$ that moves \emph{backward} in the vertex order $\Btau_{\cH}$ (i.e., $\Btau_{\cH}(u) > \Btau_{\cH}(v)$ is contained in some level-$i$ component.

\begin{observation}
  For any $(u,v) \in E_i$ with $\Btau_{\cH}(u) > \Btau_{\cH}(v)$ there exists a $C \in \cC_i$ such that $u, v \in C$.
  \label{obs:backward-has-star} 
\end{observation}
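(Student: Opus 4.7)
The plan is to derive this immediately from the defining property of the vertex ordering $\Btau_{\cH}$ (\cref{def:tau}, condition 2) applied contrapositively. The edge $(u,v) \in E_i$ lives in $G\setminus E_{>i}$, so it witnesses reachability from $u$ to $v$ in $G\setminus E_{>i}$, and the hypothesis $\Btau_{\cH}(u) > \Btau_{\cH}(v)$ rules out the ``different component'' case.

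More explicitly, I would proceed as follows. Since $\level((u,v)) = i \leq i$, the edge $(u,v)$ is an edge of the subgraph $G \setminus E_{>i}$, hence there is a (one-edge) directed path from $u$ to $v$ in $G\setminus E_{>i}$, so $u$ can reach $v$ there. Suppose for contradiction that $u$ and $v$ lie in distinct level-$i$ components of $\cC_i$, i.e., distinct SCCs of $G\setminus E_{>i}$. Then by the second clause of \cref{def:tau}, the ordering $\Btau_{\cH}$ must satisfy $\Btau_{\cH}(u) < \Btau_{\cH}(v)$, contradicting the assumption $\Btau_{\cH}(u) > \Btau_{\cH}(v)$. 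Therefore $u$ and $v$ lie in the same level-$i$ component $C \in \cC_i$, which is exactly the claim.

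There is essentially no obstacle here: the whole content is packaged into \cref{def:tau}(2), and the observation is just the contrapositive restated for the special case of a single level-$i$ edge. The only minor thing to be careful about is to note that $(u,v) \in E_i$ (not $E_{>i}$) is what puts the edge into $G\setminus E_{>i}$ so that the reachability hypothesis of \cref{def:tau}(2) applies.
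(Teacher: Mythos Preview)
Your proof is correct and matches the paper's intent: the paper states this as an observation without a formal proof, since it is the immediate contrapositive of condition~2 in \cref{def:tau}. Your write-up spells out exactly this one-line argument.
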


\section{Technical Overview}\label{sec:overview}

For simplicity of presentation, we assume during this overview that the input graph is unit-capacitated, in which case the congestion of $\Bf$ is simply $\cong(\Bf)\defeq\max_{e \in E}\Bf(e)$.
Note that it suffices to design a constant-approximate flow algorithm for directed graphs, as the exact algorithm then follows by repeating the approximate algorithm $O(\log n)$ times on the residual graph.

\subsection{High-Level Framework Borrowed from \texorpdfstring{\cite{BernsteinBST24}}{[BBST24]}} 
\label{overview:sec:previous}

We first describe the high-level framework of \cite{BernsteinBST24} that our algorithm also follows.

\paragraph{Weight Function.} Using \cref{thm:push-relabel-main-theorem} (weighted push-relabel from \cite{BernsteinBST24}), computing an approximate flow in $\otil(n^2)$ time reduces to computing a ``good" weight function $\Bw$ that satisfies: (1) there exists an approximate max flow $\Bf$ for which all flow paths $P$ have $\Bw(P) = \otil(n)$ (this allows us to set $h = \otil(n)$ in \cref{thm:push-relabel-main-theorem}) and (2) $\sum_e 1/\Bw(e) = \otil(n)$. 

Note that if $G$ is a DAG with topological order $\Btau$, then $\Bw(u,v) \defeq |\Btau(u) - \Btau(v)|$ is a good weight function: \emph{all} paths have $\Bw(P) \leq n$ and $\sum_e 1/\Bw(e) = \otil(n)$ because the weights incident to any vertex $v$ form a harmonic sum. Applying \cref{thm:push-relabel-main-theorem} yields a remarkably simple approximate max-flow algorithm for DAGs, as noted by \cite{BernsteinBST24}.

For general graphs, we follow \cite{BernsteinBST24} in defining $\Bw$ based on an expander hierarchy $\cH$: let $\Btau$  be any vertex ordering that respects $\cH$ (see \cref{def:tau}) and set $\Bw(u,v) \defeq |\Btau(u) - \Btau(v)|$. Observe that we again have $\sum_e 1/\Bw(e) = \otil(n)$ for the same reason as in a DAG.

\paragraph{The Three Key Steps.} At a high level, our algorithm follows three basic steps of \cite{BernsteinBST24}.
\begin{enumerate}
\item Prove that the weigh function $\Bw$ defined by an expander hierarchy $\cH$ is a good weight function. 
\item Show how to efficiently construct an $\cH$ using an expander decomposition subroutine.
\item Show how to compute expander decomposition with respect to edge set $E_i$ using weighted push-relabel with weight function defined by the partial hierarchy $E_1,\dots,E_i$ computed so far. 
\end{enumerate}
Below, we describe in \Cref{overview:sec:shortcuts} how we introduce \emph{shortcuts} into the framework of \cite{BernsteinBST24} and then explain how they significantly simplify (and improve) the three steps above in \Cref{overview:sec:weak-weight-function,overview:sec:hierarchy,overview:sec:decomp}, respectively.

\subsection{Our Key Contribution: Shortcutting Expanders}
\label{overview:sec:shortcuts}

Much of the technical difficulty of \cite{BernsteinBST24} comes from the need to keep track of the interactions between levels of the hierarchy. 
We show that it is possible to avoid much of this interaction by adding a set of \emph{new} edges (and a new Steiner vertex) to the graph, denoted \emph{shortcuts}, which simplify the graph topology and speed up flow computations. 

The starting point is the following observation: as long as we only add shortcuts to edge sets that are already expanding in $G$, they will not significantly change the flow structure of the graph.

\begin{observation}
\label{overview:obs:shortcuts}
Let $F$ be a set of terminal edges that is $\phi$-expanding in $G$. Then, adding a star $A_F$ on $F$ with capacity scale $\psi =\phi/\polylog(n)$ (\cref{dfn:star}) has minimal impact on the flow structure of $G$: formally, for any feasible flow $\Bf'$ in $G \cup A_F$, there exists a flow $\Bf$ in $G$ with congestion $1+1/\polylog(n)$.
\end{observation}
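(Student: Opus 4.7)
The plan is to interpret the flow on the star edges as a surrogate demand on the leaves of $A_F$, reroute that demand through $G$ using the $\phi$-expansion of $F$, and then add this reroute to $\Bf'$ restricted to $E(G)$. The intended conclusion is that the resulting $\Bf$ realizes on $V(G)$ the same net demand as $\Bf'$ while exceeding capacities by at most a factor of $1+1/\polylog(n)$.

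First I would extract a \emph{star demand}. Because the Steiner root $r$ is incident only to edges of $A_F$ and (being a non-terminal) obeys flow conservation under $\Bf'$, for each leaf $u$ of $A_F$ I set $\Bsource^{*}(u)\defeq\Bf'((u,r))$ and $\Bsink^{*}(u)\defeq\Bf'((r,u))$, with both set to zero elsewhere. Flow conservation at $r$ gives $\|\Bsource^{*}\|_1=\|\Bsink^{*}\|_1$. Feasibility of $\Bf'$ on the star edges, combined with the capacity scale $\psi$ and the merging of parallel stubs in \cref{dfn:star}, yields
\[
  \Bsource^{*}(u),\,\Bsink^{*}(u)\;\le\;\psi\!\!\sum_{e\in F:\,\tail(e)=u}\!\!\Bc(e)\;\le\;\psi\deg_F(u).
\]
Hence $(\Bsource^{*}/\psi,\Bsink^{*}/\psi)$ is a $\Bvol_F$-respecting demand with equal source and sink totals, so by \cref{dfn:expander} the $\phi$-expansion of $F$ supplies a flow $\boldsymbol{g}^{*}$ in $G$ routing it with $\cong_G(\boldsymbol{g}^{*})\le 1/\phi$. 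Rescaling, $\boldsymbol{g}\defeq\psi\boldsymbol{g}^{*}$ routes $(\Bsource^{*},\Bsink^{*})$ in $G$ with congestion at most $\psi/\phi=1/\polylog(n)$.

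Finally I set $\Bf\defeq\Bf'|_{E(G)}+\boldsymbol{g}$. At every $v\in V(G)$, the $G$-edge net outflow of $\Bf'$ at $v$ plus $\boldsymbol{g}^{\mathrm{out}}(v)=\Bsource^{*}(v)-\Bsink^{*}(v)$---which is precisely the contribution of star edges to ${\Bf'}^{\mathrm{out}}(v)$---sums to ${\Bf'}^{\mathrm{out}}(v)$. Thus $\Bf$ induces the same net demand as $\Bf'$ on $V(G)$, and
\[
  \cong_G(\Bf)\;\le\;\cong_G(\Bf'|_{E(G)})+\cong_G(\boldsymbol{g})\;\le\;1+1/\polylog(n).
\]
The only delicate point I anticipate is the bookkeeping of parallel star stubs being merged by tail, to confirm that the per-vertex cap on $\Bsource^{*},\Bsink^{*}$ matches the $\psi\deg_F(u)$ required by \cref{dfn:expander}; once this is checked, the remaining steps are immediate from the definition of expansion.
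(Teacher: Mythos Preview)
Your argument is correct and follows exactly the approach of the paper's (informal) proof: extract the per-vertex demand induced by the flow on the star edges, observe it is bounded by $\psi\deg_F(v)$ and hence (after rescaling) is $\Bvol_F$-respecting, route it in $G$ via $\phi$-expansion at congestion $\psi/\phi=1/\polylog(n)$, and add the result to $\Bf'|_{E(G)}$. Your version is simply more careful about the bookkeeping (flow conservation at $r$, the merging of parallel stubs, and matching net demands on $V(G)$) than the one-paragraph sketch in the overview.
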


\begin{proof}
In $\Bf'$ the total flow on edges in $A_F$ is at most $\psi \card{A_F} = 2\psi\card{F}$. If we instead try to route this flow inside $G$ properly, the resulting flow instance has at most $\psi \deg_F(v)$ supply/demand on any vertex $v$. Since $F$ is expanding in $G$, this flow can be routed with a congestion of $\otil(\psi/\phi) = 1/\polylog(n)$.
\end{proof}

\paragraph{Shortcutting an Expander Hierarchy.}  In line with the above observation, we will add shortcuts to the expanders at every level of our hierarchy $\cH$. Formally, letting $\cH$ be a $\phi$-expander hierarchy (\cref{dfn:expander-hierarchy}), for every level $i$ and every SCC $C$ in $G \setminus E_{>i}$, we add a star on $G[C] \cap E_i$ with capacity scale $\psi = \phi/\polylog(n)$. For the rest of the overview, we refer to $A$ as the total set of shortcuts and define $G_A \defeq G \cup A$. Note that the final flow we compute will be in $G \cup A$; we discuss in Section \ref{overview:sec:unfolding} how to transform this into a flow on $G$.

\subsection{Improvements and Simplifications from Using Shortcuts} The addition of shortcuts ends up significantly simplifying many seemingly unrelated parts of the algorithm of \cite{BernsteinBST24}, as well as improving the running time from $n^{2+o(1)}$ to $\otil(n^2)$. At a high level, shortcuts on $G[C] \cap E_i$ provide a trivial way to send a moderate amount of flow between those edges without needing to interact with other levels of the hierarchy. In this section, we outline how we use shortcuts to solve the three key steps described in \cref{overview:sec:previous}.

\subsubsection{Weak Expansion and Shortcuts}
\label{overview:sec:weak-weight-function}
The authors of \cite{BernsteinBST24} show that given a \emph{strong} $\phi$-expander hierarchy $\cH$ (without shortcuts), the corresponding weight function $\Bw_{\cH}(u,v) \defeq |\Btau(u) - \Btau(v)|$ is a good one---that is, there exists an approximate max flow $\Bf$ such that all flow paths $P$ have $\Bw(P) = \otil(n/\phi)$. Unfortunately, this claim is false if $\cH$ is a weak expander hierarchy.

We briefly outline the proof of \cite{BernsteinBST24} to see why a strong hierarchy is needed. They show that to bound $\Bw_{\cH}(P)$ for a path $P$ in a flow $\Bf$---and hence to prove that $\Bw_{\cH}$ is a good weight function---it suffices to prove the following \emph{short-flow property}: for any level $i$ component $C$ (\cref{dfn:hierarchy}), $\Bf$ uses at most $\otil(1/\phi)$ edges in $C \cap E_i$. They prove the short-flow property by exploiting the fact that since $C \cap E_i$ is expanding in $C$ (because $\cH$ is an expander hierarchy), the aggregate flow through $C$ can be rerouted to use few edges in $E_i$. This rerouting needs to be done carefully to avoid blowing up congestion on lower levels. More importantly, the argument crucially requires the fact that rerouting a level $i$ component does not add any flow on $E_{>i}$, since otherwise those higher levels might once again violate the short-flow property. This is true in a strong expander hierarchy because $C$ is an expander in $G \setminus E_{>i}$. In a weak expander hierarchy, where the rerouting might use any edge in $G$, the whole argument breaks down. %

We now sketch the proof that adding shortcuts leads to a good weight function even for a weak hierarchy. The main advantage of this, as we will see in the next section, is that constructing a weak hierarchy is much easier. Our weight function $\Bw_{\cH}(u,v) \defeq |\Btau(u) - \Btau(v)|$ is the same as in \cite{BernsteinBST24} for edges in $G$; for every level $i$-component $C$, we set the star shortcuts on $C$ to have weight $\card{C}$. 

The argument that $\Bw$ satisfies the short-flow property is quite straightforward: unlike in \cite{BernsteinBST24}, we can handle each flow path $P$ separately rather than carefully rerouting aggregate instances. Consider any flow path $P$ in the current flow and any level $i$ component $C$; for simplicity assume $P$ has one unit of flow. Let $e_1, ..., e_q$ be the edges of $P$ in $C \cap E_i$. If $q < 2/\psi = \otil(1/\phi)$ then $P$ already satisfies the short-flow property. Otherwise, let $\efirst \defeq \{e_1, ..., e_{1/\psi}\}$ and $\elast \defeq \{e_{q-1/\psi+1}, ..., e_q$\}. We ``leak" flow from $\efirst$ to $\elast$ through the star edges in the natural way: reduce the flow on all edges in $P[e_1,e_q]$ by $\psi$ and instead add $\psi$ flow on the star edge from $e_1$ to $r$ and $r$ to $e_q$; then reduce the flow on all edges in $P[e_2,e_{q-1}]$ by $\psi$ and add $\psi$ flow on the star edge from $e_2$ to $r$ and $r$ to $e_{q-1}$. Repeating this for $1/\psi$ steps, the only edges with non-zero flow on $P \cap C \cap E_i$ will be the $2/\psi = \otil(1/\phi)$ edges of $\efirst$ and $\elast$. It is not hard to check the above procedure obeys the capacities of star edges. Note that the flow of all other edges only decreases, so there is no risk of violating the short-flow property for other levels; this is why a weak expander hierarchy suffices once we add shortcuts.\footnote{In fact, observe that our argument does not use the expansion properties of $\cH$ at all. The only reason we need expansion is to guarantee that adding shortcuts does not significantly alter the flow structure of the graph (\cref{overview:obs:shortcuts})}

\subsubsection{Bottom-Up Construction of Weak Expander Hierarchy}
\label{overview:sec:hierarchy}
As mentioned in the three key steps of \cref{overview:sec:previous}, both our algorithm and that of \cite{BernsteinBST24} construct the hierarchy via the following expander decomposition subroutine: given a partial hierarchy $E_1, ..., E_i$, either certify that $E_i$ is expanding (in which case we have a complete hierarchy) or compute a set of edges $\Ecut$ such that 
$E_i\setminus \Ecut$ is expanding in $G\setminus \Ecut$.\footnote{The \emph{strong} expander decomposition of \cite{BernsteinBST24} requires $E_i\setminus \Ecut$ to be expanding in $G\setminus \Ecut$, while our \emph{weak} one, which is simpler to compute, only needs it to be expanding in $G$.} In this section, we assume black-box access to such a subroutine and show that computing a strong expander hierarchy is still extremely challenging, while computing a weak one is trivial. The main advantage of shortcuts is thus that they allow us settle for a weak hierarchy.

Say that we have already computed a partial hierarchy $E_1, ..., E_i$, where $E_i$ might not be expanding in $G$. It is tempting to run the expander decomposition subroutine, and simply set $E_{i+1} \gets \Ecut$. If we were in the lucky case where $E_{i+1} \subseteq E_i$, then this would indeed work (we would then set $E_i \leftarrow E_i \setminus \Ecut$). Unfortunately, one can construct examples where the expander decomposition \emph{must} pick $\Ecut$ that uses edges from $E_{<i}$, which causes serious complications.
Consider some level $j < i$ and some level-$j$ SCC $C$ inside which $E_j$ is expanding. Say that $\Ecut$ contains some edges in $C$. If we set $E_{i+1} \gets \Ecut$, the edges of $\Ecut$ are added to $E_{>j}$, so $C$ might split into SCCs $C_1, C_2$. But even though $E_j$ was expanding in $C$, there is no guarantee that is it expanding in $C_1$ or $C_2$: routing a flow-instance in $C_1$ might require using edges from $C_2$, and vice versa.

To summarize, when constructing a strong expander hierarchy in a bottom-up fashion, fixing expansion properties at level $i$ can actually ruin the properties at level $j < i$. As a result,  the algorithm of \cite{BernsteinBST24} could not proceed in a direct bottom-up fashion, and instead needed to repeatedly move up and down between levels to fix issues caused in the previous step. 
Ensuring a small number of total moves required significant algorithmic adjustment and was the most technically involved aspect of the entire paper (see \cite[Section 7]{BernsteinBST24}). It was also the reason behind the $n^{o(1)}$ factor in their running time: the total number of moves was exponential in the number of layers in the hierarchy, so they could only afford a hierarchy with $o(\log n)$ layers.

By contrast, if we are willing to settle for a weak expander hierarchy, then the natural bottom-up construction easily goes through: we simply set $E_{i+1} \gets \Ecut$ and remove all edges in $E_{\leq i} \cap \Ecut$ from their respective sets. No further changes to the lower levels are needed and we can proceed to finding the next set $E_{i+2}$. In particular, consider the component $C$ from the previous paragraph: although $E_j$ might not be expanding in $G[C_1]$ or $G[C_2]$ individually, we know by definition of expander hierarchy that any demand respecting $G[C] \cap E_j$ can be routed in the whole graph $G$, and hence the same goes for any demands respecting $G[C_1] \cap E_j$ and $G[C_2] \cap E_j$.

\subsubsection{Computing a Single Level of the Hierarchy}
\label{overview:sec:pruning}
\label{overview:sec:decomp}
We now briefly discuss our approach to the expander decomposition subroutine from the previous section. Our high-level approach mimics that of \cite{BernsteinBST24}. By standard cut-matching games, computing an expander decomposition boils down to repeatedly solving the following: we are given some supply/demand on $E_j$ and want to either find a large flow or a small cut. We do so by running weighted push-relabel (\cref{thm:push-relabel-main-theorem}) with the weight function $\Bw$ defined by the lower levels of the hierarchy $E_1, \ldots, E_{j-1}$. If the flow $\Bf$ computed by weighted push-relabel is large, then we are done.
If the flow $\Bf$ is small, we need to return a small cut.

As in regular push-relabel, we compute a small cut by exploiting the property that the residual graph with respect to flow $\Bf$ contains no augmenting paths of small $\Bw$-weight, so the vertices can be partitioned into many layers $L_k$. We then argue that one of these layers has few edges crossing it in the residual graph and hence yields a small cut. We analyze each hierarchy-level $i$ separately and argue that most layers $L_k$ have few crossing edges from $E_i$. Observe that by definition of expander hierarchy, all edges in $E_i$ are in some SCC $C$ of $G \setminus E_{>i}$. We prove two key properties:
\begin{enumerate}
\item The vertices of each $E_i \cap C$ have small diameter---that is, for any $(u,u'),(v,v') \in E_i \cap C$, $\dist_{\Bw}(u,v)$ is small---and hence occupy only a small number of layers $L_k$.%
\item\label{item:path-reversal} Observe that push-relabel augments down the computed flow $\Bf$ and so reverses a small number of paths going through $C$. We thus also need to prove that there exists a small ``pruned set" $P_C \subseteq E_i$ such that the remaining vertices in each $(E_i \cap C) \setminus P_C$ have a small diameter even after the path reversals.\footnote{Note that we only only need to prove the \emph{existence} of $P_C$.}
\end{enumerate}

Our proofs of the two properties above diverge significantly from those of \cite{BernsteinBST24}. In \cite{BernsteinBST24}, although it is easy to argue that there exists a $(u, v)$-path with few edges in $E_i$ (because $E_i$ is expanding in $C$), bounding the weight of lower-level edges requires a much more careful analysis. The path reversals of Property \labelcref{item:path-reversal} pose an even bigger challenge. Intuitively, since $C$ is an expander, one can prove the existence of $P_C$ by using expander pruning. But even though the number of reversed flow paths is small, these paths can have many edges, so standard expander pruning will not give sufficiently strong bounds. Thus \cite{BernsteinBST24} needed to develop an entirely new expander pruning subroutine that can handle path reversals and is also sensitive to distances under $\Bw_{\cH}$.

By contrast, the presence of shortcuts leads to a much simpler analysis. Recall that every component $C$ now comes with a star root $r_C$ and star edges incident to every edge $e \in E_i \cap C$.
The diameter of  $E_i \cap C$ is  thus trivially small because one can use the path $(u,r_C)\circ(r_C,v)$. To handle path reversals, we observe that any simple path will use only two star edges, so only a small number of star edges will be saturated by the flow paths.
As such, except for a small set $P_C$ all other edges in $E_i \cap C$ will still be incident to an (unsaturated) bidirectional star edge in the residual graph and thus still have small diameter. This proof sketch again highlights the way in which star edges allow us to largely avoid analyzing the interaction between different levels of the hierarchy.

\subsection{Unfolding the Shortcuts} 
\label{overview:sec:unfolding}
While greatly simplifying most of the algorithm, shortcuts do also add a new element of complexity, as the algorithm computes a flow $\Bf'$ in $G_A$ instead of the original graph $G$. By an extension of Observation \ref{overview:obs:shortcuts}, we can show that there must exist a flow $\Bf$ of almost the same value, but we still need to compute this flow. Fortunately, this only needs to be done at the very end of the algorithm, by which point we have computed a lot of useful information that allows us to compute $\Bf$ using straightforward techniques. Loosely speaking, given an SCC of $G \setminus E_{>i}$, the flow on the shortcut edges of the star on $G[C] \cap E_i$ defines a flow instance respecting $E_i$ in $C$. Since each component $C$ is a (weak) expander with respect to $E_i$, all of these flow instances can be routed in $G$ with low congestion. 
This routing is easy to obtain, again, via the weighted push-relabel algorithm.
This leverages the fact that we have already proven when computing the expander hierarchy that these demands can be routed not just with low congestion, but also with short paths.
Iterating this for all levels in the hierarchy top-down,  
we map the flow in $G_A$ back to a flow in $G$ with small congestion blow-up.
Finally, we run a standard flow rounding to convert it into the desired feasible approximate flow.

\section{Weighted Push-Relabel on Shortcut Graphs}
\label{sec:push_relabel}

\newcommand{\volfc}{\Bvol_{F}}

Below, in \cref{cor:approx max flow in shortcut graph}, we show that by running the weighted push-relabel algorithm of \cite[Algorithm~1]{BernsteinBST24} we can in $\tO(n^2)$ time find a constant-approximate maximum flow (and approximate minimum cut) on the shortcut graph $G_A$ with capacity scale $\frac{1}{\polylog(n)}$.
To this end, we
prove a more general version, \cref{lem:push-relabel-on-shortcut}, that will later prove useful in the construction of the expander hierarchy. In this version, we have an additional edge set $F$ not part of the hierarchy.

This should be compared to \cite[Theorem~6.1]{BernsteinBST24}.
However, note that our proof is significantly simpler due to the addition of shortcuts.
In fact, in contrast to \cite[Theorem~6.1]{BernsteinBST24} where the given hierarchy is required to be expanding, such a condition is not even needed for our \cref{lem:push-relabel-on-shortcut} to work.
Instead, \cref{lem:push-relabel-on-shortcut} gives us a flow in $G_A$ not $G$, and the expansion of the hierarchy is used only when we are ``unfolding'' the flow back to $G$, which happens after running the weighted push-relabel algorithm.

\begin{lem}
[Weighted Push-Relabel on Shortcut Graphs]\label{lem:push-relabel-on-shortcut}
Let $F\subseteq E$ and ${\cal H}$ be the hierarchy of $G\setminus F$ with $L=O(\log n)$ levels. Let $A$ be the shortcut induced by ${\cal H}$ with capacity scale $\psi \in 1/\N$, and $G_{A}\defeq G\cup A$ be the shortcut graph. Given a diffusion instance ${\cal I}=(\Bsource,\Bsink)$ with $\psi$-integral demand and parameter $\kappa\in \N$, there is a deterministic algorithm that runs on $G_{A}$ in $O(n^{2}\log^3 n \cdot (\kappa+1/\psi))$ time and finds a $\psi$-integral flow $\Bf$ in $G_{A}$ with congestion $\kappa$ and average weight $\frac{\Bw_{\cH}(\Bf)}{|\Bf|} < O(n\log n \cdot (\kappa+1/\psi))$.

Additionally, if $|\Bf| < \|\Bsource\|_{1}$, then the algorithm also returns a cut $\emptyset\neq S\subsetneq V(G_A)$ with $\Bsink_{\Bf}(S)=0$ and $\Bsource_{\Bf}(\overline{S})=0$ of size
\[
\Bc(E_{G_{A}}(S,\overline{S}))\le\frac{41|\Bf|+\min\{\volfc(S),\volfc(\bar{S})\}}{\kappa}.
\]
\end{lem}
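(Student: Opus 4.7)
The plan is to reduce to the weighted push-relabel subroutine (\cref{thm:push-relabel-main-theorem}) on an appropriately rescaled instance, and to extract the cut (when $|\Bf| < \|\Bsource\|_{1}$) via an averaging argument on residual distance layers.

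First I would scale so that all capacities and demands become integral: multiply capacities of $G_A$ by $\kappa/\psi$ and demands by $1/\psi$. Since $1/\psi \in \N$, the original edges of $G$ become integer-capacitated; the shortcut edges, whose capacities are $\psi \cdot \Bc_G(e')$, become $\kappa \cdot \Bc_G(e')$; and the demands become integer by $\psi$-integrality. I then invoke the weighted push-relabel algorithm of \cref{thm:push-relabel-main-theorem} with weight function $\Bw_\cH$ and height parameter $h := \lceil n\log n (\kappa + 1/\psi)/9 \rceil$, yielding an integral flow $\Bf^\prime$. Setting $\Bf := \psi \Bf^\prime$ produces a $\psi$-integral flow in $G_A$ of congestion at most $\kappa$ (since $\Bf^\prime \le (\kappa/\psi)\Bc$ implies $\Bf \le \kappa \Bc$) and, by \cref{thm:push-relabel-main-theorem}(ii), average $\Bw_\cH$-weight $\Bw_\cH(\Bf)/|\Bf| = \Bw_\cH(\Bf^\prime)/|\Bf^\prime| \le 9h = O(n\log n(\kappa + 1/\psi))$.

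For the running time, I would verify that $\sum_{e \in E(G_A)} 1/\Bw_\cH(e) = O(n\log n)$. The original edges contribute $O(n\log n)$: for each vertex $v$, the weights $|\Btau_\cH(u) - \Btau_\cH(v)|$ summed over edges incident to $v$ form a harmonic sum bounded by $O(\log n)$, and there are $n$ vertices. Each shortcut star $A_C$ contributes $O(1)$: its $\le 2|V(C)|$ edges each have weight $|V(C)|$, so $\sum_{e \in A_C} 1/\Bw_\cH(e) \le 2$, and summing over the $O(nL) = O(n\log n)$ total components across levels gives $O(n\log n)$. Plugging into \cref{thm:push-relabel-main-theorem}'s runtime bound of $O((|E(G_A)| + \sum_e h/\Bw_\cH(e))\log n)$ and using $|E(G_A)| = O(n^2)$ yields $O((n^2 + h\cdot n\log n)\log n) = O(n^2\log^3 n(\kappa + 1/\psi))$.

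For the cut when $|\Bf| < \|\Bsource\|_{1}$, let $V_{\mathrm{src}} := \{v : \Bsource_\Bf(v) > 0\}$ and $d(v) := \dist^{\Bw_\cH}_{G_{A,\Bf}}(V_{\mathrm{src}}, v)$; by \cref{thm:push-relabel-main-theorem}(i), every unsaturated sink $t$ satisfies $d(t) > 3h$. Each level set $S_k := \{v : d(v) \le k\}$, for $k \in \{0, \ldots, 3h-1\}$, is then a valid candidate cut with $\Bsink_\Bf(S_k) = \Bsource_\Bf(\bar S_k) = 0$. I would choose $k$ by an averaging argument over the $3h$ layers, decomposing each edge's contribution to $\sum_k \Bc(E_{G_A}(S_k, \bar S_k)) = \sum_e \Bc(e)\cdot(\min(d(\head e), 3h) - d(\tail e))^+$ into three parts: $F$-edges, bounded trivially by $3h$ and absorbed into the $\volfc$ correction; non-$F$ unsaturated edges, where the residual-distance constraint $d(\head e) \le d(\tail e) + \Bw_\cH(e)$ caps the per-edge contribution at $\Bw_\cH(e)$; and non-$F$ saturated edges, where the flow-equals-capacity identity combined with $\sum_e \Bf^\prime(e) \le \Bw_\cH(\Bf^\prime) \le 9h|\Bf^\prime|$ (using $\Bw_\cH(e) \ge 1$) controls the total saturated capacity crossing the layers. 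Careful balancing---in particular, pitting the cut bound against the $\min\{\volfc(S_k), \volfc(\bar S_k)\}$ term on the two sides of the cut---yields a $k$ achieving $\Bc(E_{G_A}(S_k, \bar S_k)) \le (41|\Bf| + \min\{\volfc(S_k), \volfc(\bar S_k)\})/\kappa$ after unscaling cut capacities by the factor $\psi/\kappa$. Algorithmically, $d(\cdot)$ is computed via Dijkstra and the best $k$ is found by sweeping over the layers, all within the claimed time budget.

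The main obstacle is the cut-extraction step: naive pigeonhole alone lets each saturated non-$F$ edge contribute up to $3h$ layers, which does not immediately produce the $O(|\Bf|/\kappa)$ target. Properly teasing out the constant $41$ and making use of the $\min\{\volfc(S), \volfc(\bar S)\}/\kappa$ correction to absorb the residual slack is the delicate part of the argument. Crucially, no expansion assumption on $\cH$ is needed, in contrast to~\cite{BernsteinBST24}: the shortcut edges ensure the cut argument depends only on the residual distance structure from \cref{thm:push-relabel-main-theorem}(i), which is the key simplification afforded by working in the shortcut graph $G_A$ rather than in $G$.
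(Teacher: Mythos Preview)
Your setup for obtaining the flow and bounding the running time is essentially the paper's approach and is fine. The gap is in the cut-extraction step.

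Your averaging over the layers $S_k$ with the \emph{unmodified} weight $\Bw_{\cH}$ cannot control the contribution of unsaturated non-$F$ edges. For such an edge $e$ the residual-distance constraint indeed bounds the number of layers it crosses by $\Bw_{\cH}(e)$, but its contribution to $\sum_k \Bc(E_{G_A}(S_k,\bar S_k))$ is then $\Bc(e)\cdot\Bw_{\cH}(e)$, and $\sum_{e}\Bc(e)\Bw_{\cH}(e)$ has no reason to be $O(h|\Bf|/\kappa)$; a single high-capacity edge already makes it $\Theta(nU)$, independent of $|\Bf|$. Your treatment of saturated edges has the same problem in a different guise: from $\|\Bf\|_1\le \Bw_{\cH}(\Bf)\le 9h|\Bf|$ you only get total saturated capacity $O(h|\Bf|/\kappa)$, and since a saturated edge may span up to $3h$ layers, the averaged contribution is $O(h|\Bf|/\kappa)$ per layer, a factor $h$ too large.

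The paper handles this by two ingredients you are missing. First, it does \emph{not} use $\Bw_{\cH}$ for the layering: it switches to a modified weight $\Bw_{\Bf}$ that sets $\Bw_{\Bf}(\forward{e})=0$ for every non-$F$ edge $e=(u,v)$ that moves forward in $\Btau_{\cH}$, so those edges cannot cross any layer cut; a short lemma then shows there are still $\ge h$ layers before the first unsaturated sink. Second, for the remaining non-$F$ forward edges---those moving backward in $\Btau_{\cH}$, which by \cref{obs:backward-has-star} lie inside some level-$\ell$ component $C$, together with their star edges---it uses the shortcut structure directly: except for a small ``pruned'' set of capacity $\le 2|\Bf|/\psi$, every tail of an edge in $E_\ell\cap C$ is within $\Bw_{\Bf}$-distance $|C|$ of the star center in the residual graph (because its star edge is unsaturated), so the whole component occupies at most $3|C|$ consecutive layers. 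Summing over components and levels, only $\le 3nL\le h/2$ layers can be ``bad'', and on the remaining layers the only contributions are backward residual edges (total $\le 9h|\Bf|$) and the pruned sets (total $\le h|\Bf|$), giving $\ge h/4$ layers with residual cut $\le 40|\Bf|$ outside $F$. A standard ball-growing over these good layers then absorbs $F$ into the $\min\{\vol_F(S),\vol_F(\bar S)\}$ term. Your closing sentence is right that shortcuts are what make this work without expansion, but the mechanism is \cref{lemma:low-diameter-expander-new}, not a plain averaging over $\Bw_{\cH}$-layers.
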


Recall from \cref{def:weight-of-shortcut} that even though the hierarchy $\cH$ is only on $G\setminus F$, the weights $\Bw_{\cH}(e)$ for $e\in F$ are still well-defined from the vertex ordering induced by $\cH$.
Before proving the more general version \cref{lem:push-relabel-on-shortcut}, we show how it directly implies \cref{cor:approx max flow in shortcut graph} in the case when $F = \emptyset$ and $\kappa = 1$.

\begin{cor}
\label{cor:approx max flow in shortcut graph}Let ${\cal H}$ be a hierarchy of $G$. Let $A$ be the shortcut induced by ${\cal H}$ with capacity scale $\psi\in 1/\N$, and $G_{A}=G\cup A$ be the shortcut graph. There is a deterministic algorithm that, given $G_{A}$ and $s,t\in V(G)$, in $O(n^{2}\log^{3}n / \psi)$ time finds an $O(1)$-approximate maximum $(s,t)$-flow $\Bf$ and an $O(1)$-approximate minimum $(s,t)$-cut $S$ in $G_{A}$.
\end{cor}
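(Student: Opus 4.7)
The corollary is essentially an off-the-shelf invocation of \cref{lem:push-relabel-on-shortcut} applied to the single-source/single-sink diffusion instance naturally associated with $(s,t)$-max-flow. The plan is to set $F = \emptyset$ and $\kappa = 1$, so that the shortcut graph $G_A$ and the hierarchy given in the corollary statement directly match the hypothesis of \cref{lem:push-relabel-on-shortcut} (note that $\Bvol_F \equiv \Bzero$ when $F=\emptyset$). Choose a large integer $M$ upper-bounding the maximum $(s,t)$-flow value in $G_A$ (for instance, the total capacity $\Bc(\delta_{G_A}(s))$ of edges incident to $s$, which is at most $\poly(n)$ by the capacity assumption); any such $M\in\N$ is automatically $\psi$-integral since $\psi\in 1/\N$. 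Define the diffusion instance by $\Bsource \defeq M\cdot\Bone_{\{s\}}$ and $\Bsink \defeq M\cdot\Bone_{\{t\}}$.

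Feeding this into \cref{lem:push-relabel-on-shortcut} yields, in the promised $O(n^{2}\log^{3} n\cdot(\kappa + 1/\psi)) = O(n^{2}\log^{3} n/\psi)$ time, a $\psi$-integral flow $\Bf$ in $G_{A}$ of congestion at most $\kappa = 1$ (hence feasible). There are two cases. If $|\Bf| = M$, then $\Bf$ routes $M$ units from $s$ to $t$, which already upper-bounds the max flow, so $\Bf$ is exactly optimal and the algorithm can return a trivial cut (e.g., $S = \{s\}$ together with the flow). Otherwise $|\Bf| < \|\Bsource\|_{1} = M$, and the lemma additionally returns a cut $\emptyset \neq S \subsetneq V(G_A)$ with
\[
\Bc(E_{G_{A}}(S,\overline{S})) \;\le\; \frac{41|\Bf| + \min\{\Bvol_{F}(S),\Bvol_{F}(\bar{S})\}}{\kappa} \;=\; 41\,|\Bf|,
\]
using $\Bvol_{F} \equiv \Bzero$ and $\kappa=1$.

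The one non-routine verification is that $S$ is genuinely an $(s,t)$-cut. Since $|\Bf| < M$, the residual source vector $\Bsource_{\Bf}$ is supported exactly on $\{s\}$ with positive mass there. The guarantee $\Bsource_{\Bf}(\bar{S}) = 0$ then forces $s \in S$; symmetrically $\Bsink_{\Bf}(S)=0$ together with $\Bsink_{\Bf}(t) > 0$ forces $t \in \bar{S}$. Thus $S$ separates $s$ from $t$. Combining with weak LP duality yields
\[
|\Bf| \;\le\; \mathrm{maxflow}(s,t) \;\le\; \mathrm{mincut}(s,t) \;\le\; \Bc(E_{G_{A}}(S,\bar{S})) \;\le\; 41\,|\Bf|,
\]
so $\Bf$ is an $\Omega(1)$-approximate max $(s,t)$-flow and $S$ is an $O(1)$-approximate min $(s,t)$-cut.

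\textbf{Main (minor) obstacle.} There is no serious obstacle, as all the work is done by \cref{lem:push-relabel-on-shortcut}; the only delicate point is recognizing that the lemma's abstract cut guarantee ($\Bsource_{\Bf}(\bar S) = 0$ and $\Bsink_{\Bf}(S) = 0$) translates to a true $(s,t)$-cut precisely because supplies and demands are concentrated at the single vertices $s$ and $t$. One should also confirm that the hierarchy of $G$ supplied in the corollary can be taken to have $L=O(\log n)$ levels (which is an implicit convention throughout the paper, and can always be arranged by padding if needed), so that the hypothesis of \cref{lem:push-relabel-on-shortcut} is met without further work.
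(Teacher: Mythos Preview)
Your proposal is correct and follows essentially the same approach as the paper: invoke \cref{lem:push-relabel-on-shortcut} with $F=\emptyset$, $\kappa=1$, and a single-source/single-sink demand, then read off the cut from the residual-demand guarantees. The only cosmetic difference is that the paper sets the source/sink magnitude to $n^{3}U$, which is strictly larger than any feasible flow value and thus forces the ``$|\Bf|<\|\Bsource\|_{1}$'' branch (eliminating your case split), whereas you pick a tighter $M$ and handle the $|\Bf|=M$ case separately; both are fine.
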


\begin{proof}
This follows from \Cref{lem:push-relabel-on-shortcut} when $\kappa=1$, $F=\emptyset$, and ${\cal I}=(\Bsource,\Bsink)$ is a flow instance for routing a $(s,t)$-flow. More precisely, $\Bsource(s)=n^{3}U$ where $U$ is a maximum capacity of $G$ and $\Bsource(v)=0$ for all $v\neq s$. Similarly, $\Bsink(t)=n^{3}U$ and $\Bsink(v)=0$ for all $v\neq t$. Since this demand cannot be fully routed, \cref{lem:push-relabel-on-shortcut} must return a flow $\Bf$ with congestion $\kappa$ and a cut $S$, such that
$\Bc(E_{G_{A}}(S,\overline{S}))\le 41|\Bf|$.
Moreover, $s\in S$ and $t\not\in S$, since the residual demands satisfy $\Bsink_{\Bf}(S) = 0 =\Bsource_{\Bf}(\overline{S})$ and $\Bsource_{\Bf}(s), \Bsink_{\Bf}(t)>0$.
Hence $\Bf$ together with the cut $(S,\overline{S})$ must be $41$-approximate maximum flow respectively $41$-approximate minimum cut in $G_A$. The running time follows from \cref{lem:push-relabel-on-shortcut}.
\end{proof}

In the remainder of this section, we prove \cref{lem:push-relabel-on-shortcut}.

\paragraph{Weighted Push-Relabel of \cite{BernsteinBST24}.}
Our approach to prove \cref{lem:push-relabel-on-shortcut} is similar to the corresponding sparse-cut algorithm of \cite[Theorem~6.1, Algorithm~2]{BernsteinBST24}. Because we run on the shortcut graph $G_A$, instead of the original graph $G$ with an directed expander hierarchy as in \cite{BernsteinBST24}, our proofs here will be significantly simpler,\footnote{This section replaces most of the analysis in \cite[Section~5 and~6]{BernsteinBST24}. We still rely on the rather short weighted push-relabel algorithm \cite[Section~4]{BernsteinBST24}.} and have less overhead in the running time.

\paragraph{The Algorithm.}
Let $M \defeq \sum_{e\in F} \Bc(e)$ be the total capacity in $F$.
We set
\begin{equation}
h \defeq \left\lceil n\cdot \left(\frac{6L}{\psi}+100\kappa\log M\right) \right\rceil
\label{eq:h}
\end{equation}
as the target height, and scale up all the capacities (but not the demands) by a factor of $\kappa$.
We then run the weighted push-relabel algorithm (\cref{thm:push-relabel-main-theorem}) with weights $\Bw_{\cH}$ on $G_A$ to compute the flow $\Bf$, which by \cref{thm:push-relabel-main-theorem}\labelcref{item:push-relabel:short-flow} satisfies $\frac{\Bw_{\cH}(\Bf)}{|\Bf|} \leq 9h$ as required.\footnote{Since \cref{thm:push-relabel-main-theorem} requires the capacities to be integer, and the short-cut edges have capacity scale $\psi$, we scale up all capacities and demands before calling the weighted push-relabel by $\frac{1}{\psi}\in \N$, and then scale down the returned flow. The scaled down flow will have congestion $\kappa$ with respect to the original capacities $\Bc$, and might no longer be integral.}
In case we did not route all of the demand we also need to output a cut. Similar to \cite{BernsteinBST24}, we modify the edge lengths slightly to $\Bw_{\Bf}$ by setting many of the edge weights in the residual graph to 0.
In particular, 
we set $\Bw_{\Bf}(\forward{e}) = 0$ for
any edge $e = (u,v)\in E\setminus F$ (notably this excludes edges in $F$, short-cut edges in $A$, and reversed edges formed in the residual graph) that moves forward in the vertex order associated with $\cH$ (i.e., $\Btau_{\cH}(u) < \Btau_{\cH}(v)$)
. In the residual graph (ignoring edges with zero residual capacity), we calculate the $\Bw_{\Bf}$-distance for each vertex from its closest unsaturated source, and let $S_{i}$ be the corresponding distance layers. The main part of the analysis is to show that one of these distance-layer cuts $(S_{\le i}, \overline{S_{\le i}})$ must satisfy the output-guarantee of \cref{lem:push-relabel-on-shortcut}.

\begin{algorithm}[!ht]
  \caption{\alg{SparseCut}{$\cI = (G, F, \cH, \Bc, \Bsource, \Bsink), \kappa, F, \cH$}} \label{alg:sparse-cut}
  
  \SetEndCharOfAlgoLine{}
  \SetKwInput{KwData}{Input}
  \SetKwInput{KwResult}{Output}
  \SetKwProg{KwProc}{function}{}{}
  \SetKwFunction{Relabel}{Relabel}
   \SetKwFor{Loop}{main loop}{}{}

  Let  
  $h \defeq \left\lceil n\cdot \left(\frac{6L}{\psi}+100\kappa\log M\right) \right\rceil$, and $\Bc^{\kappa} \defeq \kappa \cdot \Bc$.

  Run \alg{WeightedPushRelabel}{$G_A,\frac{1}{\psi}\cdot \Bc^{\kappa},\frac{1}{\psi}\cdot\Bsource, \frac{1}{\psi}\cdot \Bsink,\Bw,h$} (\cref{thm:push-relabel-main-theorem}) to get a flow $\Bf'$.\;
  Set $\Bf \defeq \psi \cdot \Bf'$. \tcp*{$\Bf$ is $\psi$-integral}
  \lIf{$|\Bf| = \|\Bsource\|_1$}{
      \Return $\Bf$
  }
  \Else{
    
    Let $\Bw_{\Bf}$ be $\Bw$ extended to $(G_A)_{\Bf}$, except set $\Bw_{\Bf}(\forward{e}) \defeq 0$ for all edges $e\in E(G)\setminus F$ moving forward in the vertex order $\Btau_{\cH}$ (i.e., $e = (u, v)$ such that $\Btau_{\cH}(u) < \Btau_{\cH}(v)$).\; 
  
    Let $S_0 = \{s\in V : \Bsource_{\Bf}(s)>0\}$.\;
    Compute $\Bw_{\Bf}$-distance levels $S_i \defeq \left\{v \in V: \dist_{G_{\Bf}}^{\Bw_{\Bf}}(S_0, v) = i\right\}$ in the residual graph $G_{\Bf}$.
    
    \Return $\Bf$ and the cut $(S_{\leq i}, \overline{S_{\leq i}})$ minimizing $\Bc^{\kappa}_{\Bf}(E_{G_{\Bf}}(S_{\leq i}, \overline{S_{\leq i}})) - \min\{\volfc(S_{\leq i}),\volfc(\overline{S_{\leq i}})\}$.
}
\end{algorithm}

\paragraph{Running Time.}
Note that $G_A$ has at most $n+nL$ vertices and $m+nL$
edges, since there are $L$ layers in the hierarchy $\cH$, and in each layer there is at most one added star edge per vertex.
The weight function $\Bw_\cH$ can be computed in $O(mL)$ time, and the distance layers can be computed in $O((m+nL)\log n)$ time by Dijkstra's shortest-paths algorithm.
By \cref{thm:push-relabel-main-theorem}, the running time of the $\alg{WeightedPushRelabel}{}$ call is:
\begin{align*}
O\left(\left(m + n + \sum_{e\in E(G_A)} \frac{h}{\Bw_{\cH}(e)}\right)\log n
\right)&=  O\left(h\log n \sum_{e\in E(G_A)} \frac{1}{\Bw_{\cH}(e)}\right)
\\
&=  O\left(h\log n \left(
\sum_{e\in A} \frac{1}{\Bw_\cH(e)}
+ \sum_{e\in E(G)} \frac{1}{\Bw_\cH(e)}
\right) \right)
\end{align*}
Since each edge $e\in A$ has $\Bw(e)\ge 1$, we have that
$\sum_{e\in A} \frac{1}{\Bw(e)} = O(nL)$, as there are at most $nL$ star edges.
The term with $E(G)$ can be bounded similarly as \cite[Claim~5.5]{BernsteinBST24} by harmonic sums, since their weights are induced by a permutation $\Btau_{\cH}$:
    \begin{align*}
    \sum_{e\in E(G)} \frac{1}{\Bw(e)}
    &\le \sum_{\substack{\Btau_\cH(u), \Btau_\cH(v) \in [n]\\\Btau_\cH(u)\neq \Btau_\cH(v)}}
    \frac{1}{|\Btau_\cH(u)-\Btau_\cH(v)|}
    = 2\sum_{i=1}^{n}\sum_{j=i+1}^{n}\frac{1}{j-i} = O(n\log n).
\end{align*}

We thus see that the total running time of the algorithm is $O((m+nL)\log n + hn(L+\log n)\log n)$.
Plugging in the value of $h$ from \labelcref{eq:h}, the running time can be bounded by
\begin{align*}
O\left(
n^2\cdot \left(\frac{L}{\psi}+\kappa\log M\right)
\cdot (L + \log n)\cdot \log n
\right).
\end{align*}
Since $L = O(\log n)$ and $M \leq n^{4}$, the running time becomes
$O(n^2\log^3 n \cdot (\kappa+1/\psi))$.

\paragraph{Finding a Sparse Cut.}
In case $|\Bf| = ||\Bsource||_1$, we are done, so let us focus on the case of $|\Bf| < ||\Bsource||_1$, where we need to show that the returned cut satisfies the output guarantee of \cref{lem:push-relabel-on-shortcut}.

\Cref{thm:push-relabel-main-theorem} guarantees that the $\Bw_\cH$-length of any remaining augmenting path is at least $3h$ in the residual graph.
The following lemma, which can be proved in exactly the same way as \cite[Lemma~6.3]{BernsteinBST24}, certifies that even after setting some edge weights to $0$ to obtain $\Bw_{\Bf}$, the $\Bw_{\Bf}$-length of any augmenting path is still at least $h$.
\begin{lemma}
  \label{lemma:many-levels}
  If $\dist_{(G_A)_{\Bf}}^{\Bw_\cH}(S_0, v) > 3h$, then $\dist_{(G_A)_{\Bf}}^{\Bw_{\Bf}}(S_0, v) > h$.
\end{lemma}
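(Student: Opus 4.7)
I will prove the contrapositive: any $s$-to-$v$ path $P$ in $(G_A)_{\Bf}$ with $\Bw_{\Bf}(P) \leq h$, where $s \in S_0$, must satisfy $\Bw_\cH(P) \leq 3h$. The central observation is that the only edges whose $\Bw_{\Bf}$-weight differs from their $\Bw_\cH$-weight are the forward copies of edges $e = (u,w) \in E(G) \setminus F$ with $\Btau_\cH(u) < \Btau_\cH(w)$, which are zeroed out. Let $Z_P \subseteq P$ denote these ``zeroed'' edges, and set $D \defeq \Bw_\cH(Z_P)$. Then $\Bw_\cH(P) = \Bw_{\Bf}(P) + D$, so it suffices to show $D \leq 2h$.

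To bound $D$, I would track how $P$ traverses the vertex ordering $\Btau_\cH$. Each $e = (u,w) \in Z_P$ contributes exactly $\Bw_\cH(e) = \Btau_\cH(w) - \Btau_\cH(u) > 0$ to the forward $\Btau_\cH$-displacement of $P$, so the total forward displacement from $Z_P$ alone is at least $D$. The next step is to verify by a short case analysis that every non-zeroed edge of $P$ (or consecutive pair of shortcut edges traversing a Steiner star root $r_C$) has $\Bw_\cH$-weight at least its absolute $\Btau_\cH$-displacement: original edges in $F$ and backward-in-$\Btau_\cH$ edges of $E(G) \setminus F$ have weight exactly their displacement by \cref{def:weight-of-shortcut}; reversed residual edges $\rev{e}$ inherit $\Bw_\cH(\rev{e}) = \Bw_\cH(e)$; and a pair $v_1 \to r_C \to v_2$ of shortcut edges has total weight $2|V(C)|$, while $v_1, v_2 \in V(C)$ whose $\Btau_\cH$-image is contiguous by \cref{def:tau}, yielding pair-displacement at most $|V(C)| - 1$.

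Combining these observations, $\Bw_{\Bf}(P)$ is an upper bound on the total backward $\Btau_\cH$-displacement $B$ of $P$, since the non-zeroed edges retain their original $\Bw_\cH$-weight under $\Bw_{\Bf}$ and collectively dominate $B$. Because the net displacement from $s$ to $v$ is at most $n-1$ in absolute value while the forward displacement is at least $D$, we get $B \geq D - (n-1)$, hence $\Bw_{\Bf}(P) \geq D - (n-1)$ and therefore $D \leq \Bw_{\Bf}(P) + n \leq h + n$. The definition \labelcref{eq:h} immediately gives $h \geq n$, so $D \leq 2h$ and $\Bw_\cH(P) \leq 3h$, which is the contrapositive.

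The only real subtlety lies in dealing with the Steiner shortcut vertices, where weight and displacement do not match for a single shortcut edge in isolation; I would handle this by always grouping shortcut-edge usage into consecutive pairs $v_1 \to r_C \to v_2$, which is legal since a simple path visits each star root at most once. Everything else amounts to a routine telescoping displacement argument that exploits the contiguity of component images under $\Btau_\cH$.
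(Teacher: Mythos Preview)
Your proposal is correct and takes essentially the same approach as the paper's proof sketch: a telescoping displacement argument along $\Btau_\cH$ that yields $\Bw_{\Bf}(P) \geq \tfrac{\Bw_\cH(P)-n}{2}$, combined with $h > n$. Your explicit pairing of shortcut edges through the Steiner roots is more careful than the paper's terse sketch (which simply asserts that backward-moving edges have weight at least their displacement and defers details to \cite[Lemma~6.3]{BernsteinBST24}), but the underlying idea is identical.
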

\begin{proof}[Proof Sketch.]
On any path $P$, all edges $(u,v)$ going forward in the vertex order $\Btau_{\cH}$ (recall new weight is $\Bw_{\Bf}(u,v) = 0$; old weight was $\Bw_\cH(u,v) = |\Btau(u)-\Btau(v)|$) can be charged to increasing the location in the vertex order, whose total increase is at most $n$ plus the amount we move backwards in the vertex order (and all edges $(u,v)$ moving backwards have weight $\Bw_{\Bf}(u,v) = \Bw_\cH(u,v) \ge |\Btau(u)-\Btau(v)|$). Hence $\Bw_\cH(P)\ge \Bw_{\Bf}(P)\ge \frac{\Bw_\cH(P)-n}{2}$, and the lemma follows since $h > n$.
\end{proof}
By the output guarantee of \cref{thm:push-relabel-main-theorem} together with \cref{lemma:many-levels}, we thus know that among distance layers $S_0, S_1, \ldots S_h$ there is no unsaturated sink. That is, we know that
$\Bsource_{\Bf}(\overline{S_{0}}) = 0$
and for all $0\le i\le h$ that $\Bsink_{\Bf}(S_{i}) = 0$.

\begin{defn}[Good Cuts]
For $0\le i < h$, the distance layer cut $(S_{\le i}, \overline{S_{\le i}})$ is \emph{good}
if the capacity of this cut, in the residual graph,\footnote{Recall that we set $\Bc^{\kappa} = \kappa \cdot \Bc$. and that in the residual graph $(G_A)_{\Bf}$ each edge $e = (u,v)$ has a corresponding forward edge $\forward{e} = (u,v)$ with capacity 
$\Bc^{\kappa}_{\Bf}(\forward{e}) \defeq \Bc^{\kappa}(e) - \Bf(e)$ and a backward edge $\rev{e} = (v, u)$ with capacity $\Bc^{\kappa}_{\Bf}(\backward{e}) \defeq \Bf(e)$.} disregarding edges in $F$, is at most $40|\Bf|$, i.e., if 
\begin{align*}
\Bc^{\kappa}_{\Bf}(E_{(G_A)_{\Bf}}(S_{\le i}, \overline{S_{\le i}}) \setminus F) \le 40 |\Bf|.
\end{align*}
\end{defn}
We will prove that there are at least $h/4$ good cuts. If $F = \emptyset$, this means that the flow $\Bf$ together with such a good cut are both constant-approximate. When $F \neq \emptyset$, assuming we have at least $h/4$ good cuts, one can instead finish the proof with a standard ball growing argument.

We begin by showing that (large) components at a certain level at the hierarchy remain mostly intact and close together in the residual graph.
The following lemma, which has a very short proof due to the construction of the shortcuts, is to be contrasted with the corresponding \cite[Lemma~6.5]{BernsteinBST24} that required a much longer analysis.

\begin{restatable}{lemma}{LowDiameterExpanderNew}
  Consider a level $\ell$ of the hierarchy $\cH$ and a level-$\ell$ component $C$.
  Let $E_C \defeq E_\ell \cap C$, and let $r$ be the Steiner vertex of the star $A_{E_C}$.
  There exists a subset $P_C \subseteq E_C$ such that
  \begin{enumerate}[(1)]
    \item\label{low-diameter-expander-new:item1} for each $e \in E_C \setminus P_C$, we have
    $
    \dist_{(G_A)_{\Bf}}^{\Bw_{\Bf}}(r, \tail(e))
    \leq |C|$ and
    $\dist_{(G_A)_{\Bf}}^{\Bw_{\Bf}}(\tail(e),r)
    \leq |C|$;
    and
    \item\label{low-diameter-expander-new:item2}  $\Bc^{\kappa}(P_C) \leq \frac{2|\Bf|}{\psi}$.
  \end{enumerate}
  \label{lemma:low-diameter-expander-new} 
\end{restatable}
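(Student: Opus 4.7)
The plan is to let $P_C$ consist of the edges in $E_C$ whose associated star edges at $r$ are carrying so much flow that one of the two directions is blocked in the residual graph. Formally, for each leaf $u$ of the star $A_{E_C}$, let $f_1(u) \defeq \Bf(u \to r)$ and $f_2(u) \defeq \Bf(r \to u)$ denote the flows on the two directed star edges at $u$; after the $\kappa$-scaling performed in \cref{alg:sparse-cut}, each of them has capacity $\kappa\psi\cdot C_u$, where $C_u \defeq \sum_{e \in E_C,\,\tail(e) = u}\Bc(e)$. I then set
$U^{\star} \defeq \{u : f_1(u) + f_2(u) \geq \kappa\psi\cdot C_u\}$ and $P_C \defeq \{e \in E_C : \tail(e) \in U^{\star}\}$.

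For property \labelcref{low-diameter-expander-new:item1}, fix any $e \in E_C \setminus P_C$ and write $u \defeq \tail(e)$. Since $u \notin U^{\star}$, $f_1(u) + f_2(u) < \kappa\psi C_u$, and because $f_1, f_2 \geq 0$, individually $f_1(u), f_2(u) < \kappa\psi C_u$. Thus both forward star edges $u \to r$ and $r \to u$ retain strictly positive residual capacity in $(G_A)_{\Bf}$. Since these edges lie in $A$ rather than in $E(G) \setminus F$, the zero-weight modification defining $\Bw_{\Bf}$ in \cref{alg:sparse-cut} leaves them with weight $|C|$, yielding $\dist^{\Bw_{\Bf}}_{(G_A)_{\Bf}}(u, r) \leq |C|$ and $\dist^{\Bw_{\Bf}}_{(G_A)_{\Bf}}(r, u) \leq |C|$ as required.

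For property \labelcref{low-diameter-expander-new:item2}, using $\Bc(P_C) = \sum_{u \in U^{\star}} C_u$ together with the defining inequality for $U^{\star}$,
\[
  \psi \cdot \Bc^{\kappa}(P_C) \;=\; \sum_{u \in U^{\star}} \kappa\psi C_u \;\leq\; \sum_{u \in U^{\star}}\bigl(f_1(u) + f_2(u)\bigr) \;\leq\; \sum_{e' \in A_{E_C}} \Bf(e'),
\]
so it suffices to prove that the total gross flow through the star $A_{E_C}$ is at most $2|\Bf|$. Assuming (WLOG, by first cancelling any flow cycles in $\Bf$) that $\Bf$ is acyclic on its support, $\Bf$ decomposes into source-to-sink flow paths of total value $|\Bf|$; each such path is simple in the DAG of positive-flow edges, so visits the Steiner vertex $r$ at most once and thus traverses at most two edges of $A_{E_C}$. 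Summing over the decomposition gives $\sum_{e' \in A_{E_C}} \Bf(e') \leq 2|\Bf|$, which combined with the display above yields $\Bc^{\kappa}(P_C) \leq 2|\Bf|/\psi$ as desired.

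The main obstacle is justifying the acyclicity of $\Bf$. I expect this to be either an intrinsic property of the weighted push-relabel algorithm of \cite{BernsteinBST24}---the overview in \cref{overview:sec:pruning} already invokes precisely this simple-path reasoning on its output---or to follow from a free cycle-cancellation post-processing step whose only side effect (a strict decrease in $\Bw(\Bf)$ and a corresponding change in residual capacities) one must verify does not invalidate the downstream use of \cref{thm:push-relabel-main-theorem} in \cref{alg:sparse-cut}. Once this is settled, both parts of the lemma follow from the short calculations above; the novelty is entirely conceptual: adding a single Steiner vertex per component collapses all the intricate distance reasoning of \cite{BernsteinBST24} to a one-edge check.
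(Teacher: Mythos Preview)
Your approach is essentially the paper's. Your threshold-based $P_C$ (tails $u$ with $f_1(u)+f_2(u)\geq \kappa\psi C_u$) is a slight superset of the paper's saturation-based one (tails incident to a saturated star edge), but either definition works and the two calculations are the same in spirit.

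On the acyclicity question you flag: the paper takes your first option, not the second. It does not claim $\Bf$ is acyclic and does not cancel cycles; instead it uses directly that the weighted push-relabel of \cite{BernsteinBST24} builds $\Bf$ as a sequence of simple augmenting paths in the residual graph. Each such path of value $a$ visits the Steiner vertex $r$ at most once and therefore touches at most two star edges of $A_{E_C}$, so it changes $\sum_{e'\in A_{E_C}}\Bf(e')$ by at most $2a$; summing over all augmenting paths gives the bound $2|\Bf|$. Your cycle-cancellation alternative would not work as written: cancelling cycles yields a new flow $\Bf'\leq\Bf$, and while $\sum_{e'}\Bf'(e')\leq 2|\Bf|$ follows from a simple-path decomposition of $\Bf'$, this does not bound $\sum_{e'}\Bf(e')$; and since both your definition of $P_C$ and the distance bounds in item~\labelcref{low-diameter-expander-new:item1} are stated for the residual graph $(G_A)_{\Bf}$ of the \emph{original} $\Bf$, you cannot silently swap $\Bf$ for $\Bf'$.
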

\begin{proof}
Recall that in $G_A$ there is a bidirectional star $A_{E_C}$ added between the tails of edges in $E_C$ to $r$, where all these star edges have a $\psi$-fraction of the capacity of the corresponding edges in $E_C$. Let $P' \subseteq A_{E_C}$ be the subset of the (directed) edges of this star that are saturated by the flow~$\Bf$, i.e., edges $e$ where $\Bf(e) = \Bc^\kappa(e)$. The total capacity of $P'$ is at most $2|\Bf|$, since every flow path passes through the star at most once.\footnote{This uses the fact that the weighted push-relabel of \cite{BernsteinBST24} works by repeatedly pushing flow along simple augmenting paths. If an augmenting path increases the flow value by $a$, then this augmenting path can be responsible for at most $2a$ reduction in the capacity of edges in $A_{E_C}$, as the path passes through the center $r$ at most once.}
Let $P_C$ be all edges in $E_C$ whose tail is incident to an edge in $P'$. In particular, $\Bc^{\kappa}(P_C)\le \frac{2|\Bf|}{\psi}$.
Moreover, all tails of edges $e \in E_C\setminus P_C$ are of distance at most $|C|$ from the star center $r$ in the residual graph, in both directions, since the corresponding star edges are not saturated and have $\Bw$-length $|C|$ (hence $\Bw_{\Bf}$-length $|C|)$.
\end{proof}

We are now ready to show that at least a constant fraction of our distance layer cuts are good.
\begin{lemma}
\label{lemma:good-level-cuts}
There are at least $h/4$ good cuts.
\end{lemma}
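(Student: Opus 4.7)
The plan is to upper-bound the total residual capacity summed over all $h$ layer cuts and then apply an averaging argument. Define
\[
T \defeq \sum_{i=0}^{h-1} \Bc^{\kappa}_{\Bf}\bigl(E_{(G_A)_{\Bf}}(S_{\le i}, \overline{S_{\le i}}) \setminus F\bigr).
\]
The goal is to show $T \le 30 h |\Bf|$, so that by Markov's inequality at most $3h/4$ of the cuts can have value exceeding $40|\Bf|$, leaving at least $h/4$ good cuts.

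To bound $T$, first rewrite it edge-by-edge as $T = \sum_{e \in (G_A)_{\Bf} \setminus F} \Bc^{\kappa}_{\Bf}(e) \cdot \chi(e)$, where $\chi(e)$ is the number of indices $i$ for which $e$ crosses the cut $(S_{\le i}, \overline{S_{\le i}})$ in the correct direction. The standard layer-gap bound gives $\chi(e) \le \max(0, \text{layer}(\head(e)) - \text{layer}(\tail(e))) \le \Bw_{\Bf}(e)$, since distance layers respect edge weights in the residual graph.

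Next, partition the non-$F$ residual edges and bound each piece separately:
\begin{itemize}
\item \emph{Forward residual edges of forward-moving originals in $E\setminus F$.} These have $\Bw_{\Bf} = 0$ by construction, so they contribute nothing.
\item \emph{Backward residual edges} (of both original and star edges). Their contribution is $\sum \Bf(e) \cdot \Bw_{\cH}(e) = \Bw_{\cH}(\Bf)$, which by \cref{thm:push-relabel-main-theorem}\labelcref{item:push-relabel:short-flow} is at most $9h|\Bf|$.
\item \emph{Forward residual edges of backward-moving originals in $E\setminus F$} (which by \cref{obs:backward-has-star} all lie in $E_C$ for some level-$\ell$ component $C$) and \emph{forward residual edges in the stars $A$}. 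These we handle using \cref{lemma:low-diameter-expander-new}.
\end{itemize}

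For the third category we go component-by-component. \Cref{lemma:low-diameter-expander-new} guarantees a pruned set $P_C \subseteq E_C$ with $\Bc^{\kappa}(P_C) \le 2|\Bf|/\psi$ such that, for every $e \in E_C \setminus P_C$, the tail of $e$ lies within $\Bw_{\Bf}$-distance $|C|$ of the star center $r_C$ in both directions. Consequently the tails of non-pruned $E_C$ edges, and the leaves of unsaturated star edges of $A_{E_C}$, are all confined to a window of at most $2|C|+1$ consecutive layers around the layer of $r_C$. Since each backward-moving edge in $E_C$ has $\Bw_{\cH} \le |C|$ and each star edge has $\Bw_{\Bf} = |C|$, the non-pruned edges cross at most $O(|C|)$ cuts each, while the pruned edges (together with star edges charged against them) contribute $O(|C|\cdot\Bc^{\kappa}(P_C)) = O(|C| \cdot |\Bf|/\psi)$. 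Summing over all $C$ at each level and then over the $L = O(\log n)$ levels, using that each simple flow path traverses at most one star per level (so saturation at a given level totals at most $2|\Bf|/\psi$), yields a total contribution of $O(h|\Bf|)$ from this third category as well, after absorbing the $L$ and $1/\psi$ factors into the definition of $h$ in~\labelcref{eq:h}.

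The main obstacle is the third bullet: tracking how the pruning lemma interacts with both tail and head vertices of backward-moving edges in $E_C$, since the lemma only directly controls tails of $E_C$-edges. The key is to observe that when the tail is non-pruned, $\chi(e) \le \Bw_{\Bf}(e) \le |C|$ suffices to bound the layer gap regardless of where the head lands, and that the capacity of pruned edges is charged to saturated star edges, which in turn are charged to flow paths. Carefully combining these charges across all $(\ell, C)$ pairs and invoking the choice of $h$ in~\labelcref{eq:h} gives $T \le 30 h |\Bf|$, completing the proof.
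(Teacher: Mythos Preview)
Your overall Markov-over-all-cuts strategy breaks down at the third bullet. You correctly observe that each non-pruned edge $e \in E_C \setminus P_C$ (and each non-saturated star edge) crosses at most $O(|C|)$ layer cuts. But to bound its contribution to $T$ you must multiply by the residual capacity $\Bc^{\kappa}_{\Bf}(e)$, and there is simply no bound on $\sum_{e \in E_C \setminus P_C} \Bc^{\kappa}_{\Bf}(e)$ in terms of $|\Bf|$: it can be as large as $\kappa \cdot \Bc(E_C)$, i.e., essentially the whole capacity of the graph. Your claimed bound $T \le 30h|\Bf|$ is therefore false in general. (Concretely: take a single component $C$ with enormous total capacity in $E_C$ and $\Bf = 0$ on the star; all of $E_C$ is non-pruned, every backward-moving edge in $E_C$ contributes $\Bc^{\kappa}(e)\cdot\chi(e)$ with $\chi(e)$ up to $|C|$, and this has nothing to do with $|\Bf|$.) Your write-up glosses over this by stating ``the non-pruned edges cross at most $O(|C|)$ cuts each'' and then immediately moving to the pruned-edge charge, never actually summing the non-pruned contribution.

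The paper's proof avoids exactly this issue by \emph{not} summing over all $h$ cuts. It declares a cut \emph{bad} if any non-pruned edge of any component crosses it, and uses the same $3|C|$-wide-window argument you describe to show there are at most $3nL \le h/2$ bad cuts in total. On the remaining $\ge h/2$ non-bad cuts, non-pruned edges contribute zero by definition, so only backward residual edges ($\le 9h|\Bf|$) and pruned edges plus their star neighbors ($\le nL \cdot 4|\Bf|/\psi \le h|\Bf|$, using $h \ge 4nL/\psi$) remain; averaging $10h|\Bf|$ over $h/2$ cuts then yields the $h/4$ good cuts. The missing idea in your proposal is precisely this two-stage structure: first \emph{exclude} the few cuts where heavy non-pruned edges can appear, and only then do the capacity averaging on what is left.
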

\begin{proof}
Consider one of our candidate level cuts $(S_{\le i}, \overline{S_{\le i}})$ where $0\le i < h$, and what edges can cross it in the residual graph $(G_A)_{\Bf}$.
\begin{enumerate}
  \item \ul{Backward edges $\backward{e}$}. These have residual capacities $\Bc^{\kappa}_{\Bf}(\backward{e}) = \Bf(e)$. The contribution of these (across all good level cuts) can be bounded by $\sum_{\backward{e}\in \backward{E}} \Bc^{\kappa}_{\Bf}(\backward{e}) \Bw_{\Bf}(e) = \sum_{e\in E} \Bf(e) \Bw_{\Bf}(e) = \Bw_{\Bf}(\Bf) \le 9h \cdot |\Bf|$ by \cref{thm:push-relabel-main-theorem}\labelcref{item:push-relabel:short-flow} and that $\Bw_{\Bf} \leq \Bw$.
  \item \ul{Forward edges $\forward{e}$, where $e = (u, v)$ is a level-$\ell$ edge with $\Btau_{\cH}(u) > \Btau_{\cH}(v)$ that is contained in a level-$\ell$ component $C$ of $\cH$}, 
  and the corresponding \ul{star edges} (which exist by \cref{obs:backward-has-star}).
   We will use \cref{lemma:low-diameter-expander-new} to show that most level cuts only have few of these edges.
\end{enumerate}
The remaining edges are either in $F$ (which we disregard for in the definition of good cuts), or are edges $e$ where we set $\Bw_{\Bf}(e) = 0$, as they move forward in $\Btau_{\cH}$ (and hence cannot cross our distance layer cuts). 

For each level $\ell$ of $\cH$, and level-$\ell$ component $C$, let $P_C$ be the pruned set in \cref{lemma:low-diameter-expander-new}, and let $E'_{C} \defeq (E_{\ell}\cap C)\setminus P_C$.
Let $A_{E'}\subseteq A_{(E_\ell\cap C)}$ be the shortcut edges (in the star of the level-$\ell$ component~$C$) that are not incident to $P_C$.
Call a cut $(S_{\le i}, \overline{S_{\le i}})$ \emph{bad} if any edge 
$e\in E'_C \cup A_{E'_C}$
crosses it, for any level $\ell$ and level-$\ell$ component $C$.
First we argue that there are at most $h/2$ bad cuts.
In particular, we argue that some level-$\ell$ component $C$ can only be responsible for $3|C|$ bad cuts (where $|C|$ counts the number of vertices in this component). Indeed by \cref{lemma:low-diameter-expander-new}, each (tail) of an edge in $E'_C$ is of distance at most $|C|$ from the Steiner vertex $r$ at the center of the shortcut star on $E_{\ell}\cap C$. Let $i_r$ be the layer such that $S_{i_r} \ni r$. The earliest layer any edge in $E'_C$ can start in is in layer $i_r-|C|$, and the latest layer any edge in $E'_C$ can end is in layer $i_r + |C| + |C|$ (the tail of the edge can be at layer $i_r + |C|$, and then the length of this edge can be another $|C|$ layers). Thus all edges in $E'_C$, and their corresponding star edges, must be within $3|C|$ consecutive distance layers.

For a fixed level $\ell$, all level-$\ell$ components sum up to $n$ in size, and hence they can be responsible for at most $3n$ bad cuts in total. Summing over the $L$ layers, there are at most $3nL \le h/2$ bad cuts.

Now, there are at least $h/2$ remaining non-bad cuts. We will argue that half of them are good. The total capacity of edges over these cuts is at most $9h \cdot |\Bf|$ from the backwards edges.
Each edge in $P_C$ has length at most $|C|$. Also, as $\psi\le 1$, the corresponding star edges connected to $P_C$ contribute at most the same amount (in terms of capacity) as the edges in $P_C$.
Hence, if we add the contribution from the edges in $P_C$ (and the corresponding star edges) for all levels $\ell$ and level-$\ell$ components $C$, we have, by
  \cref{lemma:low-diameter-expander-new}:
\begin{align*}
\sum_{\substack{0\le i\le h \text{ such that}\\\text{$(S_{\le i},\overline{S_{\le i}})$ is not bad}}}
&\Bc^{\kappa}_{\Bf}(E_{(G_A)_{\Bf}}(S_{\le i},\overline{S_{\le i}}))
\\
&\le 9h|\Bf| +  
\sum_{\ell = 1, \ldots, L}\ \sum_{\text{level-$\ell$ component $C$}} |C|\cdot 2\Bc^{\kappa}(P_C) 
\\
&\le 9h|\Bf| + n \cdot L \cdot \frac{4|\Bf|}{\psi}
\\
&\le 10h|\Bf|.
\end{align*}
The last inequality comes from our choice of $h$ in \labelcref{eq:h}, $h\ge \frac{4nL}{\psi}$. By an averaging argument, at most half of the non-good cuts have capacity more than $2\cdot \frac{10h|\Bf|}{h/2} = 40|\Bf|$.
Hence, at least $h/4$ cuts have capacity, in the residual graph and disregarding $F$,  at most $40|\Bf|$, i.e., they are good cuts.
\end{proof}

In case $F = \emptyset$ (as in \cref{cor:approx max flow in shortcut graph}), we would already be done. When $F \neq \emptyset$, what remains is to argue the contribution of these edges via a  standard ball-growing argument stating that at least one of our $h/4$ good cuts $(S_{\le i}, \overline{S_{\le i}})$ has capacity, in $F$, at most $\min\{\volfc(S_{\le i}), \volfc(\overline{S_{\le i}})\}$. We do this in \cref{lemma:exists-sparse-cut} below, and note that our main \cref{lem:push-relabel-on-shortcut} follows almost directly from this.

\begin{restatable}{lemma}{ExistsSparseCut}
  \label{lemma:exists-sparse-cut}
  There exists a level cut $S_{\leq i}$ with $0 \leq i < h$ such that
  \begin{equation}
  \Bc^{\kappa}_{\Bf}(E_{(G_A)_{\Bf}}(S_{\leq i}, \overline{S_{\leq i}})) \leq 40|\Bf| + \min\{\volfc(S_{\leq i}), \volfc(\overline{S_{\leq i}})\}.
  \label{eq:sparse-cut-cf}
  \end{equation}
\end{restatable}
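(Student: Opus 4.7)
The plan is to combine Lemma~\ref{lemma:good-level-cuts}---which provides at least $h/4$ \emph{good} cuts (whose residual capacity disregarding $F$ is $\le 40|\Bf|$)---with a standard ball-growing argument to find a good cut whose residual $F$-contribution is also at most $\min\{\volfc(S_{\le i}),\volfc(\overline{S_{\le i}})\}$. For any cut, the residual $F$-contribution is at most $\kappa\cdot B_F(i)$, where $B_F(i)\defeq \Bc(E_F(S_{\le i},\overline{S_{\le i}})\cup E_F(\overline{S_{\le i}},S_{\le i}))$, because every residual edge has capacity at most $\Bc^{\kappa}(e)=\kappa\Bc(e)$. The trivial bound $B_F(i)\le \min\{\volfc(S_{\le i}),\volfc(\overline{S_{\le i}})\}$ always holds by volume accounting (each crossing $F$-edge contributes $\Bc(e)$ to $\volfc(S_{\le i})$ via its endpoint in $S_{\le i}$), so it remains to shave an extra factor of $\kappa$.

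For the ball-growing argument, let $V_i\defeq \volfc(S_{\le i})$ (non-decreasing, $V_h\le 2M$) and $U_i\defeq V_h - V_i$. Since every edge in $F\subseteq E$ has $\Bw_{\cH}$-weight at most $n$, the two endpoints of a cut-crossing $F$-edge lie within $n$ layers of each other. Each endpoint contributes $\Bc(e)$ to the $F$-degree of its layer, giving $B_F(i)\le V_{i+n}-V_i$ (from the endpoint in $\overline{S_{\le i}}$, necessarily in layers $(i,i+n]$) and symmetrically $B_F(i)\le U_i-U_{i+n}$. Telescoping yields $\sum_i \ln(V_{i+n}/V_i)\le n\ln V_h = O(n\log M)$, and likewise for $U$. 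By Markov's inequality, the number of ``bad-growth'' layers---where $V_{i+n}/V_i>1+\frac{1}{4\kappa}$ or $U_i/U_{i+n}>1+\frac{1}{4\kappa}$---is $O(\kappa n\log M)$, strictly less than $h/4$ given our choice $h\ge 100\kappa n\log M$ from \labelcref{eq:h}. Layers with $V_i=0$ or $U_i=0$ trivially have $B_F(i)=0$ and pose no issue.

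Intersecting the $\ge h/4$ good cuts with the $>3h/4$ non-bad-growth layers yields at least one cut~$i$ that is both. For this $i$, $B_F(i)\le \min(V_i,U_i)/(4\kappa)=\min\{\volfc(S_{\le i}),\volfc(\overline{S_{\le i}})\}/(4\kappa)$, so the residual $F$-contribution is at most $\kappa B_F(i)\le \min\{\volfc(S_{\le i}),\volfc(\overline{S_{\le i}})\}/4 \le \min\{\volfc(S_{\le i}),\volfc(\overline{S_{\le i}})\}$. Combined with the $40|\Bf|$ good-cut bound on the non-$F$ residual capacity, the total residual capacity is at most $40|\Bf|+\min\{\volfc(S_{\le i}),\volfc(\overline{S_{\le i}})\}$, proving the lemma. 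The main subtlety lies in shaving the factor of $\kappa$ via ball-growing, which crucially relies on the height $h$ being $\Omega(\kappa n\log M)$---exactly what our choice of $h$ guarantees.
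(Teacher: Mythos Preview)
Your approach is essentially the paper's: combine the $\ge h/4$ good cuts of \cref{lemma:good-level-cuts} with a multiplicative ball-growing argument on $\volfc$ to control the $F$-contribution, using that $h\ge 100\kappa n\log M$. The paper organizes the ball-growing by first grouping the good cuts into $\lfloor g/n\rfloor$ blocks (so that, by the weight bound $\le n$, residual $F$-edges can only cross into the adjacent block) and then growing block-by-block; you instead ball-grow directly on layers and intersect with the good cuts. Both organizations work and the constants line up.

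Two small slips to clean up. First, the intermediate claim ``$B_F(i)\le V_{i+n}-V_i$'' is not correct for $B_F(i)$ as you defined it (original-graph $F$-edges crossing in either direction): a saturated forward $F$-edge, or a zero-flow $F$-edge oriented into $S_{\le i}$, has no residual arc from $S_{\le i}$ to $\overline{S_{\le i}}$, so nothing forces its far endpoint into $S_{(i,i+n]}$. What \emph{is} true---and is all you need---is that the residual $F$-capacity across the cut satisfies $R_F(i)\le \kappa\,(V_{i+n}-V_i)$: any residual $F$-arc crossing $(S_{\le i},\overline{S_{\le i}})$ has positive residual capacity and $\Bw_{\Bf}$-weight at most $n$, so its head lies in $S_{(i,i+n]}$; charging $\kappa\Bc(e)$ to that head gives the bound. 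Then at a non-bad-growth layer $R_F(i)\le \min(V_i,U_i)/4$, and you are done. Second, take $U_i\defeq \volfc(\overline{S_{\le i}})=\vol_F(V)-V_i$ rather than $V_h-V_i$; some $F$-volume may sit beyond layer $h$, so $V_h-V_i$ need not equal $\volfc(\overline{S_{\le i}})$. Since $V_h-V_i\le \volfc(\overline{S_{\le i}})$ the final inequality still goes the right way, but the cases ``$U_i=0\Rightarrow B_F(i)=0$'' and the telescoping bound are cleaner with the correct definition.
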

\begin{proof}
  Let $S_{\leq i_1}, \ldots, S_{\leq i_g}$ be the $g\ge h/4$ many good level cuts given by \cref{lemma:good-level-cuts}.
  Let $U_{j} \defeq S_{\leq i_{j \cdot n}} \setminus S_{\leq i_{(j-1) \cdot n}}$ for each $1 \leq j \leq \left\lfloor \frac{g}{n}\right\rfloor$.
  Let $k \defeq \left\lfloor \frac{g}{n}\right\rfloor \geq \frac{g}{2n}$.
  That is, we first split the distance levels into $g$ blocks at $i_1, \ldots, i_{g}$, and then merge every $n$ consecutive blocks to form the $U_j$'s.
  Observe that since $i_{j \cdot n} \geq i_{(j-1)\cdot n} + n$, we must have
  \begin{equation}
    \dist_{(G_A)_{\Bf}}^{\Bw_{\Bf}}(U_j, U_{j+2}) > n
    \label{eq:dist-gap}
  \end{equation}
  for every $j$.
  We will now only consider level cuts that are between some $U_j$ and $U_{j+1}$ and bound the contribution of edges from $F$ to them using a ball-growing argument.
Note that if $\volfc(U_{\leq 1}) = 0$ or $\volfc(\overline{U_{\leq k}}) = 0$, then the lemma is vacuously true, and therefore we assume otherwise.
  We show that there exists a $1 \leq j \leq k$ such that
  \begin{equation}
    \Bc^{\kappa}_{\Bf}\left(E_{(G_{A})_{\Bf}}(U_{\leq j}, \overline{U_{\leq j}}) \cap F\right) \leq \min\{\volfc(U_{\leq j}), \volfc(\overline{U_{\leq j}})\},
    \label{eq:sparse-cut}
  \end{equation}
  which proves the lemma.
  Assume for contradiction that none of the $U_j$ satisfies \eqref{eq:sparse-cut}.
  Because of \eqref{eq:dist-gap} and that the weight of any edge is bounded by $n$
  we know that all edges in $E_{(G_{A})_{\Bf}}(U_{\leq j}, \overline{U_{\leq j}})$ with positive capacities must be in $E_{(G_{A})_{\Bf}}(U_j, U_{j+1})$.
  Let $\volfc^\kappa(S) \defeq \kappa \cdot \volfc(S)$.
  If $\volfc(U_{\leq k/2}) \leq \volfc(\overline{U_{\leq k/2}})$ then we have
  \[ \volfc^\kappa(U_{\leq j}) \geq \volfc^\kappa(U_{\leq j-1}) + \Bc^{\kappa}_{\Bf}\left(E_{(G_{A})_{\Bf}}(U_{\leq j}, \overline{U_{\leq j}}) \cap F\right) \]
  for $1 \leq j \leq k/2$ which with the assumption of \eqref{eq:sparse-cut}, and that $\volfc^\kappa(U_{\le 1})= \kappa \cdot \volfc(U_{\leq 1}) \ge \kappa$, implies that
  \[
    \volfc^\kappa(U_{\leq j}) \geq \left(1+\frac{1}{\kappa}\right)\volfc^\kappa(U_{\leq j-1}) \implies
    \volfc^\kappa(U_{\leq k/2}) \geq \left(1 + \frac{1}{\kappa}\right)^{k/2-1}\kappa >
    2^{(k/2-1)/\kappa}\kappa \ge 2\kappa M.
  \]
  The last inequality above uses $k/2 - 1 \geq k/4 \geq \frac{h}{16n}$ (since by \cref{lemma:good-level-cuts} we have $g \geq h/4$) and, by \labelcref{eq:h}, that $h \geq 100 n\kappa \log M$,
  where $M$ is the total capacity of edges in the input graph.
  This is a contradiction because $\volfc^\kappa(S)$ of any $S$ should always be bounded by $2\kappa M$.
  
  Similarly, if $\volfc(U_{\leq k/2}) > \volfc(\overline{U_{\leq k/2}})$, a symmetric argument gives
  \[
    \volfc^\kappa(\overline{U_{\leq j}}) \geq \left(1+\frac{1}{\kappa}\right)\volfc^\kappa(\overline{U_{\leq j+1}}) \implies
    \volfc^\kappa(\overline{U_{\leq k/2+1}}) \geq \left(1 + \frac{1}{\kappa}\right)^{k/2-1}\kappa
  \]
  In both cases we have arrived at a contradiction, proving the lemma.
\end{proof}

We end the section by finishing \cref{lem:push-relabel-on-shortcut}.

\begin{proof}[Proof of \cref{lem:push-relabel-on-shortcut}]
Note that the capacity, in the residual graph $(G_{A})_{\Bf}$, of any cut can be at most $|\Bf|$ smaller than the capacity of the same cut in the original graph, i.e.,
$\Bc^{\kappa}(E_{(G_A)}(S_{\leq i}, \overline{S_{\leq i}}))
\le \Bc^{\kappa}_{\Bf}(E_{(G_A)_{\Bf}}(S_{\leq i}, \overline{S_{\leq i}})) + |\Bf|$. Hence, we get
$\Bc(E_{G_{A}}(S,\overline{S}))\le\frac{41|\Bf|+\min\{\volfc(S),\volfc(\bar{S})\}}{\kappa}$ when combined with \cref{lemma:exists-sparse-cut}, which concludes the proof of \cref{lem:push-relabel-on-shortcut}.
\end{proof}

\section{Expander Hierarchies: Single-Pass Bottom-Up Construction}
\label{sec:hierarchy}

The goal of this section is to construct in $\tilde{O}(n^{2})$ time a weak expander hierarchy ${\cal H}$ (see \cref{dfn:expander-hierarchy}) of $G$ that is $\tilde{\Omega}(1)$-expanding.
In fact, we show something stronger:
we give a data structure that can, for any flow in the shortcut graph $G_A$, computationally ``unfold'' it to a corresponding flow in $G$ of approximately the same value. This is useful in the final maximum flow algorithm in \cref{sec:maxflow algo}.
This section proves the following \cref{lem:hierarchy and unfolding structuring}, with $\phi_{\rand} = \Omega(1/\log^{7} n)$ being the universal parameter defined later in \cref{lem:exp decomp}.

\begin{lem}
\label{lem:hierarchy and unfolding structuring}There is a randomized algorithm that, given an $n$-vertex graph $G$, in $O(n^{2}\log^{18}n)$ time computes 
\begin{itemize}
\item a hierarchy ${\cal H}$ of $G$ with $L = O(\log n)$ levels,
\item a shortcut $A$ induced by ${\cal H}$ with capacity scale $\psi_{\rand}=\Omega(\phi_{\rand}/\log n)$, and 
\item a ``flow-unfolding'' data structure ${\cal D}_{\unfold}$ such that, with high probability, given any $\psi_{\rand}$-integral feasible flow $\Bf_{A}$ in $G_{A}\defeq G\cup A$, ${\cal D}_{\unfold}$ returns a flow $\Bf\defeq{\cal D}_{\unfold}(\Bf_{A})$ in $G$ routing the same vertex-demand with congestion $3$ in $O(n^2\log^{15} n)$ time.
\end{itemize}
\end{lem}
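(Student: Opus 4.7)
My proof plan is to construct $\cH$ bottom-up exactly along the lines of \cref{overview:sec:hierarchy}, then build the shortcut $A$ on top of it, and finally design $\cD_{\unfold}$ by a top-down unfolding that repeatedly routes star-flow through the underlying expanders via weighted push-relabel.

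\textbf{Construction of $\cH$.} I will start with $E_1 \gets E$ and $E_i \gets \emptyset$ for $i \geq 2$, and at iteration $i = 1, 2, \ldots$ invoke a balanced expander-decomposition primitive on $E_i$ relative to the current SCCs of $G \setminus E_{>i}$. This primitive is built from $O(\log n)$ rounds of the non-stop cut-matching game of \cite{FleischmannLL25}, where each round's max-flow/sparse-cut call is answered by \cref{cor:approx max flow in shortcut graph} applied to the shortcut graph induced by the already-constructed levels~$<i$. If the primitive certifies $E_i$ to be $[G\setminus E_{>i}]$-component-constrained $\phi_{\rand}$-expanding, we stop; otherwise it returns a balanced cut $\Ecut \subseteq E_{\leq i}$, and I set $E_{i+1} \gets \Ecut$ and remove these edges from their previous levels. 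Because we insist on \emph{balanced} cuts, the total unprocessed volume shrinks by a constant factor each step, giving $L = O(\log n)$. Crucially, as emphasized in \cref{overview:sec:hierarchy}, weak component-constrained expansion is preserved when components are refined, so the removals at levels $< i$ never break earlier invariants.

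\textbf{Shortcut and capacity scale.} Once $\cH$ is fixed, for each level $i$ and each SCC $C$ of $G\setminus E_{>i}$ I attach the star $A_C$ on $E_i \cap C$ with capacity scale $\psi_{\rand} = \Theta(\phi_{\rand}/\log n)$ as in \cref{dfn:star}, and take $A = \bigcup_{i,C} A_C$. The choice of $\psi_{\rand}$ is dictated by the unfolding: each level will contribute $O(\psi_{\rand}/\phi_{\rand}) = O(1/\log n)$ units of extra congestion when its star-flow is routed through the underlying expander, and the $L = O(\log n)$ levels then sum to at most $2$, leaving room for the original congestion-$1$ flow on $G$-edges to yield total congestion $\leq 3$.

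\textbf{Unfolding data structure $\cD_{\unfold}$.} Given a $\psi_{\rand}$-integral feasible flow $\Bf_A$ in $G_A$, I process levels from $i = L$ down to $i = 1$. At level $i$, for each SCC $C$ the flow on the bidirectional star edges of $A_C$ naturally defines a $\psi_{\rand}$-integral vertex demand $(\Bsource_C, \Bsink_C)$ supported on tails of $E_i \cap C$, satisfying $\Bsource_C, \Bsink_C \leq \psi_{\rand}\cdot \Bdeg_{E_i \cap C}$ and $\Bsource_C(C) = \Bsink_C(C)$ by the structure of the star. Since $E_i$ is $[G\setminus E_{>i}]$-component-constrained $\phi_{\rand}$-expanding in $G$, the union of all level-$i$ demands (across SCCs at level $i$) is routable in $G$ with congestion $O(\psi_{\rand}/\phi_{\rand}) = O(1/\log n)$; I compute such a routing by a single call to \cref{lem:push-relabel-on-shortcut} on $G_A$ with $\cH$, $F = E_i$ and $\kappa = O(1)$, taking the returned $\psi_{\rand}$-integral flow as the replacement for the level-$i$ star-flow. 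After processing all $L$ levels I obtain a flow in $G$ of congestion at most $1 + L \cdot O(\psi_{\rand}/\phi_{\rand}) \leq 3$ routing the original vertex-demand. The running time is dominated by $L$ invocations of weighted push-relabel at cost $O(n^2\log^3 n \cdot (\kappa + 1/\psi_{\rand})) = O(n^2 \log^{14} n)$ each, for a total of $O(n^2 \log^{15} n)$.

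\textbf{Main obstacle.} The most delicate point will be calibrating $\psi_{\rand}$ and $\kappa$ so that (i) the summed congestion across levels is provably $\leq 3$, not merely $O(1)$, and (ii) the flow rerouted at level $i$ does not create any demand at levels $> i$ that violates capacities already committed by the replacements done at higher levels. The top-down order together with $E_i$ being expanding in the full graph $G$ (not just in $G\setminus E_{>i}$) resolves (ii): routings at different levels are independent demands on $G$, so we sum congestions directly rather than compose them. For the construction time, the $O(n^2\log^{18} n)$ bound follows from $L = O(\log n)$ iterations, each consisting of $O(\log n)$ cut-matching rounds times the $\tO(n^2)$ cost of \cref{cor:approx max flow in shortcut graph}, with a few extra $\log$ factors for the expander-decomposition bookkeeping.
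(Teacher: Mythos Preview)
Your high-level plan matches the paper, but the unfolding argument has a real gap that would not go through as written.

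\textbf{What you route in, and what the construction actually certifies.} You claim that ``$E_i$ is $[G\setminus E_{>i}]$-component-constrained $\phi_{\rand}$-expanding in $G$'' and then route the level-$i$ star demand directly in $G$. But the expander-decomposition primitive you invoke (the paper's \cref{lem:exp decomp}) only certifies that $E_r^{(r+1)}$ is expanding in the \emph{round-$r$ shortcut graph} $G_{A^{(r)}}$, not in $G$. Expansion in $G$ is exactly the conclusion we are trying to establish via the unfolding, so you cannot assume it up front. Concretely, your proposed call to \cref{lem:push-relabel-on-shortcut} ``on $G_A$ with $\cH$, $F=E_i$'' does not even type-check: that lemma needs a hierarchy of $G\setminus F$, not the final hierarchy $\cH$ of $G$, and in any case it returns a flow in a shortcut graph, not in $G$. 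So the ``replacement'' flow you produce still uses star edges at lower levels.

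\textbf{Additive versus multiplicative congestion.} Because of the previous point, unfolding level $i$ deposits flow onto the stars at levels $<i$. Hence when you later process level $i-1$, the star demand there is no longer bounded by $\psi_{\rand}\cdot\Bdeg_{E_{i-1}\cap C}$; it is bounded by $\psi_{\rand}\cdot\Bdeg_{E_{i-1}\cap C}$ times the \emph{current} congestion, which has already grown. The correct accounting is multiplicative: each unfolding step takes congestion $\kappa$ to $(1+O(1/L))\kappa$, and after $L$ steps one gets $(1+O(1/L))^{L}\le 3$. Your additive bound $1+L\cdot O(\psi_{\rand}/\phi_{\rand})$ relies on the false claim that ``routings at different levels are independent demands on $G$''; they are not, precisely because each routing lives in a shortcut graph and can load lower-level stars. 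The paper makes this explicit by keeping \emph{all} intermediate graphs $G_{A^{(1)}},\dots,G_{A^{(L+1)}}$ and, at step $r$, using the routing subroutine of \cref{lem:exp decomp} to map a flow in $G_{A^{(r+1)}}$ to one in $G_{A^{(r)}}$ (see \cref{lemma:single-level-unfolding}); the level-$r$ star is eliminated while flow may spill to levels $<r$, and the congestion grows by the factor $(1+1.01/L)$. Your proposal discards these intermediate graphs and with them the only routing certificates the construction actually provides.

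A smaller issue: your description of the hierarchy construction conflates the recursion \emph{inside} the expander decomposition (balanced sparse cuts, depth $O(\log^3 n)$) with the hierarchy \emph{levels} (determined by $\Bc(E_{r+1}^{(r+1)})\le 0.9\,\Bc(E_r^{(r)})$, hence $L=O(\log n)$). In the paper each level invokes one full expander decomposition, which internally recurses on balanced cuts; the output of a level is always a set $E_{r+1}^{(r+1)}$ of edges to promote, not a single cut.
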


To formally see why the flow-unfolding data structure of \cref{lem:hierarchy and unfolding structuring} implies that $\cH$ is $\tilde{\Omega}(1)$-expanding, consider any level $i$ and the level-$i$ edge set $E_i$ in $\cH$.
Any $[G\setminus E_{i+1}]$-component-constrained $\Bvol_{E_{i}}$-respecting  demand $(\Bsource,\Bsink)$ can be routed in $G_A$ with congestion $1/\psi_{\rand} = \tilde{O}(1)$ by trivially routing through the shortcut edges. Denote this flow by $\Bf_A$. Then, \Cref{lem:hierarchy and unfolding structuring} can map $\Bf_A$ to a flow $\Bf$ in $G$ with congestion $\tilde{O}(1)$.
This shows that $E_{i}$ is $[G\setminus E_{i+1}]$-component-constrained $\tilde{\Omega}(1)$-expanding in $G$ for every $i$. Thus, ${\cal H}$ is indeed $\tilde{\Omega}(1)$-expanding according to \cref{def:hierarchy}.\footnote{Meticulous readers might notice that we lose an extra $O(\log n)$ factor in congestion here. This is in fact unnecessary because our \cref{lem:exp decomp} only have a congestion blow-up of $1/\phi_{\rand}$. That said, what is described here is an easy way to see how the flow-unfolding data structure is a stronger construct.}

We now describe on a high level how the weak expander hierarchy is constructed.

\paragraph{Dynamics of Levels and Hierarchies.}
We will construct the expander hierarchy in $L=\lceil\log_{10/9}\Bc(E)\rceil+1=O(\log n)$ rounds. In each round $r$, we have a level function $\level^{(r)}:E\rightarrow\{1,\dots,r\}$. We define $E_{i}^{(r)}\defeq \{e: \level^{(r)}(e)=i\}$ as the set of round-$r$ level-$i$ edges. In particular, for round $1$, we initialize $\level^{(1)}(e)\defeq 1$ for all edges $e$ and therefore $E=E_{1}^{(1)}$. 

In round $r\geq1$, we will choose an edge set $E_{r+1}^{(r+1)}$ where $\Bc(E_{r+1}^{(r+1)})\le 0.9 \cdot \Bc(E_{r}^{(r)})$, and then set their level to $r+1$. That is, $\level^{(r+1)}(e)\defeq r+1$ if $e\in E_{r+1}^{(r+1)}$ and the level of other edges remains the same: $\level^{(r+1)}(e) \defeq \level^{(r)}(e)$ if $e\notin E_{r+1}^{(r+1)}$. In particular, we have $E_i^{(r+1)} = E_i^{(r)} \setminus E_{r+1}^{(r+1)}$ for all $i \leq r$.

We call $E_{r}^{(r)}$ the set of \emph{top-level edges }in round $r$. We will build objects for the graph induced by non-top-level edges. Let ${\cal H}^{(r)}=\{{\cal C}_{0}^{(r)},\dots,{\cal C}_{r-1}^{(r)}\}$ be the hierarchy of $G\setminus E_{r}^{(r)}$ induced by $\level^{(r)}$ (restricted to non-top-level edges $E\setminus E_{r}^{(r)}$). Let $A^{(r)}$ be a shortcut induced by ${\cal H}^{(r)}$ with capacity scale $\phi_{\rand}/L$. Note that $A^{(r)}$ does not include the stars on the top-level edges $E_{r}^{(r)}$. For example, $A^{(1)}=\emptyset$ and ${\cal H}^{(1)} = \{{\cal C}^{(1)}_{0}\}$, where ${\cal C}^{(1)}_{0}$ are singleton components. Denote the round-$r$ shortcut graph by $G_{A^{(r)}}\defeq G\cup A^{(r)}$. 

Observe that, for all $i<r$, $E_{i}^{(r)}\subseteq E_{i}^{(r-1)}$ and the family ${\cal C}_{i}^{(r)}$ of level-$i$ components of ${\cal H}^{(r)}$ is a refinement of ${\cal C}_{i}^{(r-1)}$ because $E_{>i}^{(r)}\supseteq E_{>i}^{(r-1)}$. In words, as the number of rounds $r$ increases, the levels of edges only move up. For a fixed level $i<r$, the family ${\cal C}_{i}^{(r)}$ of level-$i$ components only undergoes refinements. Hence, the stars in the level-$i$ shortcut $A_{i}^{(r)}$ only keep splitting.\footnote{In fact, our algorithm does not actually split the stars $A_i^{(r)}$. Instead, we maintain $L$ different versions of shortcut graphs that are fixed once constructed, and delegate each level of routing to its corresponding shortcut graph.}
At the end, we will return ${\cal H}^{(L+1)}$ and guarantee that $E_{L+1} = \emptyset$ which makes $\mathcal{H}^{(L+1)}$ a hierarchy of $G$.

\paragraph{Algorithm: Expander Decomposition Once per Level. }

For each $r\ge1$, once we choose $E_{r+1}^{(r+1)}$, this determines how $\level^{(r+1)},{\cal H}^{(r+1)},G_{A^{(r+1)}}$ can be obtained from $\level^{(r)},{\cal H}^{(r)},G_{A^{(r)}}$. We construct $E_{r+1}^{(r+1)}$ using the following variant of (weak) expander decomposition, which we prove in \cref{sec:decomp}.

\begin{restatable}
[Weak Expander Decomposition]{lemma}{WeakExpanderDecomposition}\label{lem:exp decomp}
There is a universal $\phi_{\rand} = \Theta(1/\log^{7} n)$ where $\phi_{\rand} \in 1/\N$ such that there exists a randomized algorithm that,
given the round-$r$ level function $\level^{(r)}$ and the induced shortcut graph $G_{A^{(r)}}$ with capacity scale $\psi$ of an $n$-vertex graph $G$, computes in $O(n^2\log^9 n \cdot (\log^3 n + 1/\psi))$ time the \emph{round-$(r+1)$ top-level edge set} $E_{r+1}^{(r+1)}$ such that
\begin{enumerate}
\item $\Bc(E_{r+1}^{(r+1)})\le 0.9 \cdot \Bc(E_{r}^{(r)})$ and 
\item\label{item:random-expanding}  $E_{r}^{(r+1)}$ is $[G\setminus E_{r+1}^{(r+1)}]$-component-constrained\footnote{Recall that this is defined to be $\mathcal{C}$-component-constrained for $\mathcal{C} \defeq \SCC(G \setminus E_{r+1}^{(r+1)})$, which is not necessarily the same as that of $G_{A^{(r)}} \setminus E_{r+1}^{(r+1)}$.}
$\Omega(1/\log^6 n)$-expanding in $G_{A^{(r)}}$, with high probability.
\end{enumerate}
Furthermore, conditioned on Item~\ref{item:random-expanding} succeeding, given a $[G\setminus E_{r+1}^{(r+1)}]$-component-constrained $(\Bvol_{E_r^{(r+1)}}\cdot \kappa/z)$-respecting $1/z$-integral demand $(\Bsource,\Bsink)$ for some $\kappa,z\in \N$, there is an $O(n^2\log^{6} n \cdot (\log^3 n + 1/\psi))$ time algorithm that returns a flow $\Bf$ routing $(\Bsource,\Bsink)$ in $G_{A^{(r)}}$ such that $\Bf(e) \leq \frac{\lceil \Bc(e)\cdot\kappa/\phi_{\rand}\rceil}{z}$.
\end{restatable}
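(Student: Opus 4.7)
The plan is to run a standard cut-matching-game-based expander decomposition, using \cref{lem:push-relabel-on-shortcut} as the inner flow routine and the non-stop cut-matching game of \cite{FleischmannLL25} as the outer driver. Initialize $E_{r+1}^{(r+1)} \gets \emptyset$ and treat $F \defeq E_r^{(r)}$ as the set of terminal edges (with terminal ``volume'' $\Bvol_F$). Over $T = O(\log^2 n)$ rounds the game samples a $\Bvol_F$-respecting demand $(\Bsource_t,\Bsink_t)$ that we try to route on $G_{A^{(r)}}$ by invoking $\alg{SparseCut}{}$ with this $F$ and congestion target $\kappa = \Theta(1/\phi_{\rand}) = \Theta(\log^{7}n)$. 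Either the call returns a flow routing the demand, giving the ``matching'' that the cut-matching game feeds back; or it returns a cut $(S,\bar S)$ of crossing capacity at most $(41|\Bf| + \min\{\Bvol_F(S), \Bvol_F(\bar S)\})/\kappa$, whose crossing edges we move up to $E_{r+1}^{(r+1)}$ before recursing on the two sides.

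The non-stop guarantee ensures returned cuts are $\Bvol_F$-balanced, giving recursion depth $O(\log n)$. Summing the cut-sparsity guarantee across recursion levels, the total capacity moved to $E_{r+1}^{(r+1)}$ is $O(\phi_{\rand}\log n)\cdot \Bc(E_r^{(r)}) \le 0.1\,\Bc(E_r^{(r)})$ once $\phi_{\rand}$ is chosen as a small enough $\Theta(1/\log^7 n)$, yielding Item~1 deterministically. For Item~2, if all $T$ rounds on some recursive piece succeed, the collected matchings (with high probability over the random demands) witness that $F$ restricted to that piece embeds a constant-degree expander into $G_{A^{(r)}}$ with total congestion $O(\kappa T) = O(\log^9 n)$, so any $\Bvol_{E_r^{(r+1)}}$-respecting demand that is $[G\setminus E_{r+1}^{(r+1)}]$-component-constrained can be rerouted through this embedding into $G_{A^{(r)}}$ with congestion $O(\log^6 n)$; this is precisely the $\Omega(1/\log^6 n)$-expansion claimed. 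The running time is $T\cdot O(n^2\log^3 n\cdot (\kappa+1/\psi)) = O(n^2\log^{12}n) + O(n^2\log^5 n/\psi)$ per recursive call, amortized over depth $O(\log n)$ via the balance of cuts to the stated $O(n^2\log^9 n\cdot(\log^3 n + 1/\psi))$.

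The ``furthermore'' routing follows from a single additional call to $\alg{WeightedPushRelabel}{}$ on $G_{A^{(r)}}$ with weight function $\Bw_{\mathcal{H}^{(r+1)}}$ and height tuned to $\Theta(n\log n/\phi_{\rand})$; after scaling the input demand and capacities by $z$, the expansion established in Item~2 guarantees the demand routes completely, and the per-edge bound $\Bf(e)\le \lceil \Bc(e)\cdot \kappa/\phi_{\rand}\rceil/z$ follows by choosing the congestion parameter to be $\kappa/\phi_{\rand}$ and rescaling the returned integral flow back. The $1/z$-integrality is preserved by \cref{thm:push-relabel-main-theorem} since the scaled input is integer. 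The $O(\log^2 n)$-round overhead of cut-matching is avoided here, giving the stated $O(n^2\log^6 n\cdot(\log^3 n + 1/\psi))$ time.

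The main obstacle I foresee is converting the cut-matching game's combinatorial certificate (the fact that successive matchings produce an expander on $F$) into the precise component-constrained routing statement needed for Item~2. The subtlety is that the SCCs of $G\setminus E_{r+1}^{(r+1)}$ referenced in ``$[G\setminus E_{r+1}^{(r+1)}]$-component-constrained'' are determined only after all recursion finishes, yet the matchings were produced during the recursion without reference to these components. The argument will need to show that the recursive partitioning respects the SCC structure: any edge moved into $E_{r+1}^{(r+1)}$ either lies entirely within one final SCC (and hence does not mix components) or crosses between SCCs (so its endpoints are in different components and routing respects the constraint vacuously). This compatibility, together with the fact that each recursive piece's embedded expander routes its own demand inside $G_{A^{(r)}}$ without crossing the separating cuts, is what turns the cut-matching witness into the required expansion guarantee.
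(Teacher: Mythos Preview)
Your high-level plan matches the paper's, but there are two genuine gaps.

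\textbf{Accounting for $E_{r+1}^{(r+1)}$.} The non-stop cut-matching game (\cref{lemma:non-stop-cut-matching}) does \emph{not} certify that all of $F$ on a leaf piece is expanding when all $T_{\KRV}$ rounds succeed. It only certifies that a submeasure $\tilde{\Bd}\le\Bvol_F$ with $\|\tilde{\Bd}\|_1\ge\tfrac{7}{8}\|\Bvol_F\|_1$ is expanding in the witness $W$. The paper then extracts via \cref{claim:from-vertex-weight-to-edges} a subset $F'\subseteq F$ with $\Bc(F')\ge 0.2\,\Bc(F)$ and $\Bvol_{F'}\le 2\tilde{\Bd}$, and moves the remaining $F\setminus F'$ (up to $0.8\,\Bc(F)$) into $E_{r+1}^{(r+1)}$. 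This leaf contribution is the dominant term in the $0.9$ bound; your accounting only tracks the cut edges ($\le 0.1\,\Bc(E_r^{(r)})$ total) and so undercounts $\Bc(E_{r+1}^{(r+1)})$ by roughly a factor of $9$. Relatedly, the balance threshold $\delta_{\KRV}=O(1/\log^2 n)$ forces recursion depth $O(\log^3 n)$, not $O(\log n)$, and the matching player runs with $\kappa=\Theta(1/\phi_{\mathrm{exp}})=\Theta(\log^3 n)$, not $\Theta(\log^7 n)$; your parameters would not yield the stated running time.

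\textbf{Length-constrained expansion for the routing.} For the ``furthermore'' part, a single \textsc{WeightedPushRelabel} call only guarantees a $\tfrac16$-approximation to the best flow of average $\Bw$-length $\le h$ (\cref{thm:push-relabel-main-theorem}\labelcref{item:push-relabel:approximation}). Item~2 by itself gives low-congestion routability but says nothing about path length under $\Bw_{\cH^{(r)}}$, so you cannot conclude the demand routes completely. The paper closes this gap by carrying a short-path guarantee through the whole construction: the matching player (\cref{lemma:matching-player}, built on the $(\Bw_\cH,h)$-short decompositions of \cref{lemma:flow-on-shortcut-short-path}) ensures each matching edge embeds via a path of $\Bw_{\cH}$-length $O(n\log n\cdot(1/\phi_{\mathrm{exp}}+1/\psi))$, and \cref{claim:length-expanding} then shows $F'$ is $(\Bw_\cH,O(h\log^2 n))$-length expanding. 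This is precisely the hypothesis needed by \cref{lemma:routing-data-structure} to route the given demand. Without this length bookkeeping, your single push-relabel call has no reason to find the full flow.

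The concern you raise at the end about the SCC structure is handled in the paper by the one-line observation that $\SCC(G\setminus E_{r+1}^{(r+1)})$ refines the leaf pieces of the recursion (so a component-constrained demand splits cleanly per leaf, and \cref{obs:trivial} glues the per-leaf flows back into $G_{A^{(r)}}$); this is not where the difficulty lies.
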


Note crucially that the value of $\phi_{\rand}$ does not depend on the scale of the hierarchy, and thus there is no circular dependence; see its proof in \cref{sec:decomp}.

\begin{remark}
The above \cref{lem:exp decomp} is a form of expander decomposition. For example, in the first round when $r = 1$, we have $A^{(r)} = \emptyset$ and $E_{1}^{(1)} = E$.
The lemma then computes a set $R \defeq E^{(2)}_{2}$, of capacity $\Bc(R)\le 0.9 \cdot \Bc(E)$, such that
$E^{(2)}_{1} \defeq E\setminus R$ is $[G\setminus R]$-component-constrained expanding in $G_{A^{(1)}} = G$. That is, the algorithm identifies a not-too-large set of edges $R$, after whose removal all strongly connected components of $G$ are expanders.
\end{remark}

Note that as in our new weighted push-relabel algorithm on shortcut graphs, we do not need the expansion of the hierarchy ${\cal H}^{(r)}$ to guarantee that $E_{r}^{(r+1)}$ is expanding in $G_{A^{(r)}}$.
The expansion of $\mathcal{H}^{(r)}$ is used later when guaranteeing that $E_r^{(r+1)}$ is expanding in the \emph{original graph} $G$ or, equivalently, when proving the flow-unfolding data structure.

Given \Cref{lem:exp decomp}, the algorithm for computing our final expander hierarchy is very simple.
See \cref{alg:hierarchy}.

\begin{algorithm}
\SetEndCharOfAlgoLine{}
Initialize $\level^{(1)}(e)\defeq 1$ for all edge $e\in E$.\;
Set $L\defeq\lceil\log_{10/9}\Bc(E)\rceil+1$.\;
\For{$r = 1, \ldots, L$} {
Let ${\cal H}^{(r)}$ be the hierarchy of $G\setminus E_{r}^{(r)}$ induced by $\level^{(r)}$.\;
Let $A^{(r)}$ be the shortcut induced by ${\cal H}^{(r)}$ with capacity scale $\psi_{\rand} \defeq \phi_{\rand}/L$. Let $G_{A^{(r)}}\defeq G\cup A^{(r)}$.\;
Run \Cref{lem:exp decomp} on $G_{A^{(r)}}$ and compute the round-$(r+1)$ top-level edge set $E_{r+1}^{(r+1)}$.\;
Set $\level^{(r+1)}(e)\defeq\begin{cases}
r+1 & e\in E_{r+1}^{(r+1)},\\
\level^{(r)}(e) & \text{otherwise.}
\end{cases}$\;
}
Return the last hierarchy ${\cal H}^{(L+1)}$ and the shortcut $A^{(L+1)}$.\;
\caption{A bottom-up construction of an expander hierarchy.\label{alg:hierarchy}}
\end{algorithm}

\paragraph{Analysis.} We now analyze \cref{alg:hierarchy}.
Let $z \defeq 200L/\psi_{\rand} \in \N$ be the integrality of the flows that we will work with.
Note that we are being deliberately explicit about the integrality of flows as our push-relabel algorithm works with integral demands and flows (which up to scaling by a $\poly(n)$ factor works for $1/\poly(n)$-integral demands and flows).
We first show that we can do a single level of unfolding by incurring a small congestion.
For intuition one should think of the following lemma as incurring a multiplicative $(1+1/L)$ congestion (while the extra $0.1/L$ factor stems from rounding to preserve integrality) in each level of unfolding, and thus the final congestion in $G$ is $(1+1/L)^L = O(1)$.

\begin{lemma}
  Given a $1/z$-integral flow $\Bf$ in $G_{A^{(r+1)}}$ routing a demand $(\Bsource,\Bsink)$ supported on $V$ with congestion $\kappa/z$ for integer $\kappa \geq z$, we can compute in $O(n\log^{14} n)$ time a $1/z$-integral flow $\Bf^\prime$ in $G_{A^{(r)}}$ routing the same demand with congestion $\kappa^\prime/z$ where $\kappa^\prime \in \N$ satisfies $\kappa^\prime \leq \left(1+\frac{1.01}{L}\right)\kappa$.
  \label{lemma:single-level-unfolding}
\end{lemma}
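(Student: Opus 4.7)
The key observation is that $G_{A^{(r+1)}} = G_{A^{(r)}} \cup A_r^{(r+1)}$: for each level $i<r$ the two hierarchies $\cH^{(r)}$ and $\cH^{(r+1)}$ have the same level-$i$ components (both being SCCs of $G$ restricted to $E_{\le i}^{(r)}=E_{\le i}^{(r+1)}$), so only the newly-appearing level-$r$ stars $A_{E_r^{(r+1)}\cap C}$, one per SCC $C$ of $G\setminus E_{r+1}^{(r+1)}$, must be handled. The plan is therefore to split $\Bf=\Bf_1+\Bf_2$, where $\Bf_1$ is the restriction of $\Bf$ to $G_{A^{(r)}}$ and $\Bf_2$ its restriction to $A_r^{(r+1)}$, and to reroute just $\Bf_2$ honestly through $G_{A^{(r)}}$ via the routing subroutine of \cref{lem:exp decomp}.

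Extract the demand induced by $\Bf_2$ as follows: for each SCC $C$ with Steiner vertex $r_C$ and each tail $v$ of some $e\in E_r^{(r+1)}\cap C$, let $a_v\defeq\Bf((v,r_C))-\Bf((r_C,v))$, and set $\Bsource^\ast(v)\defeq\max(a_v,0)$, $\Bsink^\ast(v)\defeq\max(-a_v,0)$. Flow conservation at $r_C$ gives $\Bsource^\ast(C)=\Bsink^\ast(C)$, so $(\Bsource^\ast,\Bsink^\ast)$ is $[G\setminus E_{r+1}^{(r+1)}]$-component-constrained. Because each star edge has capacity $\psi_{\rand}\Bc(e)$ and $\Bf$ has congestion $\kappa/z$, one gets $|a_v|\le\psi_{\rand}(\kappa/z)\deg_{E_r^{(r+1)}}(v)$, so the demand is $\Bvol_{E_r^{(r+1)}}\cdot\kappa^\ast/z$-respecting for $\kappa^\ast\defeq\lceil\psi_{\rand}\kappa\rceil\in\N$, and it is $1/z$-integral because $\Bf$ is. The hypothesis $\kappa\ge z=200L/\psi_{\rand}$ forces $\psi_{\rand}\kappa\ge 200L\ge 1$, giving $\kappa^\ast\le\psi_{\rand}\kappa+1$.

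Applying the ``furthermore'' part of \cref{lem:exp decomp} routes $(\Bsource^\ast,\Bsink^\ast)$ in $G_{A^{(r)}}$ by a $1/z$-integral flow $\boldsymbol{g}$ with $\boldsymbol{g}(e)\le\lceil\Bc(e)\kappa^\ast/\phi_{\rand}\rceil/z$. Setting $\Bf^\prime\defeq\Bf_1+\boldsymbol{g}$ yields a $1/z$-integral flow in $G_{A^{(r)}}$ with the same net flow as $\Bf$ at every $v\in V$, hence routing $(\Bsource,\Bsink)$. Using $\psi_{\rand}/\phi_{\rand}=1/L$ and $\Bc(e)\ge 1$, its congestion is bounded by
\[
\frac{\kappa}{z}+\frac{\kappa^\ast}{\phi_{\rand} z}+\frac{1}{z}\le\frac{\kappa}{z}+\frac{\kappa}{Lz}+\frac{1}{\phi_{\rand} z}+\frac{1}{z}.
\]
Since $\kappa\ge z\ge 200L/\phi_{\rand}$, the lower-order terms $1/(\phi_{\rand} z)+1/z\le 2/(\phi_{\rand} z)$ fit inside $0.01\kappa/(Lz)$, and an analogous slack accommodates the $+1$ needed when rounding the congestion up to the integer $\kappa^\prime$, producing $\kappa^\prime\le(1+1.01/L)\kappa$. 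The routing subroutine dominates the running time, contributing $O(n^2\log^6 n\cdot(\log^3 n+1/\psi_{\rand}))=O(n^2\log^{14}n)$ since $1/\psi_{\rand}=O(\log^8 n)$.

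The only delicate point is the congestion bookkeeping: the ceilings in $\lceil\Bc(e)\kappa^\ast/\phi_{\rand}\rceil$ and $\lceil\psi_{\rand}\kappa\rceil$ both introduce additive overhead, and the assumption $\kappa\ge z$ is exactly what guarantees enough headroom to absorb them into the small $0.01/L$ slack. Everything else---the decomposition $\Bf=\Bf_1+\Bf_2$, the component-constrained property from flow conservation at each $r_C$, and $1/z$-integrality preservation---follows directly from the construction.
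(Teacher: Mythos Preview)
Your overall approach mirrors the paper's proof: peel off the flow on the level-$r$ stars, reroute that demand through $G_{A^{(r)}}$ via the routing clause of \cref{lem:exp decomp}, and add the result back. However, the first sentence of your plan contains a genuine error that the rest of the argument depends on.

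You assert that for each $i<r$ one has $E_{\le i}^{(r)}=E_{\le i}^{(r+1)}$ and hence $A_i^{(r)}=A_i^{(r+1)}$, so that $G_{A^{(r+1)}}=G_{A^{(r)}}\cup A_r^{(r+1)}$. This is false. The new top-level set $E_{r+1}^{(r+1)}$ is not required to be a subset of $E_r^{(r)}$; the expander decomposition of \cref{lem:exp decomp} may (and in general will) cut edges from lower levels as well, since the sparse cuts it finds are in all of $G_{A^{(r)}}$. Consequently $E_{\le i}^{(r+1)}=E_{\le i}^{(r)}\setminus E_{r+1}^{(r+1)}\subsetneq E_{\le i}^{(r)}$, the level-$i$ components $\cC_i^{(r+1)}$ are only a \emph{refinement} of $\cC_i^{(r)}$, and the lower-level star families differ. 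Thus your $\Bf_1$, the restriction of $\Bf$ to $G_{A^{(r)}}$, is not well-defined: $\Bf$ lives on star edges of $A_i^{(r+1)}$ that simply do not exist in $G_{A^{(r)}}$. The paper handles this with an explicit extra step: after eliminating the level-$r$ stars, it maps each edge of $A_i^{(r+1)}$ (leaf $v$ to the Steiner root of its component $C\in\cC_i^{(r+1)}$) to the corresponding edge of $A_i^{(r)}$ (leaf $v$ to the Steiner root of the unique $C'\in\cC_i^{(r)}$ containing $C$). Because $E_i^{(r+1)}\cap C\subseteq E_i^{(r)}\cap C'$, the target edge has at least as much capacity, so congestion does not increase under this mapping. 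You need this step.

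A smaller issue: in your congestion bound you invoke $\Bc(e)\ge 1$ to turn $\lceil\Bc(e)\kappa^\ast/\phi_{\rand}\rceil/(z\Bc(e))$ into $\kappa^\ast/(\phi_{\rand} z)+1/z$. Shortcut edges have capacity as small as $\psi_{\rand}$, so the correct additive term is $1/(\psi_{\rand} z)$, as the paper uses. The headroom $\kappa\ge z=200L/\psi_{\rand}$ is still sufficient to absorb $1/\phi_{\rand}+1/\psi_{\rand}\le 2/\psi_{\rand}$ into the $0.01\kappa/L$ slack, but your stated inequality chain is not valid as written.
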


\begin{proof}
Let $A^{(r+1)}_r$ be the set of level-$r$ stars of the shortcut $A^{(r+1)}$.
For each strongly connected component $C$ of $G\setminus E_{r+1}^{(r+1)}$, there is a star $A_C$ in $A^{(r+1)}_r$ whose leaves are vertices in the component $C$, with a bidirectional edge to each leaf $v$ of capacity $\deg_{E_r^{(r+1)}\cap C}(v)\cdot\phi_{\rand}/L=\deg_{E_r^{(r+1)}\cap C}(v)\cdot\psi_{\rand}$. Let $\Bf^C$ be the flow $\Bf$ restricted to the (bidirectional) edges of the star $A_C$. Since $\Bf$ has congestion $\kappa/z$, the vertex-demand $(\Bsource^C,\Bsink^C)$ of $\Bf^C$ satisfies
\[ \Bsource^C(v),\Bsink^C(v)\le \frac{\psi_{\rand}\kappa}{z} \cdot \deg_{E_r^{(r+1)}\cap C}(v) \leq \frac{\tilde{\kappa}}{z} \cdot \deg_{E_r^{(r+1)} \cap C}(v), \]
where $\tilde{\kappa} \defeq \lceil \psi_{\rand} \kappa \rceil$,
for all vertices $v$ in the component $C$. Moreover, for the center $u$ of the star $A_C$, we have $\Bsource^C(u)=\Bsink^C(u)=\Bf^{\mathrm{out}}(u) = 0$ since $u\notin V$ and the original demand $(\Bsource,\Bsink)$ is supported on $V$.
It follows that $\Bsource^C(C) = \Bsink^C(C)$.

Consider the $1/z$-integral vertex-demand $(\Bsource^+,\Bsink^{+})\defeq \left(\sum_C\Bsource^C,\sum_{C}\Bsink^{C}\right)$, which is both $[G\setminus E_{r+1}^{(r+1)}]$-component-constrained
and $(\Bvol_{E_{r}^{(r+1)}} \cdot \tilde{\kappa} /z)$-respecting by the above discussion.
Invoking \cref{lem:exp decomp}, we can compute in $O(n\log^{14} n)$ time a $1/z$-integral flow $\Bf^{+}$ routing $(\Bsource^{+}, \Bsink^{+})$ in $G_{A^{(r)}}$ such that
\[ \Bf^{+}(e) \leq \frac{\lceil \tilde{\kappa} \cdot \Bc(e)/\phi_{\rand}\rceil}{z} \leq \frac{\tilde{\kappa}/\phi_{\rand} + 1/\psi_{\rand}}{z} \cdot \Bc(e), \]
where we used that $\Bc(e) \geq \psi_{\rand}$.

Note that vertex-demand $(\Bsource^{+}, \Bsink^{+})$ is also routed by the sum of flows $\sum_C\Bf^C$.
 It follows that $\Bf^+$ routes the same vertex-demand as $\sum_C\Bf^C$. Therefore, the flow $\Bf'\defeq\Bf-\sum_C\Bf^C+\Bf^+$ routes the same vertex-demand $(\Bsource,\Bsink)$ as $\Bf$ in $G^\prime \defeq G \cup A^{(r)} \cup (A^{(r+1)} \setminus A_r^{(r+1)})$, is $1/z$-integral, and satisfies
 \begin{equation}
 \begin{split}
 \Bf^\prime(e) \leq \frac{\kappa+\tilde{\kappa}/\phi_{\rand} + 1/\psi_{\rand}}{z} \cdot \Bc_{G^\prime}(e) &\leq \frac{\kappa +  \kappa/L + 1/\phi_{\rand} + 1/\psi_{\rand}}{z} \cdot \Bc_{G^\prime}(e) \\ &\leq \frac{(1+1.01/L)\kappa}{z} \cdot \Bc_{G^\prime}(e).
 \label{eq:cong}
 \end{split}
 \end{equation}

The last inequality in \eqref{eq:cong} follows from
\[ \frac{1}{\phi_{\rand}} + \frac{1}{\psi_{\rand}} \leq \frac{2}{\psi_{\rand}} \leq \frac{0.01}{L} \cdot \kappa \]
since $\kappa \geq z = 200L/\psi_{\rand}$.
What remains is to transform the flow to be supported solely on $G_{A^{(r)}} \defeq G \cup A^{(r)}$.
Observe that for each $1\le i<r$, the partition ${\cal C}^{(r+1)}_i$ is a refinement of the partition ${\cal C}_i^{(r)}$ since the only differences in $\level^{(r)}$ and $\level^{(r+1)}$ is that $\level^{(r+1)}$ lifts some edges to level $r+1$. Therefore, we can naturally map each edge in level-$i$ stars of the shortcut $A^{(r+1)}$ to a corresponding edge in a level-$i$ star of the shortcut $A^{(r)}$. 
Under this mapping, the new flow $\Bf''$ is supported on $G_{A^{(r)}}$, and its vertex-demand is still supported on $V$.
  \end{proof}

Now we can prove \cref{lem:hierarchy and unfolding structuring}.

\begin{proof}[Proof of~\cref{lem:hierarchy and unfolding structuring}]
 We run \cref{alg:hierarchy} to construct the hierarchy $\cH\defeq \cH^{(L+1)}$ and shortcut $A\defeq A^{(L+1)}$.
 By \Cref{lem:exp decomp}, we have $\Bc(E_{r+1}^{(r+1)})\le 0.9 \cdot \Bc(E_{r}^{(r)})$ for each $r\ge1$, and since $L=\lceil\log_{9/10}\Bc(E)\rceil + 1$, we have $\Bc(E^{(L+1)}_{L+1})\le 0.9^{\lceil\log_{10/9}\Bc(E)\rceil}\cdot\Bc(E)<1$, which implies $E^{(L+1)}_{L+1}=\emptyset$. Then, ${\cal H}\defeq{\cal H}^{(L+1)}$ is the hierarchy of $G\setminus E_{L+1}^{(L+1)}=G$, and $A\defeq A^{(L+1)}$ has capacity scale $\phi_{\rand}/L$ by construction.
 The running time is $O(n^2\log^{17} n)$ per level by \cref{lem:exp decomp}, hence $O(n^2\log^{18} n)$ overall.
  For the flow-unfolding data structure $D_{\mathrm{unfold}}$, we proceed by repeatedly unfolding a level-$(r+1)$ flow $\Bf^{(r+1)}$ to a level-$r$ flow $\Bf^{(r)}$ in $G_{A^{(r)}}$ with \cref{lemma:single-level-unfolding}, starting from $\Bf^{(L+1)}\defeq\Bf$ in $G_{A^{(L+1)}}$.
 Initially, $\Bf^{(L+1)}$ has congestion $1=z/z$, and inductively the final flow $\Bf^{(1)}$ routes $(\Bsource,\Bsink)$ in $G_{A^{(1)}}=G$ and has congestion $(1+1.01/L)^L \leq e^{1.01} \leq 3$.
  The running time of the flow-unfolding data structure is $O(n^2\log^{14} n \cdot L) = O(n^2\log^{15} n)$.
  This completes the proof.
\end{proof}

\section{The Maximum Flow Algorithm}
\label{sec:maxflow algo}

Given the algorithm for computing an approximate maximum flow in a shortcut graph from \Cref{sec:push_relabel} and the algorithm for mapping the flow on the shortcut graph to the original graph without blowing up congestion too much from \Cref{sec:hierarchy}, we obtain an approximate maximum flow algorithm summarized as follows.
\begin{lem}
\label{lem:apx flow}
There is a randomized algorithm that, given a graph $G$ and $s,t\in V(G)$ with polynomially bounded integral edge capacities $\Bc \in \N^E$, in $O(n^{2}\log^{18}n)$ time finds an \textbf{integral} $O(1)$-approximate maximum $(s,t)$-flow $\Bzero \leq \Bf \leq \Bc$ and minimum $(s,t)$-cut $S$ in $G$ with high probability.
\end{lem}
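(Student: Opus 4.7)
The plan is to chain three results: \cref{lem:hierarchy and unfolding structuring} for the hierarchy, shortcut, and unfolding data structure; \cref{cor:approx max flow in shortcut graph} for approximate max flow on the shortcut graph; and \cref{lemma:rounding} for flow rounding. At a high level, we compute an $O(1)$-approximate max $(s,t)$-flow and an $O(1)$-approximate min $(s,t)$-cut in the shortcut graph $G_A$, unfold the flow back to $G$ at the cost of a constant congestion blow-up, and then round to an integer feasible flow while restricting the cut back to $V(G)$.

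Concretely, first invoke \cref{lem:hierarchy and unfolding structuring} to construct in $O(n^2\log^{18}n)$ time the hierarchy $\cH$, the shortcut $A$ with capacity scale $\psi_{\rand}=\Omega(1/\log^{8}n)$, and the flow-unfolding data structure $\cD_{\unfold}$. Apply \cref{cor:approx max flow in shortcut graph} to $G_A \defeq G\cup A$ with terminals $s,t$ to obtain in $\tilde O(n^2)$ time a $\psi_{\rand}$-integral flow $\Bf_A$ and a cut $S\subseteq V(G_A)$ that are each $O(1)$-approximate in $G_A$. Finally, run $\cD_{\unfold}$ on $\Bf_A$ in $O(n^2\log^{15}n)$ time to produce a $1/z$-integral flow $\Bf$ in $G$ (with $z=200L/\psi_{\rand}=\tilde O(1)$) that routes the same $s$--$t$ demand and has congestion at most $3$ in $G$.

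Let $V^\star$ denote the max $(s,t)$-flow value in $G$. Since $G\subseteq G_A$ and any feasible flow in $G_A$ unfolds (via $\cD_{\unfold}$) to a congestion-$3$ flow in $G$, the max flow value $V^\star_A$ of $G_A$ satisfies $V^\star \le V^\star_A \le 3V^\star$. For the cut, set $S'\defeq S\cap V(G)$; its $G$-capacity is at most the $G_A$-capacity of $S$, which is $O(V^\star_A)=O(V^\star)$, so $S'$ (still separating $s$ from $t$) is an $O(1)$-approximate min $(s,t)$-cut in $G$. For the flow, the fractional $\Bf/3\le \Bc$ has value $|\Bf|/3=\Omega(V^\star)$; rescaling to $z\Bf$ (an integer flow feasible in $G$ with capacities $3z\Bc$), removing at most $3z=\tilde O(1)$ units along a single $s$--$t$ path in $\Bf$ so the scaled $s$--$t$ demand becomes a multiple of $3z$, and invoking \cref{lemma:rounding} with parameter $3z$ returns an integer flow $\hat\Bf \le \lceil \Bf/3\rceil \le \Bc$ (the last inequality using integrality of $\Bc$) whose value is within an additive $\tilde O(1)$ of $|\Bf|/3$.

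The main subtlety is reconciling the congestion-$3$ blow-up of unfolding with the requirement of an integer feasible flow; this is resolved by the scaling-and-rounding step above, and the additive $\tilde O(1)$ loss is absorbed into the constant-factor approximation whenever $V^\star$ is not tiny. The degenerate regime $V^\star=O(1)$ can be handled separately by a few BFS-augmentations at negligible extra cost. The running time is dominated by the $O(n^2\log^{18}n)$ hierarchy construction, matching the claimed bound.
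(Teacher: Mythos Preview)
Your proposal is correct and follows essentially the same route as the paper: build the hierarchy/shortcut/unfolder via \cref{lem:hierarchy and unfolding structuring}, compute an approximate flow and cut in $G_A$ via \cref{cor:approx max flow in shortcut graph}, unfold back to $G$ with congestion~$3$, scale down and round, and restrict the cut to $V(G)$. Your treatment of the rounding step (making the scaled demand divisible by $3z$ before invoking \cref{lemma:rounding}, and separately handling the degenerate $V^\star=O(1)$ regime) is in fact more careful than the paper's, which simply asserts that after scaling by $1/3$ and rounding one obtains an integral feasible flow of value $\lceil |\Bf_A|/3\rceil$.
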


Note that \Cref{lem:apx flow} immediately implies our main \Cref{thm:main} by repeatedly solving the problem in the residual graph up to $O(\log n)$ times (recall that we have assumed the capacities are bounded by $\poly(n)$ via capacity scaling). We state the algorithm for \Cref{lem:apx flow} in \Cref{alg:apx flow} and prove its correctness below.

\ifnum\cameraready=0
\begin{algorithm}[H]
\else
\begin{algorithm}
\fi
\SetEndCharOfAlgoLine{}

Using \Cref{lem:hierarchy and unfolding structuring}, compute a hierarchy ${\cal H}$, the shortcut $A$ induced by ${\cal H}$ with capacity scale $\psi$, and the data structure ${\cal D}_{\unfold}$.\;
Given the shortcut graph $G_{A}\defeq G\cup A$ and $s,t\in V(G)$ as input to \Cref{cor:approx max flow in shortcut graph}, compute a $\psi$-integral $O(1)$-approximate maximum $(s,t)$-flow $\Bf_{A}$ in $G_{A}$ and $O(1)$-approximate minimum $(s,t)$-cut $S_{A}$ in $G_{A}$.\;
Set $\Bf\gets{\cal D}_{\unfold}(\Bf_{A})$, scale the flow value on each edge down by a factor of $1/3$, and round $\Bf$ to integral using \cite{KangP15}.\;
Return $(\Bf,S_{A} \cap V)$.\;
\caption{An $O(1)$-approximate maximum $(s,t)$-flow and minimum $(s,t)$-cut algorithm.\label{alg:apx flow}}
\end{algorithm}

\paragraph{Correctness. }

We will prove that $\Bf$ is a \emph{feasible} $(s,t)$-flow in
$G$ and $S_{A} \cap V$ is a $(s,t)$-cut in
$G$ where $\Bc(E_{G}(S_{A} \cap V,\overline{S_A \cap V}))\le O(|\Bf|)$. We have $\Bc(E_{G}(S_{A},\overline{S_A}))\le\Bc(E_{G_{A}}(S_{A},\overline{S_A}))$ because $G\subseteq G_{A}$. Next, $\Bc(E_{G_{A}}(S_{A},\overline{S_A}))\le O(|\Bf_{A}|)$ because, by \Cref{cor:approx max flow in shortcut graph}, $\Bf_{A}$ is a $O(1)$-approximate feasible maximum flow in $G_{A}$. 

It remains to prove that $\Bf$ is a feasible $(s,t)$-flow in $G$ and has value $|\Bf|=\Omega(|\Bf_{A}|)$. Indeed, by \Cref{lem:hierarchy and unfolding structuring}, ${\cal D}_{\unfold}(\Bf_{A})$ is an $(s,t)$-flow in $G$ with congestion at most $3$ and $|{\cal D}_{\unfold}(\Bf_{A})|=|\Bf_{A}|$. After scaling the flow value on each edge by $1/3$ and rounding, $\Bf$ becomes integral and is a feasible $(s,t)$-flow in $G$ of value $|\Bf|=\lceil |\Bf_{A}|/3 \rceil$.

\paragraph{Running Time.}

\Cref{lem:hierarchy and unfolding structuring} takes $O(n^{2}\log^{18}n)$ time. \Cref{cor:approx max flow in shortcut graph} takes $O(n^{2}\log^{3}n/\psi=O(n^{2}\log^{11}n)$ because $\psi=\Omega(1/\log^{8}n)$. The data structure call ${\cal D}_{\unfold}(\Bf_{A})$ takes $O(n^2\log^{15} n)$ time. Finally, flow rounding takes $O(m\log n) = O(n^2\log n)$ time. Overall, it takes $O(n^{2}\log^{18}n)$ time to compute an $O(1)$-approximate maximum flow.
This proves \cref{lem:apx flow}.

Lastly, recursively applying \cref{lem:apx flow} $O(\log n)$ times we solve maximum flow in $O(n^2\log^{19}n)$ time, proving \cref{thm:main}.

\MaxFlow*

\section{Weak Expander Decomposition: Proof of \texorpdfstring{\Cref{lem:exp decomp}}{Lemma 5.2}}
\label{sec:decomp}

In this section we give an algorithm for weak expander decomposition, proving \cref{lem:exp decomp}.

\subsection{Cut-Matching Game}

To compute a weak expander decomposition, we use the now-standard procedures of cut-matching games.
The cut-matching game constructs an expander by the interaction between a cut player and a matching player: Starting from an empty graph, in each iteration, the cut player chooses a cut of the vertices, and the matching player outputs a matching between the two sides of the cut, which is then added to the graph.
The guarantee of the cut-matching games is that there exist good strategies for the cut players so that after a small number of iterations, regardless of the matchings chosen by the matching player, the final graph (which equals the union of these matchings) becomes an expander.

To use this procedure to certify whether a given graph $G$ is an expander or not (or whether a vertex weight is expanding in it), we can implement the matching player by solving a flow problem on the graph that finds a matching between the vertices embeddable with low congestion into $G$ via flow paths.
Thus, after the cut-matching game, the expander constructed by the game is embeddable into $G$ with low congestion, thus proving the expansion of $G$.

The standard cut-matching game requires the matchings we choose to be perfect.
We use a slight variant that only requires \emph{almost} perfect matchings, at the cost of only guaranteeing the expansion of \emph{almost} all vertices.
This procedure has become fairly standard for undirected graphs, and its directed generalization can be proven in various ways.
We describe two ways of achieving this.

\paragraph{(Standard) Cut-Matching Game with Pruning.}
For instance, one can run the standard directed \emph{cut-matching game} (e.g., \cite{Louis10}).
Since it requires the matchings to be perfect, we add \emph{fake} edges to our matchings to make them artificially perfect.
At the end of the cut-matching game, we get an expander that contains few fake edges.
We then run the \emph{expander pruning} algorithm (e.g., \cite{SulserP25}) to extract an expander subgraph supported on a large fraction of vertices and that consists of only real edges (which proves the expansion of these vertices).

\paragraph{Non-Stop Cut-Matching Game.}
Alternatively, one can use a \emph{non-stop} version of the cut-matching game that allows removing small subsets of vertices during the interaction between the cut and the matching players~\cite{SaranurakW19,LiRW25,FleischmannLL25}.
That is, the matching player does not need to return a perfect matching, and vertices that are unmatched in some iterations get removed from the game.
The non-stop cut-matching game is able to guarantee the expansion of all non-removed vertices at the end. Note that this approach only returns weak expanders (in the sense that the expansion of non-removed vertices is potentially certified through removed ones), while the aforementioned one based on pruning returns strong expanders.

While the first version might be more modularized for presentation, we believe the \emph{non-stop cut-matching game} is the simpler approach (both in implementation and analysis) as it avoids the pruning step. Thus we opt for using it for our algorithm\footnote{In \cref{sec:deterministic-b-matching}---where we show how to derandomize our algorithm for the special case of vertex-capacitated graphs---we instead use the former approach of standard cut-matching games plus pruning (as the non-stop version of the cut-matching games are randomized).} (see \cref{lemma:non-stop-cut-matching} below).

\paragraph{Matching Player.} In either version of the cut-matching game, we need to implement the matching player in the game by providing the cut player a (directed and capacitated) matching from some vertex set $A$ to $B$.
We define the following notion of \emph{perfectness} for these matchings.

\begin{definition}
  Consider a partition $(P,Q)$ of $V$.
  A capacitated graph $M$ on $V$ (with capacities $\Bc_M$) is a \emph{$(P,Q)$-matching} if (1) $\deg_M^{-}(v)=0$ for all $v\in P$ and (2) $\deg_{M}^{+}(v)=0$ for all $v\in Q$.
  It is $\Bd$-perfect for some measure $\Bd$ satisfying $\Bd(P)=\Bd(Q)$ if it also holds that (i) $\deg_{M}^{+}(v) = \Bd(v)$ for all $v\in P$ and (ii) $\deg_{M}^{-}(v) = \Bd(v)$ for all $v\in Q$.
  It is \emph{$1/z$-integral} if the capacities $\Bc_M$ are $1/z$-integral.
\end{definition}

Consider a $(P,Q)$-matching $M$.
Let $\Bd_M \in \R^V$ be defined by $\Bd_M(v) \defeq \deg_{M}^{+}(v)$ for $v\in P$ and $\Bd_M(v) \defeq \deg_{M}^{-}(v)$ for $v\in Q$.
That is, $\Bd_M$ is the unique measure such that $M$ is $\Bd_M$-perfect.

We encapsulate what we consider a valid output of the matching player in the non-stop cut-matching game as follows (note that $\Bd^\prime$ is supposed to be a subdemand of $\Bd$ for which we are trying to certify almost expansion).

\begin{definition}
  Let $(P, Q)$ be a partition of $V$ and $\Bd^\prime$ be a vertex measure on $V$ with $\Bd^\prime(P) = \Bd^\prime(Q)$.
  A \emph{$(P, Q, \Bd^\prime, \Delta, m)$-bidirectional-matching} is a pair of matchings $(\overrightarrow{M}, \overleftarrow{M})$ consisting of $m$ edges such that $\Bd_{\overrightarrow{M}}, \Bd_{\overleftarrow{M}} \leq \Bd^\prime$ are submeasure of $\Bd^\prime$ that satisfy $\Bd_{\overrightarrow{M}}(V), \Bd_{\overleftarrow{M}}(V)\geq \Bd^\prime(V)-\Delta$.
  The bidirectional-matching is \emph{$1/z$-integral} if both $\overrightarrow{M}$ and $\overleftarrow{M}$ are.
  \label{def:matching}
\end{definition}

We use the following directed, capacitated, non-stop cut-matching game of \cite{FleischmannLL25} as a black box.
The cut-matching game is formulated not exactly this way in \cite{FleischmannLL25}, and we discuss how our \cref{lemma:non-stop-cut-matching} can be extracted from \cite{FleischmannLL25} in \cref{appendix:cmg}.

\begin{restatable}[Non-Stop CMG \cite{FleischmannLL25}]{lemma}{NonStopCutMatching}
  Consider an integral vertex measure $\Bd$ on $n$ vertices $V$ where $\|\Bd\|_{\infty}\leq\poly(n)$.
  Let $z\in \N$.
  Suppose there is an oracle that on input partition $(P,Q)$ and $1/z$-integral $\Bd^\prime\leq \Bd$ outputs a $1/z$-integral $(P,Q,\Bd^\prime,\Delta,m)$-bidirectional-matching for $\Delta \defeq \delta_{\KRV} \cdot \Bd(V)$ where $\delta_{\KRV} = O(1/\log^2 n)$ is sufficiently small.
  
  Then, there is a randomized algorithm that invokes the oracle $T_{\KRV} = O(\log^2 n)$ times and takes $O(m\log^2 n)$ additional running time to compute a submeasure $\tilde{\Bd} \leq \Bd$ with $\|\tilde{\Bd}\|_1 \geq \frac{7}{8}\|\Bd\|_1$ such that with high probability  $\tilde{\Bd}$ is $T_{\KRV}$-hop $\Omega(1)$-expanding in
  $W \defeq \bigcup_{i \in [k]}\overrightarrow{M}_i \cup \overleftarrow{M}_i$, where $\overrightarrow{M}_i \cup \overleftarrow{M}_i$ is the output of the oracle in the $i$-th invocation.

  \label{lemma:non-stop-cut-matching}
\end{restatable}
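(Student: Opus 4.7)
The plan is to invoke the directed, capacitated non-stop cut-matching game of \cite{FleischmannLL25} essentially as a black box; the real work is in translating their formulation so the matching oracle interface matches \cref{def:matching}. At a high level, I would maintain the standard KRV mixing-matrix $W_t$ initialized to the projection that spreads mass uniformly on $\supp(\Bd)$ (weighted by $\Bd$), and repeat the following for $T_{\KRV} = O(\log^2 n)$ rounds: sample $r \sim \mathcal{N}(0, I)$, let $u = W_t r$, sort vertices by $u$, and use $u$ together with $\Bd$ to produce a partition $(P,Q)$ and a submeasure $\Bd^\prime \leq \Bd$ with $\Bd^\prime(P) = \Bd^\prime(Q)$ that captures the ``mass to be matched'' in this round. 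Feed $(P, Q, \Bd^\prime)$ to the oracle to get a $(P, Q, \Bd^\prime, \Delta, m_t)$-bidirectional-matching $(\overrightarrow{M}_t, \overleftarrow{M}_t)$, and then update $W_{t+1}$ by composing with the random-walk matrix induced by this bidirectional matching.

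The key potential argument is the standard one, generalized to the directed case in \cite{FleischmannLL25}: the ``second-moment'' potential $\Phi_t = \sum_v \Bd(v) \|W_t e_v - \bar W_t\|_2^2$ decreases by a constant factor in expectation per round whenever the matching is close to perfect on $\Bd^\prime$. After $T_{\KRV} = O(\log^2 n)$ rounds, Markov's inequality together with a concentration argument gives $\Phi_{T_{\KRV}} \leq 1/n^{10}$ with high probability, which by standard expander embedding implies that the union $W = \bigcup_t \overrightarrow{M}_t \cup \overleftarrow{M}_t$ is a $T_{\KRV}$-hop $\Omega(1)$-expander with respect to some submeasure $\tilde{\Bd} \leq \Bd$ supported on the part of the space that was actually matched.

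To get the bound $\|\tilde{\Bd}\|_1 \geq \tfrac{7}{8}\|\Bd\|_1$, I would argue exactly as in the non-stop framework of \cite{SaranurakW19, LiRW25, FleischmannLL25}: define an ``excess'' vector $\Bx_t$ that records, on each vertex $v$, the cumulative amount of $\Bd^\prime$-mass that the oracle failed to match in round $t$. Since each bidirectional-matching can under-deliver by at most $\Delta = \delta_{\KRV}\Bd(V)$ on each of $\overrightarrow{M}_t$ and $\overleftarrow{M}_t$, we have $\|\Bx_{T_{\KRV}}\|_1 \leq 2 T_{\KRV} \cdot \delta_{\KRV} \Bd(V)$. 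Choosing $\delta_{\KRV} = c/\log^2 n$ for a sufficiently small constant $c$, this is at most $\tfrac{1}{8}\|\Bd\|_1$, so the ``good'' submeasure $\tilde{\Bd} \defeq \Bd - \Bx_{T_{\KRV}}$ satisfies the desired volume lower bound and inherits the expansion certificate from the potential argument.

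The main obstacle — and the reason we offload to \cite{FleischmannLL25} rather than reprove from scratch — is two-fold. First, the KRV potential argument is cleanest for symmetric undirected matchings, and adapting it to directed graphs requires handling both $\overrightarrow{M}_t$ and $\overleftarrow{M}_t$ carefully in the matrix update (this is where \cite{FleischmannLL25} does the real technical work); roughly, one needs to argue that the symmetric part of the composed random-walk still mixes as in the undirected case. Second, while the analysis is naturally continuous, our oracle returns $1/z$-integral matchings; one must verify that snapping the cut-player's $\Bd^\prime$ to a $1/z$-integral submeasure (with $\|\Bd\|_\infty \leq \poly(n)$ and $z$ at most polynomial) perturbs the potential by only a $1/\poly(n)$ term per round, so the argument is unaffected. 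The running time of $O(m \log^2 n)$ additional work outside the oracle calls is standard: each round amounts to a few vector multiplications by a matching matrix of total size $O(m_t)$, summed over $T_{\KRV} = O(\log^2 n)$ rounds.
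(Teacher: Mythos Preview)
Your proposal is correct and takes essentially the same approach as the paper: both treat this lemma as a black-box invocation of \cite{FleischmannLL25}, and the paper's Appendix~\ref{appendix:cmg} sketches the same KRV-style game (maintain a mixing/flow matrix, random balanced bipartitions for $O(\log^2 n)$ rounds, charge unmatched mass to bound $\|\tilde{\Bd}\|_1$). One small presentational difference worth noting: the paper defines $\tilde{\Bd}$ as the final \emph{alive} demand $\Bd_t$ (zeroing out a vertex once its alive mass drops below $\Bd(v)/2$) rather than as $\Bd$ minus a cumulative excess, which is what ties the expansion certificate cleanly to $\tilde{\Bd}$, though the volume accounting is identical to yours.
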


\subsection{Implementing the Matching Player}

Note that given a partition $(P, Q)$ of $V$ and vertex measure $\Bd^\prime$, a flow $\Bf$ that routes the demand $(\Bsource, \Bsink)$ defined by $\Bsource \defeq \Bone_P \cdot \Bd^\prime$ and $\Bsink \defeq \Bone_Q \cdot \Bd^\prime$ gives an $\Bd^\prime$-perfect $(P,Q)$-matching via path decomposition of flows:
For any flow $\Bf$ it is standard that we can write it as $\Bf = C_{\Bf} + P_{\Bf}$, where $C_f$ is a circulation\footnote{A \emph{circulation} is a flow that routes the trivial demand $(\Bzero, \Bzero)$.} and the support of $P_{\Bf}$ is acyclic.
The part $P_{\Bf}$ can be further decomposed into $P_{\Bf} = \sum_{i \in [k]}\lambda_i \Bf_{P_i}$ where  $\lambda_i > 0$, and each $\Bf_{P_i}$ is the flow that sends one unit of demand along a simple $(s_i, t_i)$-path $P_i$.
We call such $(C_{\Bf}, \{(\lambda_i, P_i)\}_{i \in [k]})$ a \emph{path decomposition} of $\Bf$.

Now, for any path decomposition $(C_{\Bf}, \{(\lambda_i, P_i)\}_{i \in [k]})$ of a flow $\Bf$ routing the above demand $(\Bsource,\Bsink)$,
the capacitated matching $M$ that contains for each $(u,v)$-path $P_i$ an edge $(u, v)$ with capacity $\Bc_M(u, v) \defeq \lambda_i$ is a $\Bd^\prime$-perfect $(P,Q)$-matching.
This flow perspective will be used frequently.
In particular, it is easy to see that our flow algorithm \cref{lem:push-relabel-on-shortcut} can be used as the oracle required by \cref{lemma:non-stop-cut-matching} when $\Bd = \Bvol_F$.
To achieve a stronger property that we need later, we further require that the flow we found is decomposable into short paths under the weight $\Bw_{\cH}$.
We call a path decomposition $(C_{\Bf}, \{(\lambda_i, P_i)\}_{i \in [k]})$ \emph{$(\Bw_{\cH}, h)$-short} if each path $P_i$ satisfies $\Bw_{\cH}(P_i) \leq h$.\footnote{While our \cref{lem:push-relabel-on-shortcut} already guarantees that the \emph{average} weight of any path decomposition is small, we want the \emph{worst case} weight is small as well (as required by the definition of being $(\Bw_{\cH}, h)$-short). The proof of \cref{lemma:flow-on-shortcut-short-path} thus essentially consists of running \cref{lem:push-relabel-on-shortcut}, taking the paths that are short, and recursing on the demand that was not routed by these short paths.}
The path decomposition is \emph{$1/z$-integral} if each $\lambda_i$ is.
We encapsulate this into the following lemma.
Since it can be quite straightforwardly obtained by running \cref{lem:push-relabel-on-shortcut} multiple times and performing path decomposition to extract short paths, we defer the proof to \cref{appendix:omitted}.
Note that to circumvent the path decomposition barrier in capacitated graphs, instead of outputting the paths $\{P_i\}_{i \in [k]}$, we only return the \emph{representation} $\{(\lambda_i, s_i, t_i)\}_{i \in [k]}$ (i.e., the endpoints of the paths) of the decomposition which suffices for constructing the matchings.

\begin{restatable}{lemma}{FlowOnShortcut}\label{lemma:flow-on-shortcut-short-path}
Let $F\subseteq E$ and ${\cal H}$ be the hierarchy of $G\setminus F$. Let $A$ be the shortcut induced by ${\cal H}$ with capacity scale $\psi \in 1/\N$, and $G_{A}=G\cup A$ be the shortcut graph. Given a diffusion instance ${\cal I}=(\Bsource,\Bsink)$ with $\psi$-integral demand and parameter $\kappa\in \N$, there is a deterministic algorithm that runs on $G_{A}$ in $O(n^{2}\log^4 n \cdot (\kappa+1/\psi))$ time and finds a $\psi$-integral flow $\Bf$ in $G_{A}$ with congestion $O(\kappa\log n)$ and
the representation of a $\psi$-integral $(\Bw_{\cH}, h)$-short path decomposition of $\Bf$ for some $h = O(n\log n \cdot (\kappa + 1/\psi))$.

Additionally, if $|\Bf| < \|\Bsource\|_{1}$, then the algorithm also returns a cut $\emptyset\neq S\subsetneq V$ with 
$\Bsink_{\Bf}(S)=0$ and $\Bsource_{\Bf}(\overline{S})=0$ of size
\[
\Bc(E_{G_{A}}(S,\overline{S}))\le\frac{41\min\{\Bsource(S),\Bsink(\bar{S})\}+\min\{\volfc(S),\volfc(\bar{S})\}}{\kappa}.
\]
\end{restatable}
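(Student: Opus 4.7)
The plan is to iteratively apply \cref{lem:push-relabel-on-shortcut} to upgrade its \emph{average}-weight guarantee into a per-path short decomposition. Fix $h \defeq \Theta(n \log n \cdot (\kappa + 1/\psi))$ so that each call to \cref{lem:push-relabel-on-shortcut} yields a flow with $\Bw_{\cH}(\Bf)/|\Bf| \leq h/2$. In iteration $i$, I invoke \cref{lem:push-relabel-on-shortcut} on the current residual demand $(\Bsource^{(i)}, \Bsink^{(i)})$ (initialized to $(\Bsource, \Bsink)$) with parameter $\kappa$, obtaining a flow $\Bf^{(i)}$. If this flow does not fully route the residual, I halt and output the cut returned (after the conversion described below). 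Otherwise, I path-decompose $\Bf^{(i)}$ (discarding circulations), split the paths into \emph{short} ones ($\Bw_{\cH}(P_j) \leq h$) versus \emph{long} ones, commit the short-path triples $\{(\lambda_j, s_j, t_j)\}$ to the output, and set the next iteration's residual demand to be precisely the demand routed by the long paths.

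By Markov's inequality the short bucket carries at least half of $|\Bf^{(i)}|$, so each iteration halves the residual-demand $\ell_1$-mass; since all demands are $\psi$-integral with $\|\Bsource\|_1 \leq \poly(n)$, the loop terminates after $O(\log n)$ rounds. The committed flow $\Bf \defeq \sum_i \tilde{\Bf}^{(i)}$ has congestion $O(\kappa \log n)$ since each iteration contributes at most $\kappa$, and every retained path has weight $\leq h$ by construction. Each iteration spends $O(n^2 \log^3 n \cdot (\kappa + 1/\psi))$ inside \cref{lem:push-relabel-on-shortcut}, plus a lower-order $O(m \log n)$ for standard peeling-based path decomposition using link-cut trees; outputting only the triples (of total size $O(m)$ per iteration) side-steps the path-materialization blow-up. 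Summed over $O(\log n)$ iterations, the total running time is $O(n^2 \log^4 n \cdot (\kappa + 1/\psi))$ as required.

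The main subtlety is converting the cut guarantee of \cref{lem:push-relabel-on-shortcut}---which naturally gives a $41|\Bf^{(i)}|$ coefficient---into the required form using $41\min\{\Bsource(S), \Bsink(\bar{S})\}$, where $\Bsource, \Bsink$ denote the \emph{original} demands. When iteration $i$ fails, the returned cut $S$ satisfies the residual-demand conditions $\Bsink^{(i)}_{\Bf^{(i)}}(S) = 0$ and $\Bsource^{(i)}_{\Bf^{(i)}}(\bar{S}) = 0$. These lift to the combined flow $\Bf_{\mathrm{tot}} \defeq \sum_{j \leq i-1} \tilde{\Bf}^{(j)} + \Bf^{(i)}$ against $(\Bsource, \Bsink)$ by the telescoping identity $(\Bsource^{(j+1)} - \Bsink^{(j+1)})(v) = (\Bsource^{(j)} - \Bsink^{(j)})(v) - \Bf^{(j), \mathrm{out}}(v)$, giving $\Bsink_{\Bf_{\mathrm{tot}}}(S) = 0$ and $\Bsource_{\Bf_{\mathrm{tot}}}(\bar{S}) = 0$ as desired. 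For the cut-size bound, I revisit the backward-edge accounting inside the proof of \cref{lemma:good-level-cuts}: the bound $41|\Bf^{(i)}|$ used there was a uniform proxy for the total backward-edge contribution, and once the lifted cut conditions hold, a direction-sensitive flow-conservation argument bounds the relevant ``wrong-direction'' contribution by $\min\{\Bsource(S), \Bsink(\bar{S})\}$ (the maximum amount of flow that the two saturation conditions force to traverse the cut in the $S \to \bar{S}$ direction). Plugging this sharper bound into the averaging step of \cref{lemma:good-level-cuts} and the subsequent ball-growing argument of \cref{lemma:exists-sparse-cut} (unchanged, handling the $F$-edges) yields the claimed cut size. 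I expect this refinement---in particular reconciling the telescoping of residual demands across iterations with the per-iteration level-cut accounting---to be the main technical obstacle.
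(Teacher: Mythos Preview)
Your iterative scheme---repeatedly calling \cref{lem:push-relabel-on-shortcut}, path-decomposing, keeping only the $(\Bw_{\cH},h)$-short paths via Markov, and recursing on the leftover demand---is exactly what the paper does, and your running-time and congestion accounting are correct.

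The gap is in the cut case. You halt as soon as $|\Bf^{(i)}| < \|\Bsource^{(i)}\|_1$ and then propose to reopen the proof of \cref{lemma:good-level-cuts} to replace the $41|\Bf^{(i)}|$ coefficient by $41\min\{\Bsource(S),\Bsink(\bar S)\}$ via a ``direction-sensitive flow-conservation argument''. This does not work as stated: the $41|\Bf^{(i)}|$ term is \emph{not} a single backward-edge cut contribution that flow conservation can rebound. It arises from (i) the global bound $\Bw_{\cH}(\Bf^{(i)})\le 9h|\Bf^{(i)}|$ summed over \emph{all} level cuts, (ii) the pruned-set bound $\Bc^{\kappa}(P_C)\le 2|\Bf^{(i)}|/\psi$ coming from the number of star edges any simple flow path can saturate, and (iii) the $|\Bf^{(i)}|$ slack converting residual to original capacity. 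None of these are governed by the net flow across the particular cut $S$, so the refinement you sketch does not go through.

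The paper sidesteps this entirely with a one-line change to your halting condition: stop only when $|\Bf^{(i)}| < \tfrac{1}{2}\|\Bsource^{(i)}\|_1$ (otherwise path-decompose and keep the short paths, which by Markov still route at least $\tfrac{1}{4}\|\Bsource^{(i)}\|_1$, so the residual still shrinks geometrically). In the halting iteration, since $\Bsource^{(i)}_{\Bf^{(i)}}(\bar S)=0$ all residual source lies in $S$, hence $\Bsource^{(i)}(S)\ge \|\Bsource^{(i)}\|_1 - |\Bf^{(i)}| > |\Bf^{(i)}|$; symmetrically $\Bsink^{(i)}(\bar S) > |\Bf^{(i)}|$. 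Thus $|\Bf^{(i)}| < \min\{\Bsource^{(i)}(S),\Bsink^{(i)}(\bar S)\}\le \min\{\Bsource(S),\Bsink(\bar S)\}$, and the cut bound from \cref{lem:push-relabel-on-shortcut} transfers directly without touching its internals. Your telescoping observation that the residual-demand conditions lift to the combined flow is then all that is needed.
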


We now use the flow subroutine of \cref{lemma:flow-on-shortcut-short-path} to construct our matching player.

\begin{lemma}[Matching Player]
  Let $F \subseteq E$ and $\cH$ be a hierarchy of $G \setminus F$.
  Let $A$ be the shortcut induced by $\cH$ with capacity scale $\psi \in 1/\N$ and $G_A \defeq G \cup A$ be the shortcut graph that contains $m^\prime$ edges.
  Then, for any $\phi < 1$ and $\delta < 1$, given a partition $(P, Q)$ of $V$ and $\psi$-integral $\Bd^\prime \leq \Bd$ for $\Bd \defeq \Bvol_F$, there is a deterministic algorithm that runs in $O(n^2\log^4 n \cdot (1/\phi+1/\psi))$ time and computes either
  \begin{itemize}
    \item a balanced $\phi$-sparse cut $S$ such that $\vol_F(S) \in \left[\tfrac{\delta}{2}\vol_F(V), \tfrac{1}{2}\vol_F(V)\right]$, or
    \item a $\psi$-integral $(P,Q,\Bd^\prime,\Delta,m^\prime)$-bidirectional-matching $(\overrightarrow{M},\overleftarrow{M})$ where $\Delta \defeq \delta \cdot \vol_F(V)$.
  \end{itemize}

  In the second case, the matchings are such that, given any $\psi$-integral flow $\Bf_M$ on $M \defeq \overrightarrow{M} \cup \overleftarrow{M}$ of congestion $\kappa$,
  there exists a $\psi$-integral flow $\Bf_{G_A}$ routing the same demand as $\Bf_M$ does in $G_A$ with congestion $O(\kappa \cdot \log n/\phi)$ such that
  $\Bw_{\cH}(\Bf_{G_A}) \leq \|\Bf_M\|_1 \cdot h$
  for some $h = O(n\log n\cdot (1/\phi+1/\psi))$.
  
  \label{lemma:matching-player}
\end{lemma}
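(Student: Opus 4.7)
The plan is to invoke \cref{lemma:flow-on-shortcut-short-path} twice, once for the \emph{forward} diffusion instance $(\Bsource,\Bsink)=(\Bone_P\cdot\Bd^\prime,\Bone_Q\cdot\Bd^\prime)$ and once for the \emph{backward} instance obtained by swapping source and sink, with congestion parameter $\kappa_0\defeq\lceil 42/\phi\rceil$. Each call returns, within the claimed $O(n^2\log^4 n\cdot(1/\phi+1/\psi))$ running time, a $\psi$-integral flow (call them $\Bf^{\to}$ and $\Bf^{\gets}$) together with the representation $\{(\lambda_i,s_i,t_i)\}$ of a $\psi$-integral $(\Bw_{\cH},h)$-short path decomposition for some $h=O(n\log n\cdot(1/\phi+1/\psi))$.

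If both calls route all but at most $\Delta/2=\tfrac{\delta}{2}\vol_F(V)$ of their source demand, we build $\overrightarrow{M}$ and $\overleftarrow{M}$ directly: for each returned triple $(\lambda_i,s_i,t_i)$ insert an edge $(s_i,t_i)$ of capacity $\lambda_i$, combining parallel edges by summing capacities. Flow conservation together with $\Bsource\le\Bd^\prime$ gives $\Bd_{\overrightarrow{M}}\le\Bd^\prime$ pointwise, and $\Bd_{\overrightarrow{M}}(V)=2|\Bf^{\to}|\ge\Bd^\prime(V)-\Delta$ (using $\Bd^\prime(P)=\Bd^\prime(Q)$, which is provided by the cut-matching context). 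The analogous bounds hold for $\overleftarrow{M}$, so $(\overrightarrow{M},\overleftarrow{M})$ is a valid $\psi$-integral bidirectional-matching with at most $m^\prime$ edges.

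Otherwise at least one call—say the forward one—returns a cut $S$. The choice of $\kappa_0$, combined with $\Bsource,\Bsink\le\Bvol_F$ componentwise, turns the cut guarantee of \cref{lemma:flow-on-shortcut-short-path} into $\Bc(E_{G_A}(S,\bar S))\le\phi\cdot\min(\vol_F(S),\vol_F(\bar S))$, establishing $\phi$-sparsity. For balance, note that all the unmet source mass (which exceeds $\Delta/2$) is supported on $P$ and, since $\Bsource_{\Bf}(\bar S)=0$, sits entirely in $P\cap S$; hence $\vol_F(S)\ge\Bd^\prime(P\cap S)>\tfrac{\delta}{2}\vol_F(V)$. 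A symmetric computation for the unmet sink mass in $Q\cap\bar S$ forces $\vol_F(\bar S)>\tfrac{\delta}{2}\vol_F(V)$. Returning whichever side of the partition has smaller $\vol_F$ gives a cut of volume in $[\tfrac{\delta}{2}\vol_F(V),\tfrac12\vol_F(V)]$, as required.

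For the embedding property we retain the short paths $\{P_i\}$ witnessing every matching edge. Given a $\psi$-integral $\Bf_M$ on $M$ of congestion $\kappa$, we push each $\psi$-sized unit of $\Bf_M((s,t))$ along one of the paths $P_i$ with $(s_i,t_i)=(s,t)$ that still has residual capacity $\kappa\lambda_i$; this is feasible because $\Bf_M((s,t))\le\kappa\sum_{i:(s_i,t_i)=(s,t)}\lambda_i$. The resulting $\Bf_{G_A}$ is $\psi$-integral and satisfies $\Bf_{G_A}(e)\le\kappa\cdot(\Bf^{\to}(e)+\Bf^{\gets}(e))$, so its congestion is $O(\kappa\log n/\phi)$ via the $O(\log n/\phi)$ congestion bound on $\Bf^{\to},\Bf^{\gets}$ from \cref{lemma:flow-on-shortcut-short-path}. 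The weight bound follows from $\Bw_{\cH}(\Bf_{G_A})=\sum_i\mu_i\Bw_{\cH}(P_i)\le h\cdot\|\Bf_M\|_1$, where $\mu_i$ is the mass routed along $P_i$. The main subtlety will be justifying the $\psi$-integral greedy packing when multiple parallel paths share endpoints, as a naive proportional split across paths would generically break $\psi$-integrality; a unit-by-unit greedy assignment sidesteps this.
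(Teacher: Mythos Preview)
Your proposal is correct and follows essentially the paper's approach: invoke \cref{lemma:flow-on-shortcut-short-path} with $\kappa=\Theta(1/\phi)$ in both directions, extract the matchings from the returned short-path representations, and output a cut when too much demand is unrouted (the paper equivalently tests whether the returned cut is balanced rather than thresholding on the routed amount, but your argument that $\Bsource_{\Bf}(\bar S)=\Bsink_{\Bf}(S)=0$ forces both $\vol_F(S),\vol_F(\bar S)>\Delta/2$ is exactly the contrapositive). For the embedding, the paper simply scales each path group proportionally by $\Bf_M(u,v)/\Bc_M(u,v)$ rather than packing unit-by-unit; your greedy is more careful about $\psi$-integrality, though the per-path cap should be $\lceil\kappa\rceil\lambda_i$ rather than $\kappa\lambda_i$ so the packing is guaranteed to terminate when $\kappa\notin\N$ (this still yields congestion $O(\kappa\log n/\phi)$).
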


\begin{proof}
  Given the partition $(P, Q)$ and $\Bd^\prime$, to find $\overrightarrow{\Bd}$ and $\overrightarrow{M}$ we solve the flow problem defined by $\Bsource \defeq \Bone_P \cdot \Bd^\prime$ and $\Bsink \defeq \Bone_Q \cdot \Bd^\prime$ in $G_{A}$ (which is $\psi$-integral by the input guarantee).
  We invoke \cref{lemma:flow-on-shortcut-short-path} on with $\kappa \defeq \lceil 50/\phi \rceil$ to compute a flow $\forward{\Bf}$ in $G_{A}$ with congestion $O(\log n/\phi)$ routing $(\Bsource,\Bsink)$.
  If $|\forward{\Bf}| < \|\Bsource\|_1$, then \cref{lemma:flow-on-shortcut-short-path} additionally gives us a cut $S$ that contains all the vertices with excess, meaning that (a subset of) $\forward{\Bf}$ routes the demand restricted to $\bar{S}$ (note that this subset of $\Bf$ may still use edges outside of $G_{A}[\bar{S}]$.)
  We first show that $S$, if non-empty, is indeed a $\phi$-sparse cut with respect to $F$.
  By \cref{lemma:flow-on-shortcut-short-path}, we have
  \begin{align*}
    \Bc(E_{G_{A}}(S,\bar{S}))
    &\leq \frac{\phi}{50} \cdot \left(41 \min\{\Bsource(S), \Bsink(\bar{S})\} + \min\{\vol_{F}(S), \vol_{F}(\bar{S})\right) \\
    &< \phi \cdot \min\{\vol_{F}(S), \vol_{F}(\bar{S})\}
  \end{align*}
  using $\Bsource(S) \leq \vol_{F}(S)$ and $\Bsink(\bar{S}) \leq \vol_{F}(\bar{S})$.
  If $\min\{\vol_F(S),\vol_F(\overline{S})\}\geq \delta/2 \cdot \vol_F(V)$, then we terminate the algorithm and return the balanced sparse cut. 
  Otherwise, given the flow $\forward{\Bf}$, we extract the matching $\overrightarrow{M}$
  from the representation $\{(\lambda_i, s_i, t_i)\}_{i \in [k]}$ of a $(\Bw_{\cH}, h)$-short decomposition of $\forward{\Bf}$, where $h = O(n\log n \cdot (\kappa+1/\psi)) = O(n\log n \cdot \max\{1/\phi+1/\psi\})$ is the parameter from \cref{lemma:flow-on-shortcut-short-path}.
  That is, for each $i \in [k]$, we add an edge $(s_i, t_i)$ to $\forward{M}$ with capacity $\lambda_i$, where we note that $\forward{M}$ is $\psi$-integral as the path decomposition is.
  Note that $\Bd_{\overrightarrow{M}}$ is equal to the source (for $P$) and sink (for $Q$) demand routed by $\forward{\Bf}$, and thus $\Bd_{\overrightarrow{M}}(V)$ is equal to $2|\forward{\Bf}|$.
  By \cref{lemma:flow-on-shortcut-short-path}, we have $\Bsink_{\Bf}(S)=0$ and $\Bsource_{\Bf}(\overline{S})=0$, and using that $\Bsource_{\Bf}(S)=\Bsink_{\Bf}(\overline{S})$ we obtain
  \[
    \Bd_{\overrightarrow{M}}(V)=2|\forward{\Bf}|=\Bd^\prime(V)-2\Bsource_{\forward{\Bf}}(S)\geq\Bd^\prime(V) - 2\min\{\vol_F(S), \vol_F(\overline{S})\}\geq \Bd^\prime(V)-\delta \Bd(V).
  \]
  This shows that $\overrightarrow{M}$ is a valid part of a $(P,Q,\Bd^\prime,\Delta,m^\prime)$-bidirectional-matching.
  To construct $\overleftarrow{M}$, we run the same algorithm in the reversed graph $\rev{G_A}$ with the same demand vector (and reverse the direction of the obtained flow and matching), noting that $\rev{G_A}$ is indeed the shortcut graph of $\rev{G}$.
  If neither of the executions found a balanced sparse cut, then we output $(\overrightarrow{M},\overleftarrow{M})$;
  Otherwise, we output one of the sparse cuts.
  The algorithm runs in $O(n^2 \cdot (1/\phi+1/\psi)\cdot\log^4 n)$ time by \cref{lemma:flow-on-shortcut-short-path}.

  We now proceed to the second part of the lemma, i.e., any demand on $M \defeq \forward{M} \cup \backward{M}$ can be routed with low congestion and short paths.
  Recall that our matchings $\overrightarrow{M}$ and $\overleftarrow{M}$ are constructed via \cref{lemma:flow-on-shortcut-short-path}, with the guarantee that
  the path decompositions of $\forward{\Bf}$ and $\backward{\Bf}$ are $(\Bw_{\cH}, h)$-short.
  We prove the claim for $\forward{M}$ since $\backward{M}$ is analogous, and the overall statement follows by summing up the two flows.
  Consider a given $\psi$-integral flow $\Bf_{\forward{M}}$ on $\forward{M}$ of congestion $\kappa$.
  Let $(C_{\forward{\Bf}}, \{(\lambda_i, P_i)\}_{i\in [k]})$ be the underlying $(\Bw_{\cH}, h)$-short path decomposition of $\forward{\Bf}$.
  We can write
  \[ \forward{\Bf} - C_{\forward{\Bf}} = \sum_{u,v} \sum_{\text{$(u,v)$-path $P_i$ with $\Bw_{\cH}(P_i) \leq h$}}\lambda_i \Bf_{P_i}, \]
  where the $\lambda_{P_i}$'s for each $(u, v)$ sum to the total capacity of the $(u, v)$ edge in $\forward{M}$.
  We now construct a flow $\Bf^\prime$ in $G_A$ routing the same demand as $\Bf_{\forward{M}}$ by simply scaling up $\forward{\Bf}$ appropriately for each $(u, v)$.
  In particular, we set
  \begin{equation}
  \Bf^\prime \defeq \sum_{u,v} \left(\frac{\Bf_{\forward{M}}(u,v)}{\Bc_{\forward{M}}(u,v)}\cdot \sum_{\text{$(u,v)$-path $P_i$ with $\Bw_{\cH}(P_i) \leq h$}}\lambda_i \Bf_{P_i}\right).
  \label{eq:flow-in-shortcut}
  \end{equation}
  It is easy to see that $\Bf^\prime$ routes the same demand as $\Bf_{\forward{M}}$.
  We also have $\Bf^\prime(e) \leq \forward{\Bf}(e) \cdot \max_{u,v}\frac{\Bf_{\forward{M}}(u,v)}{\Bc_{\forward{M}}(u,v)} \leq \kappa \forward{\Bf}(e)$ for each $e$, meaning that $\Bf^\prime$ is of congestion $O(\kappa \cdot \log n/\phi)$ in $G_A$.
  Finally, we have $\Bw_{\cH}(\Bf^\prime) \leq \|\Bf_{\forward{M}}\|_1 \cdot h$, since each $P_i$ in \eqref{eq:flow-in-shortcut} has weight $\Bw_{\cH}(P_i) \leq h$, and by definition those $\lambda_i$'s sum up to $\Bc_{\forward{M}}(u, v)$.
  Combining the two $\Bf^\prime$ for $\forward{M}$ and $\backward{M}$ as $\Bf_{G_A}$, this concludes the proof.
\end{proof}

\subsection{Decomposing the Graph into Weak Expanders}
The rest of the proof of our weak expander decomposition \cref{lem:exp decomp} is now mostly standard:
We run the non-stop cut-matching game (\cref{lemma:non-stop-cut-matching}) to attempt to certify the expansion of a large fraction of edges, using our matching player (\cref{lemma:matching-player}) based on weighted push-relabel (\cref{lemma:flow-on-shortcut-short-path}).
In each iteration, either our matching player finds a balanced cut---in which case we simply split the graph at the cut and recurse on both sides---or the cut is unbalanced in which case we have a matching that we feed to \cref{lemma:non-stop-cut-matching} and continue the algorithm.
Thus, we either get a low recursion depth due to the balanced first case, or we successfully finish running the cut-matching game \cref{lemma:non-stop-cut-matching} and certify the expansion of a large fraction of the edges.

The crucial idea that makes our construction different (and simpler) from previous expander decomposition algorithms lies in how we leverage ``weak'' expander decomposition to handle the second case.
To get a ``strong'' expander decomposition, one needs to ensure that each component is expanding \emph{within} itself, meaning that the underlying witness cannot use any edges outside of the component.
Previous work such as \cite{SaranurakW19,SulserP25} thus needs to perform a \emph{trimming} step that extracts from a ``near'' expander to an actual ``strong'' expander.
In contrast, since weak expansion suffice for our purposes, we can allow the witnesses to be embedded into edges that we ``cut'' from the graph, potentially even outside of the components that they are in (as long as the \emph{overall} congestion can be bounded).
This nullifies the need of the trimming step, as we can simply remove the cut edges, and consequently greatly simplifies our algorithm.\footnote{The state-of-the-art trimming algorithms for \emph{directed} graphs \cite{BernsteinGS20,SulserP25} are currently significantly more intricate than the undirected counterpart~\cite{SaranurakW19}.}
We first claim that we can compute a large subset of expanding edges from an expanding vertex measure.

\begin{claim}
  Given an edge set $F$ and a vertex measure $\Bd^\prime \leq \Bd$ with $\|\Bd^\prime\|_1 \geq 0.8\|\Bd\|_1$ where $\Bd \defeq \Bvol_F$, there is an $O(|F|)$ time algorithm that computes an edge subset $F^\prime \subseteq F$ such that $\Bvol_{F^\prime} \leq 2\Bd^\prime$ and $\Bc(F^\prime) \geq 0.2 \cdot \Bc(F)$.
  \label{claim:from-vertex-weight-to-edges}
\end{claim}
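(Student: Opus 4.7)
The plan is to designate a set of ``bad'' vertices where the surviving measure $\Bd'$ is much smaller than the original $\Bd$, and simply discard every edge of $F$ incident to such a vertex. Concretely, I would define
\[
  B \defeq \{v \in V : \Bd'(v) < \Bd(v)/2\}
\]
and let $F' \subseteq F$ be the set of edges of $F$ with both endpoints in $V \setminus B$. This set is computable in $O(|F|)$ time by a single scan over $F$ once the values of $\Bd, \Bd'$ are read off.

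The degree bound $\Bvol_{F'} \le 2\Bd'$ is immediate. For $v \notin B$ one has $\deg_{F'}(v) \le \deg_F(v) = \Bd(v) \le 2\Bd'(v)$ by the definition of $B$, while for $v \in B$ no edge of $F'$ touches $v$, so $\deg_{F'}(v) = 0 \le 2\Bd'(v)$.

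The main step is the capacity lower bound $\Bc(F') \ge 0.2\Bc(F)$. Since every discarded edge has at least one endpoint in $B$, its capacity is counted in $\deg_F(v)$ for some $v \in B$, and therefore
\[
  \Bc(F) - \Bc(F') \le \sum_{v \in B} \deg_F(v) = \Bd(B).
\]
For $v \in B$, the definition of $B$ gives $\Bd(v) - \Bd'(v) > \Bd(v)/2$, so summing over $v \in B$,
\[
  \Bd(B) < 2\sum_{v \in B}(\Bd(v) - \Bd'(v)) \le 2(\|\Bd\|_1 - \|\Bd'\|_1) \le 0.4\,\|\Bd\|_1 = 0.8\,\Bc(F),
\]
where in the last equality I use that $\|\Bd\|_1 = \|\Bvol_F\|_1 = 2\Bc(F)$ (each edge contributes its capacity to two endpoints). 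Combining gives $\Bc(F') \ge 0.2\,\Bc(F)$, as required.

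I do not foresee a genuine obstacle; the only point worth checking carefully is the double-counting of edges inside $B$ in $\sum_{v\in B}\deg_F(v)$, which is harmless because we only need an upper bound on the discarded capacity. The threshold $1/2$ in the definition of $B$ is chosen precisely so that the multiplicative slack $0.8 \cdot 2 = 1.6$ leaves room to absorb the $0.2\|\Bd\|_1$ deficit of $\Bd'$; any constant in $(0,1)$ bounded away from both endpoints would yield a similar qualitative statement.
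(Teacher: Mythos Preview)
Your proposal is correct and follows essentially the same approach as the paper: the paper also defines the bad set $U=\{v:\Bd'(v)\le \Bd(v)/2\}$, bounds $\Bd(U)\le 2(\Bd(U)-\Bd'(U))\le 0.4\,\Bd(V)$, and takes $F'$ to be the edges of $F$ with no endpoint in $U$. The only cosmetic difference is the strict versus non-strict threshold in defining the bad set, which is immaterial.
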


\begin{proof}
  Let $U \defeq \{v \in V: \Bd^\prime(v) \leq \frac{1}{2}\Bd(v)\}$.
  We have
  \[
    \Bd(U) \leq 2\left(\Bd(U) - \Bd^\prime(U)\right) \leq 0.4 \cdot \Bd(V).
  \]
  Thus, we simply let $F^\prime$ be the set of edges in $F$ not incident to $U$ which can be easily computed in linear time. We have $\Bc(F^\prime) \geq 0.2 \cdot \Bc(F)$ since $\Bd(V) = 2\Bc(F)$.
  Also, we have $\Bvol_{F^\prime}|_{V\setminus U} \leq \Bd|_{V \setminus U} \leq 2\Bd^\prime|_{V \setminus U}$ and that $\Bvol_{F^\prime}|_{U} \leq 2\Bd^\prime|_U$ since each $u \in U$ has $\deg_{F^\prime}(u) = 0$.
\end{proof}

\paragraph{Splitting the Shortcut Graph.}
Our algorithm for \cref{lem:exp decomp} will involve ``splitting'' a shortcut graph $G_A$ induced by $\cH$ along a cut in $G_A$ (in particular when we find a sparse cut in $G_A$ when doing expander decomposition).
Formally, we overload notation slightly and denote by $(G[S])_A$ the \emph{induced} shortcut graph which is defined to be the shortcut graph of $G[S]$ induced by the hierarchy $\cH[S]$.
Note that while technically $(G[S])_A \cup (G[\bar{S}])_A$ is \emph{not} a subgraph of $G_A$, it behaves essentially like one in the sense that there is a trivial one-way correspondence from edges in $(G[S])_A \cup (G[\bar{S}])_A$ to $G_A$.
We thus have the following useful observation.

\begin{observation}
  Given a flow $\Bf$ in $(G[S])_A \cup (G[\bar{S}])_A$ routing a demand $(\Bsource, \Bsink)$ supported on $V(G)$, there is an $O(m)$ time algorithm that computes a flow $\Bf_G$ routing $(\Bsource,\Bsink)$ in $G_A$.
  \label{obs:trivial}
\end{observation}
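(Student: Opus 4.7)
The plan is to exhibit an explicit edge-to-edge mapping from $(G[S])_A \cup (G[\bar S])_A$ to $G_A$ and then check that pushing flow through this mapping preserves both capacities and demands. The key structural observation I would establish first is that since the hierarchy $\cH[S]$ used to define $(G[S])_A$ is just the restriction of the level function of $\cH$ to $E(G[S])$, each level-$i$ component $C'$ of $G[S]\setminus E_{>i}$ is contained in a unique level-$i$ component $C$ of $G\setminus E_{>i}$ (and analogously on the $\bar S$ side). Consequently the star $A_{C'}\subseteq (G[S])_A$ has leaves among the leaves of $A_C\subseteq G_A$, and the capacity of its leaf edge at $v$ is at most that of the corresponding leaf edge of $A_C$, since $E_i\cap C'\subseteq E_i\cap C$ and the capacity scales agree.

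With this correspondence in hand I would define $\Bf_G$ by a direct transfer. For every original edge $e\in E(G[S])\cup E(G[\bar S])\subseteq E(G)$, set $\Bf_G(e)\defeq \Bf(e)$. For every star edge $(r_{C'},v)$ of $(G[S])_A$ or $(G[\bar S])_A$, send the flow $\Bf(r_{C'},v)$ along the corresponding star edge $(r_C,v)\in G_A$, where $C\supseteq C'$ is the unique level-$i$ component of $\cH$ containing $C'$. Since each vertex $v$ lies in exactly one of $S$ or $\bar S$, each leaf $(r_C,v)$ of $G_A$ receives contributions from at most one star in the split graph, so no double-counting occurs.

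Next I would verify capacity and conservation. For capacities, original edges are trivially fine, and for star edges the $(G[S])_A$-capacity $\psi\cdot\sum_{e\in E_i\cap C':\tail(e)=v}\Bc(e)$ is at most the $G_A$-capacity $\psi\cdot\sum_{e\in E_i\cap C:\tail(e)=v}\Bc(e)$ by inclusion. Flow conservation at a vertex $v\in V(G)$ is inherited directly from $\Bf$, because the mapping preserves every edge incident to $v$ together with its value. Conservation at a Steiner vertex $r_C$ of $G_A$ is simply the sum of the conservation equations at the $r_{C'}$'s for all $C'\subseteq C$ on both sides of the cut; each of these holds because $r_{C'}\notin V(G)$ carries no demand. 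Thus $\Bf_G$ routes exactly $(\Bsource,\Bsink)$ in $G_A$.

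For the running time I would note that the component-to-component map $C'\mapsto C$ can be precomputed in $O(m)$ time via a single traversal of the laminar family of components at each level (of which there are $O(\log n)$); then the transfer of flow values is $O(1)$ per edge of the split shortcut graph, for a total of $O(m)$. The only mildly subtle point—really the only ``obstacle''—is bookkeeping the fact that a single Steiner vertex $r_C$ in $G_A$ gets replaced by possibly several Steiner vertices $r_{C'}$ in the split graph, so the mapping is many-to-one on star centers even though it is one-to-one on star edges; once this is set up, every verification is mechanical.
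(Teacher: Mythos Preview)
Your proposal is correct and follows precisely the approach the paper has in mind: the paper does not give a proof of this observation, instead just remarking in the preceding paragraph that there is a ``trivial one-way correspondence from edges in $(G[S])_A \cup (G[\bar{S}])_A$ to $G_A$,'' and you have spelled out exactly what that correspondence is (identity on original edges, and $A_{C'}\mapsto A_C$ via the component containment $C'\subseteq C$) together with the routine verifications. The only minor quibble is whether the running time is literally $O(m)$ or $O(mL)$ depending on whether the hierarchy components are assumed precomputed, but this is immaterial to the observation's role in the paper.
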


Finally, once the expander decomposition is computed, to support the second part of \cref{lem:exp decomp}, we will prove that $E_r^{(r+1)}$ is not only expanding but also $(\Bw_\cH, h)$-length expanding for some reasonably small $h$.
This allows us to efficiently route the demand using the following lemma.

\begin{lemma}
  Let $F \subseteq E$ and $\cH$ be a hierarchy of $G \setminus F$.
  Let $A$ be the shortcut induced by $\cH$ with capacity scale $\psi \in 1/\N$ and $G_A \defeq G \cup A$ be the shortcut graph.
  If $F^\prime \subseteq F$ is $(\Bw_{\cH}, h)$-length $\phi$-expanding in $G_A$ for some $h \in \N$, then there is a deterministic algorithm that, given any $1/z$-integral $(\Bvol_{F^\prime} \cdot \kappa/z)$-respecting demand $(\Bsource,\Bsink)$ for some $\kappa,z \in \N$, computes a $1/z$-flow $\Bf$ routing $(\Bsource,\Bsink)$ in $G_A$ with $\Bf(e) \leq \frac{\lceil c_{\ref{lemma:routing-data-structure}} \cdot \kappa \log n/\phi \cdot \Bc_{G_A}(e) \rceil}{z}$ in $O(m\log n+hn\log^3 n)$ time for some universal constant $c_{\ref{lemma:routing-data-structure}} > 0$.
  
  \label{lemma:routing-data-structure}
\end{lemma}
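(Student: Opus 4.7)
The plan is to route $(\Bsource,\Bsink)$ in $G_A$ by an appropriate iterated application of the weighted push-relabel algorithm (\cref{thm:push-relabel-main-theorem}) with weight function $\Bw_{\cH}$. First I would scale the capacities of $G_A$ up by an integer factor $\kappa^\star = \Theta(\kappa/\phi)$ and the demand up by $z$, so that the instance becomes integer-valued and fits the interface of \cref{thm:push-relabel-main-theorem}. The $(\Bw_{\cH}, h)$-length $\phi$-expansion of $F'$ in $G_A$ guarantees the existence of a witness flow $\Bf^\star$ routing $(\Bsource,\Bsink)$ in $G_A$ with congestion $\kappa/(z\phi)$ and average $\Bw_{\cH}$-length at most $h$; after capacity scaling this witness is feasible with average length still at most $h$. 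Consequently, a single call to weighted push-relabel with height parameter $\Theta(h)$ returns, by \cref{thm:push-relabel-main-theorem}(iii), an integral flow routing at least a constant fraction of the scaled demand.

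To route the remainder, I would iterate: after each call, subtract the routed portion, rerun push-relabel in the residual graph on the residual demand, and accumulate. The key observation enabling iteration is that the residual demand is still routable with a reasonably short average $\Bw_{\cH}$-length --- taking the difference $\Bf^\star - \Bf_{\mathrm{cum}}$ of the original witness and the total routed flow (interpreted in the residual graph, with negative entries becoming reverse-edge flow) yields a residual witness whose weight is bounded by $\Bw_{\cH}(\Bf^\star) + \sum_t \Bw_{\cH}(\Bf_t)$. Combined with the average-length guarantee of \cref{thm:push-relabel-main-theorem}(ii) on each $\Bf_t$ and a Markov-type argument to extract a short-length subflow from the residual witness, each iteration continues to route a constant fraction of whatever demand remains. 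After $O(\log n)$ iterations the residual demand vanishes by integrality, and because each iteration proceeds in the residual graph, the cumulative net flow stays edge-wise bounded by the scaled capacity $\kappa^\star \Bc_{G_A}$. Dividing through by $z$ (and absorbing the height-adjustment and integrality slack into a single $\log n$ factor) produces the claimed $(1/z)$-integral flow $\Bf$ satisfying $\Bf(e) \leq \lceil c_{\ref{lemma:routing-data-structure}} \cdot \kappa \log n/\phi \cdot \Bc_{G_A}(e)\rceil/z$.

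For the running time, each push-relabel call costs $O\!\left(\bigl(|E(G_A)| + n + h\sum_{e \in E(G_A)} 1/\Bw_{\cH}(e)\bigr)\log n\right)$ by \cref{thm:push-relabel-main-theorem}, and the harmonic-sum bound $\sum_{e \in E(G_A)} 1/\Bw_{\cH}(e) = O(n\log n)$ is exactly the calculation carried out in \cref{sec:push_relabel}. Summing over $O(\log n)$ iterations and amortizing the edge-processing work (which stays linear per iteration because push-relabel labels are monotone and can be reused across calls rather than reinitialized) yields the claimed $O(m\log n + hn\log^3 n)$ bound. The main obstacle is controlling the growth of the residual witness's average $\Bw_{\cH}$-length across the $O(\log n)$ iterations; a careful choice of height parameter per iteration (together with the Markov truncation alluded to above) is needed to guarantee a constant-factor contraction of the residual demand at every step while keeping the total cumulative height within $O(h\log n)$, which is ultimately what forces the $\log n$ slack in the congestion bound.
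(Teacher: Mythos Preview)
Your high-level plan---iterate weighted push-relabel $O(\log n)$ times and accumulate---is correct, but you have made the argument harder than necessary and in the process created the very obstacle you flag at the end.

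The paper does \emph{not} run successive calls in the residual graph. Each iteration runs \cref{thm:push-relabel-main-theorem} on the \emph{original} graph $G_A$ (with capacities scaled by $\kappa/\phi$), only replacing the demand by the residual demand $(\Bsource_{\Bf},\Bsink_{\Bf})$. The point is that the residual demand is still $(\Bvol_{F'}\cdot\kappa)$-respecting (it is entrywise no larger than the original demand), so the $(\Bw_{\cH},h)$-length $\phi$-expansion of $F'$ furnishes a \emph{fresh} witness routing the residual demand with congestion $\kappa/\phi$ and average length $\le h$, in the original $G_A$, at every iteration. There is no need to track $\Bf^\star-\Bf_{\mathrm{cum}}$, no Markov truncation, and no growth of the witness length to control; your ``main obstacle'' is entirely self-inflicted by the choice to work in the residual graph. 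The $\log n$ slack in the congestion bound arises simply because you sum $O(\log n)$ flows each of congestion $\kappa/\phi$.

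You are also missing a step: the accumulated flow is only $\psi$-integral (push-relabel returns integral flows after scaling by $1/\psi$), whereas the lemma demands a $1/z$-integral flow with the precise bound $\Bf(e)\le\lceil c\,\kappa\log n/\phi\cdot\Bc_{G_A}(e)\rceil/z$. The paper obtains this by a final application of flow rounding (\cref{lemma:rounding}) to the scaled flow, which converts it to an integral flow routing $z\cdot(\Bsource,\Bsink)$ with the required ceiling bound, and then divides by $z$. Finally, your running-time argument invoking label monotonicity across calls is both unsupported by the black-box statement of \cref{thm:push-relabel-main-theorem} and unnecessary: the paper simply multiplies the per-call cost by $O(\log n)$.
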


\begin{proof}
  Consider the given $1/z$-integral $(\Bvol_{F^\prime} \cdot \kappa/z)$-respecting demand $(\Bsource,\Bsink)$.
  Note that $z \cdot (\Bsource,\Bsink)$ is an integral $(\Bvol_{F^\prime}\cdot \kappa)$-respecting demand.
  By definition of $F^\prime$ being $(\Bw_{\cH}, h)$-length $\phi$-expanding in $G_A$, there exists a flow $\Bf^*$ routing $z \cdot (\Bsource,\Bsink)$ in $G_A$ with congestion $\kappa/\phi$ such that $\frac{\Bw_{\cH}(\Bf)}{|\Bf|} \leq h$.
  Running \cref{thm:push-relabel-main-theorem} on $G_A$ with demand $z \cdot (\Bsource,\Bsink)$, we can compute a $\psi$-integral flow $\Bf^\prime$ routing $1/6$-fraction of the demand in $G_A$ with congestion $\kappa/\phi$ in $O(m + hn\log^2 n)$ time.\footnote{This bound was established in the proof of \cref{lem:push-relabel-on-shortcut} and for brevity we omit repeating the calculation here.}
  Repeating this $O(\log n)$ times and summing all the flows, we get a $\psi$-integral flow $\Bf_{\psi}$ routing $z \cdot (\Bsource,\Bsink)$ in $G_A$ with congestion $O(\kappa\log n/\phi)$ in $O(m\log n+hn\log^3 n)$ time.
  We set $c_{\ref{lemma:routing-data-structure}}$ to be the universal constant hidden in the above $O(\cdot)$ expression.
  Using \cref{lemma:rounding}, we can round $1/\psi \cdot \Bf_{\psi}$ to an integral flow $\Bf_z$ routing $z \cdot (\Bsource,\Bsink)$ such that $\Bf_z(e) \leq \lceil c_{\ref{lemma:routing-data-structure}} \cdot \kappa \log n/\phi \cdot \Bc_{G_A}(e) \rceil$.\footnote{More specifically, $1/\psi \cdot \Bf_{\psi}$ routes $z/\psi \cdot (\Bsource,\Bsink)$, and thus \cref{lemma:rounding} gives us an $\Bf_z \leq \lceil \Bf_{\psi} \rceil$ routing $z\cdot(\Bsource,\Bsink)$.}
  Setting $\Bf \defeq \Bf_z/z$ completes the proof.
\end{proof}

We are now ready to prove \cref{lem:exp decomp}, which we restate below.

\WeakExpanderDecomposition*

\begin{proof}
  Let $\phi_{\mathrm{exp}} \defeq \Theta(1/\log^3 n)$ be sufficiently small.
  Let $F \defeq E_{r}^{(r)}$ be the terminal edge set.
  We apply the cut-matching game of \cref{lemma:non-stop-cut-matching} on the graph $G_{A^{(r)}}$ and vertex measure $\Bd \defeq \Bvol_{F}$, using the output of \cref{lemma:matching-player} with parameters $\phi_{\mathrm{exp}}$ and $\delta_{\KRV}$ as the oracle.
  This takes $O(\log^2 n) \cdot O(n^2\log^{4}n \cdot (\log^3 n + 1/\psi))$ time.
  There are the following two cases.
  \begin{enumerate}
    \item\label{item:case1} If in any of the $T_{\KRV}=O(\log^2 n)$ iterations, \cref{lemma:matching-player} outputs a balanced sparse cut $S$ (i.e., $\vol_F(S) > \frac{\delta_{\KRV}}{2} \cdot \vol_F(V)$), then we terminate \cref{lemma:non-stop-cut-matching} and instead simply recurse on both sides of the cuts.
    In particular, let $X$ be the edge set among $E_{G_{A^{(r)}}}(S,\bar{S})$ and $E_{G_{A^{(r)}}}(\bar{S},S)$ that has a smaller total capacities.
    Since $S$ is $\phi$-sparse, we have $\Bc_{G_{A^{(r)}}}(X) < \phi_{\mathrm{exp}} \cdot \vol_F(S)$.
    We add $X \cap E(G)$ into the set $E_{r+1}^{(r+1)}$ and split $G$ into two parts $(G[S])_{A^{(r)}}$ and $(G[\bar{S}])_{A^{(r)}}$, recursing on both of them (with terminal edges $F \cap G[S]$ and $F \cap G[\bar{S}]$).

    \item\label{item:case2} On the other hand, if \cref{lemma:matching-player} outputs a valid $\psi$-integral bidirectional matching for all the $T_{\KRV}$ iterations, then by \cref{lemma:non-stop-cut-matching} we get a vertex measure $\tilde{\Bd}\leq\Bd$ with $\|\tilde{\Bd}\|_1\geq \frac{7}{8}\|\Bd\|_1$ that is $O(\log^2 n)$-hop $\Omega(1/\log^2 n)$-expanding in the graph $W$ defined in \cref{lemma:non-stop-cut-matching}, with high probability.
    We terminate the recursion here and conclude that most of $F$ is expanding in $G_{A^{(r)}}$ and obtain the appropriate routing data structure required.
    Indeed, by \cref{claim:from-vertex-weight-to-edges}, we can compute in $O(m)$ time an edge set $F^\prime \subseteq F$ such that $\Bvol_{F^\prime}\leq 2\tilde{\Bd}$ and $\Bc(F^\prime) \geq 0.2 \cdot \Bc(F)$, and then add the remaining edges $F\setminus F^\prime$ to $E_{r+1}^{(r+1)}$.
    That $\Bvol_{F^\prime}\leq 2\tilde{\Bd}$ means that $F^\prime$ is also $O(\log^2 n)$-hop $\Omega(1)$-expanding in $W$.

    \begin{claim}
      The edge set $F^\prime$ is $(\Bw_\cH, O(h\log^2 n))$-length $\Omega(\phi_{\mathrm{exp}}/\log^3n)$-expanding in $G_A$ for some $h = O(n\log n \cdot (\log^3 n + 1/\psi))$.
      \label{claim:length-expanding}
    \end{claim}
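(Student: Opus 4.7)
The plan is to route any $\Bvol_{F'}$-respecting demand through $G_A$ by first routing it inside the union-of-matchings graph $W$ (using the hop-expansion guarantee from the cut-matching game), and then translating the per-matching flows to flows in $G_A$ using the second part of \cref{lemma:matching-player}.

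More concretely, fix any $\Bvol_{F'}$-respecting demand $(\Bsource,\Bsink)$. By \cref{lemma:non-stop-cut-matching} (and the fact that $\Bvol_{F'} \le 2\tilde{\Bd}$), the edge set $F'$ is $T_{\KRV}$-hop $\Omega(1)$-expanding in $W$, so the demand is routable in $W$ by some flow $\Bf_W$ of congestion $O(1)$ whose average hop-length is $T_{\KRV} = O(\log^2 n)$, giving $\|\Bf_W\|_1 \le O(\log^2 n) \cdot |\Bf_W|$. Since $W = \bigcup_{i=1}^{T_{\KRV}}(\overrightarrow{M}_i \cup \overleftarrow{M}_i)$, restricting $\Bf_W$ to each $M_i \defeq \overrightarrow{M}_i \cup \overleftarrow{M}_i$ produces a $\psi$-integral flow $\Bf_{M_i}$ on $M_i$ of congestion $O(1)$ (here we may first scale the demand/flow by $1/\psi$ to enforce $\psi$-integrality, losing only a constant factor).

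Next, I apply the second part of \cref{lemma:matching-player} to each $\Bf_{M_i}$: this produces a flow $\Bf_{G_A}^{(i)}$ in $G_A$ routing the same demand as $\Bf_{M_i}$, with congestion $O(\log n / \phi_{\mathrm{exp}})$ and total weight $\Bw_\cH(\Bf_{G_A}^{(i)}) \le \|\Bf_{M_i}\|_1 \cdot h$, where $h = O(n \log n \cdot (1/\phi_{\mathrm{exp}} + 1/\psi)) = O(n\log n \cdot (\log^3 n + 1/\psi))$. Summing over $i \in [T_{\KRV}]$ gives a flow $\Bf_{G_A} \defeq \sum_i \Bf_{G_A}^{(i)}$ in $G_A$ that routes $(\Bsource,\Bsink)$, has congestion $T_{\KRV} \cdot O(\log n / \phi_{\mathrm{exp}}) = O(\log^3 n / \phi_{\mathrm{exp}})$, and satisfies
\[
\Bw_\cH(\Bf_{G_A}) \;\le\; \sum_{i=1}^{T_{\KRV}} \|\Bf_{M_i}\|_1 \cdot h \;=\; \|\Bf_W\|_1 \cdot h \;\le\; O(\log^2 n) \cdot |\Bf_W| \cdot h,
\]
so the average $\Bw_\cH$-length is $O(h\log^2 n)$, as required. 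Inverting the congestion bound yields $\Omega(\phi_{\mathrm{exp}}/\log^3 n)$-expansion with the claimed length parameter.

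The only slightly delicate point is making sure that the sub-flows $\Bf_{M_i}$ fed into \cref{lemma:matching-player} are $\psi$-integral and that the hop-length of $\Bf_W$ translates correctly to $\sum_i\|\Bf_{M_i}\|_1$; both follow from standard path-decomposition arguments (each unit of flow on a hop-path of length $\ell$ in $W$ contributes exactly $\ell$ to $\sum_i\|\Bf_{M_i}\|_1$). There is nothing genuinely hard here — the real work has already been done in \cref{lemma:non-stop-cut-matching} and \cref{lemma:matching-player}, and the claim is essentially a bookkeeping composition of these two guarantees.
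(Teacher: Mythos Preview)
Your proposal is correct and follows essentially the same route as the paper's own proof: route the demand in $W$ via the hop-expansion from \cref{lemma:non-stop-cut-matching}, split $\Bf_W$ over the matchings $M_i$, and then use the second part of \cref{lemma:matching-player} on each piece to obtain a flow in $G_A$ with the claimed congestion and average $\Bw_\cH$-length. If anything, you are slightly more careful than the paper in flagging the $\psi$-integrality hypothesis of \cref{lemma:matching-player}; since the claim is only an existence statement, one can either note (as you do) that integrality can be enforced by scaling, or observe from the proof of \cref{lemma:matching-player} that the embedding map $\Bf_M \mapsto \Bf_{G_A}$ is linear and so the bounds hold without integrality.
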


    \begin{proof}
      For any $\Bvol_{F^\prime}$-respecting demand $(\Bsource,\Bsink)$, we construct a flow $\Bf$ routing in $G_A$ with low congestion and average weight under $\Bw_\cH$ as follows.
      Since $F^\prime$ is $O(\log^2 n)$-hop $\Omega(1)$-expanding in $W$, there exists a flow $\Bf_W$ routing $(\Bsource,\Bsink)$ in $W$ with congestion $O(1)$ such that $\frac{\|\Bf_W\|_1}{|\Bf_W|} \leq O(\log^2 n)$.
      The flow $\Bf_W$ induces a demand on the matchings $M_i$'s that form $W$: 
      For each $M_i$, by \cref{lemma:matching-player}, the vertex demand induced by $\Bf_W|_{M_i}$ can be routed in $G_A$ by a flow $\Bf_i$ with congestion $O(\log n/\phi_{\mathrm{exp}})$ such that
      $\Bw_{\cH}(\Bf_i) \leq \|\Bf_W|_{M_i}\|_1 \cdot h$
      for $h = O(n\log n \cdot (1/\phi_{\mathrm{exp}}+1/\psi)) = O(n\log n \cdot (\log^3 n + 1/\psi))$.
      Setting $\Bf$ to be the sum of the $\Bf_i$'s, we have that $\Bf$ is of congestion $O(\log^3 n/\phi_{\mathrm{exp}})$ in $G_A$ and
      \[
        \Bw_{\cH}(\Bf) = \sum_i \Bw_{\cH}(\Bf_i) \leq h \cdot \sum_i \|\Bf_{W}|_{M_i}\|_1 \leq O(h \cdot \log^2 n) \cdot |\Bf_W|.
      \]
      Using the fact that $\Bf$ routes the same demand as $\Bf_W$ does, this completes the proof.
    \end{proof}

  \end{enumerate}

  Overall, algorithm recursively decomposes the graph $G_{A^{(r)}}$ until no balanced sparse cut is found by \cref{lemma:matching-player}.
  The recursion depth is bounded by $O(\log^3 n)$ as the volume of the graph decreases by a factor of $1 - \Omega(1/\log^2 n)$ in each recursion.
  Since the graphs on each level are edge-disjoint, the total capacities of edges added to $E_{r+1}^{(r+1)}$ in Case~\labelcref{item:case1} is bounded by $O(\log^3 n) \cdot \phi_{\mathrm{exp}} \cdot \Bc(E_{r}^{(r)}) \leq 0.1\cdot \Bc(E_{r}^{(r)})$ for $\phi_{\mathrm{exp}} = O(1/\log^3 n)$ sufficiently small.
  For edges added in Case~\ref{item:case2} (i.e., for each leaf nodes in the recursion tree), since the leaves are edge-disjoint and for each of them the capacities of edges we added to $E_{r+1}^{(r+1)}$ is at most $0.8 \cdot \Bc(F)$ for the respective set $F$ of the leaf node, in total we added at most $0.8 \cdot \Bc(E_{r}^{(r)})$ units of capacity to $E_{r+1}^{(r+1)}$ in Case~\ref{item:case2}.
  Therefore, we can bound the final total capacities of $E_{r+1}^{(r+1)}$ by $0.9 \cdot \Bc(E_r^{(r)})$, as required.
  The running time for constructing $E_{r+1}^{(r+1)}$ is $O(n^2\log^{9} n \cdot (\log^3 n + 1/\psi))$.
  This is because at each level we spend $O(n^2\log^{6} n \cdot (\log^3 n + 1/\psi))$ time certifying the expansion of the current edge set $E_r^{(r)}$ and constructing the witness (at the leaf nodes), and there are $O(\log^3 n)$ levels.

  To show that $E_r^{(r+1)}$, which by definition is just $E_r^{(r)} \setminus E_{r+1}^{(r+1)}$, is $[G \setminus E_{r+1}^{(r+1)}]$-component-constrained $\Omega(1/\log^6 n)$-expanding, we first note that the components of $G \setminus E_{r+1}^{(r+1)}$ is a refinement of the components corresponding to the leaf nodes in the recursion.
  By construction, the set $E_r^{(r+1)}$ is the disjoint union of the $F^\prime_C$'s, the edge set $F^\prime$ for each leaf node $C$ that we have proven expansion in \cref{claim:length-expanding} in the corresponding graph $(G[C])_A$.
  Combined with \cref{obs:trivial}, this implies $E_r^{(r+1)}$ is $\Omega(1/\log^6 n)$-expanding because any flow routing $[G\setminus E_{r+1}^{(r+1)}]$-component-constrained $\Bvol_{E_{r}^{(r+1)}}$-respecting demand in $\bigcup_G G[C]_{A^{(r)}}$ (which exists by the expansion guarantee of each $F^\prime_C$) can be converted into an equivalent one in $G_{A^{(r)}}$.
  Using the same argument, for any given $[G\setminus E_{r+1}^{(r+1)}]$-component-constrained $(\Bvol_{E_r^{(r+1)}} \cdot \kappa/z)$-respecting $1/z$-integral demand, we can route it in $G_{A^{(r)}}$ by first splitting split the demand to each component.
  For each component, we apply \cref{lemma:routing-data-structure} to route it within the component and then combine the flows using \cref{obs:trivial}, leveraging \cref{claim:length-expanding} that $F^\prime$ is $(\Bw_\cH, O(n\log^{3}n \cdot (\log^3 n + 1/\psi)))$-length $\Omega(1/\log^6 n)$-expanding.
  The resulting flow $\Bf$ satisfies $\Bf(e) \leq \frac{\lceil \Bc(e) \cdot \kappa/\phi_{\rand}\rceil}{z}$ if we properly define $\phi_{\rand} \defeq \Theta(1/\log^7 n)$ (note that the $\phi$ value in \cref{lemma:routing-data-structure} is $\Theta(1/\log^6 n)$ per \cref{claim:length-expanding}).
  The time to compute such a flow is $O(n^2 \log^{6}n \cdot (\log^3 n + 1/\psi))$ by \cref{lemma:routing-data-structure} with $h = O(n\log^{3}n \cdot (\log^3 n + 1/\psi))$, noting that the leaf components of the recursion tree are disjoint.
  This completes the proof.
\end{proof}

\section*{Acknowledgements}
We thank Aaron Sidford for helpful discussions on deterministic flow algorithms.

\bibliography{reference}

\appendix

\section{Deterministic Vertex-Capacitated Flow}
\label{sec:deterministic-b-matching}

In this section we show that how our algorithm can be derandomized in the special case of vertex-capacitated flows, proving \cref{thm:vertex-flow}.
Note that the only randomized component of our algorithm is the non-stop cut-matching game of \cref{lemma:non-stop-cut-matching} that runs in near-linear time with respect to the number of edges returned by the matching player which can be as large as $\Omega(n^2)$ in the edge-capacitated maximum flow instances.
However, when the graph is vertex-capacitated, we can substitute the randomized cut-matching game with a deterministic one (see, e.g.,~\cite{ChuzhoyGLNPS20}) that runs in near-quadratic time instead of near-linear time, leveraging the fact that the matching player always outputs a matching of size $\tilde{O}(n)$.\footnote{Recall that the size of the matching is the number of edges it is supported by, which in \cref{sec:decomp} was bounded only by $\tilde{O}(m)$.}

Consider a graph $G = (V,E)$ with vertex capacities $\Bc_G\in \N^V$.
We do the standard transformation of splitting each vertex $v$ into an in-vertex $v_{\mathrm{in}}$ and an out-vertex $v_{\mathrm{out}}$ with a directed edge $(v_{\mathrm{in}}, v_{\mathrm{out}})$ of capacity $\Bc_G(v)$ between them.
Edges in $G$ are directed from the out-vertex of its tail to the in-vertex of its head, with infinite edge capacity.
Let $\tilde{G}$ be this edge-capacitated graph.
It is easy to see that there is a bijection between $(s_{\mathrm{out}}, t_{\mathrm{in}})$-flows in $\tilde{G}$ respecting edge capacities and $(s, t)$-flow in $G$ respecting vertex capacities of the same value.
Thus we will run our maximum flow algorithm on $\tilde{G}$.
Note that $\tilde{G}$ contains $n$ finite-capacity edges which we exploit to argue that the matching player always outputs a sparse matching.
Recall that our algorithm works by iteratively finding $O(1)$-approximate flows in residual graphs, and while this property holds for the initial input graph, it may not hold for an arbitrary residual graph. 
Therefore, we first show that we can ``round'' any flow in $\tilde{G}$ to only be supported on few, $O(n)$ edges.
This ensures that the residual graph with respect to this rounded flow will only contain $O(n)$ finite-capacity edges as well.

\begin{lemma}
  Given an integral $(s, t)$-flow $\Bf$ in $\tilde{G}$, there is a \textbf{deterministic} $O(m\log n)$ time algorithm that computes another integral $(s, t)$-flow $\Bf^\prime$, of value $|\Bf^\prime| = |\Bf|$, that contains $3n$ non-zero entries.
  \label{lemma:round-to-saturate}
\end{lemma}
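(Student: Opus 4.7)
The plan is to move $\Bf$ to a basic feasible solution (BFS) of the $(s_{\mathrm{out}}, t_{\mathrm{in}})$-flow polytope
\[
  P \defeq \left\{\Bg \in \R_{\geq 0}^{E(\tilde{G})} \ :\ \Bg \text{ routes the same demand as } \Bf,\ \Bg \leq \Bc_{\tilde{G}}\right\}
\]
via cycle cancellation, and then to argue that any BFS of $P$ has support of size at most $3n$. Since the conservation constraints of $P$ are given by the vertex--edge incidence matrix of the directed graph $\tilde{G}$ (which is totally unimodular) and both $|\Bf|$ and the finite entries of $\Bc_{\tilde{G}}$ are integral, every vertex of $P$ is automatically integral. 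Hence it will suffice to produce some vertex $\Bf^\prime \in P$.

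For the structural claim, I would show that for any vertex $\Bg$ of $P$, the edge set $F(\Bg) \defeq \{e \in E(\tilde{G}) : 0 < \Bg(e) < \Bc_{\tilde{G}}(e)\}$ is acyclic as an undirected graph. Indeed, any undirected cycle $C \subseteq F(\Bg)$ gives rise to a nonzero signed circulation $\chi$ supported on $C$ (with $\chi(e) = \pm 1$ depending on whether $e$ is aligned with a fixed orientation of $C$); for sufficiently small $\epsilon > 0$, both $\Bg \pm \epsilon \chi$ would lie in $P$, contradicting extremality of $\Bg$. Since $\tilde{G}$ has $2n$ vertices, $|F(\Bg)| \leq 2n - 1$. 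The remaining non-zero edges of $\Bg$ must be saturated, but the external edges of $\tilde{G}$ have infinite capacity, so only the $n$ internal edges $(v_{\mathrm{in}}, v_{\mathrm{out}})$ can be saturated. Hence $|\mathrm{supp}(\Bg)| \leq (2n - 1) + n \leq 3n$.

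Algorithmically, I would initialize $\Bf^\prime \gets \Bf$ and maintain a dynamic spanning forest of $F(\Bf^\prime)$ using a link-cut tree. Iterating over the edges of $\mathrm{supp}(\Bf)$, for each edge $e = (x,y)$ I query whether $x$ and $y$ are in the same tree. If not, $e$ is linked into the forest. Otherwise, the unique undirected cycle $C$ closed by $e$ is identified via the link-cut tree; after fixing an orientation of $C$, I set
\[
  \delta \defeq \min\Bigl\{\min_{e^\prime \in C^-} \Bf^\prime(e^\prime),\ \min_{e^\prime \in C^+} \bigl(\Bc_{\tilde{G}}(e^\prime) - \Bf^\prime(e^\prime)\bigr)\Bigr\},
\]
where $C^+$ and $C^-$ are the forward- and backward-oriented edges of $C$, respectively, and update $\Bf^\prime \gets \Bf^\prime + \delta \chi$. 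At least one edge of $C$ is thereby zeroed out or saturated and removed from the forest; integrality of $\Bf^\prime$ is preserved throughout because $\delta$ is a minimum of integer slacks.

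Each edge of $\mathrm{supp}(\Bf)$ is processed at most once and each link-cut operation takes $O(\log n)$ amortized time, giving total time $O(m \log n)$. The main delicate point I anticipate is that the link-cut tree must aggregate two different slack quantities along a queried path---$\Bf^\prime(e)$ versus $\Bc_{\tilde{G}}(e) - \Bf^\prime(e)$---depending on each edge's orientation relative to the cycle; this is handled by a standard extension that stores both quantities per edge and performs two path-min queries (and corresponding path-add updates) per cancellation.
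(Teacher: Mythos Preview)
Your approach is essentially the paper's: cancel undirected cycles via link-cut trees until the free edges form a forest on the $2n$ vertices of $\tilde{G}$, then add back at most $n$ saturated internal edges. The paper's only twist is to first pass to the bipartite $\Bb$-matching on the (uncapacitated) external edges, so cancellation only ever zeroes edges out and never has to track saturation; your direct BFS-of-the-flow-polytope framing is equally valid and arguably more standard. One small algorithmic gap: after cancelling the cycle closed by $e$, if $e$ itself is not the edge that leaves $F(\Bf^\prime)$, you must link $e$ into the forest---otherwise $F(\Bf^\prime)$ can still contain an uncancelled cycle through $e$ at termination, and the $3n$ bound does not follow from your procedure as written.
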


\begin{proof}
  We consider a correspondence between vertex-capacitated flows and bipartite $\Bb$-matchings.
  Consider a bipartite graph where the left bipartition consists of all out-vertices $v_{\mathrm{in}}$ and the right bipartition all in-vertices $v_{\mathrm{in}}$.
  Each edge $(u_{\mathrm{out}}, v_{\mathrm{in}})$ is included to the bipartite graph as an undirected edge.
  We can interpret the flow $\Bf$ naturally as a $\Bb$-matching in the bipartite graph where $\Bb(v_{\mathrm{in}})=\Bb(v_{\mathrm{out}})=\Bfout(v)$ for $v\not\in\{s,t\}$, $\Bb(s_{\mathrm{out}})=\Bfout(s)$, and $\Bb(t_{\mathrm{in}})=\Bfin(t)$.
  Conversely, any $\Bb$-matching $\Bx$ in the bipartite graph corresponds to an $(s, t)$-flow in $G$ with the same value as $\Bf$ and the same support (modulo the edges between in- and out-vertices).

  Suppose there is a cycle $C = (e_1,\ldots,e_2,\ldots,e_t)$ (as a sequence of edges) in the support of $\Bx$.
  We can observe that if we add $\delta$ to $\Bx(e_i)$ for odd $i$ and $-\delta$ to $\Bx(e_i)$ for even $i$, where $\delta \defeq \min_{j}\Bx(e_{2j})$, then this is still a $\Bb$-matching and with one less non-zero edges.
  Repeating this process until there is no cycles in the support of $\Bx$, we see that now $\Bx$ contains at most $2n$ non-zero entries (and thus the corresponding flow contains $3n$ non-zero entries).
  To implement the algorithm, we can use a standard link-cut tree data structure to efficiently discover cycles and alter the values of $\Bx$ along the cycle.
  This will take $O(m\log n)$ time.
\end{proof}

In \cref{sec:deterministic-apx-flow} we prove the following lemma that computes an $O(1)$-approximate maximum flow \emph{deterministically} when there are few finite-capacity edges.

\begin{lemma}
  There is a \textbf{deterministic} algorithm that, given a graph $G=(V,E)$ with edge capacities $\Bc \in (\N \cup \{\infty\})^{E}$ such that $E_{\infty} \defeq \{e: \Bc(e)=\infty\}$ forms a DAG and $|E\setminus E_{\infty}|=O(n)$, computes an $O(1)$-approximate integral $(s, t)$-flow in $O(n^2\log^{45} n)$ time.
  \label{lemma:determinisitc-apx-flow}
\end{lemma}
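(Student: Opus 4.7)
My plan is to execute the pipeline of Sections~\ref{sec:push_relabel}--\ref{sec:decomp} essentially unchanged on the given graph $G$, replacing the only randomized component---the non-stop cut-matching game of \cref{lemma:non-stop-cut-matching} invoked inside the weak expander decomposition---by a deterministic substitute. The structural hypotheses on $G$ feed in as follows. Because $E_\infty$ is a DAG, the SCC structure of every subgraph we care about is determined by the $O(n)$ finite-capacity edges alone, and one may safely treat $E_\infty$ as sitting at ``level $0$'' of the hierarchy (never selected as a top-level cut set). Consequently, in every round $r \geq 1$ of \cref{alg:hierarchy} the terminal set $F = E_r^{(r)}$ is contained in $E \setminus E_\infty$ and has $|F| = O(n)$, so the vertex measure $\Bd = \Bvol_F$ driving the cut-matching game is supported on only $O(n)$ vertices, and every matching the player ever returns has $O(n)$ edges.

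For the deterministic substitute I would use a standard deterministic cut-matching game (e.g.\ \cite{ChuzhoyGLNPS20}, or a $\polylog n$-overhead variant) composed with a deterministic expander pruning step~\cite{SaranurakW19,BernsteinGS20,SulserP25} that extracts an actual expander on a large sub-terminal set, exactly as previewed in \cref{sec:decomp}. The matching player itself is \cref{lemma:matching-player}, unchanged, and costs $O(n^2 \polylog n)$ per invocation, dominated by its internal call to \cref{lemma:flow-on-shortcut-short-path}. The key point is that across the $\polylog n$ rounds of the CMG the accumulated witness graph has only $O(n \polylog n)$ edges, so the deterministic cut player and the pruning step---both near-linear in the witness size---contribute at most $O(n \polylog n)$ per round and are swamped by the matching player. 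Every other component---bottom-up hierarchy construction (\cref{alg:hierarchy}), weighted push-relabel on the shortcut graph (\cref{cor:approx max flow in shortcut graph}), and flow unfolding (\cref{lem:hierarchy and unfolding structuring})---is already deterministic. Summing over the $O(\log n)$ levels of the hierarchy and tracking the $\polylog n$ blow-ups the pruning-based CMG introduces in both the expansion parameter and the length bound of \cref{claim:length-expanding} should yield the claimed $O(n^2 \log^{45} n)$ runtime.

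The main obstacle is preserving the $(\Bw_\cH, h)$-length expansion guarantee that \cref{lemma:routing-data-structure} needs inside \cref{lem:exp decomp}. In the randomized argument (\cref{claim:length-expanding}) one routes a $\Bvol_{F^\prime}$-respecting demand in the witness graph $W$ with small average hop-length, then translates each matching edge back to a $\Bw_\cH$-short path in $G_{A^{(r)}}$ via the short-path-decomposition guarantee of \cref{lemma:matching-player}. The pruning approach only certifies strong expansion on a constant fraction of the terminals, so the pruned-off vertices must either be absorbed into $E_{r+1}^{(r+1)}$ (which requires re-verifying the capacity-decay bound $\Bc(E_{r+1}^{(r+1)}) \leq 0.9\,\Bc(E_r^{(r)})$) or recursively re-decomposed. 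I expect the retained matching edges to still admit $\Bw_\cH$-short embeddings because the matching player produces them that way to begin with, but verifying the recursion terminates in $O(\log n)$ rounds and that the composed length-expansion factor stays in $1/\polylog n$ will be the primary technical check, and is the source of the large $\log^{45} n$ exponent.
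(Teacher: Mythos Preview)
Your proposal follows essentially the same approach as the paper: place $E_\infty$ at a fixed bottom level of the hierarchy (the paper uses level~$1$, you call it ``level~$0$''), build the remaining levels on the $O(n)$ finite-capacity edges only, and replace the randomized non-stop CMG by a deterministic KKOV-style game plus expander pruning. Your identification of the main obstacle---preserving the $(\Bw_{\cH},h)$-length expansion through the deterministic substitute so that \cref{lemma:routing-data-structure} still applies---is also exactly what the paper works through.

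Two minor corrections that do not invalidate your conclusion but that you should be aware of. First, the deterministic cut player of \cite{ChuzhoyGLNPS20,BernsteinGS20} is \emph{not} near-linear in the witness size: in each round it must find a balanced sparse cut in the current witness, which the paper does in $\tilde{O}(m_W^2)$ time (see \cref{lemma:deterministic-cmg,lemma:weighted-balanced-sparse-cut}), where $m_W$ is the witness's edge count. Second, the matchings returned by \cref{lemma:matching-player} have size $O(n\log n)$, not $O(n)$: the star edges in $G_A$ are also finite-capacity and can be saturated during path decomposition, so the bound is $n + |E(G_A)\setminus E_\infty| = O(n\log n)$ (this is exactly \cref{lemma:sparse-matching-player}). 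Both points are harmless for the final bound, since $m_W^2 = O(n^2\polylog n)$ is still dominated by (or comparable to) the matching player's cost, but your ``swamped by the matching player'' claim rests on an incorrect premise.
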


Now \cref{thm:vertex-flow} follows from \cref{lemma:determinisitc-apx-flow}.

\VertexFlow*

\begin{proof}
  We start with the empty flow $\Bf$ and iteratively apply \cref{lemma:determinisitc-apx-flow}.
  We maintain the invariant that $\Bf$ always contains at most $3n$ non-zero entries.
  In each iteration, in $O(n^2\log^{45} n)$ time we can find an $O(1)$-approximate integral flow $\Bf^\prime$ in $G_{\Bf}$, noting that (1) $G_{\Bf}$ contains $4n$ edges with finite capacities (including the initial $n$ edges between in- and out-vertices and the additional $3n$ ones from the residual edges) and (2) the infinite-capacity edges form a DAG (since they all go from out-vertices to in-vertices).
  We update our $\Bf$ by combining it with $\Bf^\prime$, and then apply \cref{lemma:round-to-saturate} to reduce the number of non-zero edges back to $3n$.
  This takes $O(\log n)$ iterations until we find a maximum flow, proving the theorem.
\end{proof}

\subsection{Deterministic Approximate Flow Algorithm}\label{sec:deterministic-apx-flow}

Throughout this section we consider the input graph $G$ to \cref{lemma:determinisitc-apx-flow}.
In particular we let $E_{\infty}$ denote the infinite-capacity edges and assume that they form a DAG and contain the majority of edges.

\begin{assumption}
  The set of infinite-capacity edges $E_{\infty} \defeq \{e: \Bc(e)=\infty\}$ forms a DAG and the number of finite-capacity edges is $|E\setminus E_{\infty}|=O(n)$.
  \label{assumption:DAG}
\end{assumption}

Recall that the randomized analogue of \cref{lemma:determinisitc-apx-flow}, i.e., \cref{lem:apx flow}, is proven via \cref{alg:apx flow}.
The first step of \cref{alg:apx flow} is to construct an expander hierarchy $\mathcal{H}$, its shortcut, and an unfolding data structure.
We provide a determinstic algorithm analogous to \cref{lem:hierarchy and unfolding structuring}, where $\phi_{\det} = O(1/\log^{24} n)$ is universal and defined in \cref{lem:deterministic exp decomp}---the deterministic analogue of \cref{lem:exp decomp}.

\begin{lem}
\label{lem:deterministic hierarchy and unfolding structuring}There is a \textbf{deterministic} algorithm that, given an $n$-vertex graph $G$ satisfying \cref{assumption:DAG}, in $O(n^{2}\log^{36}n)$ time computes 
\begin{itemize}
\item a hierarchy ${\cal H}$ of $G$ with $L = O(\log n)$ levels,
\item a shortcut $A$ induced by ${\cal H}$ with capacity scale $\psi_{\det}=\Theta(\phi_{\det}/\log n)$, and 
\item a ``flow-unfolding'' data structure ${\cal D}_{\unfold}$ such that,  given any $\psi$-integral feasible flow $\Bf_{A}$ in $G_{A}\defeq G\cup A$, ${\cal D}_{\unfold}$ returns a flow $\Bf\defeq{\cal D}_{\unfold}(\Bf_{A})$ in $G$ routing the same vertex-demand with congestion $3$ in $O(n^2\log^{45} n)$ time.
\end{itemize}
\end{lem}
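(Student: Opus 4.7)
The plan is to mirror the proof of \cref{lem:hierarchy and unfolding structuring} almost verbatim, replacing every call to the randomized \cref{lem:exp decomp} by a deterministic analogue \cref{lem:deterministic exp decomp}. Concretely, I would run \cref{alg:hierarchy} unchanged, except that at each round $r$ the top-level edge set $E_{r+1}^{(r+1)}$ is produced by the deterministic weak expander decomposition, using capacity scale $\psi_{\det} \defeq \phi_{\det}/L$ in the induced shortcut graph $G_{A^{(r)}}$. As in \cref{lem:hierarchy and unfolding structuring}, the geometric capacity shrinkage $\Bc(E_{r+1}^{(r+1)}) \leq 0.9 \cdot \Bc(E_r^{(r)})$ ensures $E_{L+1}^{(L+1)} = \emptyset$ after $L = \lceil \log_{10/9} \Bc(E)\rceil + 1 = O(\log n)$ rounds, so $\cH \defeq \cH^{(L+1)}$ is a hierarchy of $G$ and $A \defeq A^{(L+1)}$ is its induced shortcut of capacity scale $\psi_{\det}$, as required.

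For the flow-unfolding data structure ${\cal D}_{\unfold}$, I would prove a deterministic single-level unfolding lemma in direct analogy with \cref{lemma:single-level-unfolding}; its proof is unchanged except that the call to \cref{lem:exp decomp}'s routing guarantee is replaced by that of \cref{lem:deterministic exp decomp}. Setting $z \defeq 200L/\psi_{\det} \in \N$ just as in the randomized proof, each single-level unfolding multiplies congestion by at most $1 + 1.01/L$, so iterating top-down from a congestion-$1$ flow in $G_A = G_{A^{(L+1)}}$ down to $G_{A^{(1)}} = G$ yields a flow of congestion $(1+1.01/L)^L \leq e^{1.01} < 3$ routing the same vertex-demand, meeting the stated guarantee.

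The running time analysis is also in one-to-one correspondence with the randomized proof. The dominant cost is the $L = O(\log n)$ invocations of \cref{lem:deterministic exp decomp}; with $1/\psi_{\det} = \Theta(\log^{25} n)$ these sum to $O(n^2 \log^{36} n)$ for the construction and to $O(n^2 \log^{45} n)$ for the unfolding calls, matching the claimed bounds. Under \cref{assumption:DAG} we have $m = O(n)$ plus $O(n)$ infinite-capacity DAG edges, so all edge-dependent terms stay within budget.

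The main obstacle is establishing \cref{lem:deterministic exp decomp}, i.e.\ a deterministic weak expander decomposition of quality $\phi_{\det} = \Theta(1/\log^{24} n)$ in time $\tilde{O}(n^2)$. The plan is to replay the expander-decomposition argument of \cref{sec:decomp}, but replacing the randomized non-stop cut-matching game (\cref{lemma:non-stop-cut-matching}) of \cite{FleischmannLL25} by the standard deterministic cut-matching game of \cite{ChuzhoyGLNPS20,BernsteinGS20} combined with a deterministic pruning step, exactly along the lines of the ``Cut-Matching Game with Pruning'' paragraph of \cref{sec:decomp}. Deterministic cut-matching games normally carry an $n^{o(1)}$ blow-up that would be fatal, but \cref{assumption:DAG} ensures that every matching output by the matching player (\cref{lemma:matching-player}) is supported on $O(n)$ finite-capacity edges: the infinite-capacity edges form a DAG and cannot appear in the matching's support since the matching arises from path decompositions of flows on the finite-capacity bottleneck edges. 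This $\tilde{O}(n)$-sized matching brings the deterministic cut-matching game back into the $\tilde{O}(n^2)$ regime at the price of a few extra $\polylog(n)$ factors, which accounts for the gap between $\phi_{\rand}$ and $\phi_{\det}$, as well as the corresponding gaps in running time between \cref{lem:hierarchy and unfolding structuring} and \cref{lem:deterministic hierarchy and unfolding structuring}.
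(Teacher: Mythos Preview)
Your plan has a genuine gap: you cannot ``run \cref{alg:hierarchy} unchanged''. In round $r=1$ that algorithm sets $E_1^{(1)}=E$, which contains all of $E_\infty$. But \cref{lem:deterministic exp decomp} carries the explicit precondition $E_r^{(r)}\cap E_\infty=\emptyset$, so it is not applicable at round~$1$. Even ignoring that precondition, the geometric-shrinkage argument $\Bc(E_{r+1}^{(r+1)})\le 0.9\cdot\Bc(E_r^{(r)})$ is vacuous when $\Bc(E_1^{(1)})=\infty$, and your level bound $L=\lceil\log_{10/9}\Bc(E)\rceil+1$ is undefined. Relatedly, your assertion that $m=O(n)$ under \cref{assumption:DAG} is incorrect: the assumption only bounds $|E\setminus E_\infty|=O(n)$, while $|E_\infty|$ can be $\Theta(n^2)$.

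The paper's proof fixes this with one extra observation you are missing: it \emph{manually} places $E_\infty$ into level~$1$ before invoking any expander decomposition. Concretely, set $\level^{(2)}(e)=1$ for $e\in E_\infty$ and $\level^{(2)}(e)=2$ otherwise. Because $E_\infty$ is a DAG, the SCCs of $G\setminus E_2^{(2)}=G[E_\infty]$ are singletons, so level~$1$ is vacuously expanding and its shortcut is empty. Now $E_2^{(2)}=E\setminus E_\infty$ consists only of finite-capacity edges, the precondition of \cref{lem:deterministic exp decomp} is met for all $r\ge 2$, infinite-capacity edges never reappear in top levels (since $\Bc(E_{r+1}^{(r+1)})\le\Bc(E_r^{(r)})<\infty$), and the level count becomes $O(\log n)$ as usual. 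After this initialization, the rest of your outline---unfolding level by level exactly as in \cref{lemma:single-level-unfolding} and summing running times---is correct and matches the paper. Your final paragraph about how \cref{lem:deterministic exp decomp} itself is proved is not needed here (that lemma is used as a black box), though it is broadly on the right track.
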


Using \cref{lem:deterministic hierarchy and unfolding structuring}, \cref{lemma:determinisitc-apx-flow} follows in exactly the same way as \cref{lem:apx flow} in \cref{sec:maxflow algo}.
The running time is $O(n^2\log^{45} n)$.
To prove \cref{lem:deterministic hierarchy and unfolding structuring}, we use a deterministic analog of \cref{lem:exp decomp}.

\begin{lemma}
\label{lem:deterministic exp decomp}
There is a universal $\phi_{\det} = \Theta(1/\log^{24} n)$ where $\phi_{\det} \in 1/\N$ such that there exists a \textbf{deterministic} algorithm that,
given the round-$r$ level function $\level^{(r)}$ such that $E_r^{(r)} \cap E_{\infty} = \emptyset$ and the induced shortcut graph $G_{A^{(r)}}$ with capacity scale $\psi$ of an $n$-vertex graph $G$, computes in $O(n^2\log^{10} n \cdot (\log^7 n + 1/\psi))$ time the \emph{round-$(r+1)$ top-level edge set} $E_{r+1}^{(r+1)}$ such that
\begin{itemize}
\item $\Bc(E_{r+1}^{(r+1)})\le 0.9 \cdot \Bc(E_{r}^{(r)})$, and 
\item  $E_{r}^{(r+1)}$ is $[G\setminus E_{r+1}^{(r+1)}]$-component-constrained
  $\Omega(1/\log^{23} n)$-expanding in $G_{A^{(r)}}$.
\end{itemize}
Furthermore, given a $[G\setminus E_{r+1}^{(r+1)}]$-component-constrained $(\Bvol_{E_r^{(r+1)}}\cdot \kappa/z)$-respecting $1/z$-integral demand $(\Bsource,\Bsink)$ for some $\kappa,z\in \N$, there is an $O(n^2\log^{19} n \cdot (\log^5 n + 1/\psi))$ time algorithm that returns a flow $\Bf$ routing $(\Bsource,\Bsink)$ in $G_{A^{(r)}}$ such that $\Bf(e) \leq \frac{\lceil \Bc(e)\cdot\kappa/\phi_{\det}\rceil}{z}$.
\end{lemma}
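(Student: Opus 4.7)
The plan is to mirror the proof of \cref{lem:exp decomp} from \cref{sec:decomp}, replacing only the randomized non-stop cut-matching game (\cref{lemma:non-stop-cut-matching}) with a deterministic alternative. The matching player \cref{lemma:matching-player} is already deterministic (it uses weighted push-relabel via \cref{lemma:flow-on-shortcut-short-path}), so it can be reused as-is, as can the handling of balanced sparse cuts by recursion.

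For the deterministic substitute, I will use the ``standard cut-matching game with pruning'' approach sketched in \cref{sec:decomp}. Concretely, I run a deterministic directed standard cut-matching game (e.g., a deterministic variant of the directed game of \cite{Louis10}, or the deterministic game of \cite{ChuzhoyGLNPS20,BernsteinGS20}) for $T_{\det} = \polylog(n)$ rounds. Because the standard game demands a perfect matching each round, after each invocation of the matching player I pad its output $(\overrightarrow{M},\overleftarrow{M})$ with fake edges joining the unmatched residual demand on one side to that on the other, making the matching perfect at the cost of a small amount of fake capacity. The resulting witness graph $W$, formed as the union of these padded matchings, is an expander with high absolute expansion but contains the fake edges. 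I then apply a deterministic expander pruning algorithm (e.g., \cite{SulserP25}) to strip away vertices incident to too much fake capacity and extract a large subset $\tilde{\Bd}\le \Bd$ with $\|\tilde{\Bd}\|_1 \ge \tfrac{7}{8}\|\Bd\|_1$ that is $T_{\det}$-hop $\Omega(1/\polylog(n))$-expanding in $W\setminus\{\text{fake edges}\}$. This plays exactly the role of \cref{lemma:non-stop-cut-matching} in the proof of \cref{lem:exp decomp}.

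The essential observation that keeps the total running time $\tilde{O}(n^2)$ despite the $\polylog$ overhead of the deterministic game is that, under \cref{assumption:DAG}, each matching produced by the matching player has only $\tilde{O}(n)$ edges rather than the $\tilde{O}(m)$ worst case. Indeed, since only $O(n)$ edges of the base graph have finite capacity and the infinite-capacity edges form a DAG that is simply traversed (never bottlenecks), a careful path decomposition of the flow produced by \cref{lemma:flow-on-shortcut-short-path} can be bundled so that the output matching is supported on $\tilde{O}(n)$ capacitated edges. The deterministic cut-matching game therefore touches only $\tilde{O}(n)$ edges per round and $\tilde{O}(n^2)$ edges across all $T_{\det}$ rounds and all levels of the recursive decomposition, matching the target budget.

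Given the deterministic analogue of \cref{lemma:non-stop-cut-matching}, everything else in \cref{sec:decomp} carries over essentially verbatim: the recursion on balanced $\phi$-sparse cuts, the bound $\Bc(E_{r+1}^{(r+1)}) \le 0.9\cdot\Bc(E_r^{(r)})$, the conversion from an expanding vertex measure to an expanding edge set via \cref{claim:from-vertex-weight-to-edges}, the $(\Bw_{\cH}, h)$-length expansion argument of \cref{claim:length-expanding}, and finally the routing data structure obtained via \cref{lemma:routing-data-structure}, all go through with only the $\polylog(n)$ factors inflated to reflect the weaker guarantees of the deterministic cut-matching game. The main obstacle will be quantitative bookkeeping: carefully tracking the losses from (i) the number $T_{\det}$ of deterministic cut-matching rounds, (ii) the conductance drop incurred during expander pruning, and (iii) the extra congestion from embedding the pruned witness into $G_{A^{(r)}}$, so as to land on the stated $\phi_{\det} = \Theta(1/\log^{24}n)$ and $[G\setminus E_{r+1}^{(r+1)}]$-component-constrained $\Omega(1/\log^{23}n)$-expansion.
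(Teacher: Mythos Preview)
Your proposal is essentially the same approach as the paper's. The paper packages the ``deterministic cut-matching game with fake edges plus pruning'' step into a single black-box lemma (the deterministic analogue of \cref{lemma:non-stop-cut-matching}), and then the proof of \cref{lem:deterministic exp decomp} is literally ``same as \cref{lem:exp decomp} but with the deterministic CMG substituted and the parameters $\phi_{\mathrm{exp}},\delta$ re-tuned to absorb the weaker guarantees''---exactly what you outline. The paper also makes the same key observation you do, via a dedicated sparse-matching-player lemma, that the matchings have $\tilde O(n)$ edges because path decomposition produces at most $n + |E(G_A)\setminus E_\infty|$ paths (each path saturates either a source/sink or a finite-capacity edge, and infinite-capacity edges cannot be saturated).

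One small point worth sharpening: your runtime accounting says the deterministic game ``touches only $\tilde O(n)$ edges per round,'' but the real cost driver is that the deterministic cut player (KKOV-style, via a balanced-sparse-cut subroutine) needs $\tilde O(m^2)$ time on a witness with $m$ edges, not $\tilde O(m)$. This is precisely why the $\tilde O(n)$ matching size is essential: it makes that quadratic cost $\tilde O(n^2)$. Your high-level plan is right, but be explicit that the savings come from plugging $m=\tilde O(n)$ into an $m^2$-time subroutine, not from a linear-in-$m$ one.
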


\begin{proof}[Proof of \cref{lem:deterministic hierarchy and unfolding structuring}]
We first put $E_{\infty}$ into the first level of the hierarchy and then build a hierarchy on the rest of the edges, i.e., we set $\level^{(1)}(e) \defeq 1$ for all $e \in E$, $\level^{(2)}(e) \defeq 1$ for $e \in E_{\infty}$ and $\level^{(2)}(e) \defeq 2$ otherwise.
Note that this is vacuously valid because $E_{\infty}$ forms a DAG by \cref{assumption:DAG}, so everything is $(G \setminus E_{2}^{(2)})$-component-constrained expanding (the shortcut $A^{(1)}$ is trivial since we only consider intra-component edges in the construction of stars; see \cref{dfn:star}).
We now build the rest of the hierarchy via \cref{lem:deterministic exp decomp}.
Note that infinite-capacity edges will never appear in upper levels since $\Bc(E_{r+1}^{(r+1)})\leq \Bc(E_{r}^{(r)})$ where the right-hand side is finite for $r\geq 2$.
This and the flow unfolding data structure are the same as \cref{lem:hierarchy and unfolding structuring}. 
\end{proof}

Now, we prove \cref{lem:deterministic exp decomp}.
Recall that \cref{lem:exp decomp} is obtained by running the randomized non-stop cut-matching game where the matching player (\cref{lemma:matching-player}) runs the weighted push-relabel algorithm on the shortcut graphs.
Note that the matching player obtains the matching by decomposing the flow into paths, and standard path-decomposition algorithm (e.g., \cref{lemma:path-decomposition}) guarantees that each path it found either saturates a source/sink vertex or saturates an edge.
Since we put all infinite-capacity edges in the bottom level of the hierarchy, our algorithms are dealing with finite demand, and therefore we only saturate finite-capacity edges.
This implies that the size of the matching is in fact bounded by $n + |E(G_A)\setminus E_{\infty}|$ (which in our case is $O(n\log n)$; $O(n)$ from the original graph, and $O(n\log n)$ finite-capacity star edges) instead of the cruder bound of $m^\prime = |E(G_A)|$ in \cref{lemma:matching-player}. 

\begin{lemma}[Matching Player for Vertex-Capacitated Graphs]
  Let $F \subseteq E$ and $\cH$ be a hierarchy of $G \setminus F$.
  Let $A$ be the shortcut induced by $\cH$ with capacity scale $\psi \in 1/\N$ and $G_A \defeq G \cup A$ be the shortcut graph that contains $m^\prime$ edges.
  Then, for any $\phi < 1$ and $\delta < 1$, given a partition $(P, Q)$ of $V$ and $\psi$-integral $\Bd^\prime \leq \Bd$ for $\Bd \defeq \Bvol_F$, there is a deterministic algorithm that runs in $O(n^2\log^4 n \cdot (1/\phi+1/\psi))$ time and computes either
  \begin{itemize}
    \item a balanced $\phi$-sparse cut $S$ such that $\vol_F(S) \geq \delta/2 \cdot \vol_F(V)$, or
    \item a $\psi$-integral $(P,Q,\Bd^\prime,\Delta,n+|E(G_A)\setminus E_{\infty}|)$-bidirectional-matching $(\overrightarrow{M},\overleftarrow{M})$ where $\Delta \defeq \delta \cdot \vol_F(V)$.
  \end{itemize}

  In the second case, the matchings are such that, given any $\psi$-integral flow $\Bf_M$ on $M \defeq \overrightarrow{M} \cup \overleftarrow{M}$ of congestion $\kappa$,
  there exists a $\psi$-integral flow $\Bf_{G_A}$ routing the same demand as $\Bf$ does in $G_A$ with congestion $O(\kappa \cdot \log n/\phi)$ such that $\frac{\Bw_{\cH}(\Bf_{G_A})}{|\Bf_{G_A}|} \leq h$ for some $h = O(n\log n\cdot (1/\phi+1/\psi))$.
  
  \label{lemma:sparse-matching-player}
\end{lemma}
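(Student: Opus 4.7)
The plan is to follow the proof of \cref{lemma:matching-player} essentially verbatim, replacing only the bound on the size of the matching. As before, given the partition $(P,Q)$ and vertex measure $\Bd^\prime\leq \Bvol_F$, I would set up the flow instance with $\Bsource\defeq \Bone_P\cdot\Bd^\prime$ and $\Bsink\defeq\Bone_Q\cdot\Bd^\prime$ and invoke \cref{lemma:flow-on-shortcut-short-path} with $\kappa\defeq\lceil 50/\phi\rceil$. If the resulting flow $\forward{\Bf}$ does not route the full demand, then the cut $S$ returned either satisfies $\vol_F(S)\geq (\delta/2)\vol_F(V)$ (in which case we output the balanced sparse cut, exactly as in \cref{lemma:matching-player}), or the matching $\forward{M}$ extracted from the path-decomposition representation is a valid one side of the bidirectional-matching. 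The matching $\backward{M}$ is built in the reversed graph $\rev{G_A}$ in the identical fashion.

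The only nontrivial new ingredient is bounding the number of edges in $\forward{M}\cup\backward{M}$. Recall that \cref{lemma:flow-on-shortcut-short-path} returns the representation $\{(\lambda_i,s_i,t_i)\}_{i\in[k]}$ of a path decomposition of $\forward{\Bf}$, and each edge of $\forward{M}$ corresponds to one path $P_i$ with capacity $\lambda_i$. A standard path-decomposition algorithm (as used inside the proof of \cref{lemma:flow-on-shortcut-short-path}) has the property that each extracted path must saturate either one of its endpoints (in the residual source/sink) or at least one edge along it. Since $\Bd^\prime\leq\Bvol_F$ and $F$ consists only of finite-capacity edges (so that the total demand is finite), and since infinite-capacity edges can never be saturated by any such flow, each extracted path must charge to either a vertex in $V$ (at most $n$ such charges in total over all paths) or a finite-capacity edge in $E(G_A)\setminus E_{\infty}$ (at most $|E(G_A)\setminus E_{\infty}|$ such charges). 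This gives $k\leq n+|E(G_A)\setminus E_{\infty}|$, and the same bound applies to $\backward{M}$. Summing, $|\overrightarrow{M}\cup\overleftarrow{M}|\leq n+|E(G_A)\setminus E_{\infty}|$ as required. The bound $\Bd_{\forward{M}}(V),\Bd_{\backward{M}}(V)\geq \Bd^\prime(V)-\Delta$ with $\Delta=\delta\cdot\vol_F(V)$ is proved exactly as in \cref{lemma:matching-player} from the sparsity analysis.

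The second part of the statement---that any $\psi$-integral flow $\Bf_M$ on $M$ of congestion $\kappa$ can be rerouted to a $\psi$-integral flow $\Bf_{G_A}$ in $G_A$ of congestion $O(\kappa\log n/\phi)$ with $\Bw_{\cH}(\Bf_{G_A})/|\Bf_{G_A}|\leq h$---is then verbatim the same argument as in \cref{lemma:matching-player}: decompose $\Bf_M$ along the matching edges, replace each matching edge by the scaled sum of the short $(\Bw_{\cH},h)$-paths that realized it in \cref{lemma:flow-on-shortcut-short-path}, and use $h=O(n\log n\cdot(1/\phi+1/\psi))$. The running time $O(n^2\log^4 n\cdot(1/\phi+1/\psi))$ is unchanged.

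The main obstacle I expect is purely bookkeeping: making sure the particular path-decomposition algorithm used internally in \cref{lemma:flow-on-shortcut-short-path} indeed has the ``each path saturates an edge or a terminal'' property. Since that algorithm is the standard augmenting-path / DFS-based decomposition (and not, say, a fractional one that could create many partial paths sharing every edge), this is a straightforward verification rather than a technical issue.
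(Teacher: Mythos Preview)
Your proposal is correct and matches the paper's approach essentially verbatim. The paper does not even give a separate proof of this lemma; it simply states in the preceding paragraph that the standard path-decomposition guarantee (each path saturates a source/sink vertex or an edge), combined with the fact that infinite-capacity edges are never saturated under finite demand, yields the improved size bound $n+|E(G_A)\setminus E_\infty|$ in place of the crude $m'$ bound from \cref{lemma:matching-player}, with everything else unchanged.
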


Our derandomization comes from substituting the randomized non-stop KRV-style cut-matching game with a deterministic KKOV-style variant.
Note that we do not prove a non-stop version as in \cref{sec:decomp}; Instead, we require the matching player to have $\delta = 0$, or in other words that it always returns a perfect matching.

\begin{lemma}[Deterministic Cut-Matching Game~\cite{BernsteinGS20}]
  There exists a universal constant $c_{\ref{lemma:deterministic-cmg}}$ such that the following holds.
  
  Consider an integral vertex measure $\Bd$ on $n$ vertices $V$ where $\|\Bd\|_{\infty}\leq\poly(n)$.
  Let $z\in \N$.
  Suppose there is an oracle that on input partition $(P,Q)$ and $1/z$-integral $\Bd^\prime\leq \Bd$ outputs a $1/z$-integral $(P,Q,\Bd^\prime,\Delta,m)$-bidirectional-matching for $\Delta \defeq c_{\ref{lemma:deterministic-cmg}}/\log^4 n \cdot \Bd(V)$.
  
  Then, there is a \textbf{deterministic} algorithm that invokes the oracle $T_{\KKOV} = O(\log n)$ times and takes $O(m^2\log^9 n)$ additional running time to compute a submeasure $\tilde{\Bd} \leq \Bd$ with $\|\tilde{\Bd}\|_1 \geq \frac{7}{8}\|\Bd\|_1$ such that $\tilde{\Bd}_1$ is $O(\log^{15} n)$-hop $\Omega(1/\log^{16} n)$-expanding in $W \defeq \bigcup_{i \in [k]} \forward{M}_i \cup \backward{M}_i$, where $\forward{M}_i \cup \backward{M}_i$ is the output of the oracle in the $i$-th invocation.
  \label{lemma:deterministic-cmg}
\end{lemma}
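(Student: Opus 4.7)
}
The plan is to adapt the deterministic directed cut-matching framework of \cite{BernsteinGS20}, which in turn derandomizes the KKOV cut player of Khandekar--Khot--Orecchia--Vishnoi, to our almost-perfect-matching setting. The overall structure mirrors the randomized non-stop game of \cref{lemma:non-stop-cut-matching}: we maintain a graph $W$ that starts empty and in each round $i \in [T_{\KKOV}]$ the \emph{cut player} produces a partition $(P_i, Q_i)$ of $V$ and a sub-measure $\Bd_i \leq \tilde{\Bd}_{i-1}$ (where $\tilde{\Bd}_0 \defeq \Bd$) that is to be routed, then the \emph{matching player} (provided by the oracle) returns a $(P_i, Q_i, \Bd_i, \Delta, m)$-bidirectional-matching $(\forward{M}_i, \backward{M}_i)$ which we add to $W$. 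The only two deviations from a ``classical'' cut-matching are (i) the cut player must be \emph{deterministic} and (ii) the returned matchings may miss up to $\Delta$ mass on each side.

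First I would instantiate the cut player with the deterministic, KKOV-style procedure of \cite{BernsteinGS20}: maintain a ``flow matrix'' $\BF \in \R_{\geq 0}^{V \times V}$ supported on $\Bd$, whose row $\BF(v, \cdot)$ encodes how the mass at $v$ has been spread by the previous matchings. In each round, use an $O(1)$-approximate max-flow / min-cut primitive (applied to an appropriate weighted graph derived from $\BF$) to find a balanced bipartition $(P_i, Q_i)$ of the currently surviving measure such that the ``potential'' $\Phi(\BF) \defeq \sum_v \|\BF(v,\cdot) - \bar{\BF}\|_2^2$ (or its entropy analogue) drops by a factor of $1 - \Omega(1/\log^{c} n)$ per round. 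Whatever matching the oracle returns is then averaged into $\BF$; a standard convexity argument (as in KKOV/BGS20) shows that after $T_{\KKOV} = O(\log n)$ rounds the potential is small enough to certify $\Omega(1/\log^{16} n)$-expansion with hop bound $O(\log^{15} n)$. The extra polylog factors (compared with the KRV analysis) come precisely from using approximate rather than exact spectral information and from the deterministic balanced-cut subroutine that powers the cut player.

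To deal with the almost-perfectness of the oracle's output, I would maintain at each step the \emph{surviving} sub-measure $\tilde{\Bd}_i \leq \tilde{\Bd}_{i-1}$ obtained by identifying the at most $2\Delta$ units of mass that the oracle failed to match on either side and subtracting them. Since $\Delta = c_{\ref{lemma:deterministic-cmg}}/\log^4 n \cdot \Bd(V)$ and $T_{\KKOV} = O(\log n)$, the total pruned mass over all rounds is at most $2 T_{\KKOV} \Delta = O(1/\log^3 n) \cdot \Bd(V)$, which for $c_{\ref{lemma:deterministic-cmg}}$ small enough is bounded by $\tfrac{1}{8}\|\Bd\|_1$. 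Setting $\tilde{\Bd} \defeq \tilde{\Bd}_{T_{\KKOV}}$ thus gives $\|\tilde{\Bd}\|_1 \geq \tfrac{7}{8}\|\Bd\|_1$. The potential argument above, restricted to the surviving support, still shows that $W$ is an $O(\log^{15} n)$-hop $\Omega(1/\log^{16} n)$-expander for $\tilde{\Bd}$; this is where I would need to check that ``pruning'' a $1/\log^4 n$ fraction per round does not destroy the potential drop, using a charging argument that the pruned mass contributes negligibly to $\Phi$.

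The main obstacle, and the step I expect to consume most of the work, is establishing the potential-drop guarantee for the deterministic cut player against an \emph{adversarial, almost-perfect} matching player: the standard KKOV analysis assumes exact perfect matchings, and one must redo the progress-lemma while accounting for the $\Delta$ fraction of mass that the oracle silently drops each round. For the running time, the dominant cost is the $T_{\KKOV} = O(\log n)$ invocations of the deterministic balanced-cut subroutine on a graph with $O(m)$ edges; using the $\tilde{O}(m^2)$-time deterministic balanced-sparse-cut algorithm of \cite{BernsteinGS20} inside the cut player, each invocation takes $O(m^2 \log^8 n)$ time, yielding the claimed $O(m^2 \log^9 n)$ additional running time.
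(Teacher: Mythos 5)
Your proposal sits in the right family (a KKOV/\cite{BernsteinGS20}-style deterministic game), but the mechanism you describe for certifying expansion does not work, and it is not the paper's mechanism. You propose a KRV-style certification: maintain a flow matrix $\BF$, argue a potential drop of $1-\Omega(1/\log^{c}n)$ per round, and conclude after $T_{\KKOV}=O(\log n)$ rounds that the potential is small enough to certify expansion. Quantitatively this is inconsistent: a $(1-1/\polylog n)$ multiplicative drop over only $O(\log n)$ rounds reduces the potential by a $1-o(1)$ factor at best, nowhere near the $\poly(n)$ decrease needed for a mixing-based certificate (this is exactly why deterministic KRV-style games \`a la \cite{ChuzhoyGLNPS20} pay $n^{o(1)}$ rounds/quality). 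In the KKOV framework the potential (an entropy that \emph{increases} by $\Omega(n)$ per round and is capped at $O(n\log n)$) is used \emph{only} to bound the number of rounds; the expansion certificate comes from the cut player itself failing to find a balanced sparse cut in the current witness. Concretely, the paper's cut player runs a deterministic ``balanced sparse cut or certify expander'' routine (\cref{lemma:weighted-balanced-sparse-cut}) directly on the union-of-matchings graph $W_{\mathrm{all}}$ (padded with self-loops to keep degrees uniform), and terminates the moment it returns a large set $X$ on which the real-edge witness is an $\Omega(1/\log^{14}n)$-expander; the hop/expansion parameters $O(\log^{15}n)$ and $\Omega(1/\log^{16}n)$ then follow from the generic expander-to-hop-expansion fact, not from any flow-matrix potential. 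Your plan has no analogue of this certification step, and building it is where essentially all of the paper's work lies: \cref{lemma:weighted-balanced-sparse-cut} must handle a \emph{capacitated} witness with a small excluded edge set, which the paper obtains by a capacity-to-parallel-edge reduction, a degree-reduction via explicit expander gadgets, a route-or-cut embedding argument, and expander pruning. Citing a black-box ``$\tilde O(m^2)$ deterministic balanced sparse cut of \cite{BernsteinGS20}'' skips this entirely, and that reference does not supply the weighted, fake-edge-tolerant version needed here.

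Your handling of almost-perfect matchings is also genuinely different from the paper's, and it is the step you yourself flag as the main unresolved obstacle: you prune the up-to-$2\Delta$ unmatched mass each round and hope to redo the KKOV progress lemma for the surviving measure. The paper avoids this difficulty altogether by adding \emph{fake} edges to make each bidirectional matching artificially perfect, so the standard entropy analysis applies verbatim; the fake edges accumulate to at most an $O(1/\log^{4}n)$ fraction of the witness capacity, and they are dealt with only at certification time, since \cref{lemma:weighted-balanced-sparse-cut} outputs a set $X$ (of volume at least $0.88\,\vol(V)$, whence $\|\tilde{\Bd}\|_1\ge\tfrac78\|\Bd\|_1$ for $\tilde{\Bd}\defeq\Bd|_X$) on which the witness \emph{minus the fake edges} is an expander. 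So the fix for your two open issues (the progress lemma under dropped mass, and the final expansion certificate on the surviving measure) is to abandon per-round pruning and the flow-matrix potential certificate, and instead adopt the fake-edges-plus-final-pruning structure with the witness-graph sparse-cut cut player.
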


\begin{proof}[Proof of \cref{lem:deterministic exp decomp}]
  The proof remains largely the same as that of \cref{lem:exp decomp} and thus we only describe the difference.
  Recall that the strategy is to use cut-matching games to either certify that a large fraction of the terminal edges are expanding (which we add to $E_{r}^{(r+1)}$), or find a balanced sparse cut for us to recurse on.

  We replace the randomized non-stop cut-matching game \cref{lemma:non-stop-cut-matching} with \cref{lemma:deterministic-cmg}, using \cref{lemma:sparse-matching-player} with $\phi_{\mathrm{exp}} = O(1/\log^5 n)$ and $\delta = O(1/\log^4 n)$, both sufficiently small, as the oracle.
  If \cref{lemma:sparse-matching-player} ever returns a balanced sparse cut, then we recurse on both sides of the cut.
  Otherwise, we use the output of \cref{lemma:sparse-matching-player} as the matching that is fed to \cref{lemma:deterministic-cmg}. 
  In $T_{\KKOV} = O(\log n)$ rounds, the cut-matching game \cref{lemma:deterministic-cmg} terminates with a $\tilde{\Bd}$ that is $O(\log^{15} n)$-hop $\Omega(1/\log^{16} n)$-expanding in $W$.
  Using the same reasoning as in the proof of \cref{lem:exp decomp}, we can extract a large edge set that is $(\Bw_{\cH}, O(h\log^{15} n))$-length $\Omega(\phi_{\mathrm{exp}}/\log^{18} n)$-expanding in $G_A$ for $h = O(n\log n \cdot (\log^5 n + 1/\psi))$.

  The correctness of the algorithm is the same as in \cref{lem:exp decomp}.
  Since the size of the subgraph decreases by a $(1-\Omega(1/\log^4 n))$ fraction in each recursion, the recursion depth is bounded by $O(\log^5 n)$ and for $\phi_{\mathrm{exp}}$ sufficiently small the capacity of $E_{r+1}^{(r+1)}$ is bounded by $0.9 \cdot \Bc(E_r^{(r)})$ as desired.
  The construction takes $O(\log^5 n) \cdot O(\log n) \cdot (O(n^2\log^{11} n) + O(n^2\log^4 n \cdot (\log^5 n + 1/\psi))) = O(n\log^{10} n \cdot (\log^7 n + 1/\psi))$ time.
  Given any $(\Bvol_{E_r^{(r+1)}}\cdot \kappa/z)$-respecting $1/z$-integral demand, we can route it using \cref{lemma:routing-data-structure} with a flow $\Bf$ with $\Bf(e) \leq \frac{\lceil \Bc_{G_A}(e)\cdot\kappa/\phi_{\det} \rceil}{z}$, for some $\phi_{\det} \defeq \Theta(1/\log^{24} n)$, in $O(n^2\log^{19} n\cdot(\log^5 n + 1/\psi))$ time.
  This completes the proof.
\end{proof}

This finishes the description of our deterministic vertex-capacitated max-flow algorithm, modulo the proof of \cref{lemma:deterministic-cmg} which we do in the next section.

\subsection{Deterministic Cut-Matching Game}\label{sec:deterministic-cmg}

In this section we show how the directed version of the deterministic cut-matching game (i.e., \cref{lemma:deterministic-cmg}) can be proved.
The cut-matching game works as follows:
Let $W_{\mathrm{all}}$ initially contain only self loops on the vertices so that $\deg_{W_{\mathrm{all}}}(v) = 2\Bd(v)$ for all $v$.
We maintain the invariant that $\deg_{W_{\mathrm{all}}}(v) = 2k \cdot \Bd(v)$ for all $v$ at the start of the $k$-th iteration, and that $W_{\mathrm{all}} = W \cup F$ for some \emph{real} edges $W$ and \emph{fake} edges $F$, with $\Bc(F) \leq c \cdot \Bc(W_{\mathrm{all}})/\log^4 n$ for some sufficiently small constant $c$ (depending on the value of $c_{\ref{lemma:weighted-balanced-sparse-cut}}$ below in \cref{lemma:weighted-balanced-sparse-cut}).
In each iteration, we run the following \cref{lemma:weighted-balanced-sparse-cut} on $W_{\mathrm{all}}$ (with $F$ being the set of fake edges).
If for some iteration, \cref{lemma:weighted-balanced-sparse-cut} concludes that for some $X\subseteq V$ with $\vol_{W_{\mathrm{all}}}(X) \geq 0.88 \cdot \vol_{W_{\mathrm{all}}}(V)$ we have that $W_{\mathrm{all}}[X] \setminus F = W[X]$ is $\Omega(1/\log^{14} n)$-expander, then we show that $\tilde{\Bd} \defeq \Bd|_X$ satisfies the output requirement of \cref{lemma:deterministic-cmg}.
Indeed, we have that $\|\tilde{\Bd}\|_1 = \vol_{W_{\mathrm{all}}}(X) / 2k \geq 0.88 \cdot \vol_{W_{\mathrm{all}}}(V) / 2k \geq \frac{7}{8} \|\Bd\|_1$.
To see the expansion guarantee of $\tilde{\Bd}$, note that $\deg_{W[X]}(v) \geq \tilde{\Bd}(v)$ for all $v$.
Combining the following claim and that $W[X]$ is an $\Omega(1/\log^{14} n)$-expander, we get that $\tilde{\Bd}$ is $O(\log^{15} n)$-hop $\Omega(1/\log^{16} n)$-expanding in $W$.

\begin{fact}[{see, e.g.,~\cite[Lemma 5.10]{BernsteinBST24}}]
  Let $W$ be an $n$-vertex $\phi$-expander.
  Then, $\Bvol_W$ is $O(\log n/\phi)$-hop $\Omega(\phi/\log^2 n)$-expanding in $W$.
\end{fact}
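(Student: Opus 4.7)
The plan is to combine a diameter bound for expanders with iterated applications of the weighted push-relabel algorithm (\cref{thm:push-relabel-main-theorem}) to route arbitrary degree-respecting demands along short paths at modest congestion.

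First I would establish that a $\phi$-expander $W$ has hop-diameter $D = O(\log n/\phi)$. This is a standard ball-growing argument using the flow-based expansion of \cref{dfn:expander}: if for some vertex $v$ the BFS ball $B_t(v)$ had small boundary capacity relative to its volume, then a $\Bvol_W$-respecting demand supported on $B_t(v)$ with sinks on $\overline{B_t(v)}$ (or vice versa, suitably balanced) would require routing congestion exceeding $1/\phi$, contradicting $\phi$-expansion. Hence $\vol(B_t(v))$ at least doubles every $O(1/\phi)$ steps until it exceeds $\vol(W)/2$, after which a symmetric argument from the sink side finishes within another $O(\log n/\phi)$ hops.

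Next, for any $\Bvol_W$-respecting demand $(\Bsource,\Bsink)$ I would argue the existence of a (fractional) flow in $W$ routing the full demand with congestion $O(\log n/\phi)$ and average hop-length $O(\log n/\phi)$. By $\phi$-expansion alone, there is \emph{some} flow of congestion $\leq 1/\phi$; the task is to additionally enforce the length bound. I would do this by considering the length-constrained multi-commodity flow LP with path length capped at $O(\log n/\phi)$, and appealing to the diameter bound from the first step together with a standard length-constrained LP duality (or multiplicative-weights) argument to show that the optimum congestion increases by at most an $O(\log n)$ factor compared to the unrestricted expander routing.

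Finally, I would apply \cref{thm:push-relabel-main-theorem} with $\Bw \equiv \Bone$ and height $h = c\log n/\phi$ for a suitable constant $c$. Guarantee~\labelcref{item:push-relabel:approximation} of the theorem certifies that the returned flow is a $1/6$-approximation of the optimal short-average-length flow, so by the previous step it routes at least a $1/6$-fraction of the demand while satisfying $\Bw(\Bf)/|\Bf| \leq 9h$. Iterating on the residual demand $O(\log n)$ times routes the entire demand, with the overall congestion bounded by $O(\log^2 n/\phi)$ and the weighted-average hop-length still $O(\log n/\phi)$; this is exactly the $O(\log n/\phi)$-hop $\Omega(\phi/\log^2 n)$-expansion in the sense of \cref{dfn:expander}.

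The main obstacle will be the length-constrained routing in the second step: the unrestricted Leighton--Rao-style bound of $O(\log n/\phi)$ is classical, but showing that truncating to paths of length $O(\log n/\phi)$ only costs an additional $O(\log n)$ factor in congestion requires either a careful LP-duality argument with length-weighted cuts or a multiplicative-weights construction, and these are somewhat more delicate in the directed setting where flow paths must respect edge orientations.
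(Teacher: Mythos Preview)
The paper does not supply its own proof of this statement---it is cited as a known fact from \cite[Lemma 5.10]{BernsteinBST24} and used as a black box in \cref{sec:deterministic-cmg}. There is therefore no in-paper argument to compare against; what follows are comments on your proposal standing alone.

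Your outline is the right shape, and you correctly identify step~2 as where the work lies. One point of confusion: if step~2 really yields a flow routing the \emph{full} demand with congestion $O(\log n/\phi)$ and average hop-length $O(\log n/\phi)$, then you are already done---that is exactly $O(\log n/\phi)$-hop $\Omega(\phi/\log n)$-expanding in the sense of \cref{dfn:expander}, which is stronger than the stated bound. Step~3 is then superfluous. If instead you intended step~2 to deliver only a constant-fraction routing with those parameters, then step~3 makes sense, but you must scale capacities by $\Theta(\log n/\phi)$ before invoking \cref{thm:push-relabel-main-theorem} so that the witness flow of step~2 is \emph{feasible} (congestion $\le 1$); without this, guarantee~\labelcref{item:push-relabel:approximation} says nothing useful. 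Your final bookkeeping of $O(\log^2 n/\phi)$ congestion suggests you had this scaling in mind, but it should be made explicit.

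The substantive gap is step~2 itself. The length-constrained LP duality / multiplicative-weights route you gesture at is valid in principle, but for \emph{directed} multicommodity flow it is not an off-the-shelf citation, and you have not sketched the argument. The proof in \cite{BernsteinBST24} (and the standard argument for this kind of statement) bypasses general length-constrained machinery: one routes in phases, each phase augmenting along shortest residual paths with a congestion budget of $O(1/\phi)$, and uses expansion plus the diameter bound directly to certify that a constant fraction of the residual demand is routed per phase (else the push-relabel/BFS layering would exhibit a $\phi$-sparse cut). This folds your steps~2 and~3 into a single argument and avoids the obstacle you flag.
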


If, on the other hand, \cref{lemma:weighted-balanced-sparse-cut} finds a cut $(S, \overline{S})$, we use the matching player to compute a matching between $S$ and $\overline{S}$ which is then added to $W_{\mathrm{all}}$.\footnote{Technically, the bipartition is made perfectly balanced by moving vertices in the larger side to the smaller side arbitrarily. See \cite[Section 7.1]{BernsteinGS20} for more details.}
Note crucially that while the matchings the oracle outputs are not necessarily perfect, we add a (fairly arbitrary) set of fake edges with capacity bounded by $2\Delta = 2c_{\ref{lemma:deterministic-cmg}}/\log^4 n \cdot \Bd(V)$ to make them perfect before including them in $W_{\mathrm{all}}$.
With $c_{\ref{lemma:deterministic-cmg}}$ sufficiently small, the edge set $F$ always has capacity bounded by $c_{\ref{lemma:weighted-balanced-sparse-cut}} \cdot U/\log^4 n$, where $U \defeq \sum_{e \in W_{\mathrm{all}}}\Bc(e)$, which ensures that \cref{lemma:weighted-balanced-sparse-cut} is always applicable in the next iteration.

It then remains to bound the number of iterations this process takes.
Note that in $W_{\mathrm{all}}$, we always add a perfect matching between a balanced sparse cut $(S,\overline{S})$.
This was first analyzed by \cite{KhandekarKOV07} for unweighted undirected graphs, and \cite{BernsteinGS20} used the same entropy analysis to show that this works for unweighted directed graphs.
The same proof strategy generalizes to the weighted case as well.
The argument is essentially that whenever there is a cut as in \cref{lemma:weighted-balanced-sparse-cut}(\labelcref{item:cmg-cut}), then some entropic measure of the flow matrix increases by $\Omega(n)$ after adding the matchings to the witness.
Using the fact that the measure is always upper-bounded by $O(n\log n)$, this shows that we must arrive at \cref{lemma:weighted-balanced-sparse-cut}(\labelcref{item:cmg-expanding}) in $T_{\KKOV} = O(\log n)$ rounds.

Overall, the running time of \cref{lemma:deterministic-cmg} is $O(m^2\log^9 n)$ where $m$ is the size of the matchings, since the witness contains at most $O(m\log n)$ edges.
We omit the details on the entropy analysis and focus on proving the balanced sparse cut algorithm.

\begin{lemma}
  There exists a universal constant $c_{\ref{lemma:weighted-balanced-sparse-cut}} > 0$ such that,
  given an $n$-vertex $m$-edge capacitated graph $G$ and an edge set $F\subseteq E(G)$ with $\Bc_G(F) \leq c_{\ref{lemma:weighted-balanced-sparse-cut}} \cdot U/\log^4 n$ where $U \defeq \sum_{e\in E}\Bc_G(e)$,
  there is a \textbf{deterministic} $O((m+n)^2\log^7 n)$ time algorithm that either
  \begin{enumerate}
    \item\label{item:cmg-expanding} outputs a vertex set $X$ of $\vol_G(X) \geq 0.88 \cdot \vol_G(V)$ such that $G[X] \setminus F$ is an $\Omega(1/\log^{14} n)$-expander, or
  \item\label{item:cmg-cut} a cut $S \subseteq V(G)$ with $\vol_G(S),\vol_G(\overline{S}) \geq \vol_G(V)/50$ and $\min\{\Bc_G(E_G(S,\overline{S})), \Bc_G(E_G(\overline{S},S))\} \leq \vol_G(V) / 400$.
  \end{enumerate}
  \label{lemma:weighted-balanced-sparse-cut}
\end{lemma}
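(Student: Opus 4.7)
The plan is to prove \cref{lemma:weighted-balanced-sparse-cut} by the classical ``most-balanced sparse cut via trimming" template, instantiating the base primitive deterministically and then patching it to accommodate the excluded edge set $F$. The base primitive I would build is a deterministic algorithm that, on input a subgraph $H\defeq G[R]$, either certifies that $H\setminus F$ is an $\Omega(1/\log^{14} n)$-expander or returns a (possibly unbalanced) $\Omega(1/\log^{14} n)$-sparse directed cut. I would implement it by zeroing out $F$-capacities and running weighted push-relabel (\cref{thm:push-relabel-main-theorem}) with height $h=\Theta(\polylog n)$ on a single-commodity source/sink instance saturating the $F$-deleted degrees. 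Either push-relabel routes essentially all demand—yielding a congestion-bounded embedding of a trivial expander into $H\setminus F$ and hence expansion of $H\setminus F$—or its level decomposition exposes a sparse cut exactly as in the ball-growing argument of \cref{lem:push-relabel-on-shortcut}. Each invocation runs deterministically in $\widetilde{O}((m+n)\cdot h)$ time.

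Wrapping this primitive is the trimming outer loop: maintain a shrinking ``alive" set $R\subseteq V$ (initially $V$) and a trimmed set $T\defeq V\setminus R$. In each round, invoke the primitive on $G[R]$ (and separately on its reversal, to handle the directional asymmetry in the lemma's output requirement, as in \cref{lemma:matching-player}). If it certifies expansion and $\vol_G(R)\geq 0.88\vol_G(V)$, output $X\defeq R$ and stop at outcome~(\labelcref{item:cmg-expanding}). Otherwise, the primitive returns an unbalanced sparse cut $(S,R\setminus S)$ with $\vol_G(S)\leq\vol_G(R\setminus S)$; I move $S$ from $R$ into $T$. Once $\vol_G(T)\geq\vol_G(V)/50$, I stop and output the cut $(T,V\setminus T)$ (outcome~(\labelcref{item:cmg-cut})). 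The standard telescoping accounting gives that the cumulative directed cut out of $T$ is at most $\phi\cdot\vol_G(T)\leq\phi\cdot\vol_G(V)$, so choosing $\phi=\Theta(1/\log^{14} n)$ small enough puts this well below $\vol_G(V)/400$. At the stopping moment, $\vol_G(T)\geq\vol_G(V)/50$ by construction, and $\vol_G(V\setminus T)\geq (1-0.88)\vol_G(V)$ because otherwise the previous round would already have triggered outcome~(\labelcref{item:cmg-expanding}), so both sides of the output cut are heavy.

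The edge set $F$ is handled by the choice of $c_{\ref{lemma:weighted-balanced-sparse-cut}}$: since $\Bc_G(F)\leq c_{\ref{lemma:weighted-balanced-sparse-cut}}\cdot U/\log^4 n$, zeroing $F$-capacities for the primitive changes cut values and volumes by at most an $O(1/\log^4 n)$ relative factor, which is negligible against $\phi=\Theta(1/\log^{14} n)$; hence a sparse cut in $G[R]\setminus F$ is sparse in $G[R]$ (after adding back the $F$-capacity) and an expansion certificate for $G[R]\setminus F$ is exactly what the lemma demands. The main obstacle I anticipate is in the base primitive itself—turning push-relabel's ``no short augmenting path" conclusion into a true directed expansion certificate rather than merely a short-path certificate. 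I would resolve this by the same level-based analysis that drives \cref{lemma:good-level-cuts}, but applied to a symmetric pair of flow instances (forward and reversed) so that both directions are controlled simultaneously. Bookkeeping of the outer loop (at most $\widetilde{O}(m+n)$ rounds, one trim per round) with a per-round cost of $\widetilde{O}((m+n)\cdot h)$ and $h=\Theta(\polylog n)$ gives total running time $O((m+n)^2\log^7 n)$ once constants are tracked.
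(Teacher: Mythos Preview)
Your base primitive has a genuine gap: a single-commodity push-relabel run (or one in each direction) cannot certify expansion. Routing one specific demand with low congestion says nothing about whether \emph{every} $\Bvol$-respecting demand is routable; that is precisely why cut-matching games exist. The level-based analysis of \cref{lemma:good-level-cuts} only extracts a sparse \emph{cut} from the residual graph when the flow is small; when the flow is large, you learn that this particular demand was routable, not that $H\setminus F$ is an expander. Your anticipated resolution---``apply the same level-based analysis to a forward/reverse pair''---still yields at most two routed demands, which is no expansion witness. Without embedding an actual expander (a graph known a priori to mix) into $H$, there is no certificate to output in case~(\labelcref{item:cmg-expanding}).

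The paper's proof takes an entirely different route that sidesteps this: it first reduces the capacitated graph to a unit-capacity constant-degree graph via two standard transformations (discretize capacities into parallel edges, then replace each vertex by a small $\Omega(1)$-expander), and then runs a KKOV-style deterministic procedure on the resulting graph. That procedure explicitly constructs a fixed $\Omega(1)$-expander $H$, decomposes it into $O(1)$ matchings, and attempts to embed each matching via short paths; either all matchings embed with few ``fake'' edges (so $G\cup Z$ is an expander, and directed expander pruning removes $F\cup Z$ to obtain $X$), or some embedding fails and a ball-growing argument yields the balanced sparse cut. A secondary issue with your outer loop: in directed graphs a union of one-sided sparse cuts is \emph{not} automatically sparse in either direction (different trimming steps may be sparse in opposite directions), so the ``standard telescoping'' needs the extraction step the paper formalizes in \cref{claim:extract-sparse-cut}.
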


To prove \cref{lemma:weighted-balanced-sparse-cut}, we are going to reduce to the case where the graph is unweighted and has constant degree and then apply the following algorithm from \cite{ChuzhoyGLNPS20}.
We would like to point out that the following \cref{lemma:unweighted-balanced-sparse-cut-constant} does not follow exactly from \cite{ChuzhoyGLNPS20} because their algorithm is for undirected.
We give a simplified proof in \cref{sec:directed-bal-cut} that is specialized to the specific form that we are using.

\begin{lemma}
  There is a universal constant $c_{\ref{lemma:unweighted-balanced-sparse-cut-constant}} > 0$ such that, given an $n$-vertex unit-capacitated graph $G$ with maximum degree $\Delta_G = O(1)$, an edge set $F\subseteq E(G)$ with $|F| < c_{\ref{lemma:unweighted-balanced-sparse-cut-constant}}\cdot n / \log^4 n$, and a constant $\psi < 1$, there is an $O(n^2\log^7 n)$ time algorithm that either
  \begin{enumerate}
    \item\label{item:fake-edges} outputs a vertex set $X\subseteq V$ of size $|X| \geq 0.89 \cdot n$ such that $G[X] \setminus F$ is an $\Omega(1/\log^{14} n)$-expander, or
    \item\label{item:cut} outputs a cut $S \subseteq V(G)$ with $|S|,|\overline{S}| \geq n/20$ such that $\min\{|E_G(S,\overline{S})|, |E_G(\overline{S},S)|\} \leq \psi \cdot \vol_G(V)$.
  \end{enumerate}
  \label{lemma:unweighted-balanced-sparse-cut-constant}
\end{lemma}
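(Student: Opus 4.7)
The plan is to adapt the BalCutPrune framework of~\cite{ChuzhoyGLNPS20} to directed constant-degree graphs, with the inner sparsest-cut subroutine built from the weighted push-relabel of~\cref{thm:push-relabel-main-theorem} inside a cut-matching game in the spirit of~\cite{BernsteinGS20}. First I would construct a deterministic subroutine $\textsc{SparseOrExpand}(H,\phi)$ that, given a constant-degree directed graph $H$ and a sparsity target $\phi$, either certifies that $H$ is an $\Omega(\phi/\log^{c} n)$-expander (in the flow sense of~\cref{dfn:expander}) or returns a cut $(S,\overline{S})$ in $H$ with $\min\{|E_H(S,\overline{S})|,|E_H(\overline{S},S)|\}\le \phi\min\{|S|,|\overline{S}|\}$. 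The subroutine consists of $O(\log n)$ cut-matching rounds driven by push-relabel; since $H$ has only $O(|V(H)|)$ edges, each round costs $\tilde O(|V(H)|^2)$, and the $\log^{O(1)}n$ approximation slack is absorbed into our eventual $1/\log^{14}n$ expansion target.

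With this subroutine I would run the following pruning loop. Initialize $V'\gets V$, $B\gets\emptyset$, and fix $\phi=\Theta(1/\log^{14}n)$. Repeatedly call $\textsc{SparseOrExpand}(G[V']\setminus F,\phi)$; if it certifies expansion and $|V'|\ge 0.89n$ we output $X=V'$ as Case~\labelcref{item:fake-edges}, otherwise take its returned cut $(S,V'\setminus S)$. If both sides have at least $n/20$ vertices we output that cut as Case~\labelcref{item:cut} (the $|F|\le c_{\ref{lemma:unweighted-balanced-sparse-cut-constant}}n/\log^4 n$ fake edges contribute only a lower-order term to the min-direction cut count in $G$); else WLOG $|S|\le n/20$ and we move $S$ into $B$ and recurse on the larger side. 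Once $|B|$ first crosses $0.11n$ we instead output $(B,V\setminus B)$ as the balanced cut: the last prune grew $B$ by at most $n/20$, so $|B|,|V\setminus B|\in[n/20,19n/20]$. For the cut size I would always prune in a fixed direction (say the one where the returned cut has small outgoing-edge count), so that the cumulative small-direction cut satisfies
\[
  |E_G(B,V\setminus B)|\le \sum_i |E_G(S_i,V'_i\setminus S_i)|\le \phi\Delta_G n + |F|\le \psi\vol_G(V),
\]
by choosing $\phi$ sufficiently small relative to $\psi$ and $\Delta_G$.

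The main obstacle will be the directional bookkeeping: the cut returned by $\textsc{SparseOrExpand}$ may be sparse in only the forward or only the backward direction, and to guarantee a small min-direction cut on the unioned $B$ we need to prune in a single consistent direction. My plan is to run two copies of the pruning loop in parallel, copy~(a) accepting only cuts with a small forward side and copy~(b) only with a small backward side, each with pruning budget $0.055 n$. If either copy finds a balanced cut we return it directly; otherwise both copies terminate on expander-certified subsets $X_a,X_b$ of size $\ge 0.945 n$, whose intersection $X=X_a\cap X_b$ then has $|X|\ge 0.89 n$ and on which $G[X]\setminus F$ expands in both directions, hence is $\Omega(1/\log^{14}n)$-expanding in the sense of~\cref{dfn:expander}. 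The total running time fits in $O(n^2\log^7 n)$ via a CGLNPS-style recursive organization in which only $O(\log n)$ full-size sparsest-cut computations (each $\tilde O(n^2)$) suffice across all pruning iterations.
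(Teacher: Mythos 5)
Your high-level plan (a sparse-cut-or-certify-expansion subroutine plus a pruning loop) matches the paper's strategy, but the step where your argument genuinely fails is the directional bookkeeping, which is the crux of the directed setting. The subroutine only controls $\min\{|E(S,\overline{S})|,|E(\overline{S},S)|\}$ for each cut it returns, so you cannot ``always prune in a fixed direction'': your copy~(a), which accepts only forward-sparse cuts, has no valid move when the subroutine hands it a cut that is sparse only in the backward direction (it can neither certify expansion nor prune), and symmetrically for copy~(b). Moreover, even restricting to cuts of one orientation, your bound $|E_G(B,V\setminus B)|\le \sum_i |E_G(S_i,V'_i\setminus S_i)|$ is false as written: the cut of $S_i$ is measured inside the shrunken graph $G[V'_i]$, so edges from $S_i$ into the \emph{previously pruned} pieces $S_j$, $j<i$, are simply not counted on the right-hand side (the same unaccounted edges appear when you output a ``balanced'' cut $(S,V\setminus S)$ midway while $V\setminus S$ contains the already-pruned set $B$). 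Finally, the concluding intersection step is both unsound and unnecessary: if either copy certifies an expander on a set of size $\ge 0.945n$ you are already done, and conversely $G[X_a]\setminus F$ and $G[X_b]\setminus F$ both being expanders does not imply anything about $G[X_a\cap X_b]\setminus F$, since expansion is not inherited by induced subgraphs.

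The paper resolves exactly this issue differently: it accumulates cuts of \emph{arbitrary} orientation (via \cref{lemma:unweighted-balanced-sparse-cut}) until their union reaches $n/10$, and then invokes \cref{claim:extract-sparse-cut}, which extracts from the union a subset of at least half the accumulated size whose cut in the \emph{original} graph is small in one direction. The proof of that claim is the missing idea in your write-up: classify each pruned block by which direction of its cut was small, take the orientation class of larger total size, and charge every edge between blocks either to the forward cut of the earlier block or to the (small) backward cut of the block it enters; this is precisely how the cross-edges between different pruned pieces--the ones your summation drops--get paid for. Without this extraction argument (or an equivalent one), the pruning loop does not yield a cut that is sparse in some single direction in $G$, and the lemma's Case~2 guarantee is not established.
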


\paragraph{Reducing to Unweighted Graphs.}
We first transform $G$ into a unit-capacitated graph $G^\prime$ by throwing away low-capacity edges and replacing each remaining edge with a proportional number of parallel unit-capacitated edges.
This uses ideas from \cite{vdBCKLMPS24}.
Let $U\defeq\sum_{e\in E}\Bc_G(e)$ be the sum of edge capacities, and let $\tau \defeq U/(1000m)$ be a threshold.
For each edge $e$ with $\Bc_G(e)\geq \tau$, we add $\lceil \Bc_G(e)/\tau \rceil$ parallel copies of $e$ into $G^\prime$.
For each vertex $v$, we also add $\lceil \deg_G(v)/(20\tau) \rceil$ self-loops on $v$ to $G^\prime$ (each contributing $2$ units of degree).
Edges with small capacities are discarded.
Let $m^\prime$ be the number of edges in $G^\prime$, and let $F^\prime$ be edges in $G^\prime$ that correspond to $F$.

\begin{claim}
  It holds that $1099m \leq m^\prime \leq 1102m$.
\end{claim}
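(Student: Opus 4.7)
The plan is to split $m'$ into its two contributions---parallel copies of non-discarded edges and self-loops on vertices---and bound each via $x \leq \lceil x \rceil \leq x+1$, using $\tau = U/(1000m)$ and $\vol_G(V) = 2U$.

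For the upper bound, the parallel copies contribute at most
\[
\sum_{e: \Bc_G(e) \geq \tau} \left\lceil \frac{\Bc_G(e)}{\tau} \right\rceil \leq \sum_{e: \Bc_G(e) \geq \tau} \left(\frac{\Bc_G(e)}{\tau} + 1\right) \leq \frac{U}{\tau} + m = 1001m,
\]
while the self-loops contribute at most
\[
\sum_{v} \left\lceil \frac{\deg_G(v)}{20\tau} \right\rceil \leq \frac{\vol_G(V)}{20\tau} + n = \frac{2U}{20\tau} + n = 100m + n.
\]
Adding these gives $m' \leq 1101m + n$. To conclude $m' \leq 1102m$, I would invoke the fact that $W_{\mathrm{all}}$ is initialized with a self-loop on every vertex so $m \geq n$ throughout the cut-matching game (or, equivalently, we may assume without loss of generality that there are no isolated vertices by adding self-loops of capacity $\tau$ that are absorbed by the definition of $\tau$).

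For the lower bound, each kept edge contributes at least $\Bc_G(e)/\tau$ copies, and the total capacity of discarded edges is strictly less than $m\tau$ since each such edge has capacity below $\tau$. Hence the parallel copies are at least
\[
\sum_{e: \Bc_G(e) \geq \tau} \frac{\Bc_G(e)}{\tau} \geq \frac{U - m\tau}{\tau} = 1000m - m = 999m,
\]
and the self-loops are at least $\vol_G(V)/(20\tau) = 2U/(20\tau) = 100m$. Summing gives $m' \geq 1099m$, as desired. No step is particularly delicate; the only potential subtlety is the $n \leq m$ reduction needed to tighten $1101m + n$ to $1102m$, which is ensured by the presence of self-loops on every vertex in $W_{\mathrm{all}}$.
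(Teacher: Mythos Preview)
Your argument is correct and matches the paper's proof essentially line for line: both split $m'$ into the parallel-edge contribution (bounded between $999m$ and $1001m$) and the self-loop contribution (bounded between $100m$ and $100m+n$), and both sum to obtain $1099m \le m' \le 1102m$. You are in fact slightly more careful than the paper, which writes $2U/(20\tau)+n \le 101m$ without comment; your remark that $m \ge n$ because $W_{\mathrm{all}}$ carries a self-loop on every (non-trivial) vertex is exactly the missing justification.
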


\begin{proof}
We first lower bound $m^\prime$.
Edges with $\Bc_G(e)<\tau$ have a total capacities of at most $U/1000$, and for edges with $\Bc_G(e)\geq \tau$ we have $\lceil \Bc_G(e)/\tau \rceil \geq \Bc_G(e)/\tau$ copies of edges in $G^\prime$.
This contributes $\frac{U-U/1000}{\tau} = 999m$ edges.
Likewise, we have $\lceil \deg_G(v)/(20\tau) \rceil \geq \deg_G(v)/(20\tau)$, and since the degrees sum up to $2U$, this contributes at least $100m$ edges.

For the upper bound, we have $\lceil \Bc_G(e)/\tau\rceil \leq \Bc_G(e)/\tau + 1$, and thus this contributes at most $U/\tau + m = 1000m + m = 1001m$ edges.
Likewise, the degrees contribute at most $2U/(20\tau) + n \leq 101m$ edges.
\end{proof}

\begin{claim}
  Let $X\subseteq V$.
  If $G^\prime[X] \setminus F^\prime$ is a $\phi$-expander, then $G[X] \setminus F$ is also a $\phi/20$-expander.
  \label{claim:weighted-to-unweighted}
\end{claim}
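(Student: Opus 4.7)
The plan is to compare directed conductances cut-by-cut. Since $G'[X]\setminus F'$ being a $\phi$-expander means $\mathrm{cond}(S) \geq \phi$ for every nonempty $S \subsetneq X$ (where conductance is the usual directed notion $\min_d |E^d(S,X\setminus S)| / \min\{\vol(S),\vol(X\setminus S)\}$), it suffices to fix $S$ and show that its directed conductance in $G[X]\setminus F$ is at least $1/20$ times its directed conductance in $G'[X]\setminus F'$.

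The argument rests on two separate estimates. First, I would bound volumes from below: every vertex $v\in X$ receives $\lceil \deg_G(v)/(20\tau)\rceil$ self-loops from the construction (each contributing $2$ to the degree), and these self-loops lie in $G'[X]\setminus F'$ since they are incident only to $v$ and are not part of $F'$. Hence
\[
\deg_{G'[X]\setminus F'}(v) \;\geq\; 2\lceil \deg_G(v)/(20\tau)\rceil \;\geq\; \deg_G(v)/(10\tau) \;\geq\; \deg_{G[X]\setminus F}(v)/(10\tau),
\]
and summing over $S$ (and separately over $X\setminus S$) gives $\min\{\vol_{G'[X]\setminus F'}(S),\vol_{G'[X]\setminus F'}(X\setminus S)\} \geq \min\{\vol_{G[X]\setminus F}(S),\vol_{G[X]\setminus F}(X\setminus S)\}/(10\tau)$.

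Second, I would bound cut sizes from above. Self-loops do not cross any cut, and small edges ($\Bc_G(e)<\tau$) are discarded in the construction, so every unit edge of $G'[X]\setminus F'$ crossing the cut in direction $d$ is a parallel copy of some big edge $e\in E_G$ with $\Bc_G(e)\geq \tau$ that crosses the cut in the same direction in $G[X]\setminus F$. Using $\lceil \Bc_G(e)/\tau\rceil \leq 2\Bc_G(e)/\tau$ for $\Bc_G(e)\geq\tau$, summing over big crossing edges gives
\[
|E^{d}_{G'[X]\setminus F'}(S,X\setminus S)| \;\leq\; 2\Bc_G(E^{d}_{G[X]\setminus F}(S,X\setminus S))/\tau.
\]

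Dividing the cut upper bound by the volume lower bound, the directed conductance of $S$ in $G'[X]\setminus F'$ is at most $(2/\tau)/(1/(10\tau)) = 20$ times the directed conductance of $S$ in $G[X]\setminus F$, so the latter is at least $\phi/20$, as desired. I do not anticipate a real technical obstacle; the only point requiring care is tracking that the estimates work direction-by-direction so that the minima over directions and sides compose cleanly, and that using $\deg_G(v)\geq \deg_{G[X]\setminus F}(v)$ only strengthens the volume lower bound rather than weakening it.
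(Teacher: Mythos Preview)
Your proposal is correct and follows essentially the same approach as the paper's proof: both use the self-loops to lower-bound volumes in $G'[X]\setminus F'$ by $\deg_G(v)/(10\tau)$, use $\lceil \Bc_G(e)/\tau\rceil \le 2\Bc_G(e)/\tau$ for big edges to upper-bound the number of parallel crossing edges, and combine the two estimates to lose a factor of $20$ in conductance. The paper's write-up is slightly terser but the ingredients and constants are identical.
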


\begin{proof}
  Note that the self loops in $G^\prime$ contributes a degree of at least $2 \cdot \deg_G(v)/(20\tau) = \deg_G(v)/(10\tau)$ to each $v$, which is not affected by the removal of $F^\prime$ and the restriction to $X$.
  In particular, this suggests that $\vol_{G^\prime[X] \setminus F^\prime}(S) \geq \vol_{G[X] \setminus F}(S)/(10\tau)$ for all $S\subseteq X$.
  Let $e_{(u,v)}$ denote the number of parallel edges $(u, v)$ in $G^\prime\setminus F^\prime$.
  We know that if $e_{(u,v)}>0$, then $\Bc_G(u,v) \geq e_{(u,v)} \cdot \tau/2$.
  Overall, this suggests that
  \[ \frac{\Bc(E_{G[X]\setminus F}(A,B))}{\min\{\vol_{G[X]\setminus F}(A),\vol_{G[X]\setminus F}(B)\}} \geq \frac{1}{20} \cdot \frac{|E_{G^\prime[X]\setminus F^\prime}(A,B)|}{\min\{\vol_{G^\prime[X] \setminus F^\prime}(A),\vol_{G^\prime[X] \setminus F^\prime}(B)\}} \]
  for all $A, B\subseteq X$,
  and thus if $G^\prime[X] \setminus F^\prime$ is a $\phi$-expander then $G[X] \setminus F$ is a $\phi/20$-expander.
\end{proof}

\begin{claim}
  For any cut $S\subseteq V(G)$ we have $\vol_G(S)\geq \frac{10}{11}\tau \cdot \vol_{G^\prime}(S) - U/250$ and $\Bc_G(E_G(S,\overline{S})) \leq \tau \cdot |E_{G^\prime}(S,\overline{S})| + U/1000$.
  \label{claim:cut-weighted-to-unweighted}
\end{claim}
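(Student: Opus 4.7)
The plan is to prove each inequality separately, tracking how the discretization $G\to G^\prime$ distorts edge weights and volumes, using the elementary rounding bounds $\lceil x\rceil \leq x+1$ and $\tau\lceil x/\tau\rceil \leq x+\tau$.

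For the cut-capacity inequality, I would partition $E_G(S,\overline{S})$ into edges with $\Bc_G(e)<\tau$ and those with $\Bc_G(e)\geq \tau$. The small edges are dropped when constructing $G^\prime$, but their total capacity is at most $\tau m = U/1000$, which supplies exactly the claimed additive slack. For each large edge, $\Bc_G(e)\leq\tau\lceil\Bc_G(e)/\tau\rceil$, and in $G^\prime$ this edge is replaced by $\lceil\Bc_G(e)/\tau\rceil$ parallel unit-capacity copies oriented the same way, each still crossing the cut in the same direction; summing yields the main term $\tau\cdot|E_{G^\prime}(S,\overline{S})|$.

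For the volume inequality I would decompose $\vol_{G^\prime}(S)$ into self-loop and non-loop contributions. The self-loops contribute at most $2\sum_{v\in S}\lceil\deg_G(v)/(20\tau)\rceil\leq \vol_G(S)/(10\tau)+2|S|$ to $\vol_{G^\prime}(S)$. For the non-loop edges, applying $\tau\lceil\Bc_G(e)/\tau\rceil\leq \Bc_G(e)+\tau$ and accounting each edge once per endpoint in $S$ gives a contribution of at most $\vol_G(S)+2\tau m$ to $\tau\vol_{G^\prime}(S)$. Combining the two pieces yields
\[
\tau\vol_{G^\prime}(S) \;\leq\; \tfrac{11}{10}\vol_G(S)+2\tau(|S|+m),
\]
and rearranging produces the target inequality up to an additive error of $\tfrac{20}{11}\tau(|S|+m)$.

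The main obstacle is absorbing this residual error into the stated $U/250$ slack. Since $\tau m = U/1000$, this reduces to checking that $|S|+m=O(m)$, i.e.\ that $n=O(m)$. In the cut-matching-game setting where this lemma is invoked, the witness graph $W_{\mathrm{all}}$ always carries at least one self-loop per vertex from the initialization in \cref{sec:deterministic-cmg}, so $m\geq n$; hence the additive error is at most $\tfrac{20}{11}\cdot 2\tau m = \tfrac{40}{11000}U < U/250$, which fits inside the claimed slack with exactly the stated constants.
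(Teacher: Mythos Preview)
Your proposal is correct and follows essentially the same approach as the paper: decompose $\deg_{G'}(v)$ into self-loop and non-loop contributions, use $\lceil x\rceil\le x+1$ for each, and sum; for the cut bound, separate edges below and above the threshold $\tau$. Your treatment is in fact more careful than the paper's on one point: the additive error from the $+2$ per vertex (equivalently, your $2\tau|S|$ term) requires $|S|\le O(m)$ to fit inside $U/250$, and you explicitly justify $m\ge n$ via the initial self-loops of $W_{\mathrm{all}}$, whereas the paper's proof simply says ``the bound follows since $e_v$ sums up to at most $2m$'' and leaves this implicit.
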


\begin{proof}
  We first bound $\vol_G(S)$.
  For a vertex $v$ we have $\deg_{G^\prime}(v) \leq 2 \cdot \lceil \deg_G(v)/(20\tau)\rceil + \deg_G(v)/\tau + e_v \leq 11/10 \cdot \deg_G(v)/\tau + e_v + 2$, where $e_v$ is the number of edges incident to $v$ (the first term in the inequality comes from the self-loops we added; the second comes from edges incident to $v$).
  The bound follows since $e_v$ sums up to at most $2m$.
  Likewise, for each edge $e$ we have $\lceil \Bc_G(e)/\tau\rceil \leq \Bc_G(e)/\tau+1$ edges in $G^\prime$.
  The $+1$ terms amount to $\tau \cdot m = U/1000$ units of capacities.
\end{proof}

\paragraph{Reducing to Low-Degree Graphs.}
We apply a second transformation that reduces the maximum degree of $G^\prime$ to a constant and obtain $G^{\prime\prime}$, following the ideas in \cite{ChuzhoyGLNPS20}.
For each vertex $v$ we use \cite[Theorem 2.4]{ChuzhoyGLNPS20} to deterministically construct an $\Omega(1)$-expander $H_v$ with maximum degree $9$ containing $\deg_{G^\prime}(v)$ vertices and replace $v$ by $H_v$ in $G^{\prime\prime}$.
Let us number these vertices by $U_v \defeq \{v_1,\ldots,v_{\deg_{G^\prime}(v)}\}$.
We also number the edges incident to $v$ arbitrarily by $\{e_1,\ldots,e_{\deg_{G^\prime}(v)}\}$.
Now, for each edge $e = (u,v)$, if $e$ is the $i$-edge incident to $u$ and the $j$-th edge incident to $v$, then we add $(u_i, v_j)$ to $G^{\prime\prime}$.
Note that since $G^\prime$ might contain self-loops from the first transformation, we treat each self-loop on $v$ as two edges in the list $\{e_i\}$ of $v$ (since they contribute two units of degree each) and thus map it to an edge $(v_i, v_j)$ for some $i$ and $j$.
Now $G^{\prime\prime}$ has $n^{\prime\prime}$ vertices and $m^{\prime\prime}$ edges where $n^{\prime\prime} = 2m^\prime$ and $m^{\prime\prime} \leq 9n^{\prime\prime}+m^\prime\leq 20m^\prime$, and has maximum degree $10$.
Again, let $F^{\prime\prime}$ denote edges in $G^{\prime\prime}$ that correspond to $F^\prime$ (and thus $F$).

\begin{proof}[Proof of \cref{lemma:weighted-balanced-sparse-cut}]
  We do the two transformations and apply \cref{lemma:unweighted-balanced-sparse-cut-constant} with the constant
  $\psi$ sufficiently small
  on $G^{\prime\prime}$ (with the edge set $F^{\prime\prime}$) that runs in $O({n^{\prime\prime}}^2\log^7 n^{\prime\prime}) = O((m+n)^2\log^7 n)$ time.
  Indeed, note that if $\Bc_G(F) \leq c_{\ref{lemma:weighted-balanced-sparse-cut}} \cdot U/\log^4 n$ for some constant $c > 0$ (as required by \cref{lemma:weighted-balanced-sparse-cut}), then $|F^{\prime\prime}| \leq 2\Bc_G(F)/\tau \leq 2000c \cdot m/\log^4 n$ which is less than $c_{\ref{lemma:unweighted-balanced-sparse-cut-constant}} \cdot n^{\prime\prime}/\log^4 n^{\prime\prime}$ for $c_{\ref{lemma:weighted-balanced-sparse-cut}}$ being sufficiently small, depending on $c_{\ref{lemma:unweighted-balanced-sparse-cut-constant}}$.
  Consider the following two cases that \cref{lemma:unweighted-balanced-sparse-cut-constant} returns.
  \begin{itemize}
    \item If \cref{lemma:unweighted-balanced-sparse-cut-constant} terminates in Case~\ref{item:fake-edges}, i.e., it outputs 
    a vertex set $X^{\prime\prime}$ of size $|X^{\prime\prime}| \geq 0.89 n^{\prime\prime}$ so that $G^{\prime\prime}\{X^{\prime\prime}\} \setminus F^{\prime\prime}$ is an $\Omega(1/\log^{14} n)$-expander.
      Then, we first map $X^{\prime\prime}$ to an $X^\prime\subseteq V(G)$ by including all $v \in V$ with $|U_v \cap X^{\prime\prime}| \geq |U_v|/100$.
      We claim that this new set $X^\prime$ in $G^\prime$ is an expander.

      \begin{claim}
        It holds that $G^\prime\{X^\prime\} \setminus F^\prime$ is an $\Omega(1/\log^{14} n)$-expander.
      \end{claim}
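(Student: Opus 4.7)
My plan is to lift any cut $(A,B)$ of $X^\prime$ in $G^\prime\{X^\prime\} \setminus F^\prime$ to a cut of $X^{\prime\prime}$ in $G^{\prime\prime}[X^{\prime\prime}] \setminus F^{\prime\prime}$ and then apply the given $\Omega(1/\log^{14} n)$-expansion from \cref{lemma:unweighted-balanced-sparse-cut-constant} to transfer the bound. Without loss of generality assume $\vol_{G^\prime\{X^\prime\} \setminus F^\prime}(A) \leq \vol_{G^\prime\{X^\prime\} \setminus F^\prime}(B)$, and set $Y_v \defeq U_v \cap X^{\prime\prime}$ and $A^{\prime\prime} \defeq \bigcup_{v \in A} Y_v$. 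For each $v \notin X^\prime$, I will assign the entire bucket $U_v \cap X^{\prime\prime}$ to whichever side of the lifted cut minimizes the number of crossings; the remaining part of $X^{\prime\prime}$, namely $\bigcup_{v\in B} Y_v$, goes on the $B$-side. Let $B^{\prime\prime} \defeq X^{\prime\prime} \setminus A^{\prime\prime}$.

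Two structural observations will drive the argument. First, by the definition of $X^\prime$, every $v \in X^\prime$ satisfies $|Y_v| \geq |U_v|/100$, so $|A^{\prime\prime}| \geq \vol_{G^\prime}(A)/100$; combined with the facts that $G^{\prime\prime}$ has constant maximum degree and $|F^{\prime\prime}|$ is at most $O(n^{\prime\prime}/\log^4 n)$, this makes $\vol_{G^{\prime\prime}[X^{\prime\prime}] \setminus F^{\prime\prime}}(A^{\prime\prime})$ comparable, up to a universal constant, to both $\vol_{G^\prime}(A)$ and $\vol_{G^\prime\{X^\prime\} \setminus F^\prime}(A)$. Second, because each gadget $H_v$ for $v \in X^\prime$ lies entirely on one side of the lifted cut, and the bucket $U_v \cap X^{\prime\prime}$ for $v \notin X^\prime$ is placed entirely on one side as well, no $H_v$-internal edge ever crosses. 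All crossings in $G^{\prime\prime}[X^{\prime\prime}] \setminus F^{\prime\prime}$ are therefore ``real'' edges, corresponding to edges of $G^\prime \setminus F^\prime$ with one endpoint in $A$ and the other in $V(G^\prime) \setminus A$.

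Invoking the expansion of $G^{\prime\prime}[X^{\prime\prime}] \setminus F^{\prime\prime}$ on $(A^{\prime\prime}, B^{\prime\prime})$ will then yield a lower bound of $\Omega(1/\log^{14} n) \cdot \vol_{G^\prime}(A)$ on the number of real crossings, which in turn upper-bounds $|E_{G^\prime\{X^\prime\} \setminus F^\prime}(A, B)|$ plus a ``leakage'' term $|E_{G^\prime \setminus F^\prime}(A, V(G^\prime) \setminus X^\prime)|$. The main obstacle I expect is absorbing this leakage term, since those edges leave $X^\prime$ and do not belong to $G^\prime\{X^\prime\}$. I plan to control it using the fact (implicit in $|X^{\prime\prime}| \geq 0.89\, n^{\prime\prime}$) that $\vol_{G^\prime}(V(G^\prime) \setminus X^\prime)$ is a small constant fraction of the total $G^\prime$-volume, together with the greedy placement of the $v \notin X^\prime$ buckets, which ensures the leakage charged to the $A$-side is at most $\min(\vol_{G^\prime}(A), \vol_{G^\prime}(B))$. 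The delicate bookkeeping will be to choose the $1/100$-threshold in the definition of $X^\prime$ and the $0.89$-density threshold in \cref{lemma:unweighted-balanced-sparse-cut-constant} so that the leakage contribution is strictly dominated by the $\Omega(\phi)\cdot \vol_{G^\prime}(A)$ expansion lower bound, letting us subtract it off and conclude that $|E_{G^\prime\{X^\prime\} \setminus F^\prime}(A,B)| \geq \Omega(1/\log^{14} n) \cdot \vol_{G^\prime\{X^\prime\} \setminus F^\prime}(A)$.
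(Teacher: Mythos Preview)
Your cut-lifting strategy is the natural dual of the paper's flow-routing argument, but the leakage subtraction does not close. After lifting, expansion of $G''[X'']\setminus F''$ gives
\[
|E_{G''[X'']\setminus F''}(A'',B'')|\ \ge\ \Omega(1/\log^{14}n)\cdot|A''|,
\]
and you want to subtract the leakage to lower-bound $|E_{G'[X']\setminus F'}(A,B)|$. The only bounds you have on the leakage are the global one $\sum_{w\notin X'}|Y_w|=O(n'')$ and the trivial $O(\vol_{G'}(A))$ coming from greedy bucket placement; neither is $o(1/\log^{14}n)\cdot\vol_{G'}(A)$. For a concrete bad case, take a small $A\subseteq X'$ all of whose $G'$-neighbors lie outside $X'$ but whose specific $G''$-representatives happen to land in $X''$: then \emph{every} crossing out of $A''$ is leakage and the subtraction leaves nothing. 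Tuning the $1/100$ and $0.89$ thresholds cannot help, since those are $\Theta(1)$ constants while you are subtracting against an $\Omega(1/\log^{14}n)$ lower bound.

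The paper argues on the flow side instead. Given a $\vol_{G'[X']\setminus F'}$-respecting demand, it spreads it uniformly over each $U_v$, uses the $\Omega(1)$-expansion of $H_v$ to push the mass from $U_v\setminus X''$ onto $U_v\cap X''$ with $O(1)$ congestion, routes the resulting $O(1)\cdot\vol_{G''[X'']\setminus F''}$-respecting demand inside $G''[X'']\setminus F''$ with congestion $O(\log^{14}n)$, and projects the real edges back. Because congestions \emph{add} across these steps rather than subtract, there is no error term that must be dominated by the sub-constant expansion parameter; this is precisely why the flow formulation succeeds where the cut formulation stalls.
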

      \begin{proof}
      Consider any $\Bvol_{G^\prime\{X^\prime\}\setminus F^\prime}$-respecting demand $(\Bsource,\Bsink)$.
      We may treat $(\Bsource,\Bsink)$ as a demand on $G^{\prime\prime}$ by distributing the demand on $v$ evenly to $U_v$.
      Since $H_v$ is an $\Omega(1)$-expander, we can route the demand on $U_v\setminus X^{\prime\prime}$ to $U_v \cap X^{\prime\prime}$ by a flow in $H_v$ with congestion $O(1)$.
      We then route the resulting demand, which is $\Bvol_{G^{\prime\prime}\{X^{\prime\prime}\}\setminus F^{\prime\prime}} \cdot O(1)$-respecting, in $G^{\prime\prime}\{X^{\prime\prime}\}\setminus F^{\prime\prime}$ with congestion $O(\log^{14} n)$.
      Such a flow naturally corresponds to a flow in $G^\prime\{X^\prime\} \setminus F^\prime$ with congestion $O(\log^{14} n)$.
      Combining this with the flow in $H_v$ proves the claim.
      \end{proof}

      Now it follows from \cref{claim:weighted-to-unweighted} that $G\{X^\prime\} \setminus F$ is an $\Omega(1/\log^{14} n)$-expander.
      The volume of $X^\prime$ in $G$ can also be bounded as $\vol_G(X^\prime) \geq  \frac{10}{11}\tau \cdot \vol_{G^\prime}(X^\prime)-U/250$ by \cref{claim:cut-weighted-to-unweighted}, where
      \[ \vol_{G^\prime}(X^\prime)=2m^\prime-\sum_{v\in V: |U_v\cap X^{\prime\prime}|<|U_v|/100}|U_v|\geq 2m^\prime-n^{\prime\prime}/9 = 16m^\prime/9. \]
      The inequalities above follow since $\sum_{|U_v \cap X^{\prime\prime}| < |U_v|/100}|U_v| \leq \frac{100}{99} \cdot \sum|U_v\setminus X^{\prime\prime}| \leq \frac{100}{99} \cdot 0.11n^{\prime\prime}$.
      This suggests that $\vol_G(X^\prime)\geq \frac{10}{11}\tau \cdot (\frac{16}{9} \cdot 1099m) - U/250\geq 0.88 \vol_G(V)$ (note that $\vol_G(V) = 2U$), as desired.

    \item On the other hand, if \cref{lemma:unweighted-balanced-sparse-cut-constant} terminates in Case~\ref{item:cut}, i.e., it outputs a cut $S^{\prime\prime}\subseteq V(G^{\prime\prime})$ where $|S^{\prime\prime}|\geq n^{\prime\prime}/20$ and $\min\{|E_{G^{\prime\prime}}(S^{\prime\prime},\overline{S^{\prime\prime}})|, |E_{G^{\prime\prime}}(\overline{S^{\prime\prime}}, S^{\prime\prime})|\} \leq \psi \cdot n^{\prime\prime}$.
    Suppose that our constant-expanders $H_v$ have expansion $\alpha_0$ (as in \cite[Theorem 2.4]{ChuzhoyGLNPS20}).
    If $\psi \leq \alpha_0/2$ (this gives the bound on how small $\psi$ should be), then by \cite[Lemma 5.4]{ChuzhoyGLNPS20} we can compute another cut $T^{\prime\prime} \subseteq V(G^{\prime\prime})$ such that $|T^{\prime\prime}| \geq |S^{\prime\prime}|/2$, $|\overline{T^{\prime\prime}}| \geq |\overline{S^{\prime\prime}}|/2$, and $\min\{|E_{G^{\prime\prime}}(T^{\prime\prime}, \overline{T^{\prime\prime}})|, |E_{G^{\prime\prime}}(\overline{T^{\prime\prime}}, T^{\prime\prime})|\}\leq O(\min\{|E_{G^{\prime\prime}}(S^{\prime\prime}, \overline{S^{\prime\prime}})|, |E_{G^{\prime\prime}}(\overline{S^{\prime\prime}}, S^{\prime\prime})|\})$.
    More importantly, for each $v \in V(G)$, either $U_v \subseteq T^{\prime\prime}$ or $U_v \cap T^{\prime\prime} = \emptyset$, i.e., the cut $T^{\prime\prime}$ is consistent with the original vertices.
    We can thus naturally map $T^{\prime\prime}$ to a cut $S^\prime \subseteq V(G^\prime)$ such that $\vol_{G^\prime}(S^\prime) = |T^{\prime\prime}|$ and $\min\{|E_{G^{\prime}}(S^{\prime},\overline{S^{\prime}})|, |E_{G^{\prime}}(\overline{S^{\prime}}, S^{\prime})|\} \leq O(\psi) \cdot n^{\prime\prime} = O(\psi) \cdot m^{\prime\prime}$.
    By \cref{claim:cut-weighted-to-unweighted},  the cut $S^\prime$ in $G$ has $\vol_G(S^\prime)\geq \frac{10}{11}\tau \cdot \vol_{G^\prime}(S^\prime) - U/250 \geq U/25 = \vol_G(V)/50$.
    The same bound holds for $\vol_{G}(\overline{S})$ as well.
    Finally, we have $\min\{\Bc_G(E_G(S^\prime,\overline{S^\prime})), \Bc_G(E_G(\overline{S^\prime}, S^\prime))\} \leq O(\psi) \cdot U + U/1000 \leq \vol_G(V)/400$ for $\psi$ sufficiently small, as desired.
  \end{itemize}
  This concludes the proof.
\end{proof}

\subsection{Proving \texorpdfstring{\cref{lemma:unweighted-balanced-sparse-cut-constant}}{Lemma A.10}}\label{sec:directed-bal-cut}

We will reduce \cref{lemma:unweighted-balanced-sparse-cut-constant} to the following similar lemma.

\begin{lemma}
  There exists a universal constant $c_{\ref{lemma:unweighted-balanced-sparse-cut}} > 0$ such that, given an $n$-vertex unit-capacitated graph $G$ with maximum degree $\Delta_G = O(1)$, an edge set $F \subseteq E(G)$ with $|F| < c_{\ref{lemma:unweighted-balanced-sparse-cut}} \cdot n/\log^4 n$, and a constant $\psi < 1$, there is a deterministic $O(n^2 \log^2 n)$ time algorithm that either
  \begin{enumerate}
    \item output $X \subseteq V(G)$ such that $G[X]\setminus F$ is an $\Omega(1/\log^{14} n)$-expander, or
    \item outputs a cut $S \subseteq V(G)$ with $|\overline{S}|\geq |S| \geq \Omega(n/\log^4 n)$ such that $\min\{|E_G(S,\overline{S})|,|E_G(\overline{S},S)|\}<\psi \cdot |S|$.
  \end{enumerate}
  \label{lemma:unweighted-balanced-sparse-cut}
\end{lemma}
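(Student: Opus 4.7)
To amplify \cref{lemma:unweighted-balanced-sparse-cut} (smaller side of the cut only $\Omega(n/\log^4 n)$) into \cref{lemma:unweighted-balanced-sparse-cut-constant} (smaller side $\geq n/20$), the plan is a standard ``most-balanced sparse cut''-style iteration: repeatedly invoke \cref{lemma:unweighted-balanced-sparse-cut} on the surviving side of the graph, chip off the smaller piece each time, and terminate either when the subroutine returns an expander covering almost all surviving vertices, or when enough vertices have accumulated on the removed side to form a balanced cut of the original graph. The main subtlety is that \cref{lemma:unweighted-balanced-sparse-cut} only guarantees sparsity in the $\min$ direction, which may flip from one iteration to the next, so the removed vertices must be segregated by which direction was sparse.

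\textbf{Algorithm.} Initialize $A \gets V$ and two bins $R_{+}, R_{-} \gets \emptyset$. Pick $c_{\ref{lemma:unweighted-balanced-sparse-cut-constant}}$ smaller than $0.9\, c_{\ref{lemma:unweighted-balanced-sparse-cut}}$ so that $F\cap E(G[A])$ meets the fake-edge budget of \cref{lemma:unweighted-balanced-sparse-cut} whenever $|A| \geq 0.9n$. While $|A| \geq 0.9n$, invoke \cref{lemma:unweighted-balanced-sparse-cut} on $G[A]$ with $F \cap E(G[A])$ and a constant $\psi' = \Theta(\psi/\Delta_G)$ chosen small enough that the final bound matches $\psi \vol_G(V)$:
\begin{itemize}
  \item If it returns an expander $X \subseteq A$, output $X$; by the standard ``all but a vanishing fraction of $V(G)$ survives'' guarantee of expander-decomposition subroutines, $|X| \geq |A|(1-o(1)) \geq 0.89n$, giving Case~(\labelcref{item:fake-edges}).
  \item Otherwise it returns $(S_i, \bar S_i)$ with $|S_i|\leq|\bar{S}_i|$ and $|S_i|\geq \Omega(|A|/\log^4 n)$. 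Append $S_i$ to $R_+$ if the outgoing direction is the sparse one, i.e.\ $|E_G(S_i,\bar S_i)| < \psi'|S_i|$, and to $R_-$ otherwise; set $A \gets \bar{S}_i$.
\end{itemize}
Upon loop exit, $|R_+| + |R_-| > 0.1n$, so the larger of the two---call it $R_\star$---satisfies $|R_\star| > n/20$. Since $|S_i| \leq |A|/2$ at every step we also have $|A| \geq 0.45n$ at termination, giving $|V \setminus R_\star| \geq n/20$. Return $(R_\star, V \setminus R_\star)$ as Case~(\labelcref{item:cut}).

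\textbf{Analysis.} Each iteration shrinks $|A|$ by a factor $(1-\Omega(1/\log^4 n))$, so there are $O(\log^4 n)$ iterations; each costs $O(n^2\log^2 n)$ by \cref{lemma:unweighted-balanced-sparse-cut}, for a total of $O(n^2\log^6 n) \leq O(n^2 \log^7 n)$. For sparsity, assume WLOG $R_\star = R_+$, and consider any edge $(u,v) \in E_G(R_+, V \setminus R_+)$ where $u$ was removed at outgoing-sparse iteration $i$ with $u\in S_i$. Either $v \in A_i \setminus S_i$ at iteration $i$, so $(u,v)$ is counted in $|E_G(S_i,\bar S_i)| < \psi'|S_i|$, whose total over outgoing-sparse iterations is $\leq \psi' |R_+|$; or $v$ was already removed at an earlier incoming-sparse iteration $j < i$ with $v \in S_j$ and $u \in \bar S_j$, so $(u,v)$ is counted in $|E_G(\bar S_j, S_j)| < \psi'|S_j|$, whose total is $\leq \psi' |R_-|$. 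Summing, $|E_G(R_+, V \setminus R_+)| \leq \psi'(|R_+|+|R_-|) \leq \psi' n \leq \psi \vol_G(V)$, as needed. The symmetric case $R_\star = R_-$ is handled by bounding $|E_G(V\setminus R_-, R_-)|$ via the mirror case analysis.

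\textbf{Main obstacle.} The delicate point is the direction-mixing in the sparsity argument: naively lumping all removed vertices into a single side fails, because edges leaving an incoming-sparse piece can be as large as $\Delta_G|S_i|$. Splitting the removed vertices into $R_+$ and $R_-$ by sparse direction and then pairing every earlier-removed endpoint with the iteration in which \emph{it} was removed is what makes the two $\psi'|S_j|$ bounds line up against one specific direction of the final cut. A secondary subtlety is that \cref{lemma:unweighted-balanced-sparse-cut} does not explicitly bound $|X|$ in its expander case; we rely on the implicit guarantee of the underlying cut-matching/pruning construction that $X$ omits only a vanishing fraction of $V(G[A])$, and use the slack between the loop threshold $0.9n$ and the output threshold $0.89n$ to absorb this loss.
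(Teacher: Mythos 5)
Your proposal does not prove the stated lemma: it treats \cref{lemma:unweighted-balanced-sparse-cut} itself as a black-box subroutine and uses it to derive the constant-balance version \cref{lemma:unweighted-balanced-sparse-cut-constant}, i.e., it is circular with respect to the target statement and instead (re)proves a different lemma that the paper establishes separately. The actual content of \cref{lemma:unweighted-balanced-sparse-cut} is the base case that your iteration would call: one must show, from scratch, how to either certify an $\Omega(1/\log^{14}n)$-expander after deleting $F$ or find a somewhat-balanced sparse cut of size $\Omega(n/\log^4 n)$. The paper does this by deterministically constructing a constant-degree $\Omega(1)$-expander $H$ on $V(G)$ (via \cite{ChuzhoyGLNPS20}), decomposing $H$ into $O(1)$ matchings, and embedding each matching into $G$ with the route-or-cut procedure (\cref{lemma:route-or-cut}) with parameter $z=\Theta(n/\log^4 n)$: either every matching embeds with congestion $O(\log^2 n)$ after adding at most $z$ fake edges per matching, in which case $G\cup Z$ is an $\Omega(1/\log^2 n)$-expander and directed expander pruning (\cite{SulserP25}) applied to the deletion set $F\cup Z$ of size $O(n/\log^4 n)$ yields the set $X$ with $G[X]\setminus F$ an $\Omega((1/\log^2 n)^7)=\Omega(1/\log^{14}n)$-expander; or some call returns a cut with both sides of size at least $z/6=\Omega(n/\log^4 n)$ and one-directional sparsity at most $\psi$. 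None of these ingredients (expander construction, matching embedding with fake edges, pruning) appear in your write-up, so the $\log^{14}$ expansion loss and the $\Omega(n/\log^4 n)$ balance guarantee are never established.

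As a secondary remark, the amplification you did write is essentially the paper's proof of \cref{lemma:unweighted-balanced-sparse-cut-constant}, with one genuine difference: where the paper invokes \cref{claim:extract-sparse-cut} to extract a single large sparse cut from the union of the directional sparse cuts, you segregate the removed pieces into two bins according to which direction was sparse and charge each crossing edge of the final cut either to the iteration at which its tail was removed (outgoing-sparse bin) or to the earlier iteration at which its head was removed (incoming-sparse bin). That charging argument is sound and is a reasonable substitute for \cref{claim:extract-sparse-cut}, and your observation that the statement of \cref{lemma:unweighted-balanced-sparse-cut} does not explicitly bound $|X|$ (while its proof gives $|X|\ge 0.99n$) is a fair point about the paper's phrasing. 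But this is a proof of the wrong lemma; as a proof of the statement in question it has a fatal gap.
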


We first prove \cref{lemma:unweighted-balanced-sparse-cut-constant} assuming \cref{lemma:unweighted-balanced-sparse-cut}.

\begin{proof}[Proof of \cref{lemma:unweighted-balanced-sparse-cut-constant}]
  We start with $S \defeq \emptyset$ being initially empty.
  While $|S| < n/10$, we invoke \cref{lemma:unweighted-balanced-sparse-cut} on $G[V\setminus S]$ with the given value of $\psi$.
  If we get an $X\subseteq V\setminus S$ for which $G[X]\setminus F$ is an $\Omega(1/\log^{14} n)$-expander, then we return the same $X$.
  The size of $X$ is at least $0.99 \cdot 0.9 n \geq 0.89n$.
  Otherwise, we get a cut $S^\prime\subseteq V\setminus S$, and we include $S^\prime$ into $S$, i.e., set $S \gets S \cup S^\prime$.

  Now, suppose that we have reached a state where $|S|\geq n/10$ but none of the invocations of \cref{lemma:unweighted-balanced-sparse-cut} returns an expander subgraph $X$.
  Then, since the set $S^\prime$ returned by \cref{lemma:unweighted-balanced-sparse-cut} has $|S^\prime|\leq n/2$, we have $|S| \leq 3n/4$.
   Now we show that we can extract a sparse cut within $S$.
   Indeed, this follows from the following observation from prior work (e.g., \cite{BernsteinGS20,BernsteinBST24}) that a union of sparse cuts contains a large sparse cut.\footnote{This is in contrast to the undirecte case where the union of sparse cuts \emph{is} a sparse cut itself.}

   \begin{claim}
     Consider a graph $W$ and suppose there is a sequence of cuts $S_1,\ldots, S_k$ where $S_i$ is in $W_i \defeq W\setminus (S_1 \cup \cdots \cup S_{i-1})$ and $|S_i| \leq |V(W_i)|/2$.
     Then, there is an $O(|E(W)|)$ time algorithm that computes an $S\subseteq S_1\cup \cdots \cup S_k$ with $|S|\geq (|S_1|+\cdots+|S_k|)/2$ and $\min\{|E_W(S,\overline{S})|, |E_W(\overline{S}, S)|\} < \sum_{i}\min\{|E_{W_i}(S_i,\overline{S_i})|, |E_{W_i}(S_i,\overline{S_i})|\}$.
     \label{claim:extract-sparse-cut}
   \end{claim}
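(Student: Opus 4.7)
The plan is to classify each index $i \in \{1, \ldots, k\}$ by the direction in which the cut $S_i$ is sparser in $W_i$, and then take the union of only one direction class. Let $\mathcal{T}_\rightarrow \defeq \{i : |E_{W_i}(S_i, \overline{S_i})| \leq |E_{W_i}(\overline{S_i}, S_i)|\}$ (the ``out-sparse'' indices) and $\mathcal{T}_\leftarrow$ its complement, and abbreviate $\min_i \defeq \min\{|E_{W_i}(S_i, \overline{S_i})|, |E_{W_i}(\overline{S_i}, S_i)|\}$. Without loss of generality assume $\sum_{i \in \mathcal{T}_\rightarrow} |S_i| \geq \tfrac{1}{2} \sum_i |S_i|$; the case where $\mathcal{T}_\leftarrow$ is heavier is handled by the symmetric argument after swapping the roles of $S$ and $\overline{S}$. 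The algorithm outputs $S \defeq \bigcup_{i \in \mathcal{T}_\rightarrow} S_i$, which by the size assumption satisfies $|S| \geq (|S_1| + \cdots + |S_k|)/2$ as required.

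The heart of the proof will be a charging scheme showing $|E_W(S, \overline{S})| \leq \sum_i \min_i$, which immediately implies the claim. Consider any edge $(u,v)$ with $u \in S$ and $v \notin S$. Since the $S_i$'s are disjoint, $u$ lies in a unique $S_i$ with $i \in \mathcal{T}_\rightarrow$. If $v \in V(W_i)$, then $v \notin S_i$ (otherwise $v \in S$), so $v \in V(W_i) \setminus S_i$ and $(u,v)$ contributes to $|E_{W_i}(S_i, \overline{S_i})| = \min_i$. Otherwise $v$ was removed before step $i$, meaning $v \in S_j$ for some $j < i$, and we must have $j \in \mathcal{T}_\leftarrow$ (else $v \in S$). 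In that case $u \in S_i \subseteq V(W_j)$ and $u \notin S_j$, so $(u,v)$ contributes to $|E_{W_j}(\overline{S_j}, S_j)| = \min_j$. The two cases assign each edge of $E_W(S, \overline{S})$ injectively to an edge of one of the directed cuts summed on the right, yielding the inequality; strictness comes from, e.g., any edge internal to $S$ between two $\mathcal{T}_\rightarrow$-indices, which is counted on the right but not on the left.

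The main obstacle I anticipate is the second case of the charging, where an out-edge of $S$ leaves into an earlier removed $S_j$; a priori such an edge is not counted in the ``forward'' cut sizes $|E_{W_i}(S_i, \overline{S_i})|$ that $\mathcal{T}_\rightarrow$ is tuned to. The saving grace is precisely that we chose $\mathcal{T}_\rightarrow$ to exclude ``in-sparse'' indices: any such $j < i$ with $j \in \mathcal{T}_\leftarrow$ has $\min_j = |E_{W_j}(\overline{S_j}, S_j)|$, and the edge $(u,v)$ with $u \in V(W_j) \setminus S_j$ and $v \in S_j$ is exactly a backward cut edge of $W_j$ that this term absorbs. For the runtime, observe that each edge of $W$ contributes to at most two of the $2k$ cut counts (one forward, one backward in the unique $W_i$'s determined by its endpoints' membership in some $S_i$), so all cut sizes can be tabulated in $O(|E(W)|)$ time with a single scan; identifying the heavier class and assembling $S$ is an additional $O(|V(W)|)$, for $O(|E(W)|)$ overall.
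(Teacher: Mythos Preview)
Your argument is correct and is exactly the standard one that the paper has in mind: the paper does not actually prove this claim but defers to prior work (\cite{BernsteinGS20,BernsteinBST24}), and your two-class partition by sparse direction together with the two-case charging is precisely the folklore proof used there.

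One minor caveat: your justification of the \emph{strict} inequality is not quite complete. You appeal to ``any edge internal to $S$ between two $\mathcal{T}_\rightarrow$-indices'', but such an edge $(u,v)$ with $u\in S_i$, $v\in S_{i'}$, $i<i'$ is only absorbed by $\min_i$ (the forward cut of the earlier index); if instead $i>i'$ the edge lands in the \emph{backward} cut of $W_{i'}$, which is not $\min_{i'}$ since $i'\in\mathcal{T}_\rightarrow$. So strictness requires the existence of an edge oriented from an earlier to a later $\mathcal{T}_\rightarrow$-piece, which need not exist in general. Your proof therefore yields $\le$ rather than $<$. This is harmless: every use of the claim in the paper already has strict inequalities $\min_i < \psi\,|S_i|$ on the right-hand side, so $\le$ suffices. (Also, your running-time remark is slightly loose---each edge of $W$ actually crosses at most \emph{one} of the $W_i$-cuts, determined by the first index at which an endpoint lies in $S_i$---but the $O(|E(W)|)$ conclusion is unaffected.)
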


   By \cref{claim:extract-sparse-cut}, we can compute a sparse cut $S^\prime\subseteq S$ of size $|S^\prime|\geq |S|/2\geq n/20$ such that $\min\{|E_{G}(S^\prime,\overline{S^\prime})|, |E_{G}(\overline{S^\prime}, S^\prime)| < 2\psi \cdot |S^\prime| \leq \psi \cdot n$.
   This $S^\prime$ can thus be used as the output.
   The running time is $O(\log^5 n) \cdot O(n^2\log^2n)$ since the size of the graph decreases by a $1-\Omega(1/\log^4n)$ factor for each invocation of \cref{lemma:unweighted-balanced-sparse-cut}.
   This concludes the proof.
\end{proof}

We now show \cref{lemma:unweighted-balanced-sparse-cut} whose algorithm works as follows.
We deterministically construct an $\Omega(1)$-expander $H$ on $n$ vertices via \cite[Theorem 2.4]{ChuzhoyGLNPS20}, and then attempt to embed $H$ into $G$.
Note that $H$ can be decomposed into $k = O(1)$ disjoint matchings and we embed each of them separately.
This is done via the following algorithm in \cref{lemma:route-or-cut} which is a simplified version of \cite[Theorem 3.2]{ChuzhoyGLNPS20} specialized to our use case.

\begin{lemma}
  Given an $n$-vertex graph $G$ with maximum degree $\Delta_G = O(1)$ and a matching $M$ where $V(M) \subseteq V(G)$, a $z \in \N$ where $z < n/4$, and a constant $\psi < 1$, there is a deterministic $O(n^2\log^2 n)$ time algorithm that either
  \begin{itemize}
    \item outputs a subset of $Z \subseteq M$ of size $|Z| \leq z$ such that $M$ can be embedded into $G \cup Z$ with congestion $O(\log^2 n)$, or
    \item outputs a cut $S\subseteq V(G)$ with $|S|,|\overline{S}|\geq z/6$ such that $\Psi_G(S)\leq \psi$, where $\Psi_G(S)\defeq \frac{\min\{|E_G(S,\overline{S})|,|E_G(\overline{S}, S)|\}}{\min\{|S|,|\overline{S}|\}}$.
  \end{itemize}
  \label{lemma:route-or-cut}
\end{lemma}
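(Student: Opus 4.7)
The plan is to cast matching embedding as a single-commodity flow problem via a super-source/super-sink construction, then apply \cref{thm:push-relabel-main-theorem} with uniform weights to either route the flow with bounded congestion or extract a balanced sparse cut, in the spirit of \alg{SparseCut}{} from \cref{sec:push_relabel}.

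Setup: Augment $G$ with a new super-source $s$ and super-sink $t$. For each pair $(u_i,v_i) \in M$, add a unit-capacity edge $s \to u_i$ and a unit-capacity edge $v_i \to t$. Multiply each capacity of $G$ by the target congestion $\kappa = \Theta(\log^2 n)$. Set $\Bsource \defeq |M| \cdot \Bone_s$ and $\Bsink \defeq |M| \cdot \Bone_t$, and invoke \cref{thm:push-relabel-main-theorem} with uniform weights $\Bw \equiv 1$ and a suitably chosen height $h = \Theta(n)$ to obtain a feasible integral flow $\Bf$ whose residual graph has no $s$--$t$ augmenting path of length at most $3h$. Using the constant-degree bound $|E(G)| = O(n)$, this runs in $O((m + nh)\log n) = O(n^2 \log n)$ time.

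Route case: If $|\Bf| \geq |M| - z$, use a link-cut-tree-based path decomposition of $\Bf$ to extract at most $|M|$ flow paths, each corresponding to a routed pair $(u_i,v_i) \in M$. Let $Z \subseteq M$ be the at most $z$ unrouted pairs; these are added as direct ``fake'' edges in the embedding. Since $\Bf$ is feasible against capacities scaled by $\kappa$, each original edge of $G$ is used at most $\kappa = O(\log^2 n)$ times, yielding the desired congestion-$O(\log^2 n)$ embedding of $M$ into $G \cup Z$.

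Cut case: Otherwise $|\Bf| < |M| - z$, so more than $z$ units of source demand remain unrouted. Define the residual distance levels $S_k \defeq \{v : \dist_{G_{\Bf}}(s,v) = k\}$ for $k = 0, 1, \ldots, h$. By the push-relabel guarantee, every unsaturated sink lies in a layer beyond $h$, so any layer cut $(S_{\leq j}, \overline{S_{\leq j}})$ separates more than $z$ unrouted source units from the sink. An averaging argument analogous to \cref{lemma:good-level-cuts} shows that a constant fraction of layer cuts have small residual capacity; a ball-growing argument analogous to \cref{lemma:exists-sparse-cut}, applied to blocks of $\Theta(n)$ consecutive layers, then selects a cut $(S, \overline{S})$ with $|S|, |\overline{S}| \geq z/6$ and $\Psi_G(S) \leq \psi$.

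The main obstacle will be the simultaneous balance and sparsity requirements: bounded-height push-relabel naturally yields \emph{some} sparse layer cut, but not necessarily a balanced one. Coupling the averaging argument with ball-growing (as in \cref{lemma:exists-sparse-cut}) is the key technique, and choosing $h = \Theta(n)$ ensures enough layers for the ball-growing argument to simultaneously certify balance and sparsity at constant $\psi$. The total running time of $O(n^2 \log^2 n)$ decomposes as $O(n^2 \log n)$ for push-relabel plus $O(n^2 \log n)$ for path decomposition and layer-cut evaluation via standard link-cut-tree techniques.
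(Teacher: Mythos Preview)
Your reduction to a single-commodity flow is fundamentally flawed. Embedding a matching $M=\{(u_i,v_i)\}$ into $G$ means finding, for each $i$, a $u_i$--$v_i$ path in $G$; this is a multi-commodity problem. In your construction, a unit of flow leaving $s$ through $u_1$ may well reach $t$ through $v_7$, so the path decomposition of $\Bf$ yields $(u_i,v_j)$-paths with $i\neq j$ in general. Such paths do not embed any edge of $M$, and there is no way to pair them up into the required $(u_i,v_i)$-paths after the fact. Consequently the claim ``each corresponding to a routed pair $(u_i,v_i)\in M$'' in the Route case is false, and the entire embedding guarantee collapses.

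The Cut case is also underspecified. The balance requirement $|S|,|\overline S|\ge z/6$ must hold for \emph{arbitrary} $z<n/4$, yet your layer-cut argument with $\kappa=\Theta(\log^2 n)$ and $h=\Theta(n)$ produces a cut of original capacity at most $O(|\Bf|/\kappa)=O(n/\log^2 n)$; for this to be at most $\psi\cdot\min\{|S|,|\overline S|\}$ with constant $\psi$ and $\min\{|S|,|\overline S|\}$ as small as $z/6$, you would need $\kappa=\Omega(n/z)$, which can be $\Omega(n)$. Neither the balance bound $z/6$ nor the constant-$\psi$ sparsity follows from the analogy with \cref{lemma:exists-sparse-cut}.

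The paper's proof takes a completely different route: it greedily embeds matching edges one at a time along short paths of length $\ell=O(\log n/\psi)$, removing each path from (a copy of) $G$ as it goes, and iterates in $O(\ell\log n)$ phases. If some phase fails to embed a $\psi/(12\ell)$ fraction of the remaining edges, a direct ball-growing argument (\`a la Claim~A.14) on the leftover graph around the many unrouted pairs produces the balanced sparse cut. This per-pair routing is exactly what sidesteps the multi-commodity obstacle.
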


If \cref{lemma:route-or-cut} with $z\defeq \frac{dn/\log^4 n}{k}$ with a small constant $d > 0$ returns a cut when embedding one of the matchings, then \cref{lemma:unweighted-balanced-sparse-cut} returns the same cut.
Otherwise, overall, that the $\Omega(1)$-expander $H$ is embeddable into $G \cup Z$ with congestion $O(\log^2 n)$ means that $G \cup Z$ is an $\Omega(1/\log^2 n)$-expander.
Here $Z \defeq \bigcup_{i \in [k]}Z_i$ where $Z_i$ is the set of fake edges outputted by \cref{lemma:route-or-cut} when embedding the $i$-th matching.
Since $|F \cup Z| \leq (c+d)n/\log^4 n$, we can apply the following expander pruning algorithm to extract an $X\subseteq V(G)$ with $|X| \geq n - |V(G)\setminus X| \geq n - \vol_{G\cup Z}(V(G)\setminus X) \geq 0.99 n$ for $c$ and $d$ sufficiently small such that $G[X]\setminus F$ is an $\Omega(\log^{14} n)$-expander.

\begin{lemma}[{\cite[Theorem 4.1]{SulserP25}}]
  Given an $m$-edge $\phi$-expander $G$ and $D\subseteq E(G)$ with $|D| < \frac{\phi^2}{200} m$, there is an $O(|D|\log m/\phi^4)$ time algorithm that outputs an $X\subseteq V(G)$ with $\vol_G(V(G)\setminus X) \leq \frac{8}{\psi} |D|$ such that $(G\setminus D)[X]$ is an $\Omega(\phi^7)$-expander.
\end{lemma}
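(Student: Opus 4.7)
The plan is to follow the incremental expander-pruning framework pioneered by Saranurak--Wang~\cite{SaranurakW19} for undirected graphs, adapted to the directed setting. We maintain a monotonically growing ``pruned set'' $P \subseteq V(G)$, initialized as $P \gets \emptyset$ (so $X \gets V \setminus P$), and process the edges of $D$ one at a time. After inserting deletion $e_t$ into the working set $D_t$, we run a \emph{local} directed push-relabel on a small neighborhood of the affected vertices to detect whether this deletion has created a sparse cut in the current remaining graph $(G \setminus D_t)[V \setminus P_t]$. If a low-conductance cut $C$ is detected, we add $C$ (the smaller side, in the appropriate direction) to $P$. At the end, set $X \defeq V \setminus P$.

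The local procedure is the directed analog of the local push-relabel / nibble subroutine. We seed a source excess of size $2/\phi$ on each endpoint of the freshly deleted edge (and on the endpoints of edges newly incident to $P$), and attempt to route this excess out to distance $O(1/\phi)$. If the excess can be pushed out with low congestion, then $\phi$-expansion still holds locally and no pruning is necessary for that endpoint. Otherwise, the trapped excess certifies a cut of conductance $<\phi/2$ which we prune. Crucially, for directed graphs we must run this procedure \emph{twice}---once in $G$ and once in the reverse graph $\rev{G}$---so that we control both out-expansion and in-expansion. The work per deletion is bounded by the amount of excess processed times the layer depth $O(1/\phi)$, with a $\log m$ factor for priority-queue operations, totaling $O(|D|\log m/\phi^4)$ across all deletions.

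The analysis has two components. First, a potential argument (the standard ``each unit of trapped excess corresponds to $\Theta(1)$ pruned volume, and each deleted unit of capacity injects $O(1/\phi)$ excess'') yields $\vol_G(P) \leq 8|D|/\phi$, matching the stated bound. The hypothesis $|D| < \phi^2 m/200$ ensures $\vol_G(P)$ stays below a small constant fraction of $\vol_G(V)$, which is needed for the rerouting invariant below. Second, to show $(G \setminus D)[X]$ is an $\Omega(\phi^7)$-expander, we use the standard embedding argument: any $\Bvol_{(G\setminus D)[X]}$-respecting demand is first routed in the original expander $G$ with congestion $1/\phi$, then the portions of flow paths using edges of $D$ or vertices of $P$ are rerouted through $X$ using the local cut guarantees. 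Each level of rerouting loses a constant number of factors of $\phi$, yielding the final $\Omega(\phi^7)$ bound.

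The main technical obstacle is the directed nature of the local flow computation: unlike the undirected case, forward diffusion from a deleted edge need not detect in-expansion violations on the opposite side, so symmetry-breaking asymmetric cuts can sneak through a one-sided local run. We resolve this by running separate forward and reverse local push-relabel instances and taking the union of their pruned sets, which costs only a factor of two in runtime and volume. A secondary subtlety is that $P$ is pruned according to the \emph{current} graph $(G \setminus D_t)$, not the original $G$, so the invariant must be carefully stated as ``for every batch $D_t$, $(G \setminus D_t)[V \setminus P_t]$ remains an $\Omega(\phi^7)$-expander,'' and maintained inductively after each deletion-and-prune step; the final statement of the lemma then follows by taking $t = |D|$.
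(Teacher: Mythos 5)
First, note that the paper does not prove this statement at all: it is imported verbatim as a black box from \cite{SulserP25}, so there is no internal proof to compare against, and your task is in effect to reprove that external theorem. Your sketch assembles the right ingredients from the undirected Saranurak--Wang framework \cite{SaranurakW19} (per-deletion local push-relabel, excess injection of $O(1/\phi)$ per deleted edge, a potential argument tying absorbed excess to pruned volume), but the step where the directed case actually becomes hard is exactly the one you dismiss: running a forward and a reverse local procedure and ``taking the union of their pruned sets'' at a cost of ``only a factor of two.'' When the forward procedure prunes a set $C$, the edges of the remaining graph entering and leaving $C$ become, in effect, fresh deletions for the \emph{other} direction, so the two one-sided procedures feed excess into each other and can cascade; one forward pass plus one reverse pass with a union does not maintain the inductive invariant ``$(G\setminus D_t)[V\setminus P_t]$ is an expander in both directions'' that you state, and you give no argument that the mutual cascade terminates with total pruned volume $O(|D|/\phi)$ or that the surviving set is simultaneously certified for in- and out-expansion. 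This interaction is precisely why the present paper's own footnote remarks that directed trimming/pruning \cite{BernsteinGS20,SulserP25} is ``significantly more intricate'' than the undirected counterpart.

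Second, the quantitative conclusions are asserted rather than derived. The expansion bound $\Omega(\phi^{7})$ is justified only by ``each level of rerouting loses a constant number of factors of $\phi$,'' without saying what the levels are or how the rerouting through pruned vertices uses cut certificates that were issued with respect to intermediate graphs $(G\setminus D_t)[V\setminus P_t]$ rather than the final one; in the undirected setting the analogous argument loses only a constant factor, so the specific polynomial loss here needs a genuine accounting, not an appeal to analogy. Likewise the runtime claim ($O(1/\phi)$ excess times $O(1/\phi)$ depth would give $O(|D|\log m/\phi^{2})$, not $/\phi^{4}$) ignores the cascaded excess generated by the two-directional interaction, which is where the extra $\phi$-factors in \cite{SulserP25} come from. (A minor point: the stated volume bound $\tfrac{8}{\psi}|D|$ is evidently a typo for $\tfrac{8}{\phi}|D|$, which you silently and correctly assumed.) As it stands, the proposal is a plausible outline of the known approach but has a genuine gap at the central directed-composition step and in the derivation of the claimed bounds.
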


It thus remains to prove \cref{lemma:route-or-cut}.

\begin{proof}[Proof of \cref{lemma:route-or-cut}]
  The algorithm works in phases, where in each phase we try to embed a subgraph $M^\prime\subseteq M$ into $G$.
  We start with $M^\prime \defeq M$ and stop running the phases when $|M^\prime| < z$.
  For each phase, we let $G^\prime$ be a copy of $G$ and go through the edges $(u, v) \in M^\prime$ in an arbitrary order.
  Let $M_{\mathrm{embed}}$ initially empty be the set of edges in $M^\prime$ that we embed in this phase.
  We check in $O(m)$ time if there exists an $(u, v)$-path $P$ in $G^\prime$ with length at most $\ell \defeq \frac{24\Delta_G\log n}{\psi} = O(\log n)$.
  If there is, then we embed $(u, v) \in M^\prime$ into the path $P$, add $(u, v)$ to $M_{\mathrm{embed}}$, and remove $P$ from $G^\prime$.
  If we have successfully embedded at least $\frac{\psi}{12\ell}$ fraction of the edges in $M^\prime$, i.e., $|M_{\mathrm{embed}}| \geq \frac{\psi}{12\ell} \cdot |M^\prime|$, then we proceed to the next phase by setting $M^\prime \gets M^\prime \setminus M_{\mathrm{embed}}$.
  Once we reach the state where $|M^\prime| < z$, we know that $M \setminus M^\prime$ can be embedded into $G$ with congestion $O(\ell \log n) = O(\log^2 n)$ (since there are $O(\ell \log n)$ phases), and thus we set $F \defeq E(M^\prime)$ which makes $M$ trivially embeddable into $G\cup F$ with congestion $O(\log^2 n)$.

  Otherwise, if for one of the phases we did not successfully embed $\frac{\psi}{12\ell}$ fraction of the edges in the current $M^\prime$, then we will find a sparse cut in $G$.
  We first prove the following claim.
  \begin{claim}
    Given $u$ and $v$ in $\tilde{G}\subseteq G$ such that $\dist_{\tilde{G}}(u,v)>\ell$, there is an $O(m)$ time algorithm that finds a cut $S\subseteq V(\tilde{G})$ such that $|S| \leq n/2$ and $\min\{|E_{\tilde{G}}(S,\overline{S})|, |E_{\tilde{G}}(\overline{S}, S)|\} < \psi/12 \cdot |S|$.
    \label{claim:ball-growing}
  \end{claim}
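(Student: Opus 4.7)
The plan is to prove the claim by a one-sided directed ball-growing argument from $u$ inside $\tilde{G}$. I would perform a BFS along out-edges starting at $u$, producing balls $B_i \defeq \{w \in V(\tilde{G}) : \dist_{\tilde{G}}(u,w) \le i\}$ and layers $L_{i+1} \defeq B_{i+1} \setminus B_i$ for $i = 0, 1, \ldots, \ell$. Since $\dist_{\tilde{G}}(u,v) > \ell$, we have $v \notin B_\ell$, so $|B_\ell| \le n-1$.

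For each $i \in \{0, \ldots, \ell-1\}$ I would consider the candidate cut $S_i = B_i$ when $|B_i| \le n/2$, and $S_i = \overline{B_i}$ otherwise; in both cases $|S_i| = \min(|B_i|, |\overline{B_i}|) \le n/2$ and $\min\{|E_{\tilde{G}}(S_i, \overline{S_i})|, |E_{\tilde{G}}(\overline{S_i}, S_i)|\} \le |E_{\tilde{G}}(B_i, \overline{B_i})|$. The algorithm simply outputs the first $i$ for which the corresponding $S_i$ satisfies $|E_{\tilde{G}}(S_i, \overline{S_i})| < (\psi/12)|S_i|$ or $|E_{\tilde{G}}(\overline{S_i}, S_i)| < (\psi/12)|S_i|$.

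The heart of the proof is to derive a contradiction if no such $i$ exists. Every edge in $E_{\tilde{G}}(B_i, \overline{B_i})$ must have its head in $L_{i+1}$, and since $\tilde{G}$ has maximum in-degree at most $\Delta_G$, we get $|L_{i+1}| \ge |E_{\tilde{G}}(B_i, \overline{B_i})|/\Delta_G$. Hence, as long as every $B_i$ with $|B_i| \le n/2$ fails the sparsity bound, we have
\[ |B_{i+1}| \ge \left(1 + \frac{\psi}{12\Delta_G}\right) |B_i|, \]
and as long as every $B_i$ with $|B_i| > n/2$ fails it, we have
\[ |\overline{B_{i+1}}| = |\overline{B_i}| - |L_{i+1}| \le |\overline{B_i}| - \frac{\psi/12}{\Delta_G}|\overline{B_i}| = \left(1 - \frac{\psi}{12\Delta_G}\right)|\overline{B_i}|. \]
Concatenating these two phases, $|B_i|$ grows geometrically from $1$ to $n/2$ in at most $12\Delta_G \log n/\psi$ steps, and then $|\overline{B_i}|$ shrinks geometrically from $n/2$ to below $1$ in another $12\Delta_G\log n/\psi$ steps, for a total of $\ell = 24\Delta_G \log n/\psi$ steps, forcing $|B_\ell| = n$ and hence $v \in B_\ell$---contradicting $\dist_{\tilde{G}}(u,v) > \ell$.

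The running time is $O(m)$ for the single BFS plus $O(m)$ for sweeping over all $\ell = O(\log n)$ candidate prefix cuts. The main obstacle is making the constants in the two geometric phases fit within the budget of $\ell$ layers; this is exactly why the choice $\ell = 24\Delta_G\log n/\psi$ pairs tightly with the sparsity threshold $\psi/12$.
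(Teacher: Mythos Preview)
Your proposal is correct and uses the same standard ball-growing argument as the paper. The only cosmetic difference is that the paper splits into the case $|S_{\ell/2}|\le n/2$ (out-BFS from $u$) versus $|S_{\ell/2}|>n/2$ (in-BFS from $v$ in the reversed graph), whereas you handle both regimes with a single out-BFS from $u$ by switching from a growth phase ($|B_i|\le n/2$) to a decay phase ($|\overline{B_i}|<n/2$); both variants are standard and yield the same bound.
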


  \begin{proof}
    This follows from a standard ball-growing argument.
    Let $S_i$ denote the set of vertices reachable from $u$ in $\tilde{G}$ by a path of length at most $i$.
    Assume without loss of generality that $|S_{\ell/2}|\leq n/2$, otherwise we swap $u$ and $v$ and work in the reversed $G^\prime$ instead.
    If none of the cut $S_i$ for $i < \ell/2$ is sparse, then we have $|S_{i+1}|\geq |S_i| \cdot (1+\psi/(12\Delta_G))$.
    This leads to a contradiction since $(1+\psi/(12\Delta_G))^{\ell/2} > n/2$.
  \end{proof}

   Now, we maintain a set $U \subseteq V$ which is initially empty.
   While $|U| < z/2$, since $|M^\prime \setminus M_{\mathrm{embed}}| \geq z/2$, an edge $(u, v) \in M^\prime\setminus M_{\mathrm{embed}}$ where $u, v \not\in U$ exists (since $M$ is a matching).
   We apply \cref{claim:ball-growing} and $u$ and $v$ to find a cut $S$ in $\tilde{G}\defeq G^\prime[V\setminus U]$ and set $U \gets U \cup S$.
   When this loop terminates, we know that $z/2  \leq |U| \leq z +n/2 \leq 3n/4$.
   We can then extract a balanced sparse cut from $U$ via \cref{claim:extract-sparse-cut}.
   Applying \cref{claim:extract-sparse-cut}, we get an $S\subseteq V$ with $z/2 \leq |S|\leq 3n/4$ such that $\min\{|E_{G^\prime}(S,\overline{S})|,|E_{G^\prime}(\overline{S},S)|\} < \psi/6 \cdot |S|$.
   We also know that $V\setminus S \geq n/4 \geq |S|/3$ and therefore $\Psi_{G^\prime}(S)\leq \psi/2$.
   Note that $G^\prime$ is obtained from $G$ by removing $|M^\prime \setminus M_{\mathrm{embed}}| \leq \frac{\psi}{12\ell} \cdot z$ paths of length $\ell$, and thus $\Psi_G(S) \leq \Psi_{G^\prime}(S) + \psi/12 \cdot \frac{z}{\min\{|S|,|V\setminus S|\}} \leq \psi/2+\psi/2=\psi$.
   Finally, note that the running time is $O(\ell\log n) \times O(n^2) = O(n^2\log^2 n)$.
   This completes the proof.
\end{proof}

\section{Discussion of Non-Stop Cut-Matching Game}\label{appendix:cmg}

In this section we discuss how our \cref{lemma:non-stop-cut-matching} is implied by the work of \cite{FleischmannLL25}.
Its main result, \cite[Theorem 5.1]{FleischmannLL25}, is a combination of the abstract non-stop cut-matching game we described together with a specific matching player tailored to their use case, i.e., computing directed expander decomposition of an input graph.

\paragraph{How the Cut-Matching Game Works.}
To describe the differences in presentation between \cref{lemma:non-stop-cut-matching} and \cite{FleischmannLL25}, let us first outline how this non-stop cut-matching game works.
The algorithm runs in $O(\log^2 n)$ iterations.
In each iteration $t$, it maintains for each vertex $v$ the ``fraction'' $\Bd_t(v)\leq\Bd(v)$ that is ``alive'', and let $A_t \defeq \supp(\Bd_t)$ be the set of ``alive'' vertices at iteration $t$.\footnote{This reflects the fact that in the weighted setting, a vertex might get partially matched, which is in contrast to the unweighted case where vertices are simply either matched or unmatched.}
The algorithm also maintains a multi-commodity flow matrix $\BF_t$, where $\BF_t(u,v)$ denotes the unit of $u$-commodity that is sent to $v$.
Initially, we have $A_0 = \supp(\Bd)$ and $\BF_0(u,u) = \Bd(u)$, indicating that we have $\Bd(u)$ units of $u$-commodity at $u$ at the beginning of the cut-matching game.
The goal of the cut-matching game is to have the commodities spread out among the vertices to prove expansion.

To do so, in each iteration $t$, based on $\BF_t$, the algorithm computes a random $\Bd_t$-balanced bipartition $(L,R)$ of $A_t$, i.e., $\Bd_t(L) = \Bd_t(R)$, and attempts to mix the commodities between $L$ and $R$.
The mixing is done through two ``matchings'', one from $L$ to $R$ and the other from $R$ to $L$ that indicate how much commodity currently at $\ell \in L$ is to be sent to $r \in R$ and vice versa.
Ideally, the two matchings should be such that each $v \in A_t$ receives and sends out exactly $\Bd_t(v)$ units of commodities.
However, this is not always possible in directed graphs.
Instead, the algorithm computes $\Bd_{t+1}(v) \leq \Bd_t(v)$ as the minimum between the units $v$ receives and sends out, and once $\Bd_{t+1}(v) < \Bd(v)/2$, $v$ is completely ``removed'', i.e., we set $\Bd_{t+1}(v) \gets 0$.
After this, the flow matrix is updated by sending flows between $L$ and $R$ through the matchings (see \cite[Section 5.1]{FleischmannLL25}).
Finally, at the end of $O(\log^2 n)$ iterations, it is shown that $\BF$ ``mixes'' well enough in the graph, proving expansion of $\Bd$ on the alive vertices $A_t$.

For more details, see \cite[Section 5]{FleischmannLL25} and in particular \cite[Algorithm 1]{FleischmannLL25}.

\paragraph{Comparison Between the Two Presentations.}
We now compare the presentations of the cut-matching game in our paper and \cite{FleischmannLL25}.
To start, \cite[Theorem 5.1]{FleischmannLL25} is described as a combination of the above abstract game with a specific matching player implemented by running the standard unweighted push-relabel algorithm (in particular \cite[Lemma 5.19]{FleischmannLL25}) on the input graph to \cite[Theorem 5.1]{FleischmannLL25}.
This matching player works by attempting to route from each $\ell \in L$ (with $\Bd_t(\ell)$ units of source demand) to each $r \in R$ (with $\Bd_t(r)$ units of sink demand), and it will compute a flow $\Bf$ partially routing the demand and a sparse cut $S$ such that $\Bf$ rotues the demand in $V\setminus S$ almost perfectly, up to $O(\Bd_t(S))$ units of source/sink.
A matching can be extracted from this flow through standard path decomposition, similar to what we did in \cref{lemma:matching-player}.

The main difference is that for their end result, \cite[Theorem 5.1]{FleischmannLL25} wants to find a sequence of \emph{non-overlapping} cuts in the input graph such that the uncut part is expanding (in the whole graph).
To do so, even though a vertex $v \in S$ might still be partially matched by $\Bf$, the algorithm will completely remove it (i.e., set $\Bd_{t+1}(v)\gets 0$ for $v \in S$).
For other vertices, some of them might not fully send/receive $\Bd_t(v)$ units of commodities due to the flow $\Bf$ not routing the demand perfectly.
The algorithm of \cite{FleischmannLL25} proceeds to set $\Bd_{t+1}(v)$ to be the minimum between the amount $v$ sends and receives, and removes those with $\Bd_{t+1}(v) < \Bd(v)/2$, as described earlier.
Having the vertices in the cut completely removed allows them to add the cut $S$ to their sequence, and then solve the flow problems of future iterations \emph{only on the uncut part of the graph} to make the cuts non-overlapping.

Our formulation differs from \cite{FleischmannLL25} mainly in that our cut-matching game is not concerned about the input graph, hence there is no notion of sparse cuts for \cref{lemma:non-stop-cut-matching}.
In particular, given $\Bd$, the $\Bd^\prime$ vector that we use (in \cref{def:matching}) in each iteration will simply be the $\Bd_t$ vector in the cut-matching game, and the bidirectional matchings will naturally define the matching matrix $\BM_t$ of \cite{FleischmannLL25} to update $\BF_t$.
Crucially, since there is no notion of cuts in the abstract form of the cut-matching game that we use, we \emph{do not} remove vertices in the cut like \cite{FleischmannLL25}.
Instead, all vertices $v$ just set their $\Bd_{t+1}(v)$ based on the amount they receive and send, and only those with $\Bd_{t+1}(v)<\Bd(v)/2$ are removed.
It is easy to see this does not invalidate the proofs in \cite{FleischmannLL25}.
The final measure $\tilde{\Bd}$ in \cref{lemma:non-stop-cut-matching} will simply be the final alive demand $\Bd_t$.
Since we ensure that each bidirectional matching is perfect up to some $O(1/\log^2 n) \cdot \Bd(V)$ units of demand, over $O(\log^2 n)$ iterations we only remove a small constant fraction from $\Bd_t(V)$.
The removal of $\Bd_t(V)$ caused by completely setting $\Bd_{t}(v)$ to $0$ when it reaches below $\Bd(v)/2$ can also be charged to the former quantity, and thus we can guarantee a bound of $\tilde{\Bd}(V)\geq \frac{7}{8}\Bd(V)$ in \cref{lemma:non-stop-cut-matching}.

Our formulation also differs from \cite{FleischmannLL25} in where and how the final subset of demand is expanding.
In \cite[Theorem 5.1]{FleischmannLL25}, since it is explicit in the theorem setup that the matchings are embeddable with low congestion in the input graph $G$, the final demand is $O(\phi/\log^2 n)$-expanding in $G$, where $1/\phi$ is the congestion of each embedding.
Inspecting the proof of this fact (see \cite[Section 5.4]{FleischmannLL25}) and as standard in the cut-matching game literature, that $\Bd_t$ is $O(\phi/\log^2 n)$-expanding in $G$ is because each $\Bd_t$-respecting demand can be routed by (1) following the flow matrix $\BF_t$ (which ``mixes'' well at the end of the game)---indeed, \cite[Page 26, Proof of Lemma 5.26]{FleischmannLL25} defined an explicit flow routing the demand---and (2) using the fact that $\BF_t$ can be routed in $G$ through the embedding with low congestion $O(\log^2 n /\phi)$.

Since our formulation does not involve the underlying graph $G$, we simply say that $\BF_t$ can be routed with congestion $O(1)$ in the witness $W \defeq M_1 \cup M_2 \cup \cdots \cup M_k$.
Indeed, based on how $\BF_t$ is updated (see \cite[Section 5.1]{FleischmannLL25}), the graph $W$ is defined so that $\BF_t$ can be trivially routed with congestion $O(1)$---the routing just follows how $\BF_t$ is ``mixed'' between $L$ and $R$ in each iteration.
This establishes the expansion guarantee listed in \cref{lemma:non-stop-cut-matching}.
Finally, we need a low-hop guarantee in this routing on $W$.
This also follows fairly simply from the definition of $W$.
The routing through $W$ is obtained as $\BF_t$ evolves as new matchings are added: In each iteration we simply send flow through the newly added matching $M_t$ to update $\BF_t$.
Since there are only $T_{\KRV} = O(\log^2 n)$ matchings, the hop of the routing is naturally $T_{\KRV}$.

\section{Omitted Proofs} \label{appendix:omitted}

In this section we provide proofs that are omitted from the main body of the paper.

\FlowOnShortcut*

We will use the following lemma that finds a path decomposition of a flow and the weights of the paths.
Computing a path decomposition in capacitated graphs is fairly standard using a dynamic tree, and by simply augmenting the dynamic tree to maintain the weights we can obtain the desired property.

\begin{lemma}[\cite{SleatorT83}]
  Given a flow $\Bf$ on an $m$-edge graph $G$ with edge weights $\Bw$, there is an $O(m\log m)$ time algorithm that computes the representation $\{(\lambda_i, s_i, t_i)\}_{i\in[k]}$ together with $\{W_i\}_{i\in [k]}$ for some path decomposition $(C_{\Bf}, \{(\lambda_i, P_i)\}_{i\in[k]})$ of $\Bf$, where $\Bw(P_i) \leq W_i$.
  \label{lemma:path-decomposition}
\end{lemma}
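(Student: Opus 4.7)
}

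The plan is to adapt the classical Sleator--Tarjan link-cut-tree routine for extracting flow paths, augmented so that path sums of the weight function $\Bw$ are reported alongside the bottleneck capacities. First, I would reduce to the case where $\Bf$ is a flow routing a single $(s^{*},t^{*})$-demand by inserting a super-source $s^{*}$ (with an edge to every vertex $v$ of capacity $\Bf^{\mathrm{out}}(v)$ when $\Bf^{\mathrm{out}}(v)>0$) and a super-sink $t^{*}$ (symmetrically from vertices with $\Bf^{\mathrm{out}}(v)<0$), each added auxiliary edge carrying weight $0$. Every $(s^{*},t^{*})$-path then corresponds to a path $P_i$ in $G$ with the same $\Bw$-length, and every cycle in the support corresponds to a contribution to the circulation $C_{\Bf}$.

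Next I would maintain a link-cut forest on the vertex set; each tree node stores two auxiliary fields for its edge to its parent---the residual capacity and the weight. The link-cut structure supports, in $O(\log m)$ amortized time, the standard operations \textsc{Link}, \textsc{Cut}, \textsc{FindRoot}, \textsc{PathMin} (minimum residual capacity on the path from a node to its root), and \textsc{PathSum} (sum of weights on that path), the last two by augmenting each splay node with subtree aggregates (cf.~\cite{SleatorT83}). The main loop grows a current path from $s^{*}$: let $u$ be the current leaf; if $u$ has an outgoing edge $(u,v)$ with positive residual flow, \textsc{Link} $u$ to $v$ via this edge and advance to $v$; otherwise, \textsc{Cut} $u$ from its parent and backtrack. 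When the leaf reaches $t^{*}$, we have identified a flow path: query $\lambda_i\gets$\textsc{PathMin} and $W_i\gets$\textsc{PathSum}, record $(\lambda_i,s_i,t_i)$, subtract $\lambda_i$ along the whole path (a single $O(\log m)$ path-update), and \textsc{Cut} off every edge whose residual capacity just reached zero.

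For the circulation part, whenever the advance step would link $u$ to a vertex $v$ already on the current path (detectable by \textsc{FindRoot} or by marking vertices in the tree), we have closed a cycle in the support; we extract it analogously, record it as part of $C_{\Bf}$ (not as an output triple), subtract the minimum along the cycle, and remove the saturated edge. Correctness then follows from the standard invariant that the running flow remains a valid flow with the same excess/deficit vector and that every removal from the support is justified by saturation; the sum of the $\lambda_i\cdot\Bf_{P_i}$ plus the extracted cycles reconstitutes $\Bf$.

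For the running time, each iteration either extends the current path by linking one edge or removes one edge from the support by cutting (because either the leaf backtracks, a path is extracted and at least one edge saturates, or a cycle closes and its bottleneck edge saturates). Since each edge can be linked at most once and cut at most once, the total number of link-cut operations is $O(m)$, and each costs $O(\log m)$ amortized, giving the claimed $O(m\log m)$ bound. The weight output $W_i=\Bw(P_i)$ is exact (hence trivially satisfies $\Bw(P_i)\le W_i$) because \textsc{PathSum} reports the true sum over the edges currently on the path, which are exactly the edges of $P_i$. The main subtlety I anticipate is bookkeeping around cycle detection and around the handling of the auxiliary $s^{*},t^{*}$ edges (so that they contribute $0$ to the weight sums and their removal aligns with finishing a unit of the path decomposition), but these are routine once the link-cut invariants are set up carefully.
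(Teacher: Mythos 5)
Your proposal is correct and matches the paper's intended argument: the paper gives no detailed proof, citing \cite{SleatorT83} and noting that one runs the standard dynamic-tree flow-decomposition routine with the tree augmented to maintain path weight sums, which is precisely what you spell out (link-cut trees with path-min/path-sum aggregates, bottleneck subtraction, cycles peeled into $C_{\Bf}$, and the link-once/cut-once charging giving $O(m\log m)$). The routine details you flag (cycle detection, zero-weight auxiliary edges) are handled exactly as you describe and pose no gap.
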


\begin{proof}[Proof of \cref{lemma:flow-on-shortcut-short-path}]
  We run \cref{lem:push-relabel-on-shortcut} multiple iterations, each time routing a large fraction of demand or return the desired cut.
  Let $(\Bsource^{(1)}, \Bsink^{(1)}) \defeq (\Bsource, \Bsink)$ be the initial demand.
  For each iteration $i$, we run \cref{lem:push-relabel-on-shortcut} to route $(\Bsource^{(i)}, \Bsink^{(i)})$ in $G_A$, getting a flow $\Bf^{(i)}$.
  If $|\Bf^{(i)}| \geq \|\Bsource^{(i)}\|_1/2$,
  then we run \cref{lemma:path-decomposition} on $\Bf^{(i)}$ with the weight function $\Bw_\cH$.
  Let $h_{\ref{lem:push-relabel-on-shortcut}} = O(n\cdot(\kappa+1/\psi)\cdot\log n)$ be the bound on the average weight of flow paths returned by \cref{lem:push-relabel-on-shortcut}.
  Let $\tilde{\Bf}^{(i)} \defeq \sum_{i \in [k]: W_i \leq 2h}\lambda_i \Bf_{P_i}$ consists of only paths with weight bounded by $2h$.
  By Markov's inequality, $\tilde{\Bf}^{(i)}$ routes at least half of the demand that $\Bf^{(i)}$ does, i.e., $|\tilde{\Bf}^{(i)}| \geq \|\Bsource^{(i)}\|_1/4$.
  We set $(\Bsource^{(i+1)}, \Bsink^{(i+1)}) \defeq (\Bsource^{(i)}_{\tilde{\Bf}^{(i)}}, \Bsink^{(i)}_{\tilde{\Bf}^{(i)}})$ be the residual demand (which is $\psi$-integral since $\tilde{\Bf}^{(i)}$ is) and continue to the next iteration.
  If all the $O(\log n)$ executions of \cref{lem:push-relabel-on-shortcut} route at least half of the demand (and thus, the corresponding $\tilde{\Bf^{(i)}}$ routes at least a quarter), then $\Bf \defeq \tilde{\Bf}^{(1)} + \cdots + \tilde{\Bf}^{(O(\log n))}$ routes $(\Bsource, \Bsink)$ with congestion $O(\kappa \log n)$.
  Additionally, $\Bf$ trivially admits a $(\Bw_{\cH}, 2h_{\ref{lem:push-relabel-on-shortcut}})$-short path decomposition since each $\tilde{\Bf}^{(i)}$ by construction does.
  The representation of this decomposition can also be easily computed from that of each $\tilde{\Bf}^{(i)}$ (which we obtained from \cref{lemma:path-decomposition}).
  
  Otherwise, we have $|\Bf^{(i)}| < \|\Bsource^{(i)}\|/2$ and get a cut $S^{(i)}$ from \cref{lem:push-relabel-on-shortcut}.
  We have
  \begin{align*}
    \Bc(E_{G_{A}}(S^{(i)},\overline{S^{(i)}}))
    &\leq \frac{41|\Bf^{(i)}|+\min\{\vol_{F}(S^{(i)}),\vol_{F}(\overline{S^{(i)}})\}}{\kappa} \\
    &\leq \frac{41\min\{\Bsource^{(i)}(S^{(i)}), \Bsink^{(i)}(\overline{S^{(i)}})\}+\min\{\vol_{F}(S^{(i)}),\vol_{F}(\overline{S^{(i)}})\}}{\kappa} \\
    &\leq \frac{41\min\{\Bsource(S^{(i)}), \Bsink(\overline{S^{(i)}})\}+\min\{\vol_{F}(S^{(i)}),\vol_{F}(\overline{S^{(i)}})\}}{\kappa},
  \end{align*}
  where we used that $|\Bf^{(i)}| \leq \Bsource^{(i)}(S^{(i)})$ since $\Bsource^{(i)}(S) \geq \Bsource^{(i+1)}(S^{(i)}) = \|\Bsource^{(i+1)}\|_1$ while $|\Bf^{(i)}| < \|\Bsource^{(i)}\|_1/2 < \|\Bsource^{(i+1)}\|_1$, and that $|\Bf^{(i)}| \leq \Bsink^{(i)}(S^{(i)})$ since $\Bsink^{(i+1)}(S^{(i+1)}) = 0$ and therefore $\Bsink^{(i)}(\overline{S^{(i+1)}}) \geq \Bsink^{(i+1)}(\overline{S^{(i+1)}}) \geq \|\Bsource^{(i+1)}\|_1 > |\Bf^{(i)}|$.
  Moreover, we have $(\Bsource^{(i+1)}, \Bsink^{(i+1)}) = (\Bsource_{\Bf}, \Bsink_{\Bf})$ for $\Bf \defeq \tilde{\Bf}^{(1)} + \cdots + \tilde{\Bf}^{(i)}$, and thus $\Bf$ routes all sources except than those from $S^{(i)}$, and $S^{(i)}$ is fully saturated.
  Again, $\Bf$ admits a $(\Bw_{\cH}, 2h_{\ref{lem:push-relabel-on-shortcut}})$-short path decomposition which we can compute from that of $\tilde{\Bf}^{(i)}$.
  The congestion of $\Bf$ in this case is, similar to above, $O(\kappa \log n)$.
\end{proof}

\end{document}